\documentclass[11pt]{amsart}
\usepackage{amssymb}
\usepackage{amsmath}
\usepackage{amscd}
\newtheorem{theorem}{Theorem}[subsection]
\newtheorem{prop}[theorem]{Proposition}
\newtheorem{defn}[theorem]{Definition}
\newtheorem{lemma}[theorem]{Lemma}
\newtheorem{coro}[theorem]{Corollary}
\newtheorem{prop-def}{Proposition-Definition}[section]

\newtheorem{remark}[theorem]{Remark}

\newtheorem{exam}[theorem]{Example}

\textheight 23cm \textwidth 16cm \topmargin -0.8cm
\begin{document}
\setlength{\oddsidemargin}{0cm} \setlength{\evensidemargin}{0cm}

\title{${\mathcal O}$-operators of Loday algebras and
analogues of the classical Yang-Baxter equation}

\author{Chengming Bai}

\address{Chern Institute of Mathematics \& LPMC, Nankai University,
Tannin 300071, P.R. China} \email{baicm@nankai.edu.cn}

\def\shorttitle{${\mathcal O}$-operators of Loday algebras }

\begin{abstract}

We introduce notions of ${\mathcal O}$-operators of the Loday
algebras including the dendriform algebras and quadri-algebras as a
natural generalization of Rota-Baxter operators. The invertible
$\mathcal O$-operators give a sufficient and necessary condition on
the existence of the $2^{n+1}$ operations on an algebra with the
$2^{n}$ operations in an associative cluster. The analogues of the
classical Yang-Baxter equation in these algebras can be understood
as the $\mathcal O$-operators associated to certain dual bimodules.
As a byproduct, the constraint conditions (invariances) of
nondegenerate bilinear forms on these algebras
 are given.

\end{abstract}

\subjclass[2000]{16W30, 17A30, 17B60}

\keywords{Associative algebra, dendriform algebra, quadri-algebra,
Yang-Baxter equation}

\maketitle

\tableofcontents \setcounter{section}{0}

\baselineskip=17pt

\section{Introduction}

Dendriform algebras are equipped with an associative product which
can be written as a linear combination of nonassociative
compositions. They were introduced by Loday (\cite{Lo1}) in 1995
with motivation from algebraic $K$-theory and have been studied
quite extensively with connections to several areas in mathematics
and physics, including operads (\cite{Lo3}), homology ([Fr1-2]),
Hopf algebras (\cite{Ch}, [H1-2], \cite{Ron}, \cite{LR2}), Lie and
Leibniz algebras (\cite{Fr2}), combinatorics (\cite{LR1}),
arithmetic(\cite{Lo2}) and quantum field theory (\cite{Fo}) and so
on (see \cite{EMP} and the references therein).

Later quite a few more similar algebra structures have been
introduced, such as quadri-algebras of Aguiar and Loday
(\cite{AL}) and octo-algebras of Leroux (\cite{Le3}). All of them
are called Loday algebras (\cite{V}, or ABQR operad algebras in
[EG1-2]). These algebras have a common property of ``splitting
associativity", that is, expressing the multiplication of an
associative algebra as the sum of a string of binary operations
(\cite{Lo3}). It is also called an (associative) cluster of
operations by Leroux ([Le1-3]). Explicitly, let $(X,*)$ be an
associative algebra over a field $\mathbb F$ of characteristic
zero and $(*_i)_{1\leq i\leq N}:X\otimes X\rightarrow X$ be a
family of binary operations on $X$. Then the operation $*$ splits
into the $N$ operations $*_1,\cdots, *_N$ or the operation $*$ is
a cluster of $N$ (binary) operations if
$$x*y=\sum_{i=1}^N x*_iy,\;\;\forall x,y\in X.\eqno (1.1)$$
In this paper, we will pay our main attention to the case that the
number $N$ of the operations in the string is $2^n$ (in fact, there
are the dendriform trialgebras with $N=3$ (\cite{E1}, \cite{LR3}),
the ennea-algebras with $N=9$ (\cite{Le1}) and so on).

Of course, besides the relation (1.1), $*_i$ should satisfy certain
additional conditions. In fact, there are several quite different
motivations to introduce the Loday algebras (including dendriform
algebras, quadri-algebras and octo-algebras) whose operation axioms
can be summarized to be a set of ``associativity" relations
(\cite{EG1}). In this paper, we give an approach which emphasizes
the bimodule (representation) structures of these algebras. For
these known algebras, one can find that in an associative cluster
the $2^{n+1}$ operations give a natural bimodule structure of an
algebra with the $2^{n}$ operations on the underlying vector space
of the algebra itself, which is the beauty of such algebra
structures. Equivalently, the ``rule'' of constructing such algebras
in an associative cluster is that, by induction, for the algebra
$(A, *_i)_{1\leq i\leq 2^n}$, besides the natural (regular) bimodule
of $A$ on
 the underlying vector space of
$A$ itself given by the left and right multiplication operators, one
can introduce the $2^{n+1}$ operations $\{*_{i_1},
*_{i_2}\}_{1\leq i\leq 2^n}$ such that
$$x*_i y=x*_{i_1} y+x*_{i_2}y,\;\;\forall x,y\in A, \; 1\leq i\leq
2^n,\eqno (1.2)$$ and their left and right multiplication operators
can give a bimodule of $(A,
*_i)_{1\leq i\leq 2^n}$ by acting on the underlying vector space of
$A$ itself. For example, it is known that a dendriform algebra gives
a natural bimodule of its associated associative algebra
(\cite{Lo1}).

These algebras have been studied extensively from many sides
including certain operadic interpretation (\cite{Le3}, \cite{EG1}).
Obviously, it is quite important to consider how to construct a new
type of algebras from the known algebras. As has been pointed out in
\cite{EG1}, one of the main themes is the use of a linear operator
with certain features. A successful example is the Rota-Baxter
operators on a known type of algebras to obtain another type of
algebras with richer structures. Rota-Baxter operators were
introduced by G. Baxter (\cite{Bax}) in 1960 and were realized their
importance by G.-C. Rota in combinatorics and other fields in
mathematics ([Rot1-4]) and have been related to several areas of
mathematics and physics (\cite{At}, [Mi1-2], \cite{Ca}, \cite{Der},
\cite{Deb},  \cite{Ng}, \cite{CK}, [E1-2], [EG1-2]). In fact, a
Rota-Baxter operator on an associative algebra can be used to
construct a dendriform algebra (\cite{Ag2}), a Rota-Baxter operator
on a dendriform algebra or a pair of commutating Rota-Baxter
operators on an associative algebra can be used to construct a
quadri-algebra (\cite{AL}), a Rota-Baxter operator on a
quadri-algebra or a set of three pairwise commutative Rota-Baxter
operators on an associative algebras can be used to construct an
octo-algebra (\cite{Le3}).

In this paper, we give some more direct relationships between these
Loday algebras. The key of our study is that we introduce a kind of
operators, namely, ${\mathcal O}$-operators of those algebras which
generalize the Rota-Baxter operators to all bimodules. In this
sense, Rota-Baxter operators are just the $\mathcal O$-operators
associated to the regular bimodules. The notion of $\mathcal
O$-operator was introduced by Kupershmidt (\cite{Ku}) for a Lie
algebra and can be traced back to Bordemann (\cite{Bo1}) in the
study of integrable systems. It was also extended to be defined for
an associative algebra (\cite{BGN} and \cite{Bai3}; such a structure
also appeared independently in \cite{U} under the name of
generalized Rota-Baxter operator).
 A direct consequence in the case of the invertible $\mathcal
O$-operators is to give a sufficient and necessary condition on the
existence of the $2^{n+1}$ operations on an algebra with the $2^{n}$
operations in an associative cluster, which to our knowledge, it has
not been  written down explicitly yet.

On the other hand, one of the reasons that Kupershmidt introduced
the $\mathcal O$-operators of a Lie algebra is to generalize (the
operator form of) the famous classical Yang-Baxter equation in the
Lie algebra (\cite{Se}, \cite{Ku}). Even an $\mathcal O$-operator of
a Lie algebra can give a solution of the classical Yang-Baxter
equation in a larger Lie algebra (\cite{Bai1}). So it is natural to
consider its analogues in associative algebras and the Loday
algebras. The associative Yang-Baxter equation was introduced by
Aguiar (\cite{Ag1}) to study the infinitesimal bialgebras given by
Joni and Rota (\cite{JR}) in order to provide an algebraic framework
for the calculus of divided difference. Its relation with ${\mathcal
O}$-operators of an associative algebra has been given in
\cite{BGN}. An analogue of the classical Yang-Baxter equation in a
dendriform algebra was introduced in \cite{Bai3} which was closely
related to a kind of bialgebra structures on dendriform algebras
(dendriform D-bialgebras). Both of these analogues can be
interpreted in terms of $\mathcal O$-operators. With an adjustable
symmetry, a solution  of the classical Yang-Baxter equation in a Lie
algebra and its analogues in an associative algebra or a dendriform
algebra is just an $\mathcal O$-operator associated to certain
``dual'' bimodules of the regular bimodules. In this sense, the
classical Yang-Baxter equation and their analogues are ``dual" to
the corresponding Rota-Baxter operators and it is also reasonable to
extend such an idea to the other Loday algebras. We believe that all
these analogues would play important roles in many fields as the
classical Yang-Baxter equation in a Lie algebra has done (\cite{Dr},
\cite{CP}).

Furthermore, the above study has a remarkable byproduct. It is known
that the nondegenerate bilinear forms satisfying certain
(``invariant") conditions are always important (\cite{Bo2}), like
the (symmetric) trace form and the (skew-symmetric) Connes cocycle
(\cite{Co}) on an associative algebra. However, sometimes, the
``invariant" conditions are not easily known for certain algebras.
Motivated by the Drinfeld's famous observation on the relation
between the invertible (skew-symmetric) solutions of the classical
Yang-Baxter equation and the 2-cocycles in a Lie algebra
(\cite{Dr}), the invertible solutions of the above analogues of the
classical Yang-Baxter equation (with an adjustable symmetry) also
can give some nondegenerate bilinear form satisfying certain
conditions. By using the $\mathcal O$-operators again, one can find
that these nondegenerate bilinear forms on an algebra with $2^n$
operations can give the splitting $2^{n+1}$ operations in an
associative cluster. We will illustrate that, starting from the
symmetric invariant (trace) forms on associative algebras, there is
a chain of bilinear forms which corresponds to the cluster of
operations.

The ${\mathcal O}$-operators of an associative algebra have been
studied in \cite{BGN} and \cite{NBG} where the relations between the
associative algebra and the compatible dendriform algebra structures
were given. In this paper, we will give a detailed study on the
other structures. We would like to point out that although many
structures are similar and many results seem ``reasonable", an
explicit and rigorous proof is still necessary. It is also necessary
to write down the details explicitly for a further development.
Furthermore, it is likely that there is an operadic interpretation
of the study in this paper. The paper is organized as follows. In
Section 2, for self-containment, we recall the relationships between
the ${\mathcal O}$-operators of associative algebras and dendriform
algebras and associative Yang-Baxter equation. In Section 3, we
study the ${\mathcal O}$-operators of dendriform algebras and their
relationships with $D$-equation which is an analogue of the
classical Yang-Baxter equation for a dendriform algebra and
quadri-algebras. In section 4, we give a similar study on the
$\mathcal O$-operator of quadri-algebras. In Section 5, we summarize
the study in the previous sections and then give a general
illustration of a similar study for any algebra with $2^n$
operations in an associative cluster.

Throughout this paper, all algebras are finite-dimensional and over
a field of characteristic zero. We also give some notations as
follows. Let $A$ be an algebra with an operation $*$.

(1) Let $L_*(x)$ and $R_*(x)$ denote the left and right
multiplication operator respectively, that is, $L_*(x)y=R_*(y)x=x*y$
for any $x,y\in A$. We also simply denote them by $L(x)$ and $R(x)$
respectively without confusion. Moreover let $L_*, R_*:A\rightarrow
gl(A)$ be two linear maps with $x\rightarrow L_*(x)$ and
$x\rightarrow R_*(x)$ respectively.

(2) Let $r=\sum_{i}a_i\otimes b_i\in A\otimes A$. Set
$$r_{12}=\sum_ia_i\otimes b_i\otimes 1, r_{13}=\sum_{i}a_i\otimes 1\otimes b_i,\;r_{23}=\sum_i1\otimes
a_i\otimes b_i,\eqno (1.3)$$where 1 is a scale. The operation
between two $r$s is in an obvious way. For example,
$$r_{12}*r_{13}=\sum_{i,j}a_i*a_j\otimes
b_i\otimes b_j,\; r_{13}* r_{23}=\sum_{i,j}a_i\otimes a_j\otimes
b_i*b_j,\;r_{23}*r_{12}=\sum_{i,j}a_i\otimes a_j* b_i\otimes
b_j,\eqno (1.4)$$ and so on.

(3) For any linear map $\rho:A\rightarrow gl(V)$, define a linear
map $\rho^*: A\rightarrow gl(V^*)$ by
$$\langle \rho(x) u, v^*\rangle=\langle u,
\rho^*(x)v^*\rangle,\;\;\forall x\in A,\; u\in V,\;v^*\in V^*,\eqno
(1.5)$$ where $\langle,\rangle$ is the ordinary pair between the
vector space $A$ and its dual space $A^*$.

\section{Preliminaries: ${\mathcal O}$-operators of associative
algebras}

For self-containment, we recall some facts on $\mathcal O$-operators
of associative algebras. Most of the results can be found in
[Ag1-4], \cite{Bai3}, \cite{BGN}, \cite{NBG}.

\subsection{${\mathcal O}$-operators of associative
algebras and dendriform algebras}

\setcounter{equation}{0}
\renewcommand{\theequation}
{2.1.\arabic{equation}}

\begin{defn}{\rm Let $A$ be an associative algebra and $V$ be a vector space.
Let $l, r:A\rightarrow gl(V)$ be two linear maps. $V$ (or the pair
$(l,r)$, or $(l,r,V)$) is called a {\it bimodule} of $A$ if
$$l(xy)v=l(x)l(y)v,\;r(xy)v=r(y)r(x)v,
\;l(x)r(y)v=r(y)l(x)v,\;\forall\; x,y\in A, v\in V.\eqno
(2.1.1)$$}\end{defn}

In fact, according to \cite{Sc}, $(l,r, V)$ is a bimodule of an
associative algebra $A$ if and only if the direct sum $A\oplus V$ of
vector spaces is turned into an associative algebra (the {\it
semidirect sum}) by defining multiplication in $A\oplus V$ by
$$(x_1+v_1)*(x_2+v_2)=x_1\cdot x_2+(l(x_1)v_2+r(x_2)v_1),\;\;\forall x_1,x_2\in A, v_1,v_2\in
V.
\eqno(2.1.2)$$ We denote it by $A\ltimes_{l,r}V$.

\begin{exam} {\rm Let $A$ be an associative algebra. If $(l, r, V)$ is a
bimodule of $A$, then $(r^*, l^*, V^*)$ is a bimodule of $A$, which
is called the dual bimodule of $(l,r,V)$. In particular, both $(L,R,
A)$ and $(R^*, L^*, A^*)$ are bimodules of $A$. The former is called
a regular bimodule of $A$.}\end{exam}

\begin{defn} {\rm Let $(A,*)$ be an associative algebra and $(l,r, V)$ be
a bimodule. A linear map $T:V\rightarrow A$ is called an {\it
${\mathcal O}$-operator associated to $(l,r,V)$} if $T$ satisfies
$$T(u)*T(v)=T(l(T(u))v+r(T(v))u),\;\;\forall u,v\in V.\eqno
(2.1.3)$$}\end{defn}

\begin{exam}{\rm  Let $(A,*)$ be an associative algebra. A linear map $R:
A\rightarrow A$ is called a Rota-Baxter operator on $A$ (of weight
zero) if $R$ is an ${\mathcal O}$-operator associated to the regular
bimodule $(L_*,R_*)$, that is, $R$ satisfies (\cite{Bax}, [Rot1-4],
etc.)
$$R(x)* R(y)=R(R(x)*y+x*R(y)),\;\;\forall\; x,y\in A.\eqno
(2.1.4)$$}\end{exam}

\begin{defn}{\rm  Let $A$ be a vector space with two
bilinear products denoted by $\prec$ and $\succ $. $(A, \prec,
\succ)$ is called a {\it dendriform algebra} if for any $x,y,z\in
A$,
$$(x\prec y)\prec z=x\prec (y*z),\;\;(x\succ y)\prec z=x\succ (y\prec
z),\;\;x\succ (y\succ z)=(x*y)\succ z,\eqno (2.1.5)$$ where
$x*y=x\prec y+x\succ y$.}\end{defn}

\begin{prop} {\rm (\cite{Lo1})}\quad Let $(A, \prec, \succ)$ be a dendriform
algebra.

(1) The product given by
$$x*y=x\prec y+x\succ y,\;\;\forall x,y\in A,\eqno (2.1.6)$$
defines an associative algebra. We call $(A,*)$ the associated
associative algebra of $(A,\succ,\prec)$ and $(A,\succ,\prec)$ is
called a compatible dendriform algebra structure on the
associative algebra $(A,*)$.

(2) $(L_\succ, R_\prec)$ is a bimodule of the associated associative
algebra $(A,*)$.\end{prop}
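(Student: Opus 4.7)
The plan is to verify both parts by direct computation from the three dendriform axioms in (2.1.5), which are already structured to match precisely what needs to be shown.

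For part (1), I would expand $(x*y)*z$ and $x*(y*z)$ each into four terms using the definition $x*y = x\prec y + x\succ y$. This gives four mixed products of the form $(x \star_1 y) \star_2 z$ on the left and $x \star_1 (y \star_2 z)$ on the right, where $\star_1,\star_2 \in \{\prec,\succ\}$. The three dendriform axioms give:
$(x\prec y)\prec z + (x\succ y)\prec z = x\prec(y\prec z) + x\prec(y\succ z) + x\succ(y\prec z)$ (combining the first axiom, which rewrites $(x\prec y)\prec z$ as $x\prec(y\prec z) + x\prec(y\succ z)$, with the second axiom), while the third axiom rewrites $(x\prec y)\succ z + (x\succ y)\succ z$ as $x\succ(y\succ z)$. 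Summing gives exactly the four terms of $x*(y*z)$. Thus associativity drops out as a bookkeeping exercise.

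For part (2), I would translate the three bimodule axioms (2.1.1) for the pair $(L_\succ,R_\prec)$ into dendriform notation. Writing them out:
$L_\succ(x*y)v = L_\succ(x)L_\succ(y)v$ becomes $(x*y)\succ v = x\succ(y\succ v)$, which is precisely the third dendriform axiom with $z$ replaced by $v$; $R_\prec(x*y)v = R_\prec(y)R_\prec(x)v$ becomes $v\prec(x*y) = (v\prec x)\prec y$, which is precisely the first dendriform axiom; and $L_\succ(x)R_\prec(y)v = R_\prec(y)L_\succ(x)v$ becomes $x\succ(v\prec y) = (x\succ v)\prec y$, which is precisely the second dendriform axiom. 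Hence the three bimodule identities are merely a relabeling of the three defining relations of a dendriform algebra.

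There is no serious obstacle here; the argument is essentially a verification. The only point requiring mild care is tracking which terms in the expansion of $(x*y)*z$ and $x*(y*z)$ are consumed by which of the three axioms in part (1), since the first and third axioms collapse pairs of mixed products while the second handles a single mixed product on its own. Once this bookkeeping is done, both assertions are immediate.
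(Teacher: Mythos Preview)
Your verification is correct and is the standard direct argument. The paper itself does not supply a proof of this proposition; it simply cites the result from Loday \cite{Lo1}, so there is nothing further to compare.
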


The following conclusion is obvious.

\begin{coro} Let $(A,*)$ be an associative algebra and $\succ,\prec$ be two
bilinear products on $A$. Then $(A,\succ,\prec)$ is a dendriform
algebra if and only if equation (2.1.6) holds and $(L_\succ,
R_\prec)$ is a bimodule of $(A,*)$.\end{coro}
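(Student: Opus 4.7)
The plan is to prove each direction by directly unpacking the bimodule axioms against the dendriform axioms, with the relation $x \ast y = x \succ y + x \prec y$ serving as the link between them.

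For the forward direction, I would simply invoke Proposition 2.1.6: if $(A,\succ,\prec)$ is dendriform, then part (1) says its associated associative product satisfies equation (2.1.6), and since by hypothesis this product coincides with the given $\ast$, part (2) immediately yields that $(L_\succ, R_\prec)$ is a bimodule of $(A,\ast)$.

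For the converse, suppose (2.1.6) holds and that $(L_\succ, R_\prec)$ is a bimodule of $(A,\ast)$. I would spell out the three bimodule axioms of Definition 2.1.1 with $l = L_\succ$, $r = R_\prec$, and $V = A$, and evaluate each on a generic element $z \in A$. The identity $L_\succ(x\ast y)z = L_\succ(x) L_\succ(y) z$ reads $(x \ast y) \succ z = x \succ (y \succ z)$, which is the third dendriform identity in (2.1.5). The identity $R_\prec(x \ast y) z = R_\prec(y) R_\prec(x) z$ reads $z \prec (x \ast y) = (z \prec x) \prec y$, the first dendriform identity after relabeling. The identity $L_\succ(x) R_\prec(y) z = R_\prec(y) L_\succ(x) z$ reads $x \succ (z \prec y) = (x \succ z) \prec y$, the second dendriform identity. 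Since $x \ast y = x \succ y + x \prec y$ holds by hypothesis, these three equalities are literally the defining axioms (2.1.5) of a dendriform algebra, so $(A,\succ,\prec)$ is dendriform.

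No serious obstacle arises; the argument is essentially a bookkeeping exercise in which the three clauses defining an associative bimodule are in bijection with the three dendriform axioms under the correspondence $l \leftrightarrow L_\succ$, $r \leftrightarrow R_\prec$. The only point requiring some care is the order reversal in the right-action identity $R_\prec(x\ast y) = R_\prec(y)R_\prec(x)$, which exactly matches the right-to-left nesting $(z \prec x)\prec y$ on the dendriform side; checking this consistently is the sole subtlety in the otherwise mechanical verification.
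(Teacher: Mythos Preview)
Your proof is correct and supplies exactly the explicit verification that the paper omits (the paper simply declares the corollary ``obvious'' following Proposition 2.1.6). The bijection you set up between the three bimodule axioms for $(L_\succ,R_\prec)$ and the three dendriform identities is precisely the content of the result.

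One small wrinkle in your forward direction: being dendriform does not by itself force the pre-given product $*$ to equal $\succ+\prec$, so equation (2.1.6) cannot literally be \emph{derived} from the dendriform axioms alone. The corollary is tacitly characterizing \emph{compatible} dendriform structures on $(A,*)$, and (2.1.6) should be read as part of the hypothesis package rather than as a conclusion of the forward implication. Your phrase ``since by hypothesis this product coincides with the given $*$'' shows you sensed this, but the sentence just before it (``part (1) says its associated associative product satisfies equation (2.1.6)'') is slightly muddled, since (2.1.6) is an identity relating $*$ to $\succ,\prec$, not a property of the associated product in isolation. Once (2.1.6) is taken as given, Proposition 2.1.6(2) finishes the forward direction immediately, exactly as you say.
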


\begin{theorem}{\rm (\cite{NBG})}\quad Let $T:V\rightarrow A$ be an ${\mathcal
O}$-operator of an associative algebra $(A,*)$ associated to a
bimodule $(l,r,V)$. Then there is a dendriform algebra structure on
$V$ given by
$$u\succ v=l(T(u))v,\;\;u\prec v=r(T(v))u,\;\;\forall\; u,v\in V.\eqno (2.1.7)$$
Therefore, there exists an associated associative algebra structure
on $V$ given by equation (2.1.6) and $T$ is a homomorphism of
associative algebras. Furthermore, $T(V)=\{T(v)|v\in V\}\subset A$
is an associative subalgebra of $(A,*)$ and there is an induced
dendriform algebra structure on $T(V)$ given by
$$T(u)\succ T(v)=T(u\succ v),\;\;T(u)\prec T(v)=T(u\prec v),\;\;\forall\; u,v\in V.\eqno (2.1.8)$$
Moreover, its corresponding associated associative algebra structure
on $T(V)$ given by equation (2.1.6) is just the associative
subalgebra structure of $(A,*)$ and $T$ is a homomorphism of
dendriform algebras.\end{theorem}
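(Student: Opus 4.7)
The plan is to check the three dendriform axioms (2.1.5) directly on $V$ using the products $u\succ v=l(T(u))v$ and $u\prec v=r(T(v))u$, and then to deduce the statements about $T(V)$ as consequences. The single identity that drives every computation is the $\mathcal O$-operator axiom (2.1.3), which I would rewrite as $T(u*v)=T(u)*T(v)$, with $u*v:=u\succ v+u\prec v$; combined with the three bimodule axioms (2.1.1), this is exactly what is needed to translate compositions of the form $l(T(a))l(T(b))$, $r(T(a))r(T(b))$, and $l(T(a))r(T(b))$ into single operators of the form $l(T(a)*T(b))$, $r(T(a)*T(b))$, or their swap.

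I would dispatch the axioms $(u\prec v)\prec w=u\prec(v*w)$ and $u\succ(v\succ w)=(u*v)\succ w$ by collapsing the composed operator on the left via a bimodule axiom and then rewriting $T(u)*T(v)=T(u*v)$ on the right; the middle axiom $(u\succ v)\prec w=u\succ(v\prec w)$ is exactly the third bimodule identity applied to $l(T(u))$ and $r(T(w))$. Once this is in hand, part (2) of the theorem is immediate: the associated product on $V$ is $u*v=l(T(u))v+r(T(v))u$, and (2.1.3) is precisely $T(u*v)=T(u)*T(v)$, so $T$ is an associative homomorphism and $T(V)$ is closed under $*$, hence an associative subalgebra of $(A,*)$.

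The main subtlety, and the step I expect to cost the most attention, is the well-definedness of (2.1.8) on $T(V)$ when $T$ is not injective. The value $u\succ v=l(T(u))v$ is unchanged when $u$ is replaced by another preimage of $T(u)$, but the dependence on $v$ itself (rather than just on $T(v)$) is genuine, so this needs an argument. I would apply the identity $T(u)*T(v)=T(l(T(u))v)+T(r(T(v))u)$ to two pairs $(u,v)$ and $(u,v')$ with $T(v)=T(v')$; since $r(T(v))u=r(T(v'))u$, subtraction forces $T(l(T(u))v)=T(l(T(u))v')$, i.e.\ $T(u\succ v)=T(u\succ v')$, and the $\prec$ case is symmetric. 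After this check, the dendriform axioms on $T(V)$ are pulled back from those on $V$ through $T$; $T$ is a dendriform homomorphism tautologically from (2.1.8); and summing the two equalities of (2.1.8), together with Proposition~2.1.6(1), identifies the associated associative product of this dendriform structure on $T(V)$ with the restriction of $*$.
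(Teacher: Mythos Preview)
Your approach is correct and matches the spirit of the paper's method. Note, however, that the paper does not actually prove Theorem~2.1.8 here---it is cited from \cite{NBG} and stated without proof. The paper does prove the analogous result one level up (Proposition~3.4.6, from dendriform algebras to quadri-algebras) by exactly the strategy you describe: a direct verification of each algebra axiom, rewriting compositions of the bimodule operators via the relations (3.1.1)--(3.1.9) and invoking the $\mathcal O$-operator identity to replace $T(u)\cdot T(v)$ by $T(u\cdot v)$ at each step. Your three-line plan for the dendriform axioms is the $n=0$ case of that argument.

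Where your proposal is actually \emph{more careful} than the paper is the well-definedness of the induced products (2.1.8) on $T(V)$ when $T$ has a nontrivial kernel. The paper's proof of Proposition~3.4.6 ends with ``Furthermore the other results follow easily'' and does not isolate this point. Your subtraction argument---applying (2.1.3) to $(u,v)$ and $(u,v')$ with $T(v)=T(v')$, using $r(T(v))u=r(T(v'))u$ to cancel, and concluding $T(u\succ v)=T(u\succ v')$---is the right way to close this gap, and the $\prec$ case with $T(u)=T(u')$ is indeed symmetric. So your proposal is sound and, on this one point, sharper than what the paper records.
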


\begin{coro}{\rm (\cite{Ag2})}\quad Let $(A,*)$ be an associative algebra and $R$
be a Rota-Baxter operator (of weight zero) on $A$. Then there exists
a dendriform algebra structure on $A$ given by
$$x\succ y= R(x)*y,\;\;x\prec y=x*R(y),\;\;\forall x,y\in A.\eqno
(2.1.9)$$\end{coro}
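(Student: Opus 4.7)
The plan is to derive this corollary as a direct specialization of Theorem 2.1.8 to the regular bimodule. By Example 2.1.4, a Rota-Baxter operator $R$ of weight zero on $(A,*)$ is exactly an $\mathcal O$-operator associated to the regular bimodule $(L_*, R_*, A)$. So I take $V = A$, $(l,r) = (L_*, R_*)$, and $T = R$ in Theorem 2.1.8.

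Substituting into the formulas (2.1.7) of Theorem 2.1.8 for the induced dendriform operations on $V$, I get
\[
u \succ v = l(T(u))v = L_*(R(u))v = R(u)*v,
\]
\[
u \prec v = r(T(v))u = R_*(R(v))u = u*R(v),
\]
for all $u,v \in A$. These are precisely the formulas (2.1.9) claimed in the corollary. Since Theorem 2.1.8 guarantees that the pair $(\succ,\prec)$ so defined on $V$ satisfies the three dendriform axioms (2.1.5), the proof is complete.

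There is really no obstacle here beyond recognizing the identification; the content of Corollary 2.1.9 is strictly contained in Theorem 2.1.8. If one wanted a self-contained verification, one could instead verify the three axioms of Definition 2.1.5 directly. For instance, writing $x*y = x\prec y + x\succ y = x*R(y) + R(x)*y$, the first axiom $(x\prec y)\prec z = x\prec (y*z)$ becomes $(x*R(y))*R(z) = x * R(R(y)*z + y*R(z))$, which follows from associativity of $*$ together with the Rota-Baxter identity $R(y)*R(z) = R(R(y)*z + y*R(z))$. The other two axioms follow by analogous manipulations. But the argument via Theorem 2.1.8 is cleaner and conceptually transparent, so I would present it that way.
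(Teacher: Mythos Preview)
Your proof is correct and matches the paper's approach exactly: the corollary is placed immediately after Theorem 2.1.8 with no separate proof, since (as you observe via Example 2.1.4) a Rota-Baxter operator is by definition an $\mathcal O$-operator associated to the regular bimodule $(L_*,R_*,A)$, and specializing equation (2.1.7) yields equation (2.1.9) directly.
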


\begin{coro} {\rm (\cite{NBG})}\quad Let $(A,*)$ be an associative algebra. There
is a compatible dendriform algebra structure $(\succ, \prec)$ on
$(A,*)$ if and only if there exists an invertible $\mathcal
O$-operator of $(A,
*)$.\end{coro}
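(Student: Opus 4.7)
The plan is to prove both directions by exhibiting explicit $\mathcal{O}$-operators or dendriform structures, using Theorem 2.1.8 and Proposition 2.1.6 as the main tools; no new computation is really needed.

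For the ``if'' direction, suppose $T:V\rightarrow A$ is an invertible $\mathcal{O}$-operator associated to some bimodule $(l,r,V)$. I would first apply Theorem 2.1.8 to obtain a dendriform algebra structure on $T(V)=A$ given by equation (2.1.8), namely
$$x\succ y=T\bigl(l(x)T^{-1}(y)\bigr),\quad x\prec y=T\bigl(r(y)T^{-1}(x)\bigr),\quad \forall x,y\in A,$$
where invertibility of $T$ is what lets us push the structure from $V$ back to $A$. Theorem 2.1.8 also tells us that the associated associative algebra on $T(V)$ coincides with the subalgebra structure inherited from $(A,*)$, which in this case is $(A,*)$ itself. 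Hence $(\succ,\prec)$ is compatible with $*$.

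For the ``only if'' direction, assume $(\succ,\prec)$ is a compatible dendriform structure on $(A,*)$. The natural candidate is $T=\mathrm{id}_A$ associated to the bimodule $(L_\succ, R_\prec, A)$, which is indeed a bimodule of $(A,*)$ by Proposition 2.1.6(2). Then the $\mathcal{O}$-operator condition (2.1.3) becomes
$$x*y=L_\succ(x)y+R_\prec(y)x=x\succ y+x\prec y,\quad\forall x,y\in A,$$
and this is exactly equation (2.1.6), the compatibility assumption. So $\mathrm{id}_A$ is an invertible $\mathcal{O}$-operator associated to $(L_\succ, R_\prec, A)$.

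There is no real obstacle here; the statement is essentially a repackaging of Theorem 2.1.8 plus the observation that compatibility of a dendriform structure with $*$ is precisely the $\mathcal{O}$-operator equation for the identity map on the bimodule $(L_\succ, R_\prec)$. The only subtlety worth verifying carefully is that in the ``if'' direction one recovers exactly the original associative product $*$ on $A$ (not merely some isomorphic copy), and this is guaranteed by the last sentence of Theorem 2.1.8, which identifies the transported associated associative algebra on $T(V)$ with the subalgebra of $(A,*)$.
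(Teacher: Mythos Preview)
Your proof is correct and follows essentially the same approach as the paper: for the ``if'' direction you transport the dendriform structure from $V$ to $A$ via the invertible $T$ (obtaining exactly the formulas in equation (2.1.10)), and for the ``only if'' direction you take $T=\mathrm{id}_A$ associated to the bimodule $(L_\succ, R_\prec)$. The paper's argument is identical, just slightly terser.
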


In fact, if $T$ is an invertible $\mathcal O$-operator associated to
a bimodule $(l,r,V)$, then the compatible dendriform algebra
structure on $(A,*)$ is given by
$$x\succ y=T(l(x)T^{-1}(y)),\;\; x\prec y=T(r(y)T^{-1}(x)),\;\;\forall x,y\in A.\eqno (2.1.10)$$
Conversely, let $(A,\succ, \prec)$ be a dendriform algebra and
$(A,*)$ be the associated associative algebra. Then the identity map
$id:A\rightarrow A$ is an $\mathcal O$-operator  of $(A,*)$
associated to the bimodule $(L_\succ, R_\prec)$.

\subsection{${\mathcal O}$-operators of associative
algebras and associative Yang-Baxter equation}

\setcounter{equation}{0}
\renewcommand{\theequation}
{2.2.\arabic{equation}}

\begin{defn}
{\rm  Let $(A,*)$ be an associative algebra and $r\in A\otimes A$.
$r$ is called a solution of {\it associative Yang-Baxter equation}
in $A$ if $r$ satisfies
$$r_{12}*r_{13}+r_{13}*r_{23}-r_{23}*r_{12}=0.\eqno
(2.2.1)$$}\end{defn}

\begin{remark}{\rm The associative Yang-Baxter equation was introduced by  Aguiar
([Ag1-2]) with the following form  to construct an infinitesimal
bialgebra:
$$r_{13}*r_{12}-r_{12}*r_{23}+r_{23}*r_{13}=0. \eqno (2.2.2)$$
The form (2.2.1) of associative Yang-Baxter equation is exactly
equation (2.2.2) in the opposite algebra (\cite{Ag3}) and it was
introduced in \cite{Bai3} to construct a symmetric Frobenius algebra
(see the end of this subsection). In particular, when $r$ is
skew-symmetric, it is obvious that equation (2.2.1) is just equation
(2.2.2) under the operation $\sigma_{13} (x\otimes y\otimes
z)=z\otimes y\otimes x$ for any $x,y,z\in A$.}\end{remark}

Let $A$ be a vector space. For any $r\in A\otimes A$, $r$ can be
regarded as a map from $A^*$ to $A$ in the following way:
$$\langle  u^*\otimes v^*,r\rangle   =\langle  u^*,r(v^*)\rangle,\;\;\forall\; u^*,v^*\in A^*.\eqno (2.2.3)$$

\begin{prop}{\rm (\cite{Bai3})}\quad Let $(A,*)$ be an associative algebra and
$r\in A\otimes A$.  Then $r$ is a skew-symmetric solution of
associative Yang-Baxter equation in $A$ if and only if $r$ is an
${\mathcal O}$-operator of the associative algebra $(A,*)$
associated to the bimodule $(R_*^*,L_*^*)$, that is, $r$ satisfies
$$r(a^*)*r(b^*)=r(R_*^*(r(a^*))b^*+L_*^*(r(b^*))a^*),\;\;\forall\; a^*,b^*\in A^*.\eqno (2.2.4)$$
\end{prop}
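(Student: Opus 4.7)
The plan is to pair each side of the proposed equivalence with enough test covectors to reduce it to a scalar identity, then match terms, using the skew-symmetry of $r$ at exactly one spot. Write $r=\sum_i a_i\otimes b_i$; by (2.2.3) the associated map $r:A^*\to A$ is given by $r(v^*)=\sum_i\langle v^*,b_i\rangle a_i$. Skew-symmetry is equivalent to $\sum_i\langle v^*,a_i\rangle b_i=-\sum_i\langle v^*,b_i\rangle a_i$ for every $v^*\in A^*$.

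For the first step, I would unpack the $\mathcal{O}$-operator equation (2.2.4) using the definitions of $L_*^*,R_*^*$ in (1.5). A direct computation gives
\begin{align*}
r(R_*^*(r(a^*))b^*) &= \sum_{i,j} \langle a^*,b_i\rangle\,\langle b^*,b_j * a_i\rangle\, a_j,\\
r(L_*^*(r(b^*))a^*) &= \sum_{i,j} \langle b^*,b_i\rangle\,\langle a^*,a_i * b_j\rangle\, a_j,
\end{align*}
together with $r(a^*)*r(b^*)=\sum_{i,j}\langle a^*,b_i\rangle\langle b^*,b_j\rangle\, a_i*a_j$. Pairing (2.2.4) with an arbitrary $c^*\in A^*$ thus turns it into a scalar identity $E_{a^*,b^*,c^*}=0$ involving three sums.

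For the second step, I would use (1.3)--(1.4) to pair the AYBE (2.2.1) with the test triple $c^*\otimes a^*\otimes b^*$, obtaining another scalar identity $F_{a^*,b^*,c^*}=0$ whose three summands come respectively from $r_{12}*r_{13}$, $r_{13}*r_{23}$, and $r_{23}*r_{12}$. The matching is then the heart of the argument: the summand $\sum_{i,j}\langle c^*,a_i*a_j\rangle\langle a^*,b_i\rangle\langle b^*,b_j\rangle$ already appears identically in both $E$ and $F$; the contribution of $r_{23}*r_{12}$ matches the $L_*^*$-term of $E$ after only an index swap $i\leftrightarrow j$; and the contribution of $r_{13}*r_{23}$ matches the $R_*^*$-term of $E$ \emph{only} after applying the skew-symmetry identity once, with the auxiliary linear functional $\phi(x)=\sum_j\langle b^*,b_j*x\rangle\langle c^*,a_j\rangle$, rewriting $\sum_i\langle a^*,b_i\rangle\phi(a_i)=-\sum_i\langle a^*,a_i\rangle\phi(b_i)$ and relabeling. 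Since $a^*,b^*,c^*$ vary freely, $E\equiv 0$ is equivalent to $F\equiv 0$, which is the proposition.

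The main obstacle will be bookkeeping rather than ideas: the conventions of (1.4) (especially for $r_{23}*r_{12}$) must be tracked with care, and one has to notice that among the three pairs of matching summands exactly one requires the skew-symmetry of $r$. This isolation of a single use of skew-symmetry is also what explains why the assumption on $r$ is truly needed — drop it, and that one summand fails to match.
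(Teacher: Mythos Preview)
Your argument is correct. The paper does not actually supply a proof of this proposition (it is quoted from \cite{Bai3}), so there is no direct comparison to make. However, for the parallel results it \emph{does} prove (Theorem~3.3.3 and Proposition~3.4.13), the paper fixes a basis $\{e_1,\dots,e_n\}$ with structure constants, writes $r=\sum r_{ij}e_i\otimes e_j$, and then shows that the coefficient of a basis vector in the $\mathcal{O}$-operator identity equals the coefficient of a basis tensor in the Yang--Baxter--type equation. Your pairing with arbitrary covectors $a^*,b^*,c^*$ is exactly this computation in coordinate-free language; the three summands you track correspond one-for-one to the three terms isolated in those proofs. The one observation worth highlighting, which your proposal makes explicit but the paper's basis computations leave implicit, is that skew-symmetry is invoked precisely once (to match the $r_{13}*r_{23}$ term to the $R_*^*$ term), so the hypothesis is genuinely needed and not just a convenience.
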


In fact, there is a more general relation between $\mathcal
O$-operators of associative algebras and associative Yang-Baxter
equation. Let $\sigma:A\otimes A\rightarrow A\otimes A$ be the
exchange operator defined as
$$\sigma(x\otimes y)=y\otimes x,\;\;\forall x,y\in A.\eqno (2.2.5)$$
Furthermore, let $V_1,V_2$ be two vector spaces and
$T:V_1\rightarrow V_2$ be a linear map.  Then $T$ can be identified
as an element in $V_2\otimes V_1^*$ by $\langle T, v_2^*\otimes
v_1\rangle =\langle T(v_1), v_2^*\rangle$ for any  $v_1\in V_1,
v_2^*\in V_2^* $. Moreover, since $V_2, V_1^*$ are the subspaces of
$V_2\oplus V_1^*$, any linear map $T:V_1\rightarrow V_2$ is
obviously an element in the vector space $(V_2\oplus V_1^*)\otimes
(V_2\oplus V_1^*)$ (also see the proof of Theorem 3.3.5).

\begin{theorem} {\rm (\cite{BGN})}\quad Let $(A,*)$ be an associative algebra. Let
$(l,r,V)$ be a bimodule of $A$ and $(r^*,l^*,V^*)$ be the (dual)
bimodule given in Example 2.1.2. Let $T:V\rightarrow A$ be a linear
map identified as an element in $A\otimes V^*$ which is in the
underlying vector space of $(A\ltimes_{r^*,l^*}V^*)\otimes
(A\ltimes_{r^*,l^*}V^*)$. Then $r=T-\sigma(T)$ is a skew-symmetric
solution of associative Yang-Baxter equation in the associative
algebra $A\ltimes_{r^*,l^*} V^*$ if and only if $T$ is an ${\mathcal
O}$-operator of the associative algebra $(A,*)$ associated to the
bimodule $(l,r,V)$.\end{theorem}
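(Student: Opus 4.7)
The plan is to reduce to Proposition 2.2.3 applied to the semidirect-sum algebra $\tilde{A} := A \ltimes_{r^*, l^*} V^*$. By construction, $r = T - \sigma(T)$ is skew-symmetric in $\tilde{A} \otimes \tilde{A}$, so Proposition 2.2.3 (with $A$ replaced by $\tilde{A}$) says that $r$ solves the associative Yang--Baxter equation in $\tilde{A}$ if and only if $r$, viewed as a linear map $\tilde{A}^* \to \tilde{A}$, is an $\mathcal{O}$-operator of $\tilde{A}$ associated to the dual regular bimodule on $\tilde{A}^*$ (whose left and right actions are the duals of right and left multiplication in $\tilde{A}$). The task thus reduces to showing that this condition on $r$ is equivalent to $T: V \to A$ being an $\mathcal{O}$-operator associated to $(l, r, V)$.

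Identify $\tilde{A}^* = A^* \oplus V$ via $V^{**} \cong V$; then writing $T = \sum_i e_i \otimes f_i$, the map $r: \tilde{A}^* \to \tilde{A}$ restricts to $T$ on the $V$-summand (taking values in $A$) and to $-T^{\mathrm{tr}}: y^* \mapsto -\sum_i \langle e_i, y^*\rangle f_i$ on the $A^*$-summand (taking values in $V^*$). I would split the $\mathcal{O}$-operator identity for $r$ into four cases according to whether each of its two arguments lies in $V$ or in $A^*$. The key case is both arguments in $V$: using the multiplication rule of $\tilde{A}$ and the definition of dual actions, one checks that the two dualized right/left multiplication terms simplify respectively to $l(T(v))w$ and $r(T(w))v$, both of which live in $V \subset \tilde{A}^*$. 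Since $r|_V = T$, the identity collapses exactly to $T(v) * T(w) = T(l(T(v))w + r(T(w))v)$, which is the $\mathcal{O}$-operator condition (2.1.3) for $T$.

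The remaining three cases are verifications: the two ``mixed'' cases (one argument in $V$, the other in $A^*$) both unwind, via analogous dualizations, to yield the same identity on $T$, and the case of both arguments in $A^*$ is trivially $0 = 0$ because $V^* \cdot V^* = 0$ in $\tilde{A}$. Thus all four cases together are equivalent to the $\mathcal{O}$-operator condition on $T$, completing the reduction. The main obstacle is bookkeeping: one must repeatedly compute the duals of left and right multiplication in $\tilde{A}$ by elements of $A$ or $V^*$ and identify how each restricts to the summands $A^*$ and $V$ of $\tilde{A}^*$. Each such computation is elementary, but there are many pieces to track, and the two distinct pairings involved---between $A$ and $A^*$ on one hand, $V$ and $V^*$ on the other---must be kept rigorously separate throughout.
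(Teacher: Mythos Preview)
Your approach is correct and the four-case decomposition works as you describe: the $(V,V)$ case gives exactly the $\mathcal{O}$-operator identity (2.1.3), the $(A^*,A^*)$ case is $0=0$ since $V^*\cdot V^*=0$ in $\tilde A$, and each mixed case, after pairing with an arbitrary element and unwinding the duals, reproduces the same identity (2.1.3).

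The paper itself does not prove Theorem~2.2.4 (it is cited from \cite{BGN}), but the proof it gives for the parallel dendriform result, Theorem~3.3.5, shows its preferred method, and that method is genuinely different from yours. There the author works directly in the triple tensor product: one fixes bases of $A$ and $V$, writes $r=T\pm\sigma(T)$ explicitly, expands each of the three terms $r_{12}*r_{13}$, $r_{13}\prec r_{23}$, $r_{23}\succ r_{12}$ in $\tilde A^{\otimes 3}$ using the semidirect-product multiplication, simplifies expressions like $\sum_i T(v_i)\otimes l_\succ^*(T(v_k))v_i^*$ to $\sum_i T(l_\succ(T(v_k))v_i)\otimes v_i^*$, and reads off the $\mathcal{O}$-operator identities from the coefficients of the three tensor types $A\otimes V^*\otimes V^*$, $V^*\otimes A\otimes V^*$, $V^*\otimes V^*\otimes A$. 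Your route instead uses Proposition~2.2.3 as a black box to convert the tensor equation into the operator equation $r(\alpha)\,\tilde*\,r(\beta)=r(R_{\tilde*}^*(r(\alpha))\beta+L_{\tilde*}^*(r(\beta))\alpha)$ on $\tilde A^*$, and then splits along $\tilde A^*=A^*\oplus V$. The paper's computation is self-contained and makes the tensor-component structure visible (which is what it reuses in Sections~3 and~4), whereas your argument is shorter, isolates the essential content in the $(V,V)$ case, and shows transparently why the other cases contribute nothing new.
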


\begin{defn}
{\rm  Let $(A,*)$ be an associative algebra. A bilinear form
${\mathcal B}(\;,\;)$ on $A$ is {\it invariant} (or a {\it trace
form}) if
$${\mathcal B}(x*y,z)={\mathcal B}(x,y*z),\;\;\forall\; x,y\in A.\eqno (2.2.6)$$
A  {\it Connes cocycle} of $A$ is a skew-symmetric bilinear form
$\omega$ satisfying
$$\omega (x*y,z)+\omega (y*z,x)+\omega(z*x,y)=0,\;\;\forall\; x,y,z\in
A.\eqno (2.2.7)$$}\end{defn}

\begin{theorem} {\rm ([Ag3, Proposition 2.1])}\quad Let $A$ be an associative algebra
and $r\in A\otimes A$. Suppose that $r$ is skew-symmetric and
nondegenerate. Then $r$ is a solution of associative Yang-Baxter
equation in $A$ if and only if the inverse of the isomorphism
$A^*\rightarrow A$ induced by $r$, regarded as a bilinear form
$\omega$ on $A$, is a Connes cocycle of $A$. That is,
$\omega(x,y)=\langle  r^{-1}x,y\rangle   $ for any $x,y\in
A$.\end{theorem}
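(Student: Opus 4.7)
The plan is to leverage Proposition 2.2.3, which identifies a skew-symmetric solution $r$ of the associative Yang-Baxter equation with an $\mathcal{O}$-operator $r: A^* \to A$ associated to the dual bimodule $(R_*^*, L_*^*)$, and then to transport the $\mathcal{O}$-operator relation across the isomorphism $r$ into a relation on $\omega = r^{-1}$. Nondegeneracy of $r$ makes $r: A^* \to A$ a linear isomorphism, so $\omega(x,y) = \langle r^{-1}(x), y\rangle$ is well defined as a bilinear form on $A$; meanwhile skew-symmetry of $r$ as an element of $A\otimes A$ reads $\langle a^*, r(b^*)\rangle = -\langle b^*, r(a^*)\rangle$ for all $a^*,b^* \in A^*$, which immediately gives that $\omega$ is skew-symmetric.

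Starting from the $\mathcal{O}$-operator identity (2.2.4), I would substitute $x = r(a^*)$ and $y = r(b^*)$ and apply $r^{-1}$ to both sides to obtain
$$r^{-1}(x*y) = R_*^*(x)\, r^{-1}(y) + L_*^*(y)\, r^{-1}(x).$$
Pairing this with an arbitrary $z \in A$ and using the convention (1.5) to replace each dual action by the corresponding multiplication, so that $\langle R_*^*(x) r^{-1}(y), z\rangle = \langle r^{-1}(y), z*x\rangle$ and similarly $\langle L_*^*(y) r^{-1}(x), z\rangle = \langle r^{-1}(x), y*z\rangle$, yields the intermediate identity
$$\omega(x*y, z) = \omega(y, z*x) + \omega(x, y*z).$$
Applying skew-symmetry of $\omega$ to the two right-hand terms then rearranges this into precisely the Connes cocycle condition (2.2.7).

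Since $r$ is bijective and $z$ ranges freely over $A$, every step in the above derivation is reversible, which delivers the converse at the same time. The main (minor) obstacle is simply keeping the two dual actions aligned correctly with the right-versus-left multiplication operators they come from, so that the substitution $x=r(a^*)$, $y=r(b^*)$ produces $z*x$ in the $R_*^*$ term and $y*z$ in the $L_*^*$ term; no new ingredient beyond Proposition 2.2.3 and the bookkeeping of (1.5) is required.
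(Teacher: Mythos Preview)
Your proof is correct. The paper itself does not supply a proof of this theorem (it is quoted from \cite{Ag3}), but your route via Proposition~2.2.3 is exactly the mechanism the paper invokes implicitly in the remark following Corollary~2.2.7, where the compatible dendriform structure is explained as coming from the invertible $\mathcal O$-operator $T=r$ associated to $(R^*,L^*)$; so your argument is fully in line with the paper's framework. For comparison, when the paper proves the analogous statement for quadri-algebras (Theorem~4.3.6) it instead pairs $u^*\otimes v^*\otimes w^*$ directly against each tensor term $r_{13}\succ r_{23}$, $r_{23}\wedge r_{12}$, etc., and reads off the bilinear-form identity without passing through the $\mathcal O$-operator characterization; that direct computation would work here too, but your approach is cleaner and better exploits the equivalence already established in Proposition~2.2.3.
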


\begin{coro} {\rm (\cite{Bai3})}\quad Let $(A,*)$ be an associative algebra
with a nondegenerate Connes cocycle $\omega$. Then there exists a
compatible dendriform algebra structure $\succ,\prec$ on $(A,*)$
given by
$$\omega(x\succ y,z)=\omega(y, z*x),\;\; \omega(x\prec
y,z)=\omega(x,y*z),\;\; \forall x,y,z\in A.\eqno (2.2.8)$$\end{coro}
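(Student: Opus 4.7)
The plan is to obtain the dendriform structure by chaining three results already established in the paper: Theorem 2.2.6 translates the data of a nondegenerate Connes cocycle into a skew-symmetric solution of the associative Yang-Baxter equation, Proposition 2.2.3 re-interprets such a solution as an invertible $\mathcal O$-operator, and then Corollary 2.1.9 (with formula (2.1.10)) produces a compatible dendriform algebra on $(A,*)$. The only real work is the cosmetic step at the end of translating the $\mathcal O$-operator formulas back into the bilinear-form formulas (2.2.8).

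First I would set $r:A^*\to A$ to be the linear map induced by the inverse isomorphism $A^*\to A$ coming from $\omega$, so that $\omega(x,y)=\langle r^{-1}(x),y\rangle$. Theorem 2.2.6 then says that $r$, viewed as an element of $A\otimes A$, is a skew-symmetric solution of the associative Yang-Baxter equation in $(A,*)$. Next, by Proposition 2.2.3, this is equivalent to saying that $r:A^*\to A$ is an $\mathcal O$-operator of $(A,*)$ associated to the bimodule $(R_*^*,L_*^*)$. Since $\omega$ is nondegenerate, $r$ is invertible, so Corollary 2.1.9 applies and gives a compatible dendriform algebra structure on $(A,*)$ through formula (2.1.10):
\[
x\succ y = r\bigl(R_*^*(x)\,r^{-1}(y)\bigr),\qquad x\prec y = r\bigl(L_*^*(y)\,r^{-1}(x)\bigr).
\]

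Finally I would match these formulas with the ones in (2.2.8). Using $\omega(u,v)=\langle r^{-1}(u),v\rangle$ and the definitions of the dual actions,
\[
\omega(x\succ y,z)=\langle R_*^*(x)\,r^{-1}(y),\,z\rangle=\langle r^{-1}(y),\,R_*(x)z\rangle=\omega(y,z*x),
\]
and analogously
\[
\omega(x\prec y,z)=\langle L_*^*(y)\,r^{-1}(x),\,z\rangle=\langle r^{-1}(x),\,L_*(y)z\rangle=\omega(x,y*z),
\]
which are exactly the defining relations in (2.2.8). Conversely, since $\omega$ is nondegenerate these two identities determine $\succ$ and $\prec$ uniquely, so the products obtained from (2.1.10) agree with those defined by (2.2.8).

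I do not expect a serious obstacle here: every ingredient is quoted, and the reduction to (2.2.8) is a one-line computation with the dual representations. If anything is delicate, it is keeping the conventions straight between the bilinear form $\omega$, the tensor $r\in A\otimes A$, and the map $r:A^*\to A$ defined by (2.2.3), so that the signs and the order of arguments in the invariance relations come out correctly; the rest is a direct application of Theorem 2.2.6, Proposition 2.2.3, and Corollary 2.1.9.
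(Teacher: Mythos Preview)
Your argument is correct and follows the same route the paper indicates right after the corollary: identify the map $T:A^*\to A$ coming from $\omega$ as an invertible $\mathcal O$-operator associated to $(R_*^*,L_*^*)$ and then apply formula (2.1.10). The only slip is a numbering one --- the result you invoke for the compatible dendriform structure from an invertible $\mathcal O$-operator is Corollary 2.1.10, not 2.1.9.
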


In fact, the above dendriform algebra structure is obtained by an
invertible $\mathcal O$-operator $T$ associated to the bimodule
$(R^*,L^*)$, where $T:A^*\rightarrow A$ satisfies $\langle
T^{-1}(x), y\rangle=\omega (x,y)$ for any $x,y\in A$ .

\begin{coro} {\rm (\cite{NBG})}\quad Let $(A,\succ,\prec)$ be a dendriform
algebra. Then
$$r=\sum_{i}^n (e_i\otimes e_i^*-e_i^*\otimes e_i)\eqno (2.2.9)$$
is a solution of associative Yang-Baxter equation in the
associative algebra $A \ltimes_{R_\prec^*,L_\succ^*} A^*$, where
$\{e_1,\cdots, e_n\}$ is a basis of $A$ and $\{e_1^*,\cdots,
e_n^*\}$ is its dual basis. Moreover, there is a natural Connes
cocycle $\omega$  on the associative algebra $A
\ltimes_{R_\prec^*,L_\succ^*} A^*$ induced by $r^{-1}: A\oplus
A^*\rightarrow (A\oplus A^*)^*$, which is given by
$$\omega(x+a^*,y+b^*)=-\langle  x,b^*\rangle   +\langle  a^*,y\rangle   ,\;\;\forall x,y\in
A,\;\;a^*,b^*\in A^*.\eqno (2.2.10)$$\end{coro}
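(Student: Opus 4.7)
The plan is to deduce both assertions directly from Theorems~2.2.4 and~2.2.6, applied to the canonical $\mathcal{O}$-operator attached to the given dendriform structure.

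First, by Proposition~2.1.6 the triple $(L_\succ,R_\prec,A)$ is a bimodule of the associated associative algebra $(A,*)$, and as noted in the paragraph following Corollary~2.1.10, the identity map $\mathrm{id}_A:A\to A$ is then an $\mathcal{O}$-operator of $(A,*)$ associated to this bimodule (indeed (2.1.3) specializes to $x*y=L_\succ(x)y+R_\prec(y)x$, which is just (2.1.6)). Under the identification of a linear map $T:V\to A$ with an element of $A\otimes V^*$ used in Theorem~2.2.4, the map $\mathrm{id}_A$ corresponds precisely to $\sum_{i=1}^{n}e_i\otimes e_i^*$. The dual bimodule of $(L_\succ,R_\prec,A)$, in the sense of Example~2.1.2, is $(R_\prec^*,L_\succ^*,A^*)$. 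Applying Theorem~2.2.4 with $T=\mathrm{id}_A$ and $(l,r,V)=(L_\succ,R_\prec,A)$ therefore yields at once that
$$r = T-\sigma(T) = \sum_{i=1}^{n}(e_i\otimes e_i^*-e_i^*\otimes e_i)$$
is a skew-symmetric solution of the associative Yang-Baxter equation in $A\ltimes_{R_\prec^*,L_\succ^*}A^*$.

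For the second assertion, I would first verify that $r$ is nondegenerate as a tensor in $V\otimes V$ with $V=A\oplus A^*$, so that Theorem~2.2.6 is applicable. Under the canonical identification $V^*\cong A^*\oplus A$, formula (2.2.3) shows that $r:V^*\to V$ acts by $f+z\mapsto z-f$, which is visibly an isomorphism with inverse $r^{-1}(x+a^*)=-a^*+x$. Theorem~2.2.6 then states that $\omega(u,v)=\langle r^{-1}(u),v\rangle$ is a Connes cocycle on $A\ltimes_{R_\prec^*,L_\succ^*}A^*$; substituting the formula for $r^{-1}$ and expanding the natural pairing between $A^*\oplus A$ and $A\oplus A^*$ recovers the asserted expression (2.2.10).

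The only point that really needs care is bookkeeping: to obtain (2.2.10) in exactly the stated form one must track the correspondence $T\leftrightarrow\sum_ie_i\otimes e_i^*$, the canonical identification $(A\oplus A^*)^*\cong A^*\oplus A$, and the sign conventions of (2.2.3) consistently across the calculation. Modulo those routine verifications, both halves of the corollary follow immediately from the results already in place.
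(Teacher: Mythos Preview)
Your argument is correct and follows exactly the template the paper uses for the analogous results (see the proof of Corollary~3.3.8, and likewise Proposition~3.4.12 and Corollary~4.4.13): observe that $\mathrm{id}_A$ is an $\mathcal{O}$-operator for the bimodule $(L_\succ,R_\prec,A)$, identify it with $\sum_i e_i\otimes e_i^*$, apply Theorem~2.2.4 to obtain the skew-symmetric solution $r=T-\sigma(T)$, and then invoke Theorem~2.2.6 for the Connes cocycle. The paper itself does not write out a proof of Corollary~2.2.8 (it cites [NBG]), but your derivation is precisely the intended one.

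One small bookkeeping point you already flag: with the convention (2.2.3) one finds $r(f+z)=z-f$ and hence $\omega(x+a^*,y+b^*)=\langle -a^*+x,\,y+b^*\rangle=\langle x,b^*\rangle-\langle a^*,y\rangle$, which is the negative of the displayed formula (2.2.10). This is harmless since a scalar multiple of a Connes cocycle is again a Connes cocycle, but it is worth noting that ``recovers the asserted expression'' holds only up to this overall sign.
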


\begin{remark}{\rm The above result can be partly reformulated  from
[Ag 3, Proposition 5.8].}
\end{remark}

At the end of this subsection, we briefly introduce one of the
motivations to study the associative Yang-Baxter equation (the other
motivations can be found in [Ag1-3]), which can be regarded as a
background and explain why it is an associative analogue of the
classical Yang-Baxter equation in a Lie algebra.

Recall that a (symmetric) Frobenius algebra $(A,{\mathcal B})$ is an
associative algebra $A$ with a nondegenerate  (symmetric) invariant
bilinear form ${\mathcal B}(\;,\;)$. The study of Frobenius algebras
plays an important role in many topics in mathematics and
mathematical physics (\cite{Ko}, \cite{RFFS}, etc.). Motivated by
the study of Lie bialgebras and Manin triple (\cite{Dr}, \cite{CP}),
it is natural to consider the following construction. Let $(A,*_A)$
be an associative algebra and suppose that there is another
associative algebra structure $*_{A^*}$ on its dual space $A^*$. If
there is an associative algebra structure on the direct sum $A\oplus
A^*$ of the underlying vector spaces of $A$ and $A^*$ such that both
$A$ and $A^*$ are subalgebras and the natural bilinear form on
$A\oplus A^*$ given by
$${\mathcal B}(x+a^*,y+b^*)=\langle  x,b^*\rangle   +\langle  a^*,y\rangle   ,\;\;\forall\; a^*,b^*\in
A^*,\;x,y\in A.\eqno (2.2.11)$$ is invariant on $A\oplus A^*$, then
it is called a {\it double construction of Frobenius algebra
associated to $(A,*_A)$ and $(A^*,*_{A^*})$} and we denote it by
$(A\bowtie A^*, \mathcal B)$.

For a linear map $\phi:V_1\rightarrow V_2$, we denote the dual
(linear) map by $\phi^*:V_2^*\rightarrow V_1^*$ defined as
$$\langle  v,\phi^*(u^*)\rangle   =\langle  \phi(v),u^*\rangle   ,\;\;\forall v\in V_1,u^*\in V_2.
\eqno (2.2.12)$$ Note that $\rho^*$ given by equation (1.5) is
different with the above notation if $gl(V)$ is regarded as a vector
space, too.

\begin{defn}{\rm  Let $(A,*)$ be an associative algebra. An {\it antisymmetric
infinitesimal bialgebra structure on $A$} is a linear map
$\alpha:A\rightarrow A\otimes A$ such that

(a) $\alpha^*:A^*\otimes A^*\rightarrow A^*$ defines an associative
algebra structure on $A^*$;

(b) $\alpha$ satisfies the following two equations
$$\alpha(x* y)=(1\otimes L_* (x))\alpha
(y)+(R_*(y)\otimes 1)\alpha (y);\eqno (2.2.13)$$
$$(L_* (y)\otimes 1-1\otimes R_* (y))\alpha(x)+\sigma [(L_* (x)\otimes 1-1\otimes
R_* (x))\alpha(y)]=0,\;\;\forall x,y\in A.\eqno (2.2.14)$$ We denote
it by $(A,\alpha)$.}\end{defn}

\begin{remark}{\rm Although the notion of antisymmetric infinitesimal bialgebra
was introduced in \cite{Bai3}, in fact, it is the same structure in
the notions of associative D-bialgebra in \cite{Z} and balanced
infinitesimal bialgebra (in the opposite algebra) in
\cite{Ag3}.}\end{remark}

\begin{theorem}{\rm (\cite{Z}, \cite{Bai3})}\quad Let $(A,*_A)$ be an associative
algebra.
 Suppose there is another associative algebra structure $``*_{A^*}"$ on its dual space
$A^*$ given by a linear map $\alpha^*: A^*\otimes A^*\rightarrow
A^*$.
 Then
there exists a double construction of Frobenius algebra associated
to $(A,*_A)$ and $(A,*_{A^*})$ if and only if $(A,\alpha)$ is an
antisymmetric infinitesimal bialgebra. Moreover, every double
construction of Frobenius algebra can be obtained from the above
way.\end{theorem}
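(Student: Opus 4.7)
The plan is to adapt Drinfeld's Manin-triple argument from the Lie bialgebra setting to the associative setting, in two stages: a rigidity step that pins down any double construction uniquely, followed by a case-by-case reduction of the associativity of the combined product to the axioms (2.2.13)--(2.2.14).

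For the rigidity step, take any double construction $(A\bowtie A^*,\mathcal{B})$. Because $A$ and $A^*$ are subalgebras with $\mathcal{B}(A,A)=\mathcal{B}(A^*,A^*)=0$, the four projections of the mixed products $x\cdot a^*$ and $a^*\cdot x$ onto $A$ and $A^*$ are completely determined by the four invariance identities $\mathcal{B}(uv,w)=\mathcal{B}(u,vw)$ in which exactly one slot lies in $A^*$. Testing each in turn and using the notation of (1.5), one obtains
\[
x\cdot a^* \;=\; L_{*_{A^*}}^*(a^*)\,x + R_{*_A}^*(x)\,a^*, \qquad a^*\cdot x \;=\; R_{*_{A^*}}^*(a^*)\,x + L_{*_A}^*(x)\,a^*.
\]
This already yields the final assertion of the theorem: any double construction is rigid and determined by the pair $(*_A,\alpha)$ (equivalently by $(*_A,*_{A^*})$).

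For the ``if'' direction, I would take these formulas as the \emph{definition} of the mixed products on $A\oplus A^*$ and verify associativity and invariance of $\mathcal{B}$. Invariance is the direct reverse of the rigidity derivation. Associativity splits, by linearity, into the eight cubic slot-patterns indexed by assigning $A$ or $A^*$ to each factor; the patterns $(A,A,A)$ and $(A^*,A^*,A^*)$ are the assumed associativity of $*_A$ and $*_{A^*}$. For each of the remaining six cases I would split both sides into their $A$- and $A^*$-components. In the case $(A,A,A^*)$ the $A^*$-component collapses by associativity of $*_A$, while the $A$-component, after pairing with an arbitrary $b^*\in A^*$ and applying the adjunction $\langle\alpha(z),a^*\otimes b^*\rangle=\langle z,a^* *_{A^*}b^*\rangle$, becomes exactly (2.2.13). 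In the ``middle'' case $(A,A^*,A)$ the $A^*$-component again collapses by associativity of $*_A$, and the $A$-component, after the same pairing, produces four scalar terms whose joint vanishing is exactly (2.2.14). The remaining four cases reduce by the $A\leftrightarrow A^*$ duality of the defining formulas either to the two conditions just obtained or to the already-assumed associativity of $*_{A^*}$, and produce no new constraint. Reading the same chain of equivalences in reverse gives the ``only if'' direction.

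The main obstacle will be the cubic-to-quadratic bookkeeping, and in particular producing the exchange operator $\sigma$ in (2.2.14) from a raw associativity identity that does not contain it. I expect $\sigma$ to emerge because expanding both sides of $(x\cdot a^*)\cdot y = x\cdot(a^*\cdot y)$ in the $A$-component and pairing with $b^*\in A^*$ yields two pairs of scalar terms: one pair involving $\alpha(x)$ with the ordering $(a^*,b^*)$, and the other involving $\alpha(y)$ with the ordering $(b^*,a^*)$; rewriting the second pair via $\sigma$ converts the combined vanishing condition into the symmetric form (2.2.14). A secondary check is that the three doubly-starred cases genuinely reduce to the same two conditions under the self-duality of the antisymmetric infinitesimal bialgebra axioms, rather than contributing a new relation.
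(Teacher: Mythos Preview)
The paper does not actually prove this theorem; it is quoted with attribution to \cite{Z} and \cite{Bai3}, and only the resulting product formula (2.2.15) is displayed. So there is no in-paper proof to compare against, and I will assess your plan on its own merits.

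Your two-stage Manin-triple argument is the standard one and is correct in outline. The rigidity step is exactly right: the four invariance identities with a single starred slot force the mixed products to be (2.2.15), and this already gives the ``moreover'' clause. Your case analysis for $(A,A,A^*)$ and $(A,A^*,A)$ is accurate; in particular the $\sigma$ in (2.2.14) does arise precisely as you describe, from the two pairs of terms carrying $(a^*,b^*)$ and $(b^*,a^*)$ orderings. The $A^*$-components in both cases indeed collapse to the bimodule identities for $(R_{*_A}^*,L_{*_A}^*)$, hence to associativity of $*_A$.

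The one point that deserves more than a hand-wave is your ``secondary check'': that the three doubly-starred cases $(A^*,A^*,A)$, $(A^*,A,A^*)$, $(A,A^*,A^*)$ impose no new constraint beyond (2.2.13)--(2.2.14). Swapping $A\leftrightarrow A^*$ in the formulas turns these into the dual conditions on the comultiplication $\beta:A^*\to A^*\otimes A^*$ induced by $*_A$; what you then need is that (2.2.13)--(2.2.14) for $\alpha$ are \emph{equivalent} to the same conditions for $\beta$. This self-duality of the antisymmetric infinitesimal bialgebra axioms is true (it is implicit in Remark~2.2.11, where the notion is identified with Zhelyabin's associative D-bialgebra and Aguiar's balanced infinitesimal bialgebra, both manifestly self-dual), but it is not entirely formal and should be written out. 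A clean alternative that avoids the self-duality lemma: once you have checked that $\mathcal B$ is invariant for the product (2.2.15), nondegeneracy of $\mathcal B$ lets you trade the associator at $(u,v,w)$ for the associator at $(v,w,t)$ via $\mathcal B((u*v)*w - u*(v*w),t)=\mathcal B(u,(v*w)*t - v*(w*t))$. Iterating this cyclic shift reduces every mixed slot pattern to one of the two you have already handled. Either route closes the argument.
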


The associative algebra structure on $A\oplus A^*$ is given by (for
any $x,y\in A$ and $a^*,b^*\in A^*$)
$$(x+a^*)*(y+b^*)=x*_Ay+R^*_{*_{A^*}}(a^*)y+L^*_{*_{A^*}}(b^*)x
+a^**_{A^*}b^*+R^*_{*_A}(x)b^*+L^*_{*_A}(y)a^*.\eqno (2.2.15)$$

\begin{prop}{\rm (\cite{Ag3}, \cite{Bai3})}\quad Let $(A,*)$ be an associative
algebra and $r\in A\otimes A$. If $r$ is a skew-symmetric solution
of the associative Yang-Baxter equation in $A$, then the linear map
$\alpha$ defined by
$$\alpha (x)=(1\otimes L(x)-R(x)\otimes 1)r,\;\;\forall x\in A,\eqno (2.2.16)$$
induces an associative algebra structure on $A^*$ such that
$(A,\alpha)$ is an antisymmetric infinitesimal bialgebra.\end{prop}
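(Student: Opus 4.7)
The plan is to verify two things separately: (i) that $\alpha^{*}:A^{*}\otimes A^{*}\to A^{*}$ defines an associative multiplication on $A^{*}$, and (ii) that $\alpha$ satisfies the two compatibility equations (2.2.13) and (2.2.14). For (i) I will invoke the $\mathcal{O}$-operator machinery already developed in this section; for (ii) both identities are verified by direct tensor computation, but only (2.2.14) actually uses the AYBE, while (2.2.13) relies solely on the associativity of $*$.

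For (i), identify $r\in A\otimes A$ with the linear map $\phi:A^{*}\to A$ determined by (2.2.3). Proposition 2.2.3 tells us that the joint assumptions ``$r$ skew-symmetric'' and ``$r$ solves the AYBE'' are equivalent to $\phi$ being an $\mathcal{O}$-operator of $(A,*)$ associated to the bimodule $(R_{*}^{*},L_{*}^{*})$. Applying Theorem 2.1.7 to $\phi$ then produces an associative product $\mu$ on $A^{*}$, given by $\mu(u^{*},v^{*})=R_{*}^{*}(\phi(u^{*}))v^{*}+L_{*}^{*}(\phi(v^{*}))u^{*}$. Writing $r=\sum_{i}a_{i}\otimes b_{i}$ and using that the skew-symmetry $r=-\sigma(r)$ forces $\phi^{t}=-\phi$, I would unwind $\langle u^{*}\otimes v^{*},\alpha(x)\rangle$ to obtain $\alpha^{*}(u^{*}\otimes v^{*})=-\mu(u^{*},v^{*})$. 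Since the negative of an associative product is still associative, this proves (i).

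For the cocycle-type equation (2.2.13), direct expansion of both sides after substituting $\alpha(z)=(1\otimes L(z)-R(z)\otimes 1)r$ reduces the equation to the operator identities $L(x)L(y)=L(xy)$ and $R(y)R(x)=R(xy)$, which are just restatements of the associativity of $*$; no use of skew-symmetry or the AYBE is needed at this step. For (2.2.14), expand $(L(y)\otimes 1-1\otimes R(y))\alpha(x)+\sigma\bigl[(L(x)\otimes 1-1\otimes R(x))\alpha(y)\bigr]$ in terms of $\sum_{i}a_{i}\otimes b_{i}$, producing eight tensor summands. Terms like $ya_{i}\otimes xb_{i}$ and $yb_{i}\otimes xa_{i}$ cancel in pairs thanks to $r=-\sigma(r)$; after one further application of $\sigma$ to a summation index, the surviving terms should assemble into the tensor $r_{12}*r_{13}+r_{13}*r_{23}-r_{23}*r_{12}$ with $x$ inserted in one slot and $y$ in another, hence vanish by the AYBE hypothesis.

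The main obstacle is the bookkeeping in (2.2.14): carefully matching each surviving post-cancellation term against one of the three summands of the AYBE, and correctly handling the simultaneous insertions of $x$ and $y$ that arise from the symmetric form of the equation. Once that matching is confirmed, (2.2.14) follows from the AYBE, completing the verification that $(A,\alpha)$ is an antisymmetric infinitesimal bialgebra.
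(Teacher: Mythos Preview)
The paper does not supply its own proof of this proposition; it simply records the statement with references to \cite{Ag3} and \cite{Bai3}. So there is no in-paper argument to compare against, and I can only assess your sketch on its own terms.

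Your plan for part (i) is sound: invoking Proposition~2.2.3 and Theorem~2.1.8 (you wrote 2.1.7, a minor slip) to obtain an associative product $\mu$ on $A^{*}$ and then checking that $\alpha^{*}=-\mu$ is clean and correct. Your verification of (2.2.13) from associativity alone is also right.

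Your treatment of (2.2.14), however, misidentifies the mechanism. You say that after two skew-symmetry cancellations the four surviving terms reorganize into an instance of $r_{12}*r_{13}+r_{13}*r_{23}-r_{23}*r_{12}$ and therefore vanish by the AYBE. They do not: the AYBE lives in $A^{\otimes 3}$, whereas the expression you need to kill lies in $A^{\otimes 2}$, and there is no natural way to contract with $x$ and $y$ that produces exactly those four terms. In fact, if you continue the computation, \emph{all eight} terms cancel in four pairs using only skew-symmetry of $r$ and associativity of $*$. You already spotted the pairs $(ya_i\otimes xb_i,\;yb_i\otimes xa_i)$ and $(a_ix\otimes b_iy,\;b_ix\otimes a_iy)$; the remaining four terms
\[
-\,y(a_ix)\otimes b_i,\quad -\,a_i\otimes (xb_i)y,\quad -\,b_i\otimes x(a_iy),\quad -\,(yb_i)x\otimes a_i
\]
pair off the same way once you swap $a_i\leftrightarrow b_i$ in the last two, since $\sum a_i\otimes b_i=-\sum b_i\otimes a_i$ turns $-b_i\otimes x(a_iy)$ into $a_i\otimes x(b_iy)=a_i\otimes (xb_i)y$ and $-(yb_i)x\otimes a_i$ into $(ya_i)x\otimes b_i=y(a_ix)\otimes b_i$. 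Thus (2.2.14) follows from skew-symmetry and associativity alone; the AYBE hypothesis is used precisely and only in part (i), to make $\alpha^{*}$ associative. Once you replace the ``assemble into AYBE'' step by this second round of pairwise cancellations, your argument is complete.
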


\subsection{${\mathcal O}$-operators of associative algebras and
double constructions of  Connes cocycles} Motivated by the study of
Lie bialgebras (\cite{Dr}, \cite{CP}) and antisymmetric
infinitesimal bialgebras in the previous subsection, it is natural
to consider the following constructions of the nondegenerate Connes
cocycles of associative algebras (a similar study on the
nondegenerate skew-symmetric 2-cocycles of Lie algebras has been
given in \cite{Bai2}).

Let $(A,*_A)$ be an associative algebra and suppose that there is
another associative algebra structure $*_{A^*}$ on its dual space
$A^*$. If there is an associative algebra structure on the direct
sum $A\oplus A^*$ of the underlying vector spaces of $A$ and $A^*$
such that both $A$ and $A^*$ are subalgebras and the
skew-symmetric bilinear form on $A\oplus A^*$ given by equation
(2.2.10) is a Connes cocycle on $A\oplus A^*$, then it is called a
{\it double construction of Connes cocycle associated to $(A,*_A)$
and $(A^*,*_{A^*})$} and we denote it by $(T(A)=A\bowtie A^*,
\omega)$.

\begin{defn} {\rm Let $A$ be a vector space. A {\it dendriform D-bialgebra structure
on $A$} is a set of linear maps
$\alpha_\prec,\alpha_\succ,\beta_\prec,\beta_\succ$ such that
$\alpha_\prec,\alpha_\succ:A\rightarrow A\otimes A$,
$\beta_\prec,\beta_\succ: A^*\rightarrow A^*\otimes A^*$ and

(a) $(\alpha_\prec^*,\alpha_\succ^*):A^*\otimes A^*\rightarrow A^*$
defines a dendriform algebra structure $(\succ_{A^*},\prec_{A^*})$
on $A^*$;

(b) $(\beta_\prec^*,\beta_\succ^*):A\otimes A\rightarrow A$ defines
a dendriform algebra structure $(\succ_{A},\prec_A)$ on $A$;

(c) The following equations are satisfied(for any $x,y\in A$ and
$a,b\in A^*$).

(1) $\alpha_\prec(x*_Ay)=(1\otimes
L_{\prec_A}(x))\alpha_\prec(y)+(R_A(y)\otimes 1) \alpha_\prec
(x)$;\hfill (2.3.1)

(2) $\alpha_\succ (x*_Ay)=(1\otimes
L_A(x))\alpha_\succ(y)+(R_{\prec_A}(y)\otimes
1)\alpha_\succ(x)$;\hfill (2.3.2)

(3) $\beta_\prec(a*_{A^*}b)=(1\otimes
L_{\prec_{A^*}}(a))\beta_\prec(b)+(R_{A^*}(b)\otimes 1) \beta_\prec
(a)$;\hfill (2.3.3)

(4) $\beta_\succ (a*_{A^*}b)=(1\otimes
L_{A^*}(a))\beta_\succ(b)+(R_{\prec_{A^*}}(b)\otimes
1)\beta_\succ(a)$;\hfill (2.3.4)

(5) $(L_A(x)\otimes 1-1\otimes R_{\prec_A}
(x))\alpha_\prec(y)+\sigma [(L_{\succ_A}(y)\otimes -1\otimes
R_A(y))\alpha_\succ(x)]=0$;\hfill (2.3.5)

(6) $(L_{A^*}(a)\otimes 1-1\otimes R_{\prec_{A^*}}
(a))\beta_\prec(b)+\sigma [(L_{\succ_{A^*}}(b)\otimes -1\otimes
R_{A^*}(b))\beta_\succ(a)]=0$,\hfill (2.3.6)

\noindent where
$L_A=L_{\succ_A}+L_{\prec_A},\;\;R_A=R_{\succ_A}+R_{\prec_A},
L_{A^*}=L_{\succ_{A^*}}+L_{\prec_{A^*}},\;\;R_{A^*}=R_{\succ_{A^*}}+R_{\prec_{A^*}}$.

We also denote this dendriform D-bialgebra by
$(A,A^*,\alpha_\succ,\alpha_\prec,\beta_\succ,\beta_\prec)$ or
simply $(A,A^*)$.}\end{defn}

\begin{theorem} {\rm (\cite{Bai3})}\quad Let $(A,\succ_A,\prec_A)$ be a dendriform
algebra whose products are given by two linear maps $\beta_\succ^*,
\beta_\prec^* :A\otimes A\rightarrow A$ and $(A,*_A)$ be the
associated associative algebra. Suppose there is another dendriform
algebra structure $``\succ_{A^*},\prec_{A^*}"$ on its dual space
$A^*$ given by two linear maps $\alpha_\succ^*,\alpha_\prec^*:
A^*\otimes A^*\rightarrow A^*$ and $(A^*,*_{A^*})$ is the associated
associative algebra. Then  there exists a double construction of
Connes cocycle associated to $(A,*_A)$ and $(A^*,*_{A^*})$ if and
only if $(A,A^*,\alpha_\succ,\alpha_\prec,\beta_\succ,\beta_\prec)$
is a dendriform D-bialgebra. Moreover, every double construction of
Connes cocycle can be obtained from the above way.\end{theorem}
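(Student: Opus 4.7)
The plan is to parallel the proof of Theorem 2.2.16, replacing the invariance of the Frobenius form by the Connes cocycle condition throughout. Both directions are controlled by the same piece of data: an associative product on $A\oplus A^*$ that extends $*_A$ and $*_{A^*}$ and for which the form $\omega$ of (2.2.10) satisfies (2.2.7).

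For the ``only if'' direction, suppose $T(A)=A\bowtie A^*$ is given. Write the cross products as $x*a^* = P_1(x,a^*)+Q_1(x,a^*)$ and $a^**x = P_2(x,a^*)+Q_2(x,a^*)$, with $P_i\in A$ and $Q_i\in A^*$. Pairing with arbitrary test elements via $\omega$ and using (2.2.7) on triples of mixed type, I would solve for the $P_i, Q_i$ explicitly; the skew-symmetry of $\omega$ together with the fact that it pairs $A$ with $A^*$ forces these four components to be built from the $\succ$- and $\prec$-pieces of $L^*$ and $R^*$ for both algebras, refining the ``Frobenius'' rule (2.2.15) by separating each $*$-term into its two dendriform components. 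Once the cross products are fixed, associativity of $*_{T(A)}$ on triples in $A\times A\times A^*$ and $A\times A^*\times A^*$, matched operator by operator against the dendriform products, should yield precisely (2.3.1)--(2.3.4). Applying (2.2.7) on the same types of triples, with the cross products already determined, gives (2.3.5)--(2.3.6).

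For the ``if'' direction I would reverse the process: use the dendriform D-bialgebra data to define the cross products on $A\oplus A^*$ by the formulas derived above, producing a candidate multiplication extending $*_A$ and $*_{A^*}$. Mixed-type associativity then reduces to the bialgebra identities (2.3.1)--(2.3.4) together with the dendriform axioms (2.1.5) on each side, while pure-type associativity holds by hypothesis. The cocycle condition (2.2.7) for $\omega$ is automatic on pure triples since $\omega$ vanishes on $A\times A$ and on $A^*\times A^*$, and on mixed triples becomes equivalent to (2.3.5)--(2.3.6). The ``moreover'' statement then follows because the cross multiplication was uniquely determined in step one.

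The main difficulty is not conceptual but book-keeping: with four dendriform operations on $A$, four on $A^*$, and a genuinely four-term splitting of each cross product, both associativity and the cocycle identity fan out into a large number of tensorial identities. The delicate step is to determine which of $L^*_{\succ_A}, L^*_{\prec_A}, R^*_{\succ_A}, R^*_{\prec_A}$ and their $A^*$-analogues belongs in each of $P_1,Q_1,P_2,Q_2$, so that the resulting formulas are simultaneously forced by (2.2.7), compatible under associativity, and reassemble into exactly the six identities of Definition 2.3.1 --- neither a stronger nor a weaker system.
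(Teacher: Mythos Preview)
Your parallel with the Frobenius case breaks down at the very first step, and this is a genuine gap rather than just book-keeping. The invariance condition (2.2.6) in the Frobenius setting is \emph{not} cyclically symmetric: testing $\mathcal B(x*a^*,b^*)=\mathcal B(x,a^**_{A^*}b^*)$ against $b^*$ pins down $P_1(x,a^*)=R^*_{*_{A^*}}(a^*)x$ immediately, and the other three components fall out the same way, yielding (2.2.15). By contrast the Connes cocycle identity (2.2.7) \emph{is} cyclically symmetric, so testing it on mixed triples yields only two independent scalar relations---one from type $(A,A,A^*)$ and one from type $(A,A^*,A^*)$---among the four unknown components $P_1,P_2,Q_1,Q_2$. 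Concretely, the $(x,y,a^*)$ condition reads
\[
-\langle x*_Ay,a^*\rangle+\langle Q_1(y,a^*),x\rangle+\langle Q_2(x,a^*),y\rangle=0,
\]
which constrains only the sum $Q_1+Q_2$ relative to $*_A$; it cannot separate the $\succ$- and $\prec$-parts. So your claim that (2.2.7) ``forces these four components'' to have the refined dendriform shape is false: the system is underdetermined.

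What supplies the missing two relations is Corollary~2.2.7. The nondegenerate Connes cocycle $\omega$ on $A\oplus A^*$ determines a \emph{unique} compatible dendriform structure on $A\oplus A^*$ by (2.2.8), and since $\omega$ vanishes on $A\times A$ and on $A^*\times A^*$ one checks that $A$ and $A^*$ are dendriform subalgebras for this structure. The identities (2.2.8), which are \emph{not} cyclic, then pin down each of $P_1,P_2,Q_1,Q_2$ separately in terms of $L^*_{\succ},L^*_{\prec},R^*_{\succ},R^*_{\prec}$ on the two sides, giving exactly (2.3.7); this also explains the ``moreover'' clause, since the dendriform structures on $A$ and $A^*$ are the ones induced by $\omega$, not chosen independently. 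Once the cross products are fixed this way, the rest of your plan (matching associativity on mixed triples against (2.3.1)--(2.3.4), and the cocycle condition against (2.3.5)--(2.3.6)) is the right outline.
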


The associative algebra structure on $A\oplus A^*$ is given by (for
any $x,y\in A$ and $a^*,b^*\in A^*$)
$$(x+a^*)*(y+b^*)=x*_Ay+R^*_{\prec_{A^*}}(a^*)y+L^*_{\succ_{A^*}}(b^*)x
+a^**_{A^*}b^*+R^*_{\prec_A}(x)b^*+L^*_{\succ_A}(y)a^*.\eqno
(2.3.7)$$

\begin{prop} {\rm (\cite{Bai3})}\quad Let $(A,\succ,\prec)$ be a dendriform
algebra and $(A,*)$ be the associated associative algebra. Let
$r\in A\otimes A$ and the maps $\alpha_\succ,\alpha_\prec$ be
defined by
$$\alpha_\succ(x)=(-1\otimes
L_*(x)+R_\prec(x)\otimes 1)r;\eqno (2.3.8)$$
$$\alpha_\prec(x)=(1\otimes L_\succ(x)-R_*(x)\otimes
1)r,\eqno (2.3.9)$$ for any $x\in A$. Suppose $r$ is symmetric and
$r$ satisfies
$$r_{12}*r_{13}=r_{13}\prec r_{23}+r_{23}\succ r_{12}.\eqno
(2.3.10)$$ Then the maps $\alpha_\succ,\alpha_\prec$  induce a
dendriform algebra structure on $A^*$ such that $(A,A^*)$ is a
dendriform D-bialgebra.\end{prop}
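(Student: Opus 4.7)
The plan is to verify the axioms of a dendriform D-bialgebra (Definition 2.3.4) for the data $(A, A^*, \alpha_\succ, \alpha_\prec, \beta_\succ, \beta_\prec)$, where $\beta_\succ, \beta_\prec$ are the duals of the given dendriform products $\succ, \prec$ on $A$. Condition (b) then holds by hypothesis. It remains to establish condition (a) --- that $(\alpha_\prec^*, \alpha_\succ^*)$ defines a dendriform algebra on $A^*$ --- together with the six compatibility relations (2.3.1)--(2.3.6) of condition (c).

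I would first dispose of the symmetry equations (2.3.5) and (2.3.6). Substituting the definitions of $\alpha_\succ(x)$ and $\alpha_\prec(y)$ into (2.3.5) expands each term as $(\phi \otimes \psi)r$ for various left/right multiplication operators $\phi, \psi$. Using the identity $\sigma \circ (\phi \otimes \psi) = (\psi \otimes \phi) \circ \sigma$ together with $\sigma(r) = r$, each term produced by the $\sigma$-summand matches one from the first summand, and after combining via $L_* = L_\succ + L_\prec$ and $R_* = R_\succ + R_\prec$ the whole expression cancels; the verification of (2.3.6) is the dual argument on $A^*$. Next, for the coderivation relations (2.3.1)--(2.3.2), I would expand $\alpha_\prec(x *_A y)$ and $\alpha_\succ(x *_A y)$ using the definitions and the dendriform axioms (2.1.5) on $A$. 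The quadratic-in-$r$ obstructions that appear are eliminated precisely by invoking equation (2.3.10), which rewrites $r_{12} * r_{13}$ as $r_{13} \prec r_{23} + r_{23} \succ r_{12}$; this is the essential computational input. The dual relations (2.3.3) and (2.3.4), phrased in terms of $\beta_\succ, \beta_\prec$ on $A^*$, reduce to the same pair of identities once one transports them through the duality induced by the symmetric $r$.

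The main obstacle will be condition (a): proving that $(\alpha_\prec^*, \alpha_\succ^*)$ satisfies the three dendriform axioms on $A^*$. By standard dualization these translate into three coassociativity-type identities in $A^{\otimes 3}$ built from compositions $(\alpha_\bullet \otimes \mathrm{id}) \alpha_\bullet$ and $(\mathrm{id} \otimes \alpha_\bullet) \alpha_\bullet$. After expanding $\alpha_\succ, \alpha_\prec$ in terms of $r$, each such identity becomes a sum of expressions of the form $r_{ij} \bullet r_{kl}$ living in three tensor slots, with the three products $*, \succ, \prec$ distributed among them. The strategy is to apply (2.3.10) in each tensor position to trade the ``associative-type'' quantity $r_{12} * r_{13}$ for the ``dendriform-type'' quantities $r_{13} \prec r_{23}$ and $r_{23} \succ r_{12}$, and then to regroup using the symmetry of $r$ and the dendriform axioms on $A$ applied to the coefficients of $r$. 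The difficulty lies not in any deep ingredient beyond (2.3.10) and $\sigma(r)=r$, but in the disciplined bookkeeping of roughly a dozen tensor terms per identity whose cancellations must be traced carefully.
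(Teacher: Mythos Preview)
The paper does not prove this proposition; it is quoted from \cite{Bai3} as a preliminary result, so there is no in-paper argument to compare against. Your outline---verify conditions (a), (b), (c) of Definition~2.3.1---is the right shape, but there is a degree-counting error in how you allocate the hypothesis (2.3.10).

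The maps $\alpha_\succ,\alpha_\prec$ are \emph{linear} in $r$, while $\beta_\succ,\beta_\prec$ do not involve $r$ at all. Hence each of the six compatibility identities (2.3.1)--(2.3.6) is a linear-in-$r$ condition; there are no ``quadratic-in-$r$ obstructions'' to eliminate in (2.3.1)--(2.3.2), and the $D$-equation---which is bilinear in $r$---cannot intervene there. A direct expansion confirms, for instance, that (2.3.2) holds identically once you use $L_*(x*y)=L_*(x)L_*(y)$ and $R_\prec(x*y)=R_\prec(y)R_\prec(x)$; the cross terms $R_\prec(y)\otimes L_*(x)$ cancel on the nose. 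All of (c) must therefore come from $\sigma(r)=r$ together with the dendriform and associative identities on $A$, exactly as you argued correctly for (2.3.5). This is the dendriform analogue of the Lie-bialgebra fact that a coboundary cobracket is automatically a $1$-cocycle regardless of whether $r$ solves the CYBE.

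The $D$-equation enters only in condition~(a). The dendriform axioms for $(A^*,\succ_{A^*},\prec_{A^*})$ dualize to identities in $A^{\otimes 3}$ built from compositions such as $(\alpha_\prec\otimes 1)\alpha_\prec$ and $(1\otimes\alpha_\succ)\alpha_\prec$, and these are genuinely quadratic in $r$; it is here, and only here, that (2.3.10) does the work. Your sketch of that step is fine. One further caution: your proposed reduction of (2.3.3)--(2.3.4) ``through the duality induced by the symmetric $r$'' tacitly assumes $r$ is invertible, which is not hypothesized; those identities should also be checked directly from $\sigma(r)=r$.
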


\begin{defn} {\rm Let $(A,\succ,\prec)$ be a dendriform algebra and $r\in
A\otimes A$. Equation (2.3.10) is called {\it ${D}$-equation in
$(A,\succ,\prec)$}.}\end{defn}

\begin{prop} {\rm (\cite{Bai3})} \quad Let $(A,\succ,\prec)$ be a dendriform
algebra and $(A,*)$ be the associated associative algebra. Let $r\in
A\otimes A$. Then $r$ is a symmetric solution of $D$-equation in the
dendriform algebra $(A,\succ,\prec)$ if and only if $r$ is an
$\mathcal O$-operator of the associative algebra $(A,*)$ associated
to the bimodule $(R_\prec^*, L_\succ^*)$.\end{prop}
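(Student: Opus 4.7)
The plan is to convert both conditions into statements about the linear map $r:A^*\to A$ induced by (2.2.3) and verify they coincide by pairing the $D$-equation against elements of $A^*\otimes A^*$ in the second and third tensor slots. Write $r=\sum_i a_i\otimes b_i$; the symmetry of $r$ means $\sum_i a_i\otimes b_i=\sum_i b_i\otimes a_i$, and (2.2.3) gives $r(v^*)=\sum_i\langle v^*,b_i\rangle a_i=\sum_i\langle v^*,a_i\rangle b_i$. I would also observe at the outset that $(R_\prec^*,L_\succ^*,A^*)$ is indeed a bimodule of $(A,*)$: by Proposition 2.1.6 the pair $(L_\succ,R_\prec,A)$ is a bimodule, so by Example 2.1.2 its dual bimodule $(R_\prec^*,L_\succ^*,A^*)$ qualifies, legitimising the statement of the $\mathcal O$-operator condition.

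Using (1.4) with the appropriate dendriform products in place, I would first expand
\begin{align*}
r_{12}*r_{13}&=\sum_{i,j}(a_i*a_j)\otimes b_i\otimes b_j,\\
r_{13}\prec r_{23}&=\sum_{i,j}a_i\otimes a_j\otimes (b_i\prec b_j),\\
r_{23}\succ r_{12}&=\sum_{i,j}a_i\otimes (a_j\succ b_i)\otimes b_j.
\end{align*}
The $D$-equation (2.3.10) is then an identity in $A\otimes A\otimes A$, equivalent by nondegeneracy of the canonical pairing to the vanishing, for every $a^*,b^*\in A^*$, of the contraction of slot two with $a^*$ and slot three with $b^*$. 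Carrying out these contractions, the left-hand side yields $r(a^*)*r(b^*)$; using $\langle b_i\prec b_j,b^*\rangle=\langle b_i,R_\prec^*(b_j)b^*\rangle$ together with $\langle a_j\succ b_i,a^*\rangle=\langle b_i,L_\succ^*(a_j)a^*\rangle$, the two right-hand terms become
\[\sum_j\langle a_j,a^*\rangle\, r\bigl(R_\prec^*(b_j)b^*\bigr)\quad\text{and}\quad \sum_j\langle b_j,b^*\rangle\, r\bigl(L_\succ^*(a_j)a^*\bigr).\]

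The essential move is the appeal to symmetry: since $\sum_j a_j\otimes b_j=\sum_j b_j\otimes a_j$, swapping $a_j\leftrightarrow b_j$ inside either such sum is legal, which rewrites the first expression as $r(R_\prec^*(r(a^*))b^*)$ and the second as $r(L_\succ^*(r(b^*))a^*)$. Hence the contracted $D$-equation becomes exactly the $\mathcal O$-operator identity (2.1.3) for $r$ with respect to $(R_\prec^*,L_\succ^*)$. Since $A^*\otimes A^*$ separates points of $A\otimes A\otimes A$ in those two slots, the pairing implication runs both ways and the two conditions are equivalent. The only real subtlety is the symmetry swap, which must be applied in just the right place so that $R_\prec^*(b_j)b^*$ (with $a^*$ paired against $a_j$) becomes $R_\prec^*(a_j)b^*$ (with $a^*$ paired against $b_j$) and is thereby recognised as $R_\prec^*(r(a^*))b^*$; everything else is routine bookkeeping with the duality conventions (1.5) and (2.2.3).
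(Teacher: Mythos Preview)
Your argument is correct. The paper itself does not prove this proposition (it is quoted from \cite{Bai3}), but for the closely related Theorem 3.3.3 and Proposition 3.4.13 the paper uses a basis and structure constants: one writes $e_i\succ e_j=\sum_k a_{ij}^k e_k$, $e_i\prec e_j=\sum_k b_{ij}^k e_k$, $r=\sum r_{ij}e_i\otimes e_j$ with $r_{ij}=r_{ji}$, and then checks that the coefficient of $e_m\otimes e_t\otimes e_p$ in the $D$-equation equals the coefficient of $e_m$ in the $\mathcal O$-operator identity evaluated at $(e_t^*,e_p^*)$. Your coordinate-free pairing argument is the same computation in disguise---contracting slots two and three against $a^*,b^*$ is exactly reading off coefficients---so there is no genuine methodological difference, only a cleaner presentation. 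One small remark: for the second term $\sum_j\langle b_j,b^*\rangle\, r(L_\succ^*(a_j)a^*)$ no swap is actually required, since $r(b^*)=\sum_j\langle b^*,b_j\rangle a_j$ already gives the desired form; the symmetry is genuinely needed only for the first term (or, equivalently, is absorbed into the alternative formula $r(a^*)=\sum_j\langle a^*,a_j\rangle b_j$).
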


\section{${\mathcal O}$-operators of dendriform algebras}

\subsection{Bimodules of dendriform algebras}

\begin{defn}
{\rm (\cite{Ag4})\quad Let $(A,\succ, \prec)$ be a dendriform
algebra and $V$ be a vector space. Let $l_\succ, r_\succ,
l_\prec,r_\prec:A\rightarrow gl(V)$ be four linear maps. $V$ (or
$(l_\succ,r_\succ,l_\prec,r_\prec)$, or
$(l_\succ,r_\succ,l_\prec,r_\prec,V)$) is called a {\it bimodule}
of $A$ if the following equations hold (for any $x,y\in A$):
$$l_\prec(x\prec y)=l_\prec(x)l_\prec(y)+l_\prec(x)l_\succ(y);\eqno (3.1.1)$$
$$r_\prec(x)l_\prec(y)=l_\prec(y)r_\prec(x)+l_\prec(y)r_\succ(x);\eqno
(3.1.2)$$
$$r_\prec(x)r_\prec(y)=r_\prec(y\prec x)+r_\prec(y\succ x);\eqno
(3.1.3)$$
$$l_\prec(x\succ y)=l_\succ(x)l_\prec(y);\eqno (3.1.4)$$
$$r_\prec(x)l_\succ (y)=l_\succ(y)r_\prec(x);\eqno (3.1.5)$$
$$r_\prec(x)r_\succ(y)=r_\succ(y\prec x);\eqno (3.1.6)$$
$$l_\succ (x\prec y)+l_\succ(x\succ y)=l_\succ(x)l_\succ (y);\eqno
(3.1.7)$$
$$r_\succ(x)l_\prec(y)+r_\succ(x)l_\succ(y)=l_\succ(y)r_\succ(x);\eqno
(3.1.8)$$
$$r_\succ (x)r_\succ(y)+r_\succ(x)r_\prec(y)=r_\succ(y\succ x).\eqno
(3.1.9)$$}\end{defn}

According to \cite{Sc}, $(l_\succ, r_\succ, l_\prec, r_\prec, V)$
is a bimodule of a dendriform algebra $(A,\succ, \prec)$ if and
only if there exists a dendriform algebra structure on  the direct
sum $A\oplus V$ of the underlying vector spaces of $A$ and $V$
given by ($\forall x,y\in A, u,v\in V$)
$$(x+u)\succ (y+v)=x\succ y+l_\succ(x)v+r_\succ (y)u,\;\;
(x+u)\prec (y+v)=x\prec y+l_\prec(x)v+r_\prec(y)u.\eqno (3.1.10)$$
We denote it by $A\ltimes_{l_\succ,r_\succ, l_\prec,r_\prec}V$.


\begin{prop}{\rm (\cite{Bai3})}\quad Let
$(l_\succ,r_\succ,l_\prec,r_\prec$, $V)$ be a bimodule of a
dendriform algebra $(A,\succ,\prec)$. Let $(A,*)$ be the associated
associative algebra.

(1) Both $(l_\succ, r_\prec, V)$ and $(l_\succ+l_\prec,
r_\succ+r_\prec, V)$ are bimodules of $(A,*)$.

(2) For any bimodule $(l,r, V)$ of $(A,*)$, $(l,0,0,r, V)$ is a
bimodule of $(A,\succ,\prec)$.

(3) Both $(l_\succ+l_\prec, 0,0,r_\succ+r_\prec, V)$ and $(l_\succ,
0,0,r_\prec, V)$ are bimodules of $(A,\succ,\prec)$.

(4) The dendriform algebras
$A\ltimes_{l_\succ,r_\succ,l_\prec,r_\prec} V$ and
$A\ltimes_{l_\succ+l_\prec, 0,0,r_\succ+r_\prec}V$ have the same
associated associative algebra
$A\ltimes_{l_\succ+l_\prec,r_\succ+r_\prec} V$.\end{prop}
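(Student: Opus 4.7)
The plan is to verify the four claims by direct expansion, leveraging the nine dendriform bimodule axioms (3.1.1)--(3.1.9) and the three associative bimodule axioms (2.1.1). The four parts are independent in difficulty: (1) is the computational core, (2) is essentially a matter of reading off identities, (3) is a one-line corollary of (1) and (2), and (4) is a direct calculation using the semidirect-sum formulas (3.1.10) and (2.1.2).

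For part (1), I would handle the two candidate bimodules separately. Take $(l_\succ, r_\prec, V)$ first: expanding $x*y = x\succ y + x\prec y$, the axiom $l_\succ(x*y) = l_\succ(x)l_\succ(y)$ falls out of (3.1.7), the axiom $r_\prec(x*y) = r_\prec(y)r_\prec(x)$ falls out of (3.1.3), and the commutation $l_\succ(x)r_\prec(y) = r_\prec(y)l_\succ(x)$ is exactly (3.1.5). For $(l_\succ+l_\prec, r_\succ+r_\prec, V)$, each of the three associative axioms expands into four cross-terms that must be reassembled; for example, the left-module axiom uses (3.1.1), (3.1.4), and the two halves of (3.1.7) to recover $(l_\succ(x)+l_\prec(x))(l_\succ(y)+l_\prec(y))$, the right-module axiom uses (3.1.3), (3.1.6), (3.1.9), and the commutation axiom uses (3.1.2), (3.1.5), (3.1.6), (3.1.8). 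The main obstacle is precisely this four-term bookkeeping: every one of the nine dendriform axioms is invoked exactly once, so any miscounting is immediately visible.

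For part (2), substitute $l_\succ = l$, $r_\succ = 0$, $l_\prec = 0$, $r_\prec = r$ into (3.1.1)--(3.1.9). Six of the nine identities collapse to $0 = 0$. The three surviving identities are (3.1.3), which becomes $r(x)r(y) = r(y*x)$; (3.1.5), which becomes the commutation $r(x)l(y) = l(y)r(x)$; and (3.1.7), which becomes $l(x*y) = l(x)l(y)$. These are exactly the three axioms (2.1.1) for $(l,r,V)$ as a bimodule of $(A,*)$, so the equivalence is automatic.

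Part (3) then requires no new work: apply part (2) to each of the two bimodules of $(A,*)$ produced by part (1). For part (4), I write both $\succ$ and $\prec$ products in each semidirect sum using (3.1.10) and add. For $A\ltimes_{l_\succ, r_\succ, l_\prec, r_\prec} V$ the sum yields $(x+u)*(y+v) = x*y + (l_\succ+l_\prec)(x)v + (r_\succ+r_\prec)(y)u$, and for $A\ltimes_{l_\succ+l_\prec, 0, 0, r_\succ+r_\prec} V$ the $\succ$ contribution absorbs the left action while the $\prec$ contribution absorbs the right action, giving the same expression. By formula (2.1.2), this coincides with $A\ltimes_{l_\succ+l_\prec,\, r_\succ+r_\prec} V$, the associative bimodule structure produced in part (1).
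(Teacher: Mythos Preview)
Your proof is correct and proceeds exactly as one would expect; the paper itself does not give a proof of this statement (it is quoted from \cite{Bai3}), but the analogous result for quadri-algebras (Proposition 4.1.2) is proved in the paper by the same method of matching up axioms, so your approach is in line with the paper's style. One small bookkeeping slip: in part (1), the commutation identity for $(l_\succ+l_\prec,\,r_\succ+r_\prec)$ uses only (3.1.2), (3.1.5), and (3.1.8); the axiom (3.1.6) belongs solely to the right-module verification, so your claim that each of the nine axioms is used exactly once is correct but your listed distribution is not.
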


\begin{prop} {\rm (\cite{Bai3})}\quad Let
$(A,\succ,\prec)$ be a dendriform algebra and
$(l_\succ,r_\succ,l_\prec,r_\prec,V)$ be a bimodule. Then
$(r_\succ^*+r_\prec^*, -l_\prec^*,-r_\succ^*,l_\succ^*+l_\prec^*,
V^*)$ is a bimodule of $(A,\succ,\prec)$. We call it the dual
bimodule of $(l_\succ,r_\succ,l_\prec,r_\prec,V)$.\end{prop}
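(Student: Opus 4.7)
The plan is to verify directly the nine defining equations (3.1.1)--(3.1.9) for the proposed quadruple
$(l_\succ',\, r_\succ',\, l_\prec',\, r_\prec') := (r_\succ^* + r_\prec^*,\; -l_\prec^*,\; -r_\succ^*,\; l_\succ^* + l_\prec^*)$
acting on $V^*$. The basic tool is the transpose identity $(\rho_1(x)\rho_2(y))^* = \rho_2(y)^*\rho_1(x)^*$ together with the adjoint relation (1.5): pairing each proposed axiom against an arbitrary $u \in V$ converts it into a relation among $l_\succ, r_\succ, l_\prec, r_\prec$ acting on $V$, which one then matches to one of the original axioms (3.1.1)--(3.1.9) or to a short combination of them.

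As an illustration, the primed form of (3.1.1) reads $-r_\succ^*(x \prec y) = r_\succ^*(x)r_\succ^*(y) - r_\succ^*(x)(r_\succ^* + r_\prec^*)(y)$; after cancellation this simplifies to $r_\succ^*(x \prec y) = r_\succ^*(x)\,r_\prec^*(y)$, which via (1.5) dualizes to $r_\succ(x \prec y) = r_\prec(y)\,r_\succ(x)$, and this is precisely (3.1.6) after interchanging $x$ and $y$. The primed form of (3.1.2), using $R^* - l_\prec^* = l_\succ^*$ at the right moment, telescopes to $r_\succ(y)(l_\succ+l_\prec)(x) = l_\succ(x)r_\succ(y)$, which is (3.1.8) after a swap. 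Each of the remaining seven axioms unwinds in the same way: the sign pattern ``$-l_\prec^*,\; -r_\succ^*$'' is precisely what makes the cancellations inside the parenthesized sums $r_\succ^*+r_\prec^*$ and $l_\succ^*+l_\prec^*$ collapse to a single transposed operator, which then matches exactly one of (3.1.4)--(3.1.6) or (3.1.8) up to an interchange of the two free variables.

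A useful sanity check, and a partial conceptual shortcut, is to note that $l_\succ' + l_\prec' = r_\prec^*$ and $r_\succ' + r_\prec' = l_\succ^*$, while $(l_\succ', r_\prec') = (R^*, L^*)$, where $L = l_\succ + l_\prec$ and $R = r_\succ + r_\prec$. By Proposition 3.1.2(1) the pairs $(l_\succ, r_\prec, V)$ and $(L, R, V)$ are bimodules of the associated associative algebra $(A,*)$, so by Example 2.1.2 their duals $(r_\prec^*, l_\succ^*, V^*)$ and $(R^*, L^*, V^*)$ are bimodules of $(A,*)$; this automatically handles the three associative-type consequences for the primed maps (namely the sums of (3.1.1)+(3.1.4)+(3.1.7), of (3.1.2)+(3.1.5)+(3.1.8), and of (3.1.3)+(3.1.6)+(3.1.9), which by Proposition 3.1.2(1) must hold for any candidate bimodule), eliminating part of the routine. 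The main obstacle is purely organizational: bookkeeping the nine sign-adjusted identities and matching each one with a single original axiom or a small combination. I would lay this out in a table indexed by (3.1.1)--(3.1.9), with each row recording the dualized equation, the variable swap needed, and the original axiom it reduces to --- a tedious but entirely mechanical task.
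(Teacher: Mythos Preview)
Your approach is correct: direct verification of the nine axioms (3.1.1)--(3.1.9) for the primed maps, via dualizing each identity and matching it to one of the original axioms with the variables swapped, goes through exactly as you outline. I checked several cases beyond the two you wrote out and the pattern holds throughout; in particular the primed (3.1.3), (3.1.5), (3.1.7) reduce to the associative bimodule axioms for $(L,R)=(l_\succ+l_\prec,\,r_\succ+r_\prec)$, as your sanity check anticipates, while the primed (3.1.1), (3.1.4), (3.1.6), (3.1.8), (3.1.9) dualize to the original (3.1.6), (3.1.9), (3.1.1), (3.1.2), (3.1.4) respectively, and the primed (3.1.2) to the original (3.1.8). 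One small slip: in your (3.1.2) sketch the identity ``$R^* - l_\prec^* = l_\succ^*$'' is not what is used---the actual cancellation is that the two $r_\succ^*(y)l_\prec^*(x)$ terms on the right-hand side annihilate, leaving $r_\succ^*(y)l_\succ^*(x)$; the conclusion you state is nonetheless correct.

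As for comparison with the paper: this proposition is quoted from \cite{Bai3} and no proof is given here, so your direct verification is the expected argument. It is worth noting that when the paper later proves the quadri-algebra analogue (Proposition~4.1.4) it does \emph{not} repeat the direct check but instead bootstraps from the present proposition via the relations in Proposition~4.1.2---essentially your ``conceptual shortcut'' run in reverse, determining the unknown dual maps from the known dendriform duals. That device is unavailable at the dendriform level since this is the base case, so the tabular verification you propose is the right thing to do.
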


\begin{coro} {\rm (\cite{Bai3})}\quad  Let $(A,\succ,\prec)$ be a dendriform
algebra and $(l_\succ,r_\succ,l_\prec,r_\prec,V)$ be a bimodule. Let
$(A,*)$ be the associated associative algebra.

(1) Both $(r_\succ^*+r_\prec^*, 0,0,l_\succ^*+l_\prec^*, V^*)$ and
$(r_\prec^*, 0,0,l_\succ^*, V^*)$ are bimodules of
$(A,\succ,\prec)$.

(2) Both $(r_\succ^*+r_\prec^*, l_\succ^*+l_\prec^*,V^*)$ and
$(r_\prec^*, l_\succ^*, V^*)$ are bimodules of $(A,*)$.\end{coro}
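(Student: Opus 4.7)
The plan is to chain together several results already established in the paper without touching the nine dendriform bimodule axioms (3.1.1)--(3.1.9) directly. Both parts will be read off almost mechanically from Proposition 3.1.2, Example 2.1.2, and Proposition 3.1.3; in particular the natural order is to prove (2) first and then deduce (1).

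For part (2), I would begin by invoking Proposition 3.1.2(1), which says that the data $(l_\succ,r_\succ,l_\prec,r_\prec,V)$ yield two associative-algebra bimodules of $(A,*)$, namely $(l_\succ,r_\prec,V)$ and $(l_\succ+l_\prec,\,r_\succ+r_\prec,V)$. Then I apply Example 2.1.2, which states that if $(l,r,V)$ is a bimodule of an associative algebra then $(r^*,l^*,V^*)$ is its dual bimodule. Dualizing the two bimodules in the previous sentence produces exactly $(r_\prec^*,l_\succ^*,V^*)$ and $(r_\succ^*+r_\prec^*,\,l_\succ^*+l_\prec^*,V^*)$, establishing part (2).

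For part (1), I feed each of the two associative bimodules obtained in part (2) into Proposition 3.1.2(2), which promotes any associative bimodule $(l,r,V)$ of $(A,*)$ to a dendriform bimodule $(l,0,0,r,V)$ of $(A,\succ,\prec)$. Applied to $(r_\prec^*,l_\succ^*,V^*)$ this gives the dendriform bimodule $(r_\prec^*,0,0,l_\succ^*,V^*)$, and applied to $(r_\succ^*+r_\prec^*,\,l_\succ^*+l_\prec^*,V^*)$ it gives $(r_\succ^*+r_\prec^*,\,0,0,\,l_\succ^*+l_\prec^*,V^*)$. This is exactly the content of (1).

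There is no genuine obstacle: the whole corollary is a two-step chaining of previously stated facts, which is presumably why the author labels it a corollary and omits the argument. As a consistency check I would also rederive (1) by applying Proposition 3.1.2(3) to the dual dendriform bimodule $(r_\succ^*+r_\prec^*,\,-l_\prec^*,\,-r_\succ^*,\,l_\succ^*+l_\prec^*,V^*)$ furnished by Proposition 3.1.3; the relevant sums telescope to $l_\succ^{\mathrm{dual}}+l_\prec^{\mathrm{dual}}=r_\prec^*$ and $r_\succ^{\mathrm{dual}}+r_\prec^{\mathrm{dual}}=l_\succ^*$, reproducing the same two bimodules and confirming that the two routes agree. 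If desired, part (2) can likewise be cross-checked by feeding this dual dendriform bimodule through Proposition 3.1.2(1).
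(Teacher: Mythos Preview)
Your argument is correct and is exactly the intended one: the paper states this result as an immediate corollary of Propositions 3.1.2 and 3.1.3 (together with Example 2.1.2) and gives no separate proof, and you have spelled out precisely that chaining. Your consistency check via Proposition 3.1.3 followed by Proposition 3.1.2(3) is also valid and is the other natural route to part (1).
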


\begin{exam}{\rm  Let $(A,\succ,\prec)$ be a dendriform algebra. Then
$$(L_\succ,R_\succ, L_\prec,R_\prec, A),\;\;\;(L_\succ,0,0,R_\prec, A)\;\;
{\rm and}\;\;(L_\succ+L_\prec,0,0,R_\succ+R_\prec, A)$$ are
bimodules of $(A,\prec, \succ)$ and the first one is called the
regular bimodule of $(A,\succ,\prec)$. On the other hand,
$$(R_\succ^*+R_\prec^*,-L_\prec^*, -R_\succ^*,L_\succ^*+L_\prec^*,
A^*),\;\;(R_\prec^*,0,0,L_\succ^*, A^*) \;\;{\rm
and}\;\;(R_\succ^*+R_\prec^*,0,0,L_\succ^*+L_\prec^*,A^*)$$ are
bimodules of $(A,\succ, \prec)$, too.}\end{exam}

\subsection{Bilinear forms on dendriform algebras and $D$-equation }

In fact, a double construction of Connes cocycle on an associative
algebra $(A,*)$ is equivalent to a double construction of certain
nondegenerate bilinear form on its compatible dendriform algebra.

\begin{defn}{\rm Let $(A,\succ,\prec)$ be a dendriform algebra. A
skew-symmetric bilinear form $\omega$ on A is called {\it invariant}
if $\omega$ satisfies (for any $x, y, z\in A$)
$$\omega(x\succ y,z)=\omega(y, z\succ x+z\prec x),\eqno (3.2.1)$$
$$\omega(x\prec y,z)=\omega(x,y\succ z+y\prec z).\eqno
(3.2.2)$$}\end{defn}

\begin{prop} Let $(A,\succ,\prec)$ be a dendriform algebra with
a skew-symmetric bilinear form $\omega$.

(1) $\omega$ is invariant if and only if $\omega$ satisfies equation
(3.2.1) and
$$\omega(x\succ y,z)+\omega (x, y\prec z)=0,\;\;\forall \; x,y,z\in
A.\eqno (3.2.3)$$

(2) $\omega$ is invariant if and only if $\omega$ satisfies
equations (3.2.2) and (3.2.3).

(3) $\omega$ is invariant if and only if $\omega$ satisfies equation
(3.2.1) and
$$\omega (x\succ y,z)+\omega (y\succ z,x)+\omega (z\succ x,
y)=0,\;\;\forall x,y,z\in A.\eqno (3.2.4)$$

(4) $\omega$ is invariant if and only if $\omega$ satisfies equation
(3.2.2) and
$$\omega (x\prec y,z)+\omega (y\prec z,x)+\omega (z\prec x,
y)=0,\;\;\forall x, y, z\in A. \eqno (3.2.5)$$

(5) If $\omega$ is invariant, then $\omega$ is a  Connes cocycle of
the associated associative algebra $(A,*)$.\end{prop}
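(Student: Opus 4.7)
The plan is to prove the five parts as a logical cascade: establish (1) and (2) first by a short algebraic manipulation; then (5) by a cyclic-sum expansion; and finally (3) and (4) by combining the previous observations. The common engine is the interplay between skew-symmetry of $\omega$, the invariance identities (3.2.1)--(3.2.2), and the decomposition $*=\succ+\prec$, together with the fact that cyclically relabeling $(x,y,z)\mapsto(y,z,x)$ in an invariance identity is just as strong as the identity itself.

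For (1), the forward direction is the computation
\[
\omega(x\succ y, z) + \omega(x, y\prec z)
\;=\; \omega(y, z*x) - \omega(y\prec z, x)
\;=\; \omega(y, z*x) - \omega(y, z*x) = 0,
\]
where the first step applies (3.2.1) and skew-symmetry and the second applies (3.2.2) after the cyclic relabeling above. The converse runs backwards: (3.2.3) combined with (3.2.1) and skew-symmetry gives $\omega(y\prec z, x) = \omega(y, z*x)$, which is (3.2.2) after renaming. Part (2) follows from the same manipulation with the roles of $\succ$ and $\prec$ exchanged.

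For (5), I expand the cyclic sum $C := \omega(x*y,z)+\omega(y*z,x)+\omega(z*x,y)$ into six terms via $*=\succ+\prec$, apply (3.2.1) to the three $\succ$-summands and (3.2.2) to the three $\prec$-summands, and use skew-symmetry to recognize the resulting expression as $-2C$; this forces $3C=0$, hence $C=0$. For (3), the forward direction then follows by applying (3.2.1) termwise to the sum in (3.2.4) and invoking (5) together with skew-symmetry to identify that sum with $-C = 0$. For the converse, (3.2.1)+(3.2.4) together with skew-symmetry again yield the Connes-cocycle identity $C=0$; then (3.2.2) drops out of
\[
\omega(x\prec y,z) \;=\; \omega(x*y,z) - \omega(x\succ y,z)
\]
by applying (3.2.1) to the second term and collapsing via $C=0$ and skew-symmetry. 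Part (4) is strictly symmetric, exchanging the roles of $\succ$ and $\prec$ throughout.

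No single step is deep; the only genuine obstacle is bookkeeping — keeping the cyclic substitutions of $(x,y,z)$ consistent when translating between the $\succ$-invariance, the $\prec$-invariance, and the cyclic identities (3.2.4)--(3.2.5), and ensuring each Connes-cocycle application lands on the correct triple of variables.
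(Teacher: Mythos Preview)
Your proof is correct. Parts (1) and (2) are essentially identical to the paper's argument. For (5), (3), and (4) you take a slightly different route: you prove (5) by expanding $C$ into six terms and obtaining $C=-2C$, hence $3C=0$; the paper instead simply adds (3.2.1) and (3.2.2) to get $\omega(x*y,z)=\omega(y,z*x)+\omega(x,y*z)$ and reads off the Connes cocycle directly, which avoids the division by $3$ (harmless here, since the paper works over characteristic zero). For (3), you route the argument through (5) by showing the cyclic $\succ$-sum in (3.2.4) equals $-C$; the paper is more economical, observing that once (3.2.1) holds one has
\[
\omega(x\succ y,z)=-\omega(z\succ x,y)-\omega(z\prec x,y),
\]
so that (3.2.4) is equivalent to $\omega(z\prec x,y)=\omega(y\succ z,x)$, which is exactly (3.2.3) after relabeling and skew-symmetry; then (1) finishes. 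Your approach trades one extra equivalence for a slightly more conceptual ``everything reduces to the Connes cocycle'' storyline; the paper's approach is shorter and characteristic-free.
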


\begin{proof} (1) Let $\omega$ be a skew-symmetric bilinear form satisfying
equation (3.2.1). If $\omega$ satisfies equation (3.2.2), then
$$\omega(x\succ y,z)=\omega (y, z\succ x+z\prec x)=\omega (y\prec
z,x)=-\omega (x,y\prec z),\;\;\forall x, y, z\in A.$$ Conversely, if
$\omega$ satisfies equation (3.2.3), then
$$\omega(x\prec y,z)=\omega (z\succ x, y)=\omega(x, y\succ z+y\prec
z),\;\;\forall x,y,z\in A.$$

 By a similar discussion as in (1), the conclusion (2) holds.

 (3) If the skew-symmetric bilinear form $\omega$ satisfies equation (3.2.1),
then
$$\omega(x\succ y,z)=\omega (y, z\succ x+z\prec x)=-\omega (z\succ
x,y)-\omega (z\prec x, y).\;\;\forall x,y,z\in A.$$ Therefore
equation (3.2.3) holds if and only if equation (3.2.4) holds. By
(1), the conclusion (3) follows.

The conclusion (4) follows by a similar discussion as in (3).

(5) follows immediately from the sum of equations (3.2.1) and
(3.2.2).
\end{proof}

By Corollary 2.2.7 and the conclusion (5) in Proposition 3.2.2, we
have the following result.

\begin{coro} Let $(A,*)$ be an associative algebra and $\omega$ be a
nondegenerate skew-symmetric bilinear form. Then $\omega$ is a
 Connes cocycle of $(A,*)$ if and only if
$\omega$ is invariant on the compatible dendriform algebra given by
equation (2.2.8).\end{coro}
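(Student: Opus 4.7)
The plan is to establish both directions by chaining two already-available ingredients: Corollary 2.2.7, which produces the dendriform structure out of a Connes cocycle, and Proposition 3.2.2(5), which converts invariance on a dendriform algebra into the Connes cocycle identity on the associated associative algebra. Once equation (2.2.8) is in hand, the two invariance axioms (3.2.1) and (3.2.2) should fall out by inspection, so the bulk of the work is conceptual bookkeeping rather than a new computation.

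For the forward direction, I would assume that $\omega$ is a Connes cocycle on $(A,*)$. By Corollary 2.2.7, the formulas in (2.2.8) do define a dendriform algebra $(\succ,\prec)$ compatible with $(A,*)$. To check invariance I would substitute $z*x = z\succ x + z\prec x$ into the right-hand side of the first equation of (2.2.8) to read off (3.2.1) directly, and analogously use $y*z = y\succ z + y\prec z$ in the second equation of (2.2.8) to recover (3.2.2). Thus $\omega$ is invariant on $(A,\succ,\prec)$.

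For the converse, I would assume that the compatible dendriform algebra given by (2.2.8) is present and that $\omega$ is invariant on it. Then Proposition 3.2.2(5) applies immediately: invariance of $\omega$ forces $\omega$ to be a Connes cocycle on the associated associative algebra, which by compatibility is exactly $(A,*)$.

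The only mildly delicate point, and the place where I expect the reader (rather than the prover) to stumble, is parsing the converse: in order to speak of ``the compatible dendriform algebra given by (2.2.8)'' one must first know that the nondegenerate $\omega$ does uniquely determine products $\succ,\prec$ through (2.2.8). Nondegeneracy of $\omega$ supplies this uniqueness, while the dendriform and compatibility axioms are precisely what the converse hypothesizes. So there is no genuine obstacle in the argument; the proof is really a matter of reading the statement carefully and linking the two preceding results.
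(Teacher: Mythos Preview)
Your argument is correct and matches the paper's own proof, which simply cites Corollary 2.2.7 and Proposition 3.2.2(5). The forward direction reads off (3.2.1)--(3.2.2) from (2.2.8) via compatibility, and the converse is exactly Proposition 3.2.2(5); your remark about nondegeneracy ensuring well-definedness of $\succ,\prec$ is the only extra care needed, and you handle it correctly.
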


Let $(A,\succ_A,\prec_A)$ be a dendriform algebra and suppose that
there is another dendriform algebra structure
$\succ_{A^*},\prec_{A^*}$ on its dual space $A^*$. If there is a
dendriform algebra structure on the direct sum $A\oplus A^*$ of
the underlying vector spaces of $A$ and $A^*$ such that both $A$
and $A^*$ are subalgebras and the skew-symmetric bilinear form on
$A\oplus A^*$ given by equation (2.2.10) is invariant on $A\oplus
A^*$, then it is called a {\it double construction of a dendriform
algebra with a nondegenerate invariant bilinear form associated to
$(A,\succ_A$,$\prec_A)$ and $(A^*,\succ_{A^*},\prec_{A^*})$}.

By Corollary 3.2.3 and Theorem 2.3.2, we have the following
conclusion.

\begin{coro} Let $(A,\succ_A,\prec_A)$ be a dendriform algebra whose
products are given by two linear maps $\beta_\succ^*,
\beta_\prec^* :A\otimes A\rightarrow A$. Suppose there is another
dendriform algebra structure $``\succ_{A^*}$, $\prec_{A^*}"$ on
its dual space $A^*$ given by two linear maps
$\alpha_\succ^*,\alpha_\prec^*: A^*\otimes A^*\rightarrow A^*$.
Then there exists a double construction of a dendriform algebra
with a nondegenerate invariant bilinear form associated to
$(A,\succ_A,\prec_A)$ and $(A^*,\succ_{A^*},\prec_{A^*})$ if and
only if
$(A,A^*,\alpha_\succ,\alpha_\prec,\beta_\succ,\beta_\prec)$ is a
dendriform D-bialgebra. Moreover, every double construction of a
dendriform algebra with a nondegenerate invariant bilinear form
can be obtained from the above way.\end{coro}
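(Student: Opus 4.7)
The plan is to deduce this corollary by combining Theorem 2.3.2 with Corollary 3.2.3, using the principle that a dendriform algebra structure on $A\oplus A^*$ is encoded, up to its compatible relationship with its associated associative algebra, by the latter together with an appropriate invariance structure on the bilinear form.

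First, I would show the forward implication. Suppose we have a double construction of a dendriform algebra with a nondegenerate invariant bilinear form on $A\oplus A^*$, with the form $\omega$ as in (2.2.10). By Proposition 2.1.1, this dendriform algebra has an associated associative algebra structure on $A\oplus A^*$, and since $A$ and $A^*$ are dendriform subalgebras, they are also associative subalgebras, recovering $(A,*_A)$ and $(A^*,*_{A^*})$. Applying Corollary 3.2.3 to the dendriform algebra on $A\oplus A^*$, the invariance of $\omega$ is equivalent to $\omega$ being a Connes cocycle on the associated associative algebra on $A\oplus A^*$. Hence we obtain a double construction of Connes cocycle associated to $(A,*_A)$ and $(A^*,*_{A^*})$, and Theorem 2.3.2 yields that $(A,A^*,\alpha_\succ,\alpha_\prec,\beta_\succ,\beta_\prec)$ is a dendriform D-bialgebra.

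For the converse, given a dendriform D-bialgebra, Theorem 2.3.2 produces a double construction of Connes cocycle on $A\oplus A^*$, with associative multiplication given explicitly by (2.3.7) and Connes cocycle $\omega$ as in (2.2.10). Since $\omega$ is nondegenerate, Corollary 2.2.7 gives a canonical compatible dendriform algebra structure $(\succ,\prec)$ on this associative algebra through (2.2.8), and Corollary 3.2.3 then ensures $\omega$ is an invariant bilinear form on this dendriform structure. The last ``moreover'' claim then follows by the same comparison.

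The main obstacle — and the place where an explicit check is unavoidable — is verifying that the compatible dendriform structure $(\succ,\prec)$ produced on $A\oplus A^*$ via (2.2.8) restricts to the original $(\succ_A,\prec_A)$ on $A$ and to $(\succ_{A^*},\prec_{A^*})$ on $A^*$. This is a direct computation: using (2.2.10) and (2.3.7) one finds, for $x,y\in A$ and any $z+c^*\in A\oplus A^*$, that $\omega(x\succ y,z+c^*)=-\langle x\succ_A y,c^*\rangle$ equals $\omega(y,(z+c^*)*x)=-\langle L_{\succ_A}(x)y,c^*\rangle$, forcing $x\succ y=x\succ_A y$; the analogous identities for $\prec$ and for the $A^*$-side follow the same pattern from the explicit form of (2.3.7). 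Once this identification is established, the two equivalences above glue together to give the ``if and only if'' statement as well as the fact that every such double construction arises from a dendriform D-bialgebra.
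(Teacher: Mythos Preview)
Your approach is correct and is essentially the paper's own: the proof there consists of the single line ``By Corollary 3.2.3 and Theorem 2.3.2, we have the following conclusion,'' and you have spelled out how those two results combine. The extra verification you supply---that the compatible dendriform structure on $A\oplus A^*$ obtained from (2.2.8) restricts to the given $(\succ_A,\prec_A)$ on $A$ and $(\succ_{A^*},\prec_{A^*})$ on $A^*$---is exactly the content the paper records in the explicit formulas (3.2.6)--(3.2.7) following the statement, so your computation is playing the same role. One small sharpening: in the forward direction you invoke Corollary 3.2.3, but that corollary is phrased for the specific dendriform structure coming from (2.2.8); it is cleaner to cite Proposition 3.2.2(5) directly (invariance on any dendriform algebra implies Connes cocycle on the associated associative algebra), and then observe, as you implicitly do, that nondegeneracy of $\omega$ together with (3.2.1)--(3.2.2) forces the given dendriform structure to coincide with the one from (2.2.8).
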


The dendriform algebra structure on $A\oplus A^*$ is given by (for
any $x,y\in A$ and $a^*,b^*\in A^*$)
$$(x+a^*)\succ
(y+b^*)=x\succ_Ay+ R^*_{A^*}(a^*)y-L^*_{\prec_{A^*}}(b^*)x
+a^*\succ_{A^*}b^*+R^*_A(x)b^*-L^*_{\prec_A}(y)a^*,\eqno (3.2.6)$$
$$(x+a^*)\prec
(y+b^*)=x\succ_Ay- R^*_{\succ_{A^*}}(a^*)y+L^*_{A^*}(b^*)x
+a^*\prec_{A^*}b^*-R^*_{\succ_A}(x)b^*+L^*_A(y)a^*,\eqno (3.2.7)$$
where $R^*_{A^*}=R^*_{\prec_{A^*}}+R^*_{\succ_{A^*}},
R^*_A=R^*_{\prec_A}+R^*_{\succ_A},
L^*_{A^*}=L^*_{\prec_{A^*}}+R^*_{\succ_{A^*}},
L^*_A=L^*_{\prec_A}+R^*_{\succ_A}$.

\begin{coro} Let $(A,\succ,\prec)$ be a dendriform algebra and $r\in
A\otimes A$. Let the linear maps $\alpha_\succ,\alpha_\prec$ be
defined by equations (2.3.8) and (2.3.9). If $r$ is a symmetric
solution of $D$-equation in $A$, then the maps
$\alpha_\succ,\alpha_\prec$ induce a dendriform algebra structure
on $A^*$ such that there is a double construction of a dendriform
algebra with a nondegenerate invariant bilinear form associated to
$(A,\succ_A,\prec_A)$ and
$(A^*,\succ_{A^*},\prec_{A^*})$.\end{coro}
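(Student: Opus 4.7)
The plan is to chain two results already established in the paper. Assuming $r$ is a symmetric solution of the $D$-equation in $(A,\succ,\prec)$, I would first invoke Proposition 2.3.4, which asserts that the linear maps $\alpha_\succ$ and $\alpha_\prec$ defined by equations (2.3.8) and (2.3.9) make the dual maps $\alpha_\succ^{*}, \alpha_\prec^{*}:A^{*}\otimes A^{*}\to A^{*}$ into a dendriform algebra structure $(\succ_{A^{*}},\prec_{A^{*}})$ on $A^{*}$ in such a way that the sextuple $(A,A^{*},\alpha_\succ,\alpha_\prec,\beta_\succ,\beta_\prec)$ is a dendriform D-bialgebra, where $\beta_\succ^{*},\beta_\prec^{*}$ are the given products $\succ_A,\prec_A$ on $A$. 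At this stage the first assertion of the corollary (that $A^{*}$ carries an induced dendriform structure) is already in hand.

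Next I would apply Corollary 3.2.4 in the forward direction: a dendriform D-bialgebra $(A,A^{*},\alpha_\succ,\alpha_\prec,\beta_\succ,\beta_\prec)$ produces a double construction of a dendriform algebra with a nondegenerate invariant bilinear form associated to $(A,\succ_A,\prec_A)$ and $(A^{*},\succ_{A^{*}},\prec_{A^{*}})$. Concretely, the underlying vector space is $A\oplus A^{*}$, the dendriform products are given by formulas (3.2.6) and (3.2.7), and the invariant bilinear form is the canonical pairing $\omega$ of equation (2.2.10), whose nondegeneracy is automatic. Combining these two steps gives the corollary.

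The only thing one might worry about is whether Proposition 2.3.4 indeed yields the full D-bialgebra data compatible with the original $\beta_\succ,\beta_\prec$ coming from the fixed dendriform structure on $A$, as opposed to only a one-sided piece of it; but this compatibility is built into the statement of Proposition 2.3.4 (axioms (2.3.1)--(2.3.6) are verified there), so no additional verification is required here. In that sense there is no genuine obstacle in the present corollary beyond correctly assembling the two previously established results, and the substantive calculations (the symmetry of $r$ together with the $D$-equation (2.3.10) implying the coassociativity-type axioms and the compatibility equations (2.3.5)--(2.3.6)) already took place inside Proposition 2.3.4.
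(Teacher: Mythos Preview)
Your approach is correct and matches the paper's implicit argument: the corollary is placed immediately after Corollary~3.2.4 precisely because it follows by combining Proposition~2.3.3 (symmetric solution of the $D$-equation $\Rightarrow$ dendriform D-bialgebra) with Corollary~3.2.4 (dendriform D-bialgebra $\Leftrightarrow$ double construction with nondegenerate invariant bilinear form). One small correction: the result you invoke as ``Proposition~2.3.4'' is actually Proposition~2.3.3 in the paper's numbering; item~2.3.4 is the Definition of the $D$-equation.
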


\begin{remark}{\rm In the above sense, the $D$-equation in a dendriform
algebra is just an analogue of the classical Yang-Baxter equation in
a Lie algebra.}\end{remark}

Next, we consider the symmetric bilinear forms on a dendriform
algebra.

\begin{theorem}{\rm (\cite{Bai3})}\quad Let $(A,\succ,\prec)$ be a dendriform
algebra and $r\in A\otimes A$. Suppose that $r$ is symmetric and
nondegenerate. Then $r$ is a solution of $D$-equation in $A$ if and
only if the inverse of the isomorphism $A^*\rightarrow A$ induced by
$r$, regarded as a bilinear form ${\mathcal B}$ on $A$ (that is,
${\mathcal B}(x,y)=\langle r^{-1}x,y\rangle $ for any $x,y\in A$)
satisfies
$${\mathcal B}(x*y,z)={\mathcal B}(y,z\prec x)+{\mathcal B}(x,y\succ
z),\;\;\forall\; x,y,z\in A.\eqno (3.2.8)$$\end{theorem}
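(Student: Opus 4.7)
The plan is to reduce the statement to Proposition 2.3.5 and then carry out a single $r^{-1}$–translation. By Proposition 2.3.5, a symmetric $r\in A\otimes A$ is a solution of the $D$-equation (2.3.10) in $(A,\succ,\prec)$ if and only if, viewed as a linear map $r:A^{*}\to A$, it is an $\mathcal O$-operator of the associated associative algebra $(A,*)$ with respect to the bimodule $(R_{\prec}^{*},L_{\succ}^{*},A^{*})$, that is
$$r(a^{*})*r(b^{*})=r\bigl(R_{\prec}^{*}(r(a^{*}))b^{*}+L_{\succ}^{*}(r(b^{*}))a^{*}\bigr),\qquad\forall\, a^{*},b^{*}\in A^{*}.$$
So the whole task is to show that, under the assumption that $r$ is nondegenerate, this $\mathcal O$-operator identity is equivalent to (3.2.8).

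Since $r$ is nondegenerate, every $x,y\in A$ can be written uniquely as $x=r(a^{*})$, $y=r(b^{*})$, i.e.\ $a^{*}=r^{-1}(x)$, $b^{*}=r^{-1}(y)$. Applying $r^{-1}$ to both sides of the $\mathcal O$-operator identity and then pairing with an arbitrary $z\in A$, I would rewrite the identity as
$$\langle r^{-1}(x*y),z\rangle=\langle R_{\prec}^{*}(x)r^{-1}(y),z\rangle+\langle L_{\succ}^{*}(y)r^{-1}(x),z\rangle,\qquad\forall\, x,y,z\in A.$$
The left-hand side is just $\mathcal B(x*y,z)$ by definition of $\mathcal B$. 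For the right-hand side I would use the adjointness convention (1.5) together with the symmetry of $\mathcal B$ (which is inherited from the symmetry of $r$): the first term becomes $\langle r^{-1}(y),R_{\prec}(x)z\rangle=\langle r^{-1}(y),z\prec x\rangle=\mathcal B(y,z\prec x)$, and the second becomes $\langle r^{-1}(x),L_{\succ}(y)z\rangle=\langle r^{-1}(x),y\succ z\rangle=\mathcal B(x,y\succ z)$. This gives precisely (3.2.8).

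For the converse direction, starting from (3.2.8) and reading the three steps backwards produces, for every $z\in A$,
$$\langle r^{-1}(x*y)-R_{\prec}^{*}(x)r^{-1}(y)-L_{\succ}^{*}(y)r^{-1}(x),z\rangle=0,$$
hence the $\mathcal O$-operator identity holds (after relabelling $x=r(a^{*})$, $y=r(b^{*})$), and Proposition 2.3.5 finishes the argument. The only thing to be careful about is the bookkeeping of which bimodule action goes with which argument and the correct use of (1.5); there is no genuine obstacle since the nondegeneracy of $r$ makes every step reversible. The symmetry hypothesis on $r$ is used implicitly, through the symmetry of $\mathcal B$, to ensure that both terms on the right-hand side of (3.2.8) match the two summands produced by the $\mathcal O$-operator identity in the correct variables.
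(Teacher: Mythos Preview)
Your argument is correct. The reduction to Proposition~2.3.5 followed by pairing with an arbitrary $z\in A$ is exactly the right manoeuvre, and your handling of the duality $\langle R_\prec^*(x)r^{-1}(y),z\rangle=\langle r^{-1}(y),z\prec x\rangle$ (and similarly for $L_\succ^*$) via (1.5) is accurate. One small remark: you do not actually need the symmetry of $\mathcal B$ in the displayed computation, since $\mathcal B(y,z\prec x)=\langle r^{-1}(y),z\prec x\rangle$ is just the definition; the symmetry of $r$ enters only through the hypothesis of Proposition~2.3.5.

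As for comparison: the present paper does not give its own proof of this statement, citing \cite{Bai3} instead. However, the analogous result for quadri-algebras (Theorem~4.3.6) \emph{is} proved in the paper, and there the method is different from yours: the author writes $r=\sum_i a_i\otimes b_i$ explicitly and computes $\langle u^*\otimes v^*\otimes w^*,\,r_{13}\succ r_{23}\rangle$ and the companion terms directly, matching them one by one to the pieces of the 2-cocycle condition. Your route through the $\mathcal O$-operator characterisation is shorter and more conceptual, since it recycles Proposition~2.3.5 rather than redoing the tensor-component bookkeeping; the paper's direct approach has the advantage of being self-contained (it does not presuppose the equivalence with $\mathcal O$-operators) and of making transparent which piece of the algebraic equation corresponds to which term of the bilinear-form identity.
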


\begin{defn} {\rm Let $(A,\succ,\prec)$ be a dendriform algebra. A symmetric
bilinear form ${\mathcal B}$ on A is called {\it 2-cocycle} of
$(A,\succ,\prec)$ if ${\mathcal B}$  satisfies equation
(3.2.8).}\end{defn}

\subsection{ ${\mathcal O}$-operators of dendriform algebras and $D$-equation}

\begin{defn}{\rm  Let $(A,\succ,\prec)$ be a dendriform
algebra and $(l_\succ,r_\succ,l_\prec,r_\prec, V)$ be a bimodule.
A linear map $T:V\rightarrow A$ is called an {\it ${\mathcal
O}$-operator of $(A,\succ,\prec)$ associated to
$(l_\succ,r_\succ,l_\prec,r_\prec, V)$} if $T$ satisfies  (for any
$u,v\in V$)
$$T(u)\succ T(v)=T(l_\succ(T(u))v+r_\succ(T(v)u)),T(u)\prec T(v)=T(l_\prec(T(u))v+r_\prec(T(v)u)). \eqno (3.3.1)$$
}\end{defn}

The following result is obvious.

\begin{coro}Let $(l_\succ,r_\succ,l_\prec,r_\prec, V)$ be a  bimodule of
a dendriform algebra $(A,\succ,\prec)$. Let $(A,*)$ be the
associated associative algebra. If $T$ is an ${\mathcal O}$-operator
of $(A,\succ,\prec)$ associated to
$(l_\succ,r_\succ,l_\prec,r_\prec, V)$, then $T$ is an $\mathcal
O$-operator of $(A,*)$ associated to $(l_\succ+l_\prec,
r_\succ+r_\prec, V)$.\end{coro}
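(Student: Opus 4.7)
The plan is to add the two defining identities (3.3.1) of $T$ as an $\mathcal O$-operator of the dendriform algebra and then invoke the splitting $*=\succ+\prec$ on the left-hand side. Before doing so, I would note that the target triple $(l_\succ+l_\prec,\,r_\succ+r_\prec,V)$ is in fact a bimodule of the associated associative algebra $(A,*)$; this is the first bimodule listed in Proposition 3.1.2(1), so the statement of the corollary is at least meaningful.

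For the main verification, fix $u,v\in V$ and consider the two defining equations
\[
T(u)\succ T(v) = T\bigl(l_\succ(T(u))v + r_\succ(T(v))u\bigr), \qquad T(u)\prec T(v) = T\bigl(l_\prec(T(u))v + r_\prec(T(v))u\bigr).
\]
Summing these two identities and applying equation (2.1.6) on the left together with linearity of $T$ on the right yields
\[
T(u)*T(v) = T\bigl((l_\succ+l_\prec)(T(u))v + (r_\succ+r_\prec)(T(v))u\bigr),
\]
which is precisely the defining condition (2.1.3) of $T$ being an $\mathcal O$-operator of $(A,*)$ associated to $(l_\succ+l_\prec,\,r_\succ+r_\prec,V)$.

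There is no real obstacle here: the argument uses nothing more than linearity of $T$ and the relation $*=\succ+\prec$, which is why the text labels the corollary as obvious. The content of the bimodule axioms (3.1.1)--(3.1.9) needed to guarantee that the summed data $(l_\succ+l_\prec,\,r_\succ+r_\prec)$ actually satisfies the associative bimodule identities (2.1.1) has already been absorbed into Proposition 3.1.2(1) and does not need to be re-checked in this step.
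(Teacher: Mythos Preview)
Your argument is correct and is exactly the obvious computation the paper has in mind: add the two identities in (3.3.1), use $*=\succ+\prec$ on the left and linearity of $T$ on the right, and recognize (2.1.3). The appeal to Proposition~3.1.2(1) for the bimodule status of $(l_\succ+l_\prec,\,r_\succ+r_\prec,V)$ is also appropriate.
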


\begin{theorem}  Let $(A,\succ,\prec)$ be a dendriform algebra and $(A,*)$ be
the associated associative algebra. Let $r\in A\otimes A$ be
symmetric. Then the following conditions are equivalent.

(1) $r$ is a solution of $D$-equation in $(A,\succ,\prec)$.

(2) $r$ is an ${\mathcal O}$-operator of $(A,*)$ associated to
$(R_\prec^*, L_\succ^*)$.

(3) $r$ satisfies
$$r(a^*)\succ r(b^*)=r(R_*^*(r(a^*))b^*-L_\prec^*(r(b^*))a^*),\;\;\forall a^*,b^*\in A^*.\eqno (3.3.2)$$

(4) $r$ satisfies
$$r(a^*)\prec r(b^*)=r(-R_\succ^*(r(a^*))b^*+L_*^*(r(b^*))a^*),\;\;\forall a^*,b^*\in A^*.\eqno
(3.3.3)$$\end{theorem}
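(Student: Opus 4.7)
The plan is to take Proposition~2.3.6, which already gives $(1)\Leftrightarrow(2)$, for granted, and to prove $(1)\Leftrightarrow(3)$ and $(1)\Leftrightarrow(4)$ by reducing each of those two conditions to the \emph{same} scalar identity obtained by pairing the $D$-equation with covectors. Throughout I would exploit the dual identity $r(\phi)=\sum_i\langle\phi,b_i\rangle a_i=\sum_i\langle\phi,a_i\rangle b_i$, valid precisely because $r$ is symmetric.

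First I would evaluate the $D$-equation $r_{12}*r_{13}=r_{13}\prec r_{23}+r_{23}\succ r_{12}$ on an arbitrary test tensor $a^*\otimes b^*\otimes c^*\in(A^*)^{\otimes 3}$. Each of the three tensor summands collapses to a product of three scalar pairings; applying the adjoint identities $\langle L_*^*(x)\phi,y\rangle=\langle\phi,x*y\rangle$, $\langle R_*^*(x)\phi,y\rangle=\langle\phi,y*x\rangle$ and their analogues for $\succ,\prec$, together with the symmetry of $r$ to convert sums of the form $\sum_i\langle\phi,a_i\rangle b_i$ back into $r(\phi)$, one finds that the $D$-equation is equivalent to the universally quantified scalar identity
$$
\langle a^*,r(b^*)*r(c^*)\rangle \;=\; \langle c^*,r(a^*)\prec r(b^*)\rangle+\langle b^*,r(c^*)\succ r(a^*)\rangle. \qquad(\dagger)
$$

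Next I would pair condition~(4) with an arbitrary $c^*\in A^*$. By symmetry $\langle c^*,r(\phi)\rangle=\langle\phi,r(c^*)\rangle$, and the adjoint relations for $R_\succ^*$ and $L_*^*$ rewrite the right-hand side of~(4) as $-\langle b^*,r(c^*)\succ r(a^*)\rangle+\langle a^*,r(b^*)*r(c^*)\rangle$; rearranging yields exactly $(\dagger)$, so $(4)\Leftrightarrow(1)$. A parallel calculation for~(3), using instead the adjoints of $R_*^*$ and $L_\prec^*$, produces
$$
\langle b^*,r(c^*)*r(a^*)\rangle \;=\; \langle c^*,r(a^*)\succ r(b^*)\rangle+\langle a^*,r(b^*)\prec r(c^*)\rangle,
$$
which is precisely $(\dagger)$ after the cyclic relabelling $(a^*,b^*,c^*)\mapsto(b^*,c^*,a^*)$; since both identities are universally quantified over $(A^*)^{\otimes 3}$, they are equivalent, so $(3)\Leftrightarrow(1)$ as well.

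The main obstacle is the careful bookkeeping in the pairing of the $D$-equation, because the three tensor summands place the factors of $r$ in different slots: the symmetry of $r$ must be invoked at the exact points where a sum of the form $\sum_i\langle\phi,a_i\rangle b_i$ arises, in order to replace it by $r(\phi)$. Without symmetry the resulting identity would mix $r$ with $\sigma(r)$, and the clean equivalence with~(3) and~(4) would fail. Apart from this delicate bookkeeping, the verifications consist only of repeated applications of the adjoint relations for the left and right multiplication operators and are routine.
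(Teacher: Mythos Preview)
Your argument is correct and follows essentially the same strategy as the paper: both proofs invoke Proposition~2.3.5 for $(1)\Leftrightarrow(2)$ and then reduce the $D$-equation and each of the operator identities (3),(4) to one and the same scalar equation obtained by pairing with three covectors. The only difference is presentational: the paper fixes a basis $\{e_i\}$, writes out structure constants, and identifies the resulting coefficient equation with the $e_t$- (resp.\ $e_p$-) component of the operator expressions in (3) and (4); you carry out the identical computation coordinate-free by pairing with a generic $a^*\otimes b^*\otimes c^*$ and then with a generic $c^*$. Your cyclic relabelling $(a^*,b^*,c^*)\mapsto(b^*,c^*,a^*)$ corresponds exactly to the paper's choice of which basis index to read off.
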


\begin{proof} The fact that (1) is equivalent to (2) follows from
Proposition 2.3.5. Let $\{ e_1,\cdots,e_n\}$ be a basis of $A$ and
$\{ e_1^*,\cdots,e_n^*\}$ be its dual basis. Suppose that
$$e_i\succ e_j=\sum_k a_{ij}^k e_k,\;\;e_i\prec e_j=\sum_k b_{ij}^k e_k,\;\;r=\sum_{i,j}r_{ij}e_i\otimes e_j,\;\;
r_{ij}=r_{ji}.$$ Hence $r(e_i^*)=\sum_k r_{ik}e_k$. Then $r$ is a
solution of $D$-equation in $A$ if and only if (for any $m,t,p$)
$$\sum_{i,k}[
r_{it}r_{kp}(a_{ik}^m+b_{ik}^m)-r_{mi}r_{tk}b_{ik}^p-r_{ip}r_{mk}a_{ik}^t]=0.$$
The left-hand side of the above equation is precisely the
coefficient of $e_t$ in
$$-r(e_p^*)\succ
r(e_m^*)+r(R_*^*(r(e_p^*))e_m^*-L_\prec^*(r(e_m^*))e_p^*),$$ and the
coefficient of $e_p$ in
$$-r(e_m^*)\prec r(e_t^*)+r(-R_\succ^*(r(e_m^*))e_t^*+L_*^*(r(e_t^*))e_m^*).$$
Therefore the conclusion follows.
\end{proof}

\begin{coro}  Let $(A,\succ,\prec)$ be a dendriform algebra and $r\in
A\otimes A$ be symmetric. Then $r$ is a solution of $D$-equation
in  $(A,\succ,\prec)$ if and only if $r$ is an ${\mathcal
O}$-operator of $(A,\succ,\prec)$ associated to
$(R_\succ^*+R_\prec^*,-L_\prec^*, -R_\succ^*,L_\succ^*+L_\prec^*,
A^*)$.\end{coro}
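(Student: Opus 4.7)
The plan is to reduce Corollary 3.3.4 directly to Theorem 3.3.3 by unwinding the definition of an $\mathcal{O}$-operator for the given choice of bimodule. The key observation is that the two defining equations of an $\mathcal{O}$-operator on a dendriform algebra (one for $\succ$, one for $\prec$) match, term by term, conditions (3) and (4) of Theorem 3.3.3 once the dual bimodule is substituted.

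Explicitly, I would first write down what it means for $r:A^{*}\to A$ to be an $\mathcal{O}$-operator of $(A,\succ,\prec)$ associated to $(R_\succ^*+R_\prec^*,-L_\prec^*,-R_\succ^*,L_\succ^*+L_\prec^*,A^{*})$. By Definition 3.3.1, this amounts to the pair of identities
$$r(a^{*})\succ r(b^{*})=r\bigl((R_\succ^{*}+R_\prec^{*})(r(a^{*}))b^{*}+(-L_\prec^{*})(r(b^{*}))a^{*}\bigr),$$
$$r(a^{*})\prec r(b^{*})=r\bigl((-R_\succ^{*})(r(a^{*}))b^{*}+(L_\succ^{*}+L_\prec^{*})(r(b^{*}))a^{*}\bigr),$$
for all $a^{*},b^{*}\in A^{*}$.

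Next I would invoke the elementary identities $R_\succ^{*}+R_\prec^{*}=R_{*}^{*}$ and $L_\succ^{*}+L_\prec^{*}=L_{*}^{*}$, which are immediate from $x*y=x\succ y+x\prec y$ and the definition $(1.5)$ of the dual linear map. Substituting these into the two displayed equations converts them to exactly equations $(3.3.2)$ and $(3.3.3)$ from Theorem 3.3.3.

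At this point the corollary follows: by the equivalence (1)$\Leftrightarrow$(3)$\Leftrightarrow$(4) in Theorem 3.3.3, the conjunction of (3.3.2) and (3.3.3) holds if and only if $r$ is a (symmetric) solution of the $D$-equation in $(A,\succ,\prec)$. I do not anticipate any real obstacle here; the content of the corollary is essentially packaging information already present in Theorem 3.3.3, so the proof is a one-line verification once the bimodule maps are substituted, together with the two scalar identities $L_{*}^{*}=L_\succ^{*}+L_\prec^{*}$ and $R_{*}^{*}=R_\succ^{*}+R_\prec^{*}$.
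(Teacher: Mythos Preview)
Your proposal is correct and is precisely the intended deduction: the paper states Corollary 3.3.4 immediately after Theorem 3.3.3 without a separate proof, and your argument---substituting the bimodule $(R_\succ^*+R_\prec^*,-L_\prec^*,-R_\succ^*,L_\succ^*+L_\prec^*,A^*)$ into Definition 3.3.1 and recognizing the resulting two equations as exactly (3.3.2) and (3.3.3)---is exactly how the corollary follows from the equivalences (1)$\Leftrightarrow$(3)$\Leftrightarrow$(4) in Theorem 3.3.3.
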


\begin{theorem} Let $(A,\succ,\prec)$ be a dendriform algebra. Let
$(l_\succ,r_\succ,l_\prec,r_\prec,V)$ be a bimodule and
$(r_\succ^*+r_\prec^*, -l_\prec^*,-r_\succ^*,l_\succ^*+l_\prec^*,
V^*)$ be the dual bimodule given by Proposition 3.1.3. Let
$T:V\rightarrow A$ be a linear map identified as an element in
$A\otimes V^*$ which is in the underlying vector space of $
(A\ltimes_{r_\succ^*+r_\prec^*,
-l_\prec^*,-r_\succ^*,l_\succ^*+l_\prec^*}V^*)\otimes
(A\ltimes_{r_\succ^*+r_\prec^*,
-l_\prec^*,-r_\succ^*,l_\succ^*+l_\prec^*}V^*)$. Then
$r=T+\sigma(T)$ is a symmetric solution of $D$-equation in the
dendriform algebra $A\ltimes_{r_\succ^*+r_\prec^*,
-l_\prec^*,-r_\succ^*,l_\succ^*+l_\prec^*}V^*$ if and only if $T$ is
an ${\mathcal O}$-operator of $(A,\succ,\prec)$ associated to
$(l_\succ,r_\succ,l_\prec,r_\prec,V)$.\end{theorem}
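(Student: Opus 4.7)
The plan is to reduce the statement to Corollary 3.3.4 applied to the dendriform algebra $\hat{A}:=A\ltimes_{r_\succ^*+r_\prec^*,-l_\prec^*,-r_\succ^*,l_\succ^*+l_\prec^*}V^*$ and then match components. By Corollary 3.3.4, $r=T+\sigma(T)$ being a symmetric solution of the $D$-equation in $\hat A$ is equivalent to $r$ being an $\mathcal O$-operator of $\hat A$ associated to the bimodule $(R_{\hat\succ}^*+R_{\hat\prec}^*,-L_{\hat\prec}^*,-R_{\hat\succ}^*,L_{\hat\succ}^*+L_{\hat\prec}^*,\hat A^*)$, where $\hat\succ,\hat\prec$ denote the dendriform operations on $\hat A$ given by (3.1.10). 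So the task reduces to verifying that this $\mathcal O$-operator identity for $r$ in $\hat A$ is equivalent to the two relations (3.3.1) for $T$.

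First I would set up notation. Choosing a basis $\{v_i\}$ of $V$ with dual $\{v_i^*\}$, write $T=\sum_i T(v_i)\otimes v_i^*\in A\otimes V^*\subset \hat A\otimes \hat A$. A direct pairing computation identifies $r:\hat A^*\to \hat A$ by
\[
r(a^*+u)=T(u)+T^*(a^*),\qquad \forall\, a^*\in A^*,\ u\in V,
\]
where $T^*:A^*\to V^*$ is the transpose of $T$ (so $\sigma(T)$ contributes the $V^*$-component and $T$ itself the $A$-component). Next I would compute the left and right multiplication operators of $\hat A$ in terms of $\succ,\prec$ on $A$ and $l_\succ,r_\succ,l_\prec,r_\prec$ on $V^*$ via (3.1.10), and then express their duals $L_{\hat\succ}^*, R_{\hat\succ}^*, L_{\hat\prec}^*, R_{\hat\prec}^*$ acting on elements of $\hat A^*=A^*\oplus V$. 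Here the subtle point is that $V^{**}$ is being identified with $V$, so the dual of an operator $l_\succ(x):V^*\to V^*$ applied back to $V$ is simply the original $l_\succ(x)$, and similarly for the other four.

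Then I would substitute $\xi_i=a_i^*+u_i$ into the $\mathcal O$-operator identity for $r$ in $\hat A$ (say the $\hat\succ$ part; the $\hat\prec$ part is symmetric) and split into components in $A$ and in $V^*$. The $A$-component of $r(\xi_1)\hat\succ r(\xi_2)$ is $T(u_1)\succ T(u_2)$, and on the right-hand side the $A$-component collects only terms in which the argument of $r$ lands in $V\subset \hat A^*$; this is exactly
\[
T\bigl(l_\succ(T(u_1))u_2+r_\succ(T(u_2))u_1\bigr),
\]
so the $A$-component reproduces the first half of (3.3.1). The $V^*$-component of the same equation evaluates, after applying the definitions of $L_{\hat\succ}^*,R_{\hat\succ}^*,R_{\hat\prec}^*$ and the bimodule formulas, to an identity which, when transposed back, is precisely the bimodule axiom (3.1.7)--(3.1.9) for $(l_\succ,r_\succ,l_\prec,r_\prec,V)$; hence it is automatically satisfied. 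The $\hat\prec$ equation is handled in the same manner and gives the second relation of (3.3.1) together with the remaining bimodule axioms (3.1.1)--(3.1.6). This establishes both directions.

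The main obstacle is the bookkeeping in the second step: extracting the correct sign pattern of the dual bimodule $(r_\succ^*+r_\prec^*,-l_\prec^*,-r_\succ^*,l_\succ^*+l_\prec^*)$ from the cross terms coming from $L_{\hat\succ}^*,R_{\hat\succ}^*,L_{\hat\prec}^*,R_{\hat\prec}^*$, and making sure that the $V^*$-component equations really do collapse to bimodule identities (rather than to additional constraints on $T$). Once the correct identifications are in place, the equivalence follows by linear independence of the $A$ and $V^*$ components in $\hat A$.
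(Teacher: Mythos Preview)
Your reduction via Corollary 3.3.4 is a reasonable strategy and the identification $r(a^*+u)=T(u)+T^*(a^*)$ is correct, as is the verification that the $A$-component of the $\hat\succ$ (resp.\ $\hat\prec$) $\mathcal O$-operator identity yields exactly the first (resp.\ second) relation in (3.3.1). However, your claim about the $V^*$-component is wrong, and this is a genuine gap.

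The $V^*$-component does \emph{not} collapse to the bimodule axioms (3.1.1)--(3.1.9). If you carry out the computation for, say, $\xi_1=u_1\in V$ and $\xi_2=a_2^*\in A^*$ in the $\hat\succ$ equation, the $V^*$-component, after pairing with an arbitrary $v\in V$ and stripping off $a_2^*$, reads
\[
T(v)*T(u_1)=T\bigl((l_\succ+l_\prec)(T(v))u_1+(r_\succ+r_\prec)(T(u_1))v\bigr),
\]
which is the \emph{associative} $\mathcal O$-operator condition for $T$ (Corollary 3.3.2), not a bimodule identity. Similarly, taking $\xi_1=a_1^*$, $\xi_2=u_2$ recovers the $\prec$-half of (3.3.1). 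So the $V^*$-components reproduce the $\mathcal O$-operator conditions for $T$ (or their sum), not something automatic from the bimodule structure. Your argument therefore does not close as written: for the ``if'' direction you must still check that all these $V^*$-equations follow from (3.3.1), and for ``only if'' you should explain why the $A$-component alone suffices. Both are true, but for the reason just given, not the one you stated.

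By contrast, the paper's proof bypasses Corollary 3.3.4 entirely: it computes $r_{12}*r_{13}-r_{13}\prec r_{23}-r_{23}\succ r_{12}$ directly in $(\hat A)^{\otimes 3}$ and observes that, since $T\in A\otimes V^*$ and $\sigma(T)\in V^*\otimes A$, the result lives in $(A\otimes V^*\otimes V^*)\oplus(V^*\otimes A\otimes V^*)\oplus(V^*\otimes V^*\otimes A)$; the three summands are exactly the associative, $\succ$, and $\prec$ $\mathcal O$-operator conditions for $T$, the first being the sum of the other two. This avoids the $\hat A^*$-level bookkeeping altogether and makes the equivalence immediate.
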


\begin{proof}
Let $\{e_1,\cdots,e_n\}$ be a basis of $A$. Let $\{v_1,\cdots,
v_m\}$ be a basis of $V$ and $\{ v_1^*,\cdots, v_m^*\}$ be its dual
basis. Set $T(v_i)=\sum\limits_{j=1}^na_{ij}e_j, i=1,\cdots, m$.
Since ${\rm Hom}(V,A)\cong A\otimes V^*$ as vector spaces,
\begin{eqnarray*}
T&=&\sum_{i=1}^m T(v_i)\otimes v_i^*=\sum_{i=1}^m\sum_{j=1}^n
a_{ij}e_j\otimes v_i^*\in A\otimes V^*\\&\subset&
(A\ltimes_{r_\succ^*+r_\prec^*,
-l_\prec^*,-r_\succ^*,l_\succ^*+l_\prec^*}V^*)\otimes
(A\ltimes_{r_\succ^*+r_\prec^*,
-l_\prec^*,-r_\succ^*,l_\succ^*+l_\prec^*}V^*).\end{eqnarray*}
Therefore we have
\begin{eqnarray*}  r_{12}* r_{13}
&=&\sum_{i,k=1}^m \{T(v_i)* T(v_k)\otimes v_i^*\otimes v_k^*
+r_\prec^*(T(v_i))v_k^*\otimes v_i^*\otimes T(v_k)\\
&\mbox{}&\hspace{1cm}+ l_\succ^* (T(v_k))v_i^*\otimes
T(v_i)\otimes v_k^* \};\\
r_{13}\prec r_{23} & =&\sum_{i,k=1}^m \{T(v_i)\otimes v_k^*\otimes
(l_\succ^*+l^*_\prec)(T(v_k))v_i^*- v_i^*\otimes T(v_k)\otimes
r_\succ^*(T(v_i))v_k^*\\
&\mbox{}&\hspace{1cm}
+v_i^*\otimes v_k^*\otimes T(v_i)\prec T(v_k)\};\\
\end{eqnarray*}
\begin{eqnarray*}r_{23}\succ r_{12}&=&\sum_{i,k=1}^m\{ T(v_k)\otimes
(r_\succ^*+r_\prec^*)(T(v_i))v_k^*\otimes v_i^*+v_k^*\otimes
T(v_i)\succ T(V_k)\otimes v_i^*\\
&\mbox{}&\hspace{1cm}-v_k^*\otimes l_\prec^*(T(v_j))v_i^*\otimes
T(v_i)\}.
\end{eqnarray*}
Furthermore, by equation (1.5), we show that
$$l_\succ^*(T(v_k))v_i^*=\sum_{j=1}^m v_i^*(l_\succ(T(v_k))v_j) v_j^*,\;\;\;
r_\succ^*(T(v_k))v_i^*=\sum_{j=1}^m v_i^*(r_\succ(T(v_k))v_j) v_j^*.
$$
$$l_\prec^*(T(v_k))v_i^*=\sum_{j=1}^m v_i^*(l_\prec(T(v_k))v_j) v_j^*,\;\;\;
r_\prec^*(T(v_k))v_i^*=\sum_{j=1}^m v_i^*(r_\prec(T(v_k))v_j) v_j^*.
$$
Thus
\begin{eqnarray*}
&&\sum_{i,k=1}^m T(v_i)\otimes
l_\succ^*(T(v_k))v_i^*=\sum_{i,k=1}^mT(v_i)\otimes
[\sum_{j=1}^m v_i^*(l_\succ(T(v_k))v_j) v_j^*]\\
&&=\sum_{i,k=1}^m \sum_{j=1}^m v_j^*(l_\succ(T(v_k))v_i)
T(v_j)\otimes v_i^* =\sum_{i,k=1}^m T(l_\succ(T(v_k))v_i)\otimes
v_i^*.
\end{eqnarray*}
Similarly, we have
\begin{eqnarray*}
\sum_{i,k=1}^m T(v_i)\otimes r_\succ^*(T(v_k))v_i^*&=&\sum_{i,k=1}^m
T(r_\succ(T(v_k))v_i)\otimes v_i^*;\\
\sum_{i,k=1}^m T(v_i)\otimes
l_\prec^*(T(v_k))v_i^*&=&\sum_{i,k=1}^m
T(l_\prec(T(v_k))v_i)\otimes v_i^*;\\
\sum_{i,k=1}^m T(v_i)\otimes r_\prec^*(T(v_k))v_i^*&=&\sum_{i,k=1}^m
T(r_\prec(T(v_k))v_i)\otimes v_i^*.
\end{eqnarray*}
Therefore
\begin{eqnarray*}
&&r_{12}*r_{13}-r_{13}\prec r_{23}- r_{23}\succ r_{12}\\
&&=\sum_{i,k=1}^m\{ (T(v_i)*
T(v_k)-T((r_\succ+r_\prec)(T(v_k))v_i)-T((l_\succ+l_\prec)(T(v_i))v_k))\otimes
v_i^*\otimes
v_k^*\\
&& +v_i^*\otimes (-T(v_i)\succ
T(v_k)+T(r_\succ(T(v_k))v_i)+T(l_\succ(T(v_i))v_k))\otimes
v_k^*\\
&& + v_i^*\otimes v_k^*\otimes (-T(v_i)\prec
T(v_k)+T(r_\prec(T(v_k))v_i)+T(l_\prec(T(v_i))v_k))\}.
\end{eqnarray*}
So $r$ is a symmetric solution of $D$-equation in the dendriform
algebra $A\ltimes_{r_\succ^*+r_\prec^*,
-l_\prec^*,-r_\succ^*,l_\succ^*+l_\prec^*}V^*$ if and only if $T$ is
an ${\mathcal O}$-operator of $(A,\succ,\prec)$ associated to
$(l_\succ,r_\succ,l_\prec,r_\prec,V)$.
\end{proof}

\begin{coro} Let $(A,*)$ be an associative algebra. Let $(l,r,V)$ be
a bimodule and $(r^*,l^*,V^*)$ be the dual bimodule given in Example
2.1.2. Suppose that $T:V\rightarrow A$ is an ${\mathcal O}$-operator
of $(A,*)$ associated to $(l,r,V)$. Then $r=T+\sigma(T)$ is a
symmetric solution of $D$-equation in the dendriform algebra
$T(V)\ltimes_{r^*,0,0,l^*}V^*$, where $T(V)\subset A$ is a
dendriform algebra given by equation (2.1.8) and $(r^*,0,0,l^*)$ is
its bimodule since its associated associative algebra $T(V)$ is an
associative subalgebra of $A$, and $T$ can be identified as an
element in $T(V)\otimes V^*$ which is in the underlying vector space
of $(T(V)\ltimes_{r^*,0,0,l^*}V^*)\otimes
(T(V)\ltimes_{r^*,0,0,l^*}V^*)$.\end{coro}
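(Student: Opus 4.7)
The plan is to reduce the statement to a direct application of Theorem 3.3.5, with the dendriform algebra $T(V)$ (rather than $A$) playing the role of the ambient dendriform algebra. Concretely, I will (a) exhibit a bimodule of the dendriform algebra $T(V)$ whose dual bimodule is exactly $(r^*,0,0,l^*,V^*)$, (b) verify that $T$, viewed as a linear map $V\to T(V)$, is an $\mathcal O$-operator of this dendriform algebra with respect to that bimodule, and (c) quote Theorem 3.3.5.

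For step (a), recall from Theorem 2.1.8 that $T(V)$ is an associative subalgebra of $(A,*)$ and inherits a dendriform structure via (2.1.8), whose associated associative algebra is $T(V)$ itself. Since $(l,r,V)$ is a bimodule of $(A,*)$, restricting the action to elements of the subalgebra $T(V)$ gives a bimodule of the associative algebra $T(V)$. By Proposition 3.1.2(2) applied to this restricted bimodule, $(l,0,0,r,V)$ is a bimodule of the dendriform algebra $T(V)$. Now compute the dual bimodule via Proposition 3.1.3: with $l_\succ=l$, $r_\succ=0$, $l_\prec=0$, $r_\prec=r$, the tuple $(r_\succ^*+r_\prec^*,\,-l_\prec^*,\,-r_\succ^*,\,l_\succ^*+l_\prec^*)$ collapses precisely to $(r^*,0,0,l^*)$, so $(r^*,0,0,l^*,V^*)$ is a bimodule of the dendriform algebra $T(V)$.

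For step (b), I need to check that $T:V\to T(V)$ satisfies the defining equations (3.3.1) of an $\mathcal O$-operator with respect to $(l,0,0,r,V)$. But this is nothing more than a direct reading of the induced dendriform structure on $T(V)$ from Theorem 2.1.8: with $l_\succ=l,r_\succ=0,l_\prec=0,r_\prec=r$, the two equations in (3.3.1) become
\[
T(u)\succ T(v)=T(l(T(u))v),\qquad T(u)\prec T(v)=T(r(T(v))u),
\]
which are exactly equations (2.1.7)--(2.1.8). So the verification of step (b) is immediate.

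Step (c) is then to apply Theorem 3.3.5 with $A$ there replaced by the dendriform algebra $T(V)$ and the chosen bimodule $(l,0,0,r,V)$. The theorem converts the $\mathcal O$-operator condition in step (b) into the assertion that $r=T+\sigma(T)$, viewed inside $(T(V)\ltimes_{r^*,0,0,l^*}V^*)^{\otimes 2}$, is a symmetric solution of the $D$-equation in the dendriform algebra $T(V)\ltimes_{r^*,0,0,l^*}V^*$. I do not foresee a genuine obstacle: the only subtlety is bookkeeping, namely recognising that $T$ takes values in the subalgebra $T(V)$ so that one may legitimately replace $A$ by $T(V)$ in Theorem 3.3.5, and carefully tracking the sign/zero pattern when dualising $(l,0,0,r)$ to make sure one really obtains $(r^*,0,0,l^*)$ and not some other combination.
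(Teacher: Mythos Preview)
Your proposal is correct and follows essentially the same route as the paper: identify $T:V\to T(V)$ as an $\mathcal O$-operator of the dendriform algebra $(T(V),\succ,\prec)$ associated to the bimodule $(l,0,0,r,V)$, and then invoke Theorem 3.3.5. Your write-up is in fact more explicit than the paper's, since you spell out why $(l,0,0,r,V)$ is a bimodule (via Proposition 3.1.2(2)) and verify via Proposition 3.1.3 that its dual is precisely $(r^*,0,0,l^*,V^*)$.
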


\begin{proof}
By Theorem 2.1.8, it is obvious that $T:V\rightarrow T(V)$ is an
${\mathcal O}$-operator of $(T(V),\succ,\prec)$ associated to the
bimodule $(l,0,0,r,V)$, where
$$T(u)\succ T(v)=T(l(T(u))v),\;\;T(u)\prec
T(v)=T(r(T(v))u),\;\;\forall\;u,v\in V.$$ Hence the conclusion
follows from Theorem 3.3.5 immediately.
\end{proof}

\begin{remark}{\rm The above conclusion has appeared in \cite{Bai3} with a direct
proof. We would like to emphasis that it involves only the
${\mathcal O}$-operators of associative algebras (not the ${\mathcal
O}$-operators of dendriform algebras). Therefore, as has been
pointed out in \cite{Bai3}, roughly speaking, the symmetric part of
an ${\mathcal O}$-operator of an associative algebra corresponds to
a symmetric solution of $D$-equation, whereas the skew-symmetric
part of an ${\mathcal O}$-operator of an associative algebra
corresponds to a skew-symmetric solution of associative Yang-Baxter
equation.}\end{remark}

\begin{coro}{\rm (cf. Proposition 3.4.12)}\quad Let $(A,\succ, \prec)$ be a
dendriform algebra. Then
$$r=\sum_{i}^n (e_i\otimes e_i^*+e_i^*\otimes e_i)\eqno (3.3.4)$$
is a symmetric solution of $D$-equation in the dendriform algebra
$A\ltimes_{R_\prec^*, 0,0, L_\succ^*}A^*$, where $\{e_1,\cdots$, $
e_n\}$ is a basis of $A$ and $\{e_1^*,\cdots, e_n^*\}$ is its dual
basis. Moreover there is a natural 2-cocycle $\mathcal B$ of the
dendriform algebra $A\ltimes_{R_\prec^*, 0,0, L_\succ^*}A^*$ induced
by $r^{-1}: A\oplus A^*\rightarrow (A\oplus A^*)^*$, which is given
by equation (2.2.11).\end{coro}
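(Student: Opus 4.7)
The plan is to reduce this corollary to Corollary 3.3.6 by taking $T$ to be the identity map on $A$, and then to read off the induced bilinear form by a direct computation of $r^{-1}$.

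\textbf{Step 1: Produce the $\mathcal O$-operator.} Let $(A,*)$ be the associated associative algebra of $(A,\succ,\prec)$. By the last paragraph of Section 2.1 (the ``converse'' in Corollary 2.1.9), the identity map $\mathrm{id}:A\to A$ is an $\mathcal O$-operator of $(A,*)$ associated to the bimodule $(L_\succ, R_\prec)$. Under the canonical identification $\mathrm{Hom}(A,A)\cong A\otimes A^*$, the identity is the element $T=\sum_{i=1}^n e_i\otimes e_i^*$. Hence $\sigma(T)=\sum_{i=1}^n e_i^*\otimes e_i$, and
\[
r=T+\sigma(T)=\sum_{i=1}^n\bigl(e_i\otimes e_i^*+e_i^*\otimes e_i\bigr).
\]

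\textbf{Step 2: Apply Corollary 3.3.6.} With $V=A$, $l=L_\succ$ and $r=R_\prec$, Corollary 3.3.6 says that $T+\sigma(T)$ is a symmetric solution of the $D$-equation in the dendriform algebra $T(V)\ltimes_{r^*,0,0,l^*}V^*$. Since $T=\mathrm{id}$ we have $T(V)=A$, and the dendriform structure given by (2.1.8) coincides with the original $(\succ,\prec)$ on $A$. Therefore $r$ is a symmetric solution of the $D$-equation in the dendriform algebra $A\ltimes_{R_\prec^*,0,0,L_\succ^*}A^*$, establishing the first assertion.

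\textbf{Step 3: Identify the induced bilinear form.} The element $r$, viewed in $(A\oplus A^*)\otimes(A\oplus A^*)$, defines a linear map $r:(A\oplus A^*)^*\to A\oplus A^*$ via (2.2.3). Using the canonical identification $(A\oplus A^*)^*\cong A^*\oplus A$, a short direct computation with the two summands $\sum_i e_i\otimes e_i^*$ and $\sum_i e_i^*\otimes e_i$ shows that $r$ sends $\tilde b+\tilde y\in A^*\oplus A$ to $\tilde y+\tilde b\in A\oplus A^*$; in particular $r$ is nondegenerate and its inverse is the ``swap'' $r^{-1}(x+a^*)=a^*+x$. Pairing this with an arbitrary vector $y+b^*$ gives
\[
\mathcal B(x+a^*,y+b^*)=\langle r^{-1}(x+a^*),y+b^*\rangle=\langle a^*,y\rangle+\langle b^*,x\rangle,
\]
which is precisely equation (2.2.11).

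\textbf{Step 4: Conclude it is a 2-cocycle.} Since $r$ is symmetric and nondegenerate, and since we have just established in Step 2 that it solves the $D$-equation in $A\ltimes_{R_\prec^*,0,0,L_\succ^*}A^*$, Theorem 3.2.7 applies to the dendriform algebra $A\ltimes_{R_\prec^*,0,0,L_\succ^*}A^*$ and yields that $\mathcal B$ satisfies (3.2.8); that is, $\mathcal B$ is a 2-cocycle of this dendriform algebra in the sense of Definition 3.2.8. The only non-routine piece is the identification of $r^{-1}$ with the symmetric form (2.2.11) in Step 3; everything else is a direct appeal to Corollary 3.3.6 and Theorem 3.2.7, so the bookkeeping of the dual-space identification is the main point where care is needed.
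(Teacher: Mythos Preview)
Your proof is correct and follows essentially the same route as the paper: the paper observes that $\mathrm{id}$ is an $\mathcal O$-operator of the dendriform algebra $(A,\succ,\prec)$ associated to $(L_\succ,0,0,R_\prec,A)$ and then applies Theorem~3.3.5 and Theorem~3.2.7, whereas you phrase the first step in terms of the associative $\mathcal O$-operator $(L_\succ,R_\prec)$ and invoke Corollary~3.3.6, which is just Theorem~3.3.5 specialized to exactly this situation. One small slip: the ``converse'' you cite sits after Corollary~2.1.10, not 2.1.9.
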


\begin{proof}
Since ${id}$ is an ${\mathcal O}$-operator of $(A,\succ,\prec)$
associated to the bimodule $(L_\succ,0,0, R_\prec,A)$, $r$ is a
symmetric solution of $D$-equation in $A\ltimes_{R_\prec^*, 0,0,
L_\succ^*}A^*$ due to Theorem 3.3.5. Therefore the bilinear form
$\mathcal B$ given by equation (2.2.11) is a 2-cocycle due to
Theorem 3.2.7.
\end{proof}

\subsection{${\mathcal O}$-operators of dendriform algebras and quadri-algebras}

\begin{defn}{\rm (\cite{AL})\quad Let $A$ be a vector space
 with four bilinear products denoted by
$\searrow, \nearrow, \nwarrow$ and $\swarrow : A\otimes A\rightarrow
A$. $(A, \searrow, \nearrow, \nwarrow, \swarrow)$ is called a {\it
quadri-algebra} if for any $x,y,z\in A$,
$$(x\nwarrow y)\nwarrow z=x\nwarrow (y* z),\;\; (x\nearrow y)\nwarrow
z=x\nearrow (y\prec z),\;\;(x\wedge y)\nearrow z=x\nearrow (y\succ
z),\eqno (3.4.1)$$
$$(x\swarrow y)\nwarrow z=x\swarrow (y\wedge z),\;\; (x\searrow y)\nwarrow
z=x\searrow (y\nwarrow z),\;\;(x\vee y)\nearrow z=x\searrow
(y\nearrow z),\eqno (3.4.2)$$
$$(x\prec y)\swarrow z=x\swarrow (y\vee z),\;\; (x\succ y)\swarrow
z=x\searrow (y\swarrow z),\;\;(x* y)\searrow z=x\searrow (y\searrow
z),\eqno(3.4.3)$$ where
$$x\succ y=x\nearrow y+x\searrow y,
x\prec y=x\nwarrow y+x\swarrow y, x\vee y=x\searrow y+x\swarrow
y,x\wedge y=x\nearrow y+x\nwarrow y,\eqno(3.4.4)$$ and
$$x*y=x\searrow y+x\nearrow y+x\nwarrow y+x\swarrow y=x\succ
y+x\prec  y=x\vee y+x\wedge y.\eqno (3.4.5)$$}\end{defn}

\begin{prop}{\rm (\cite{AL})}\quad Let $(A, \searrow, \nearrow, \nwarrow,
\swarrow)$ be a quadri-algebra.

(1) The product given by
$$x\succ y=x\nearrow y+x\searrow y,\;\;
x\prec y=x\nwarrow y+x\swarrow y,\;\;\forall x,y\in A,\eqno
(3.4.6)$$ defines a dendriform algebra. $(A,\succ,\prec)$ is
called the associated horizontal dendriform algebra of $(A,
\searrow, \nearrow, \nwarrow, \swarrow)$ and $(A, \searrow,
\nearrow, \nwarrow, \swarrow)$ is called a compatible
quadri-algebra structure on the horizontal dendriform algebra
$(A,\succ,\prec)$.

(2) The product given by
$$x\vee y=x\searrow y+x\swarrow
y,\;x\wedge y=x\nearrow y+x\nwarrow y,\;\;\forall x,y\in A,\eqno
(3.4.7)$$ defines a dendriform algebra.  $(A,\vee,\wedge)$ is called
the associated vertical dendriform algebra of $(A, \searrow,
\nearrow, \nwarrow, \swarrow)$ and $(A, \searrow, \nearrow,
\nwarrow, \swarrow)$ is called a compatible quadri-algebra structure
on the (vertical) dendriform algebra $(A,\vee,\wedge)$.

(3) The product given by equation (3.4.5) defines an associative
algebra. $(A,*)$ is called the associated associative algebra of
$(A, \searrow, \nearrow, \nwarrow, \swarrow)$ and $(A, \searrow,
\nearrow, \nwarrow, \swarrow)$ is called a compatible quadri-algebra
structure on the associative algebra $(A,*)$.\end{prop}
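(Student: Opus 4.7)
The plan is to verify Proposition 3.4.2 by expanding the definitions (3.4.6), (3.4.7), and (3.4.5) and matching terms with the nine quadri-algebra axioms (3.4.1)--(3.4.3). The nine axioms are naturally arranged as a $3\times 3$ grid, and the cleanest strategy is to read off which row or column supplies which term in each identity to be verified.

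For part (1), I need to check the three dendriform axioms in (2.1.5) for the horizontal products $\succ = \nearrow + \searrow$ and $\prec = \nwarrow + \swarrow$. I would illustrate the pattern by verifying the first axiom $(x\prec y)\prec z = x\prec(y * z)$. Expanding the left-hand side gives
\[
(x\nwarrow y)\nwarrow z + (x\swarrow y)\nwarrow z + (x\nwarrow y + x\swarrow y)\swarrow z,
\]
where the third summand has been consolidated as $(x\prec y)\swarrow z$. The first relation of (3.4.1) rewrites the first term as $x\nwarrow(y * z)$; the first relation of (3.4.2) rewrites the second as $x\swarrow(y\wedge z)$; and the first relation of (3.4.3) rewrites the third as $x\swarrow(y\vee z)$. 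Since $y\wedge z + y\vee z = y * z$ by (3.4.5), the sum collapses to $x\nwarrow(y*z) + x\swarrow(y*z) = x\prec(y*z)$. The remaining dendriform axiom for the right-hand associativity $x\succ(y\succ z) = (x * y)\succ z$ is checked by the mirror computation, using the third column of axioms (the $\searrow$ relations in (3.4.1)--(3.4.3)). The middle axiom $(x\succ y)\prec z = x\succ(y\prec z)$ expands into four mixed terms, each of which is supplied by exactly one of the remaining relations in (3.4.1)--(3.4.3); this is where careful bookkeeping matters, since each of the four terms $(x\nearrow y)\nwarrow z$, $(x\nearrow y)\swarrow z$, $(x\searrow y)\nwarrow z$, $(x\searrow y)\swarrow z$ must be matched uniquely.

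For part (2), the same computation applies with the roles of the two dendriform structures swapped. The products $\vee = \searrow + \swarrow$ and $\wedge = \nearrow + \nwarrow$ group the four operations by ``vertical'' columns instead of ``horizontal'' rows, and the nine axioms again supply exactly the right identities because the $3\times 3$ grid of relations is symmetric under the reflection that exchanges the two decompositions. Concretely, the first relation of (3.4.1), third relation of (3.4.2), and third relation of (3.4.3) handle the dendriform axioms for $(A, \vee, \wedge)$ in the analogous way.

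For part (3), no separate calculation is needed: the product $*$ defined by (3.4.5) is exactly $\succ + \prec$ for the horizontal dendriform structure established in (1), so Proposition 2.1.6(1) immediately gives that $(A, *)$ is associative. (Of course the same conclusion also follows from (2).)

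The main obstacle is purely organizational. None of the individual manipulations is deep, but the middle dendriform axiom involves four summands and must be matched to four of the nine quadri-algebra axioms simultaneously. Writing the nine axioms as a $3\times 3$ array indexed by the ``outer'' operation (row: $\nwarrow,\swarrow,\searrow$; column: which composite appears on the right) makes the bookkeeping transparent and guarantees no term is double-counted or omitted.
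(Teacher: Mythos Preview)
Your proposal is correct and follows the standard direct verification; the paper itself does not supply a proof here but simply cites \cite{AL}, so there is nothing to compare against beyond noting that your approach is exactly the expected one. One small bookkeeping remark: in the middle dendriform axiom $(x\succ y)\prec z = x\succ(y\prec z)$ you say the left side expands into four mixed terms each matched by one axiom, but in fact only three of the nine quadri-algebra relations are needed (the second relation in each of (3.4.1)--(3.4.3)), since $(x\succ y)\swarrow z$ is handled directly by (3.4.3) without further splitting; this does not affect the validity of your argument.
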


\begin{prop} Let $A$ be a vector space  with four
bilinear products denoted by $\searrow, \nearrow, \nwarrow$ and
$\swarrow : A\otimes A\rightarrow A$.

(1) $(A, \searrow, \nearrow, \nwarrow, \swarrow)$ is a
quadri-algebra if and only if $(A,\succ,\prec)$ defined by
equation (3.4.6) is a dendriform algebra and $(L_\searrow,
R_\nearrow, L_\swarrow, R_\nwarrow, A)$ is a bimodule.

(2) $(A, \searrow, \nearrow, \nwarrow, \swarrow)$ is a
quadri-algebra if and only if $(A,\vee,\wedge)$ defined by
equation (3.4.7) is a dendriform algebra and $(L_\searrow,
R_\swarrow, L_\nearrow,R_\nwarrow, A)$ is a bimodule.\end{prop}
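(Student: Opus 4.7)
The plan is to prove both parts by an axiom-by-axiom matching between the nine quadri-algebra axioms (3.4.1)--(3.4.3) and the nine bimodule axioms (3.1.1)--(3.1.9), specialized to the case $V=A$ with the given left/right multiplication operators. For (1), under the identification
\[ l_\succ = L_\searrow,\quad r_\succ = R_\nearrow,\quad l_\prec = L_\swarrow,\quad r_\prec = R_\nwarrow, \]
together with $\succ = \nearrow + \searrow$ and $\prec = \nwarrow + \swarrow$ as in Proposition 3.4.2(1), I would apply each bimodule identity to an arbitrary $z\in A$ and check that it collapses to exactly one of the nine quadri-algebra relations.

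For example, (3.1.3), $r_\prec(x)r_\prec(y) = r_\prec(y\prec x) + r_\prec(y\succ x)$, evaluated at $z$ becomes $(z\nwarrow y)\nwarrow x = z\nwarrow(y\prec x+y\succ x) = z\nwarrow(y*x)$, which (after relabeling) is the first equation of (3.4.1). I would then verify that (3.1.6) and (3.1.9) likewise recover the remaining two equations of (3.4.1); that (3.1.2), (3.1.5), (3.1.8) recover the three equations of (3.4.2); and that (3.1.1), (3.1.4), (3.1.7) recover the three equations of (3.4.3). Since these pairings are equivalences and cover all nine axioms on each side, the two sets of relations are equivalent.

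For the $(\Rightarrow)$ direction, Proposition 3.4.2(1) already gives the dendriform structure on $(A,\succ,\prec)$, and the pairings above turn the quadri-algebra axioms into the bimodule axioms. For $(\Leftarrow)$, the bimodule axioms presuppose that $(A,\succ,\prec)$ is a dendriform algebra, which is part of the hypothesis, and the same pairings then reproduce the quadri-algebra axioms. Part (2) follows by a structurally identical argument using the associated vertical dendriform algebra $(A,\vee,\wedge)$ of Proposition 3.4.2(2) and the identification $l_\vee = L_\searrow$, $r_\vee = R_\swarrow$, $l_\wedge = L_\nearrow$, $r_\wedge = R_\nwarrow$; here the three axioms (3.4.1), (3.4.2), (3.4.3) regroup according to $\vee$ and $\wedge$ rather than $\succ$ and $\prec$, and the same nine-fold matching goes through.

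The main obstacle is purely bookkeeping: one has to be careful about the ordering of arguments (bimodule identities are written for operators acting on a vector on the right, whereas the quadri-algebra axioms are stated with $x,y,z$ in a fixed order) and about which of the four quadri-operations is assigned to each of $l_\succ,r_\succ,l_\prec,r_\prec$. Once the nine pairings are laid out, every verification is a single-line identity, so beyond this careful bookkeeping no substantive difficulty remains.
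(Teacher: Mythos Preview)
Your proposal is correct and matches the paper's approach: the paper simply notes that the result follows by direct computation (or equivalently by the method used in the proof of Proposition~3.4.6), which is exactly the axiom-by-axiom matching you describe. Your explicit pairings between (3.1.1)--(3.1.9) and the nine equations in (3.4.1)--(3.4.3) are correct and in fact parallel the correspondence tables the paper writes out later for the analogous Propositions~4.1.2 and~4.4.3.
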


\begin{proof}
The conclusions can be obtained by a direct computation or a similar
proof as of Proposition 3.4.6.
\end{proof}

\begin{coro} Let $(A, \searrow, \nearrow, \nwarrow, \swarrow)$ be a
quadri-algebra. Then $(L_\searrow, R_\nwarrow, A)$ is a bimodule of
the associated associative algebra $(A,*)$.\end{coro}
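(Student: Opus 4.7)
The plan is to derive this corollary as a direct combination of Proposition 3.4.3(1) and Proposition 3.1.2(1), so essentially no new computation is required. By Proposition 3.4.3(1), the quadri-algebra axioms imply that the horizontal products $\succ,\prec$ defined by (3.4.6) make $(A,\succ,\prec)$ into a dendriform algebra, and simultaneously that $(L_\searrow, R_\nearrow, L_\swarrow, R_\nwarrow, A)$ is a bimodule of this dendriform algebra.

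Next I would invoke Proposition 3.1.2(1), which says that from any bimodule $(l_\succ, r_\succ, l_\prec, r_\prec, V)$ of a dendriform algebra one obtains a bimodule $(l_\succ, r_\prec, V)$ of its associated associative algebra. Instantiating this with the bimodule above immediately yields that $(L_\searrow, R_\nwarrow, A)$ is a bimodule of the associated associative algebra of $(A,\succ,\prec)$. The only remaining point to note is that this associated associative algebra coincides with $(A,*)$ as defined by (3.4.5): indeed $x\succ y + x\prec y = x\searrow y + x\nearrow y + x\nwarrow y + x\swarrow y = x*y$, and by Proposition 3.4.2(3) this is exactly the associated associative algebra attached to the quadri-algebra. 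This finishes the argument.

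There is no serious obstacle here, since everything is bookkeeping between the two previous propositions. As a sanity check (and as an alternative direct proof should one prefer not to cite Proposition 3.1.2), the three bimodule axioms of Definition 2.1.1 applied to $(L_\searrow, R_\nwarrow, A)$ unwind to the identities $(x*y)\searrow z = x\searrow(y\searrow z)$, $\;z\nwarrow(x*y) = (z\nwarrow x)\nwarrow y$, and $\;x\searrow(z\nwarrow y) = (x\searrow z)\nwarrow y$, which are respectively the last axiom of (3.4.3), the first of (3.4.1), and the middle of (3.4.2). Thus the verification reduces to simply reading off three of the nine defining relations of a quadri-algebra.
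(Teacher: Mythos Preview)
Your proof is correct and follows exactly the same approach as the paper, which simply states that the result follows immediately from Propositions 3.1.2 and 3.4.3. Your additional sanity check via the direct verification of the three bimodule axioms is a nice supplement but is not needed.
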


\begin{proof}
It follows immediately from Propositions 3.1.2 and 3.4.3.
\end{proof}

For brevity, we pay our main attention to the study of associated
horizontal dendriform algebras. In fact, the corresponding study on
the associated vertical dendriform algebras are completely similar.

\begin{coro} Let $(A, \searrow, \nearrow, \nwarrow, \swarrow)$ be a
quadri-algebra and $(A,\succ,\prec)$ be the associated horizontal
dendriform algebra. Then
$$(L_\succ,R_\succ, L_\prec,R_\prec, A),\;\;\;(L_\succ,0,0,R_\prec, A),\;\;\;
(L_\succ+L_\prec,0,0,R_\succ+R_\prec, A);$$
$$(L_\searrow, R_\nearrow, L_\swarrow,
R_\nwarrow, A),\;\;\;(L_\searrow,0,0, R_\nwarrow, A)\;\;{\rm
and}\;\;(L_\vee,0,0,R_\wedge,A)$$ are bimodules of $(A,\prec,
\succ)$. On the other hand,
$$(R_\succ^*+R_\prec^*,-L_\prec^*, -R_\succ^*,L_\succ^*+L_\prec^*,
A^*),\;\;(R_\prec^*,0,0,L_\succ^*, A^*)
\;\;(R_\succ^*+R_\prec^*,0,0,L_\succ^*+L_\prec^*,A^*);$$
$$(R_\nearrow^*+R_\nwarrow^*,-L_\swarrow^*,-R_\nearrow^*,
L_\searrow^*+L_\swarrow^*, A^*),\;\;(R_\nwarrow^*,0,0,L_\searrow^*,
A^*) \;\;{\rm and}\;\;(R_\wedge^*,0,0,L_\vee^*,A^*)$$ are bimodules
of $(A,\succ, \prec)$, too.\end{coro}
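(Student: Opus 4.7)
The plan is to assemble this statement directly from three preceding results of the paper, so very little calculation should be needed. First, I would dispose of the three bimodules of $(A,\succ,\prec)$ listed in the first row by invoking Example 3.1.5 applied to the horizontal dendriform algebra itself: the regular bimodule $(L_\succ,R_\succ,L_\prec,R_\prec,A)$ is a bimodule by definition, and Example 3.1.5 (which is in turn a consequence of Proposition 3.1.2(3)) records that $(L_\succ,0,0,R_\prec,A)$ and $(L_\succ+L_\prec,0,0,R_\succ+R_\prec,A)$ are bimodules as well.

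Next, I would supply the second-row bimodules starting from Proposition 3.4.3(1), which asserts precisely that $(L_\searrow,R_\nearrow,L_\swarrow,R_\nwarrow,A)$ is a bimodule of $(A,\succ,\prec)$. Plugging this bimodule into Proposition 3.1.2(3) gives on the one hand $(L_\searrow,0,0,R_\nwarrow,A)$, and on the other hand $(L_\searrow+L_\swarrow,0,0,R_\nearrow+R_\nwarrow,A)$. The identities $L_\vee=L_\searrow+L_\swarrow$ and $R_\wedge=R_\nearrow+R_\nwarrow$ from (3.4.4) then rewrite this second bimodule as $(L_\vee,0,0,R_\wedge,A)$, giving the last entry of the first list.

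For the dual bimodules on $A^*$ I would apply Proposition 3.1.3, which takes a bimodule $(l_\succ,r_\succ,l_\prec,r_\prec,V)$ of a dendriform algebra to its dual $(r_\succ^*+r_\prec^*,-l_\prec^*,-r_\succ^*,l_\succ^*+l_\prec^*,V^*)$. Applying this termwise to the six bimodules already produced yields exactly the six bimodules claimed in the second half of the statement: the ``full'' duals $(R_\succ^*+R_\prec^*,-L_\prec^*,-R_\succ^*,L_\succ^*+L_\prec^*,A^*)$ and $(R_\nearrow^*+R_\nwarrow^*,-L_\swarrow^*,-R_\nearrow^*,L_\searrow^*+L_\swarrow^*,A^*)$ come from the two regular-type bimodules, while the four ``collapsed'' duals $(R_\prec^*,0,0,L_\succ^*,A^*)$, $(R_\succ^*+R_\prec^*,0,0,L_\succ^*+L_\prec^*,A^*)$, $(R_\nwarrow^*,0,0,L_\searrow^*,A^*)$ and $(R_\wedge^*,0,0,L_\vee^*,A^*)$ come from the four bimodules whose middle two components vanish, where Proposition 3.1.3 simplifies because two of the four arguments are zero.

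No part of this is hard; the only obstacle is notational bookkeeping. The care required is to consistently match the decompositions $L_\succ=L_\nearrow+L_\searrow$, $L_\prec=L_\nwarrow+L_\swarrow$, $L_\vee=L_\searrow+L_\swarrow$, $L_\wedge=L_\nearrow+L_\nwarrow$ (and the corresponding four identities for $R$) against the precise order of the four arguments demanded by the statements of Propositions 3.1.2, 3.1.3 and 3.4.3, so that each of the twelve listed tuples lines up exactly with an instance of one of those results. Since all six bimodule claims on $A$ reduce to direct citations after these substitutions, and each dual bimodule claim is then one more application of Proposition 3.1.3, no additional computation is required.
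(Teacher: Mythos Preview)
Your proposal is correct and follows essentially the same route as the paper's own proof, which simply cites Propositions 3.1.2, 3.1.3, 3.4.3 and Example 3.1.5. You have spelled out the bookkeeping in more detail than the paper does, but the logical content is identical.
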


\begin{proof}
It follows immediately from Propositions 3.1.2, 3.1.3 and 3.4.3 and
Example 3.1.5.
\end{proof}

\begin{prop} Let $(A,\succ,\prec)$ be a dendriform algebra and
$(l_\succ,r_\succ,l_\prec,r_\prec, V)$ be a bimodule. Let
$T:V\rightarrow A$ be an ${\mathcal O}$-operator associated to
$(l_\succ,r_\succ,l_\prec,r_\prec, V)$. Then there exists a
quadri-algebra structure on $V$ given by (for any $u,v\in V$)
$$u\searrow v=l_\succ(T(u))v,\;u\nearrow v=r_\succ(T(v))u,\;
u\swarrow v=l_\prec(T(u))v,\;u\nwarrow v=r_\prec(T(v))u.\eqno
(3.4.8)$$ Therefore, there exists a dendriform algebra structure on
$V$ given by equation (3.4.6) and $T$ is a homomorphism of
dendriform algebras. Furthermore, $T(V)=\{T(v)|v\in V\}\subset A$ is
a dendriform subalgebra of $A$ and there is an induced
quadri-algebra structure on $T(V)$ given by
$$T(u)\searrow T(v)=T(u\searrow v),\;T(u)\nearrow T(v)=T(u\nearrow v),$$
$$T(u)\swarrow T(v)=T(u\swarrow v),\;T(u)\nwarrow T(v)=T(u\nwarrow
v),\;\forall u,v\in V.\eqno (3.4.9)$$ Moreover, its corresponding
associated horizontal dendriform algebra structure on $T(V)$ given
by equation (3.4.6) is just the dendriform subalgebra structure of
$(A,\succ,\prec)$ and $T$ is a homomorphism of quadri-algebras.
\end{prop}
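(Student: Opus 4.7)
The approach mirrors that of Theorem 2.1.8, one level higher in the Loday hierarchy: I would use the $\mathcal O$-operator identity (3.3.1) to transport the four-operation structure from $A$ to $V$, and then exploit a one-to-one matching between the nine quadri-algebra axioms on $V$ and the nine bimodule axioms (3.1.1)--(3.1.9) on $A$.

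First I would record the intertwining consequence of (3.3.1). Setting $u\succ_V v=u\searrow v+u\nearrow v$ and $u\prec_V v=u\swarrow v+u\nwarrow v$, the two halves of (3.3.1) give immediately $T(u\succ_V v)=T(u)\succ T(v)$ and $T(u\prec_V v)=T(u)\prec T(v)$, whence also $T(u\ast_V v)=T(u)\ast T(v)$. Next I would verify (3.4.1)--(3.4.3) directly under the dictionary $\searrow\leftrightarrow l_\succ$, $\nearrow\leftrightarrow r_\succ$, $\swarrow\leftrightarrow l_\prec$, $\nwarrow\leftrightarrow r_\prec$. Each of the nine identities collapses, after substituting (3.4.8) and the intertwining relations, to exactly one bimodule axiom evaluated at $x=T(u)$, $y=T(v)$: explicitly the pairing reads (3.4.1)$_{1,2,3}\leftrightarrow$ (3.1.3), (3.1.6), (3.1.9); (3.4.2)$_{1,2,3}\leftrightarrow$ (3.1.2), (3.1.5), (3.1.8); and (3.4.3)$_{1,2,3}\leftrightarrow$ (3.1.1), (3.1.4), (3.1.7). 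No cancellation or cross-axiom interaction occurs; the match is clean. Proposition 3.4.2 then gives the associated horizontal dendriform algebra $(V,\succ_V,\prec_V)$ for free, the intertwining identities show $T$ is a dendriform homomorphism, and in particular $T(V)$ is closed under $\succ$ and $\prec$, hence a dendriform subalgebra of $(A,\succ,\prec)$.

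The last step is to make (3.4.9) a legitimate definition on $T(V)$. Well-definedness is the only nontrivial point: if $T(v)=T(v')$ and $w=v-v'$, then $T(w)=0$, and (3.3.1) yields
$$0=T(u)\succ T(w)=T\bigl(l_\succ(T(u))w+r_\succ(T(w))u\bigr)=T(l_\succ(T(u))w),$$
so $T(u\searrow v)=T(l_\succ(T(u))v)$ depends only on $T(v)$; three analogous one-line arguments, obtained by swapping the roles of the two arguments and using either component of (3.3.1), dispose of $\nearrow$, $\swarrow$, and $\nwarrow$. Once (3.4.9) is justified, the associated horizontal dendriform product on $T(V)$ coincides with the restriction of $\succ,\prec$ from $A$, because $T(u)\searrow T(v)+T(u)\nearrow T(v)=T(u\succ_V v)=T(u)\succ T(v)$ and similarly for $\prec$; and $T$ is a homomorphism of quadri-algebras by construction.

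The main obstacle is purely bookkeeping: nine quadri-algebra identities with a priori ambiguous role-assignments must be paired with nine bimodule axioms, and one must track carefully which of $u,v,w$ plays which role in each instance. Beyond this, only Theorem 2.1.8 in spirit and the $\mathcal O$-operator property (3.3.1) are required.
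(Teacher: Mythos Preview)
Your proposal is correct and follows essentially the same route as the paper: a direct verification that each of the nine quadri-algebra axioms (3.4.1)--(3.4.3) reduces, via the $\mathcal O$-operator identity (3.3.1), to one of the nine bimodule axioms (3.1.1)--(3.1.9), with the same nine-to-nine pairing you list. The paper dismisses the remaining assertions (dendriform homomorphism, subalgebra, induced quadri-algebra on $T(V)$) with ``the other results follow easily''; your explicit well-definedness check for (3.4.9) via $T(w)=0\Rightarrow T(l_\succ(T(u))w)=0$ is a welcome addition the paper omits.
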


\begin{proof}Set $l=l_\succ+l_\prec$ and
$r=r_\succ+r_\prec$. For any $u,v,w\in V$, we have
\begin{eqnarray*}
(u\nwarrow v)\nwarrow w &=& r_\prec (T(w))(u\nwarrow
v)=r_\prec(T(w))r_\prec(T(v))u\stackrel{(3.1.3)}{=}
r_\prec(T(v)*T(w))u\\
&=& u\nwarrow (l(T(v))w+r(T(w))v)=u\nwarrow (v* w);\\
(u\nearrow v)\nwarrow w&=& r_\prec (T(w))(u\nwarrow v)
=r_\prec(T(w))r_\succ(T(v))u\stackrel{(3.1.6)}{=}
r_\succ(T(v)\prec T(w))u\\
&=& u\nwarrow (l_\prec(T(v))w+r_\prec(T(w))v)=u\nwarrow (v\prec w);\\
(u\wedge v)\nearrow w&=& r_\succ (T(w))(u\nearrow v+u\nwarrow v)
=r_\succ(T(w))(r_\succ(T(v))u+r_\prec(T(v))u)\\
&\stackrel{(3.1.9)}{=}& r_\succ(T(w)\succ T(v))u
= u\nearrow (l_\succ(T(w))v+r_\succ(T(v))w)=u\nearrow (v\succ w);\\
(u\swarrow v)\nwarrow w &=& r_\prec (T(w))(u\swarrow
v)=r_\prec(T(w))l_\prec(T(u))v\\
&\stackrel{(3.1.2)}{=}& l_\prec(T(u))(r_\prec(T(w))+r_\succ(T(w)))v=
u\swarrow (v\wedge w);\\
(u\searrow v)\nwarrow w&=&r_\prec(T(w))(u\searrow
v)=r_\prec(T(w))l_\succ(T(u))v\stackrel{(3.1.5)}{=}l_\succ(T(u))r_\succ(T(w)))v\\
&=&l_\succ(T(u))(v\nwarrow w)=u\searrow (v\nwarrow w);\\
(u\vee v)\nearrow
w&=&r_\succ(T(w))(l_\succ(T(u))+l_\prec(T(u))v\stackrel{(3.1.8)}{=}l_\succ(T(u))r_\succ(T(w))v\\
&=&l_\succ(T(u))(v\nearrow w)=u\searrow (v\nearrow w);\\
(u\prec v)\swarrow w&=& l_\succ(T(u\prec v))w=l_\prec(T(u)\prec
T(v))w\\
&\stackrel{(3.1.1)}{=}&l_\prec(T(u))(l_\prec(T(v))+l_\succ(T(v)))w
=u\swarrow (v\vee w); \\
\end{eqnarray*}
\begin{eqnarray*}
(u\succ v)\swarrow w&=&l_\prec(T(u\succ v))w=l_\prec(T(u)\succ T(v))
w\stackrel{(3.1.4)}{=}
l_\succ(T(u))l_\prec(T(v))w\\
&=&l_\succ(T(u))(v\swarrow w)=u\searrow (u\swarrow w);\\
(u* v)\searrow w &=& l_\succ(T(u*
v))w=l_\succ(T(u)*T(v))w\stackrel{(3.1.7)}{=}l_\succ(T(u))l_\succ(T(v))w\\
&=&l_\succ(T(u))(v\searrow w)=u\searrow (v\searrow w).
\end{eqnarray*}
Therefore $(V, \searrow, \nearrow, \nwarrow, \swarrow)$ is a
quadri-algebra. Furthermore the other results follow easily.
\end{proof}

\begin{defn}{\rm (\cite{AL})\quad Let $(A,\succ,\prec)$ be a dendriform
algebra. An $\mathcal O$-operator $R$ of $(A,\succ$,$\prec)$
associated to the regular bimodule $(L_\succ, R_\succ, L_\prec,
R_\prec, A)$ is called a {\it Rota-Baxter operator} on $A$, that is,
$R$ satisfies
$$R(x\succ y)=R(R(x)\succ y+x\succ R(y)),\;
R(x\prec y)=R(R(x)\prec y+x\prec R(y)),\;\forall x,y\in A.\eqno
(3.4.10)$$}\end{defn}

By Proposition 3.4.6, the following conclusion follows immediately.

\begin{coro} {\rm (\cite{AL}, Proposition 2.3)}\quad Let $(A,\succ,\prec)$ be a
dendriform algebra and $R$ be a Rota-Baxter operator on
$(A,\succ,\prec)$. Then there exists a quadri-algebra structure on
$A$ defined by (for any $x,y\in A$)
$$x\searrow y=R(x)\succ y,\;x\nearrow y=x\succ R(y),\;
x\swarrow y=R(x)\prec y,\;x\nwarrow y=x\prec R(y).\eqno
(3.4.11)$$\end{coro}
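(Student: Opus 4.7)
The plan is to recognize that this corollary is a direct specialization of Proposition 3.4.6 to the regular bimodule. Indeed, by Definition 3.4.7, a Rota-Baxter operator $R$ on the dendriform algebra $(A,\succ,\prec)$ is precisely an $\mathcal O$-operator associated to the regular bimodule $(L_\succ, R_\succ, L_\prec, R_\prec, A)$, where $V = A$ and $T = R$.

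Applying Proposition 3.4.6 with this choice, we obtain a quadri-algebra structure on the underlying vector space $V = A$ whose four operations are read off by substituting $l_\succ = L_\succ$, $r_\succ = R_\succ$, $l_\prec = L_\prec$, $r_\prec = R_\prec$, and $T = R$ into the formulas $u\searrow v = l_\succ(T(u))v$, $u\nearrow v = r_\succ(T(v))u$, $u\swarrow v = l_\prec(T(u))v$, $u\nwarrow v = r_\prec(T(v))u$. This yields $x\searrow y = L_\succ(R(x))y = R(x)\succ y$, $x\nearrow y = R_\succ(R(y))x = x\succ R(y)$, $x\swarrow y = L_\prec(R(x))y = R(x)\prec y$, and $x\nwarrow y = R_\prec(R(y))x = x\prec R(y)$, exactly matching equation (3.4.11).

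Since all the quadri-algebra axioms (3.4.1)--(3.4.3) have already been verified in the proof of Proposition 3.4.6 using only the bimodule identities (3.1.1)--(3.1.9) and the $\mathcal O$-operator equation (3.3.1), no further computation is needed. There is no genuine obstacle here: the only thing to check is that the regular bimodule indeed satisfies the bimodule axioms, which is Example 3.1.5, and that (3.4.10) coincides with the $\mathcal O$-operator equation (3.3.1) for the regular bimodule, which is immediate from the definitions of $L_\succ, R_\succ, L_\prec, R_\prec$. Thus the corollary follows by direct instantiation of Proposition 3.4.6.
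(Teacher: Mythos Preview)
Your proposal is correct and matches the paper's approach exactly: the paper simply states that the corollary follows immediately from Proposition~3.4.6, and you have spelled out precisely how, namely by specializing to the regular bimodule $(L_\succ, R_\succ, L_\prec, R_\prec, A)$ via Definition~3.4.7.
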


Moreover, by Corollaries 2.1.9 and 3.4.8, it is obvious that the
Rota-Baxter operators on associative algebras can construct
quadri-algebras as follows (also see Lemma 4.4.9).

\begin{coro}{\rm (\cite{AL}, Corollary 2.6)}\quad Let $R_1$ and $R_2$ be a pair
of commuting Rota-Baxter operators (of weight zero) on an
associative algebra $(A,*)$. Then there exists a quadri-algebra
structure on $A$ defined by (for any $x,y\in A$)
$$x\searrow y=R_1R_2(x)* y,\;x\nearrow y=R_1(x)*R_2(y),\;
x\swarrow y=R_2(x)* R_1(y),\;x\nwarrow y=x*R_1R_2(y).\eqno
(3.4.12)$$\end{coro}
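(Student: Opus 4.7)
The plan is to prove this in two stages by successive application of Corollaries 2.1.9 and 3.4.8, using the commutativity $R_1R_2 = R_2R_1$ to transfer one Rota-Baxter structure across the other. First I would apply Corollary 2.1.9 to the Rota-Baxter operator $R_1$ on $(A,*)$ to obtain a compatible dendriform algebra structure $(A,\succ,\prec)$ given by $x\succ y = R_1(x)* y$ and $x\prec y = x* R_1(y)$.

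The key step is then to verify that $R_2$ is a Rota-Baxter operator on the dendriform algebra $(A,\succ,\prec)$ in the sense of Definition 3.4.9. For the $\succ$-identity, expanding both sides of $R_2(x)\succ R_2(y) = R_2(R_2(x)\succ y + x\succ R_2(y))$ via the definition of $\succ$ reduces it to
$$R_1R_2(x)* R_2(y) = R_2\bigl(R_1R_2(x)* y + R_1(x)* R_2(y)\bigr),$$
which follows by rewriting $R_1R_2(x) = R_2R_1(x)$ and then applying the Rota-Baxter identity (2.1.4) for $R_2$ on $(A,*)$ with arguments $R_1(x)$ and $y$. The $\prec$-identity is entirely parallel, with commutativity used to swap $R_1R_2(y)$ for $R_2R_1(y)$ before applying (2.1.4) with arguments $x$ and $R_1(y)$.

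Finally, Corollary 3.4.8 applied to $R_2$ as a Rota-Baxter operator on $(A,\succ,\prec)$ produces a quadri-algebra structure with $x\searrow y = R_2(x)\succ y$, $x\nearrow y = x\succ R_2(y)$, $x\swarrow y = R_2(x)\prec y$, and $x\nwarrow y = x\prec R_2(y)$. Substituting the definitions of $\succ$ and $\prec$ yields exactly formula (3.4.12), where $R_1R_2$ is read as composition. The only substantive computation is the Rota-Baxter verification in the middle step, and it is a short rearrangement using commutativity plus (2.1.4), so no real obstacle arises.
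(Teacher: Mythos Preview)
Your proof is correct and follows exactly the route the paper indicates: apply Corollary 2.1.9 to $R_1$, then Corollary 3.4.8 to $R_2$. The intermediate verification that $R_2$ is a Rota-Baxter operator on the dendriform algebra built from $R_1$ is precisely the content the paper defers to Lemma 4.4.9 (itself cited from \cite{AL}, Proposition 2.5), and your direct computation of it is the standard one.
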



\begin{coro} Let $(A,\succ,\prec)$ be a dendriform algebra. Then there
exists a compatible quadri-algebra structure on $(A,\succ,\prec)$
such that $(A,\succ,\prec)$ is the associated horizontal
dendriform algebra if and only if there exists an invertible
$\mathcal O$-operator of $(A,\succ,\prec)$.
\end{coro}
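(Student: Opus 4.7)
The plan is to mirror the structure of Corollary 2.1.10 (the analogous statement relating associative algebras and their compatible dendriform structures), replacing bimodules of associative algebras by bimodules of dendriform algebras and using Proposition 3.4.6 in place of Theorem 2.1.8.

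For the ``only if'' direction, suppose $(A,\searrow,\nearrow,\nwarrow,\swarrow)$ is a compatible quadri-algebra structure on $(A,\succ,\prec)$. By Proposition 3.4.3(1), the quadruple $(L_\searrow, R_\nearrow, L_\swarrow, R_\nwarrow, A)$ is a bimodule of $(A,\succ,\prec)$. I would then verify that the identity map $\mathrm{id}:A\to A$ is an $\mathcal O$-operator associated to this bimodule: the two defining identities (3.3.1) reduce to $x\succ y=x\searrow y+x\nearrow y$ and $x\prec y=x\swarrow y+x\nwarrow y$, which are exactly the compatibility relations (3.4.6). Invertibility of $\mathrm{id}$ is trivial.

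For the ``if'' direction, let $T:V\to A$ be an invertible $\mathcal O$-operator of $(A,\succ,\prec)$ associated to some bimodule $(l_\succ,r_\succ,l_\prec,r_\prec,V)$. By Proposition 3.4.6, $V$ carries a quadri-algebra structure, its associated horizontal dendriform algebra coincides with the structure on $V$ constructed via (3.4.6), and $T$ is a homomorphism of dendriform algebras onto $(A,\succ,\prec)$. I would transport the quadri-algebra structure along the vector-space isomorphism $T$, setting
$$x\searrow y=T(l_\succ(x)T^{-1}(y)),\quad x\nearrow y=T(r_\succ(y)T^{-1}(x)),$$
$$x\swarrow y=T(l_\prec(x)T^{-1}(y)),\quad x\nwarrow y=T(r_\prec(y)T^{-1}(x)).$$
The quadri-algebra axioms (3.4.1)--(3.4.3) transfer automatically because $T$ is bijective. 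For compatibility with $(A,\succ,\prec)$, I would substitute $u=T^{-1}(x)$ and $v=T^{-1}(y)$ into the two $\mathcal O$-operator identities (3.3.1), apply $T$ to both sides, and use linearity to obtain $x\succ y=x\searrow y+x\nearrow y$ and $x\prec y=x\swarrow y+x\nwarrow y$, so that the induced quadri-algebra does split the given dendriform algebra horizontally.

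The argument is essentially formal; there is no genuine obstacle. The only thing to keep track of is the correspondence between the four components $(l_\succ,r_\succ,l_\prec,r_\prec)$ of a bimodule and the four quadri-algebra operations, together with the observation that the two $\mathcal O$-operator equations are precisely what force the sum of the new operations to recover $\succ$ and $\prec$. This observation is what makes the identity map an $\mathcal O$-operator in the converse direction, and what guarantees in the forward direction that the transported quadri-algebra is compatible with $(A,\succ,\prec)$ rather than with some unrelated dendriform structure.
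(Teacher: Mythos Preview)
Your proposal is correct and follows essentially the same approach as the paper: use Proposition~3.4.6 to transport the quadri-algebra structure along an invertible $\mathcal O$-operator for one direction, and observe that the identity map is an $\mathcal O$-operator associated to $(L_\searrow,R_\nearrow,L_\swarrow,R_\nwarrow,A)$ (via Proposition~3.4.3) for the other. The explicit formulas you write down and the compatibility verification match the paper's proof exactly; the only cosmetic slip is that the $\mathcal O$-operator identities~(3.3.1) already have $T$ applied, so you do not need to ``apply $T$ to both sides'' after substituting $u=T^{-1}(x)$, $v=T^{-1}(y)$.
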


\begin{proof} If there exists an invertible $\mathcal O$-operator
$T$ of $(A,\succ,\prec)$ associated to a bimodule $(l_\succ$,
$r_\succ$, $l_\prec,r_\prec, V$), then by Proposition 3.4.6, there
exists a quadri-algebra structure on $V$ given by equation (3.4.8).
Therefore we can define a quadri-algebra structure on $A$ by
equation (3.4.9) such that $T$ is an isomorphism of quadri-algebras,
that is,
$$x\searrow y=T(l_\succ(x)T^{-1}(y)),\;
x\nearrow y=T(r_\succ(y)T^{-1}(x)),\;$$$$ x\swarrow
y=T(l_\prec(x)T^{-1}(y)),\; x\nwarrow
y=T(r_\prec(y)T^{-1}(x)),\;\;\forall x,y\in A.$$ Moreover it is a
compatible quadri-algebra structure on $(A,\succ,\prec)$ since for
any $x,y\in A$, we have
$$x\succ y=T(T^{-1}(x)\succ T^{-1}(y))=T(r_\succ(y)T^{-1}(x)+l_\succ(x)T^{-1}(y))
=x\nearrow y+x\searrow y,$$
$$x\prec y=T(T^{-1}(x)\prec T^{-1}(y))=T(r_\prec(y)T^{-1}(x)+l_\prec(x)T^{-1}(y))
=x\nwarrow y+x\swarrow y.$$

Conversely, let $(A, \searrow, \nearrow, \nwarrow, \swarrow)$ be a
quadri-algebra and $(A,\succ,\prec)$ be the associated horizontal
dendriform algebra. Then $(L_\searrow, R_\nearrow, L_\swarrow,
R_\nwarrow, A)$ is a bimodule of $(A,\succ, \prec)$ and the
identity map $id$ is an $\mathcal O$-operator of $(A,\succ,\prec)$
associated to it.
\end{proof}

\begin{prop} Let $(A,\succ,\prec)$ be a dendriform algebra and $(A,*)$ be
the associated associative algebra. If there is a nondegenerate
symmetric 2-cocycle ${\mathcal B}$ of $(A,\succ,\prec)$, then
there exists a compatible quadri-algebra structure on
$(A,\succ,\prec)$ defined by (for any $x,y,z\in A$)
$${\mathcal B}(x\searrow y,z)={\mathcal B}(y,z* x)=\mathcal
B(y,z\succ x+z\prec x),\eqno (3.4.13)$$
$${\mathcal B}(x\nearrow y, z)=-{\mathcal B}(x,y\prec
z),\eqno (3.4.14)$$$${\mathcal B}(x\nwarrow y)={\mathcal B}(x,y*
z)=\mathcal B(x,y\succ z+y\prec z),\eqno (3.4.15)$$
$${\mathcal B}(x\swarrow y,z)=-{\mathcal B}(y,z\succ x).\eqno (3.4.16)$$
 such that $(A,\succ,\prec)$ is the associated horizontal
dendriform algebra.
\end{prop}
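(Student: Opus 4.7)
The plan is to reduce the construction to an invertible $\mathcal O$-operator and then transport the quadri-algebra structure through it, so that only a routine unwinding of definitions is left.

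First, I will translate the 2-cocycle $\mathcal B$ into a symmetric solution of the $D$-equation. By Theorem 3.2.7, nondegeneracy of $\mathcal B$ gives a symmetric $r\in A\otimes A$ with $\mathcal B(x,y)=\langle r^{-1}x,y\rangle$, and the 2-cocycle identity (3.2.8) is precisely the statement that $r$ is a solution of the $D$-equation in $(A,\succ,\prec)$. So $r:A^{*}\rightarrow A$, viewed as a linear map, is at our disposal.

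Next, I will interpret $r$ as an invertible $\mathcal O$-operator \emph{of the dendriform algebra}. By Corollary 3.3.4, $r$ is an $\mathcal O$-operator of $(A,\succ,\prec)$ associated to the bimodule $(R_\succ^{*}+R_\prec^{*},-L_\prec^{*},-R_\succ^{*},L_\succ^{*}+L_\prec^{*},A^{*})$. Since $r$ is nondegenerate, it is invertible; thus Corollary 3.4.11 (the existence criterion for compatible quadri-algebra structures via invertible $\mathcal O$-operators) produces a compatible quadri-algebra structure on $(A,\succ,\prec)$ given, with $T=r$ and $T^{-1}$ the map sending $x$ to the element $x^{*}\in A^{*}$ with $\mathcal B(x,y)=\langle x^{*},y\rangle$, by
\begin{equation*}
x\searrow y=T((R_\succ^{*}+R_\prec^{*})(x)T^{-1}(y)),\quad x\nearrow y=T(-L_\prec^{*}(y)T^{-1}(x)),
\end{equation*}
\begin{equation*}
x\swarrow y=T(-R_\succ^{*}(x)T^{-1}(y)),\quad x\nwarrow y=T((L_\succ^{*}+L_\prec^{*})(y)T^{-1}(x)).
\end{equation*}

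Finally, I will verify that these four formulas coincide with (3.4.13)--(3.4.16). For example, pairing $T^{-1}(x\searrow y)$ against an arbitrary $z\in A$ and using (1.5),
\begin{equation*}
\mathcal B(x\searrow y,z)=\langle (R_\succ^{*}+R_\prec^{*})(x)\,y^{*},z\rangle=\langle y^{*},(R_\succ+R_\prec)(x)z\rangle=\mathcal B(y,z\succ x+z\prec x),
\end{equation*}
which is (3.4.13). The remaining three identities are obtained by the identical two-line calculation: transfer through $T^{-1}$, apply (1.5) to move the adjoint operator to the other side, and recognize the resulting pairing as $\mathcal B$. Compatibility with $(A,\succ,\prec)$ is automatic from Corollary 3.4.11.

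The entire argument is assembly of previously proved results, so there is no real obstacle; the only point requiring care is bookkeeping of the signs and the dual bimodule structure in step two, so that the four formulas emerging from Corollary 3.4.11 line up with (3.4.13)--(3.4.16) without sign errors.
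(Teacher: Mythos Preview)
Your approach is correct and is essentially the same as the paper's: both arguments produce the invertible $\mathcal O$-operator $r=T^{-1}:A^{*}\to A$ of $(A,\succ,\prec)$ associated to the dual bimodule $(R_{*}^{*},-L_{\prec}^{*},-R_{\succ}^{*},L_{*}^{*},A^{*})$ and then invoke Corollary~3.4.10 (you write ``3.4.11,'' but that is the proposition being proved; the existence criterion is 3.4.10). The only cosmetic difference is that the paper verifies the $\mathcal O$-operator identity for $T^{-1}$ by a direct two-line computation from the 2-cocycle condition (3.2.8), whereas you factor that same computation through Theorem~3.2.7 and Corollary~3.3.4.
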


\begin{proof}
Since ${\mathcal B}$ is nondegenerate and symmetric, we define an
invertible linear map $T:A\rightarrow A^*$ by
$$\langle x, T(y)\rangle=\langle T(x),y\rangle=\mathcal B (x,y),\;\;\forall x,y\in A.$$
Let $x,y,z\in A$. Since ${\mathcal B}$ is a 2-cocycle of
$(A,\succ,\prec)$, we have
\begin{eqnarray*}
\langle T(x\succ y), z\rangle&=&{\mathcal B}(x\succ y,z)={\mathcal
B}(y,z*x)-{\mathcal B}(x,y\succ z)\\
&=&\langle T(y), z*x\rangle-\langle T(x), y\prec z\rangle =\langle
R^*_*(x)T(y),z\rangle-\langle L_\prec^*(y)T(x),z\rangle.
\end{eqnarray*}
So $$T(x\succ y)=R_*^*(x)T(y)-L_\prec^*(y)T(x),\;\;\forall x,y\in
A.$$ Similarly, we show that $$T(x\prec
y)=L_*^*(y)T(x)-R_\succ^*(x)T(y),\;\;\forall x,y\in A.$$ Hence
$T^{-1}$ is an (invertible) $\mathcal O$-operator of the dendriform
algebra $(A,\succ,\prec)$ associated to the bimodule $(R^*_*,
-L_\prec^*$, $ -R_\succ^*,L^*_*,A^*)$. Then by Corollary 3.4.10,
there exists a compatible quadri-algebra structure on
$(A,\succ,\prec)$ defined by
\begin{eqnarray*}
{\mathcal B}(x\searrow y,z)&=&\langle T(x\searrow y), z\rangle
=\langle T(T^{-1}(R_*^*(x)T(y))), z\rangle=\langle T(y),
z*x\rangle={\mathcal B}(y, z*x);\\
{\mathcal B}(x\nearrow y,z)&=&\langle T(x\nearrow y), z\rangle
=\langle T(T^{-1}(-L_\prec^*(y)T(x))), z\rangle=\langle T(x),
-y\prec z\rangle=-{\mathcal B}(x, y\prec z);\\
{\mathcal B}(x\nwarrow y,z)&=&\langle T(x\nwarrow y), z\rangle
=\langle T(T^{-1}(L_*^*(y)T(x))), z\rangle=\langle T(x),
y*z\rangle={\mathcal B}(x, y*z);\\
{\mathcal B}(x\swarrow y,z)&=&\langle T(x\swarrow y), z\rangle
=\langle T(T^{-1}(-R_\succ^*(x)T(y))), z\rangle=\langle T(y),
-z\succ y\rangle=-{\mathcal B}(y, z\succ x),
\end{eqnarray*}
such that $(A,\succ,\prec)$ is the associated horizontal dendriform
algebra.
\end{proof}

\begin{prop}{\rm (cf. Corollary 3.3.8)}\quad Let $(A, \searrow, \nearrow,
\nwarrow, \swarrow)$ be a quadri-algebra and $(A,\succ$, $\prec)$
be the associated horizontal dendriform algebra.
 Then $r$ given by equation (3.3.4)
is a symmetric solution of $D$-equation in the dendriform algebra
$A\ltimes_{R_\nearrow^*+R_\nwarrow^*,-L_\swarrow^*,-R_\nearrow^*,
L_\searrow^*+L_\swarrow^*}A^*$, where $\{e_1,\cdots, e_n\}$ is a
basis of $A$ and $\{e_1^*,\cdots, e_n^*\}$ is its dual basis.
Moreover there is a natural 2-cocycle $\mathcal B$  of the
dendriform algebra
$A\ltimes_{R_\nearrow^*+R_\nwarrow^*,-L_\swarrow^*,-R_\nearrow^*,
L_\searrow^*+L_\swarrow^*}A^*$ induced by $r^{-1}: A\oplus
A^*\rightarrow (A\oplus A^*)^*$, which is given by equation
(2.2.11).\end{prop}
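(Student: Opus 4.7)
The plan is to reduce the statement to a direct application of Theorem 3.3.5 (the general correspondence between $\mathcal{O}$-operators and symmetric solutions of the $D$-equation) applied to the identity map on $A$.

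First, I would observe that by Proposition 3.4.3(1), the quadruple $(L_\searrow, R_\nearrow, L_\swarrow, R_\nwarrow, A)$ is a bimodule of the associated horizontal dendriform algebra $(A,\succ,\prec)$. Second, I would check that the identity $\mathrm{id}: A \to A$ is an $\mathcal{O}$-operator of $(A,\succ,\prec)$ associated to this bimodule. This is almost immediate from (3.4.6): for any $x,y\in A$,
\begin{equation*}
\mathrm{id}(x)\succ \mathrm{id}(y)= x\succ y = x\searrow y + x\nearrow y = \mathrm{id}\bigl(L_\searrow(\mathrm{id}(x))y + R_\nearrow(\mathrm{id}(y))x\bigr),
\end{equation*}
and analogously $\mathrm{id}(x)\prec \mathrm{id}(y)=\mathrm{id}\bigl(L_\swarrow(\mathrm{id}(x))y + R_\nwarrow(\mathrm{id}(y))x\bigr)$, matching equation (3.3.1).

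Next, identify $\mathrm{id}$ with the element $T=\sum_i e_i\otimes e_i^*\in A\otimes A^*$ sitting inside the tensor square of the underlying vector space of the dendriform algebra $A\ltimes_{R_\nearrow^*+R_\nwarrow^*,-L_\swarrow^*,-R_\nearrow^*,L_\searrow^*+L_\swarrow^*}A^*$, whose dual bimodule on $A^*$ is precisely the one appearing in Proposition 3.1.3 applied to $(L_\searrow, R_\nearrow, L_\swarrow, R_\nwarrow)$. Theorem 3.3.5 then immediately yields that $r = T + \sigma(T) = \sum_i(e_i\otimes e_i^* + e_i^*\otimes e_i)$ is a symmetric solution of the $D$-equation in this dendriform algebra, proving the first assertion.

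For the second assertion, note that $r$ is symmetric and, viewed as a map $(A\oplus A^*)^*\to A\oplus A^*$, it is nondegenerate; indeed, its inverse is exactly the bilinear form $\mathcal{B}$ of (2.2.11) under the natural identification $(A\oplus A^*)^* \cong A^*\oplus A$. Therefore Theorem 3.2.7, applied to the dendriform algebra $A\ltimes_{R_\nearrow^*+R_\nwarrow^*,-L_\swarrow^*,-R_\nearrow^*,L_\searrow^*+L_\swarrow^*}A^*$, tells us that $\mathcal{B}$ is a $2$-cocycle, completing the proof.

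There is no real obstacle: once one recognizes the identity as an $\mathcal{O}$-operator in the sense of Definition 3.3.1 for the horizontal dendriform algebra, the remainder is a direct invocation of the two general theorems 3.3.5 and 3.2.7. The only place that requires a moment of care is the precise identification of the dual bimodule $(R_\nearrow^*+R_\nwarrow^*,-L_\swarrow^*,-R_\nearrow^*,L_\searrow^*+L_\swarrow^*)$ with the one dictated by Proposition 3.1.3, but this is a matter of substituting $l_\succ=L_\searrow$, $r_\succ=R_\nearrow$, $l_\prec=L_\swarrow$, $r_\prec=R_\nwarrow$ into the formula for the dual bimodule.
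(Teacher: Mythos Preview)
Your proof is correct and follows essentially the same route as the paper: the paper also observes (via Corollary 3.4.5, which packages Propositions 3.4.3 and 3.1.3) that $(L_\searrow, R_\nearrow, L_\swarrow, R_\nwarrow, A)$ is a bimodule with dual $(R_\nearrow^*+R_\nwarrow^*,-L_\swarrow^*,-R_\nearrow^*, L_\searrow^*+L_\swarrow^*,A^*)$, that $\mathrm{id}$ is an $\mathcal{O}$-operator for it, and then invokes Theorem 3.3.5 and Theorem 3.2.7 exactly as you do.
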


\begin{proof}
By Corollary 3.4.5,
$(R_\nearrow^*+R_\nwarrow^*,-L_\swarrow^*,-R_\nearrow^*,
L_\searrow^*+L_\swarrow^*,A^*)$ is the dual bimodule of the
bimodule $(L_\searrow, R_\nearrow, L_\swarrow, R_\nwarrow, A)$ of
the associated horizontal dendriform algebra $(A,\succ,\prec)$.
Then the first half of the conclusion follows immediately from
Theorem 3.3.5 and the fact that ${id}$ is an ${\mathcal
O}$-operator of $(A,\succ,\prec)$ associated to the bimodule
$(L_\searrow, R_\nearrow, L_\swarrow, R_\nwarrow, A)$. The second
half of the conclusion follows from Theorem 3.2.7.
\end{proof}

At the end of this subsection, we give an algebraic equation on a
quadri-algebra as follows.

\begin{prop} Let $(A, \searrow, \nearrow, \nwarrow, \swarrow)$ be a
quadri-algebra and $(A,\succ,\prec)$ be the associated horizontal
dendriform algebra. Let $r\in A\otimes A$ be skew-symmetric. Then
$r$ is an ${\mathcal O}$-operator of $(A,\succ,\prec)$ associated
to the bimodule
$(R_\nearrow^*+R_\nwarrow^*,-L_\swarrow^*,-R_\nearrow^*,
L_\searrow^*+L_\swarrow^*,A^*)$ if and only if $r$ satisfies
$$r_{13}\succ r_{23}=r_{23}\nearrow r_{12}+r_{23}\nwarrow
r_{12}+r_{12}\swarrow r_{13},\eqno (3.4.17)$$ $$r_{13}\prec
r_{23}=-r_{23}\nearrow r_{12}-r_{12}\searrow r_{13}-r_{12}\swarrow
r_{13}.\eqno (3.4.18)$$\end{prop}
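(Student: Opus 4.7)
The plan is to imitate the coordinate-based calculation used in the proof of Theorem 3.3.3: translate the $\mathcal{O}$-operator condition into a pair of scalar identities in a basis, translate the tensor equations (3.4.17)--(3.4.18) into a parallel pair of scalar identities, and match them termwise using the skew-symmetry of $r$.

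First I would unfold Definition 3.3.1 for the specified dual bimodule. Writing $R_\wedge = R_\nearrow + R_\nwarrow$ and $L_\vee = L_\searrow + L_\swarrow$, the $\mathcal{O}$-operator condition splits as the two identities
\begin{equation*}
r(a^*)\succ r(b^*) \;=\; r\bigl(R_\wedge^*(r(a^*))b^* - L_\swarrow^*(r(b^*))a^*\bigr),
\end{equation*}
\begin{equation*}
r(a^*)\prec r(b^*) \;=\; r\bigl(-R_\nearrow^*(r(a^*))b^* + L_\vee^*(r(b^*))a^*\bigr),
\end{equation*}
for all $a^*, b^* \in A^*$. Next, fix a basis $\{e_1,\ldots,e_n\}$ of $A$ with dual basis $\{e_i^*\}$, set $r=\sum_{i,j}r_{ij}\,e_i\otimes e_j$ with $r_{ij}=-r_{ji}$, so that $r(e_i^*)=\sum_k r_{ik}e_k$, and record the structure constants $e_i\searrow e_j = \sum_k s_{ij}^k e_k$, $e_i\nearrow e_j = \sum_k n_{ij}^k e_k$, $e_i\nwarrow e_j = \sum_k w_{ij}^k e_k$, $e_i\swarrow e_j = \sum_k u_{ij}^k e_k$.

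Substituting $a^* = e_p^*$ and $b^* = e_m^*$ in the $\succ$-identity and applying (1.5) to unwind the dual operators, the coefficient of $e_s$ on each side gives one scalar equation indexed by the triple $(p,m,s)$ and quadratic in the $r_{ij}$'s. In parallel, expanding equation (3.4.17) according to the conventions (1.3)--(1.4) and reading the coefficient of $e_p\otimes e_m\otimes e_s$ produces a second scalar equation in the same variables. The observation driving the proof is that these two equations coincide termwise: the two terms $r_{23}\nearrow r_{12}$ and $r_{23}\nwarrow r_{12}$ in (3.4.17) match the $R_\wedge^*$-contribution directly, while the term $r_{12}\swarrow r_{13}$ matches the contribution of $-L_\swarrow^*(r(b^*))a^*$ after a single application of $r_{ij}=-r_{ji}$ and a relabeling of summation indices. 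The analogous computation with the $\prec$-identity against (3.4.18) uses the structure constants $s$ and $u$ (and again $n$) in symmetric positions and is carried out in the same way, yielding the second equivalence.

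The main obstacle is not conceptual but purely a matter of sign and index bookkeeping. The minus signs attached to $-L_\swarrow^*$ and $-R_\nearrow^*$ in the bimodule are exactly the signs that, combined with the asymmetric placement of the two copies of $r$ in terms like $r_{12}\swarrow r_{13}$ (where the first tensor slot carries one copy and the second tensor slot carries the other), reproduce the $+$ signs on the right-hand sides of the tensor equations after skew-symmetrization. The three minus signs on the right of (3.4.18) compensate for the absence of such a slot-swap in the $L_\vee^*$-term. Once the index dictionary is set up carefully, no quadri-algebra axiom is invoked: the equivalence is a formal matching of monomials, exactly in the spirit of Theorem 3.3.3.
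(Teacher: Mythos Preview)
Your proposal is correct and follows essentially the same route as the paper's own proof: fix a basis and structure constants, write $r$ in coordinates with $r_{ij}=-r_{ji}$, expand both the $\mathcal{O}$-operator identities and the tensor equations (3.4.17)--(3.4.18) into scalar conditions indexed by a triple, and match them termwise. The paper carries this out with the same bookkeeping (using structure-constant labels $a,b,c,d$ where you use $s,n,w,u$) and likewise invokes no quadri-algebra axiom beyond the definitions.
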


\begin{proof}
Let $\{ e_1,\cdots,e_n\}$ be basis of $A$ and
$\{e_1^*,\cdots,e_n^*\}$ be its dual basis. Suppose that
$$e_i\searrow e_j=\sum_k a_{ij}^k e_k,\;e_i\nearrow e_j=\sum_k b_{ij}^k e_k,\;\;
e_i\nwarrow e_j=\sum_k c_{ij}^k e_k,\;e_i\swarrow e_j=\sum_k
d_{ij}^k e_k,$$ and $r=\sum\limits_{i,j}r_{ij}e_i\otimes e_j$,
$r_{ij}=-r_{ji}$. Hence $r(e_i^*)=\sum\limits_k r_{ik}e_k$. Then $r$
satisfies equation (3.4.17) if and only if (for any $m,t,p$)
$$\sum_{i,k}[
r_{mi}r_{tk}(a_{ik}^p+b_{ik}^p)-r_{ip}r_{mk}(b_{ik}^t+c_{ik}^p)-r_{it}r_{kp}d_{ik}^m]=0.$$
The left-hand side of the above equation is precisely the
coefficient of $e_p$ in
$$r(e_m^*)\succ
r(e_t^*)-r((R_\nearrow^*+R_\nwarrow^*)(r(e_m^*))e_t^*-L_\swarrow^*(r(e_t^*))e_m^*).$$
On the other hand, $r$ satisfies equation (3.4.18) if and only if
(for any $m,t,p$)
$$\sum_{i,k}[
r_{mi}r_{tk}(c_{ik}^p+d_{ik}^p)+r_{ip}r_{mk}b_{ik}^t+r_{it}r_{kp}(d_{ik}^m+a_{ik}^m)]=0.$$
The left-hand side of the above equation is precisely the
coefficient of $e_p$ in
$$r(e_m^*)\prec
r(e_t^*)-r(-R_\nearrow^*(r(e_m^*))e_t^*+(L_\searrow^*+L_\swarrow^*)(r(e_t^*))e_m^*).$$
Therefore the conclusion follows.
\end{proof}

\begin{coro} Let $(A, \searrow, \nearrow, \nwarrow, \swarrow)$ be a
quadri-algebra and $(A,\succ,\prec)$ be the associated horizontal
dendriform algebra. Let $r\in A\otimes A$. If $r$ satisfies
equations (3.4.17) and (3.4.18), then $r$ satisfies
$$r_{13}*r_{23}=r_{23}\nwarrow r_{12}-r_{12}\searrow r_{13}.\eqno
(3.4.19)$$\end{coro}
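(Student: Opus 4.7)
The plan is to prove this by direct linear combination: the target equation (3.4.19) will fall out as the sum of (3.4.17) and (3.4.18). No structural manipulation or passing through $\mathcal{O}$-operators is needed, since the only identity being invoked is the splitting $* = \succ + \prec$ that defines the associated horizontal dendriform structure (equations (2.1.6) and (3.4.6)).

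More precisely, first I would observe that since $*$ and $\succ, \prec$ all act in the same (third) tensor slot when applied to $r_{13}$ and $r_{23}$, the identity $x*y = x\succ y + x\prec y$ for $x,y \in A$ lifts in the obvious way to
\[
r_{13} * r_{23} \;=\; r_{13} \succ r_{23} \;+\; r_{13} \prec r_{23}
\]
as elements of $A^{\otimes 3}$. Next I would simply add equations (3.4.17) and (3.4.18): on the left-hand side this yields $r_{13}*r_{23}$ by the previous observation, while on the right-hand side the two copies of $r_{23}\nearrow r_{12}$ cancel (one positive, one negative) and the two copies of $r_{12}\swarrow r_{13}$ cancel, leaving exactly $r_{23}\nwarrow r_{12} - r_{12}\searrow r_{13}$. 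This is the content of (3.4.19).

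Since the proof is purely formal addition, there is no real obstacle; the only point requiring mild care is checking that the operations in (3.4.17) and (3.4.18) indeed combine slot-by-slot in $A^{\otimes 3}$ in a way compatible with the splitting of $*$, which is immediate from the conventions in (1.3)--(1.4). Note also that this result is weaker than (3.4.17)--(3.4.18) together (it is a consequence but does not entail them), so it can be read as a compatibility identity isolating the $\succ + \prec$ part of the full $D$-type equation satisfied by a skew-symmetric $\mathcal{O}$-operator of $(A,\succ,\prec)$ associated to the dual bimodule of the quadri-algebra's natural bimodule.
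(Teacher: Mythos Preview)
Your proposal is correct and matches the paper's own proof exactly: the paper simply states that equation (3.4.19) is the sum of equations (3.4.17) and (3.4.18). Your verification of the term-by-term cancellation is accurate.
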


\begin{proof}
In fact, equation (3.4.19) is the sum of equations (3.4.17) and
(3.4.18).
\end{proof}

\section{${\mathcal O}$-operators of quadri-algebras}

\subsection{Bimodule of quadri-algebras}

\begin{defn}{\rm Let $(A, \searrow, \nearrow, \nwarrow,
\swarrow)$ be a quadri-algebra and $V$ be a vector space. Let
$l_\searrow, r_\searrow, l_\nearrow, r_\nearrow,
l_\nwarrow,r_\nwarrow, l_\swarrow, r_\swarrow:A\rightarrow gl(V)$
be eight linear maps. $V$ (or $(l_\searrow, r_\searrow,
l_\nearrow, r_\nearrow, l_\nwarrow,r_\nwarrow,$ $l_\swarrow,
r_\swarrow)$, or $(l_\searrow, r_\searrow, l_\nearrow, r_\nearrow,
l_\nwarrow,r_\nwarrow, l_\swarrow, r_\swarrow,V)$) is called a
{\it bimodule} of $A$ if the following equations hold (for any
$x,y\in A$):
$$l_\nwarrow (x\nwarrow y)=l_\nwarrow
(x)l_*(y);\;\;r_\nwarrow(y)l_\nwarrow(x)=l_\nwarrow(x)r_*(y);\;\;
r_\nwarrow(y)r_\nwarrow(x)=r_\nwarrow(x* y);\eqno(4.1.1)$$
$$l_\nwarrow(x\swarrow y)=l_\swarrow(x)l_\wedge (y);\;\;
r_\nwarrow(y)l_\swarrow(x)=l_\swarrow(x)r_\wedge(y);\;\;
r_\nwarrow(y)r_\swarrow(x)=r_\swarrow(x\wedge y);\eqno(4.1.2)$$
$$l_\swarrow(x\prec y)=l_\swarrow(x)l_\vee (y);\;\;
r_\swarrow
(y)l_\prec(x)=l_\swarrow(x)r_\vee(y);\;\;r_\swarrow(y)r_\prec(x)=r_\swarrow(x\vee
y);\eqno(4.1.3)$$
$$l_\nwarrow(x\nearrow y)=l_\nearrow(x)l_\prec(y);\;\;
r_\nwarrow(y)l_\nearrow(x)=l_\nearrow(x)r_\prec(y);\;\;
r_\nwarrow(y)r_\nearrow(x)=r_\nearrow(x\prec y);\eqno (4.1.4)$$
$$l_\nwarrow(x\searrow y)=l_\searrow(x)l_\nwarrow(y);\;\;
r_\nwarrow(y)l_\searrow(x)=l_\searrow(x)r_\nwarrow(y);\;\;
r_\nwarrow(y)r_\searrow(x)=r_\searrow(x\nwarrow y);\eqno (4.1.5)$$
$$l_\swarrow(x\succ y)=l_\searrow(x)l_\swarrow(y);\;\;
r_\swarrow(y)l_\succ (x)=l_\searrow(x)r_\swarrow(y);\;\;
r_\swarrow(y)r_\succ(x)=r_\searrow(x\swarrow y);\eqno (4.1.6)$$
$$l_\nearrow(x\wedge y)=l_\nearrow(x)l_\succ(y);\;\;
r_\nearrow(y)l_\wedge (x)=l_\nearrow(x)r_\succ(y);\;
r_\nearrow(y)r_\wedge(x)=r_\nearrow(x\succ y);\eqno (4.1.7)$$
$$l_\nearrow(x\vee y)=l_\searrow(x)l_\nearrow(y);\;\;
r_\nearrow(y)l_\vee (x)=l_\searrow(x)r_\nearrow(y);\;
r_\nearrow(y)r_\vee(x)=r_\searrow(x\nearrow y);\eqno (4.1.8)$$
$$l_\searrow(x* y)=l_\searrow(x)l_\searrow(y);\;\;
r_\searrow(y)l_*(x)=l_\searrow(x)r_\searrow(y);\;\; r_\searrow(y)r_*
(x)=r_\searrow(x\searrow y),\eqno (4.1.9)$$ where
$$x\succ y=x\nearrow y+x\searrow y,
x\prec y=x\nwarrow y+x\swarrow y,\eqno (4.1.10)$$
$$l_\succ=l_\nearrow+l_\searrow,\;
r_\succ=r_\nearrow+r_\searrow,\; l_\prec=l_\nwarrow+l_\swarrow,\;
r_\prec=r_\nwarrow +r_\swarrow,\;\eqno (4.1.11)$$
$$x\vee y=x\searrow
y+x\swarrow y,x\wedge y=x\nearrow y+x\nwarrow y,\eqno (4.1.12)$$
$$l_\vee=l_\searrow+l_\swarrow,\;
r_\vee=r_\searrow+r_\swarrow,\; l_\wedge=l_\nearrow+l_\nwarrow,\;
r_\wedge=r_\nearrow+r_\nwarrow,\;\eqno (4.1.13)$$
$$x*y=x\searrow y+x\nearrow
y+x\nwarrow y+x\swarrow y,\eqno (4.1.14)$$
$$l_*=l_\searrow+l_\nearrow+l_\nwarrow+l_\swarrow,\;
r_*=r_\searrow+r_\nearrow+r_\nwarrow+r_\swarrow.\eqno
(4.1.15)$$}\end{defn}

According to \cite{Sc}, $(l_\searrow, r_\searrow, l_\nearrow,
r_\nearrow, l_\nwarrow,r_\nwarrow, l_\swarrow, r_\swarrow,V)$ is a
bimodule of a quadri-algebra $(A$, $\searrow, \nearrow, \nwarrow,
\swarrow)$ if and only if there exists a quadri-algebra structure
on the direct sum $A\oplus V$ of the underlying vector spaces of
$A$ and $V$ given by ($\forall x,y\in A, u,v\in V$)
$$(x+u)\searrow (y+v)=x\searrow y+l_\searrow(x)v+r_\searrow (y)u,\;\;
(x+u)\nearrow (y+v)=x\nearrow y+l_\nearrow(x)v+r_\nearrow (y)u,$$
$$(x+u)\nwarrow (y+v)=x\nwarrow y+l_\nwarrow(x)v+r_\nwarrow (y)u,\;\;
(x+u)\swarrow (y+v)=x\swarrow y+l_\swarrow(x)v+r_\swarrow (y)u.\eqno
(4.1.16)$$ We denote it by $A\ltimes_{l_\searrow, r_\searrow,
l_\nearrow, r_\nearrow, l_\nwarrow,r_\nwarrow, l_\swarrow,
r_\swarrow}V$.

\begin{prop} Let $(l_\searrow, r_\searrow, l_\nearrow, r_\nearrow,
l_\nwarrow,r_\nwarrow, l_\swarrow, r_\swarrow,V)$ be a bimodule of
a quadri-algebra $(A, \searrow, \nearrow, \nwarrow, \swarrow)$.
Let $(A,\succ,\prec)$ be the associated horizontal dendriform
algebra.

(1) $(l_\searrow, r_\nearrow, l_\swarrow, r_\nwarrow, V)$ is a
bimodule of $(A,\succ$, $\prec)$.

(2)
$(l_\nearrow+l_\searrow,r_\nearrow+r_\searrow,l_\nwarrow+l_\swarrow,r_\nwarrow
+r_\swarrow, V)$ is a bimodule of $(A,\succ,\prec)$.

(3) For any bimodule $(l_\succ,r_\succ,l_\prec,r_\prec, V)$ of
$(A,\succ,\prec)$, $(l_\succ,0,0,r_\succ,0,r_\prec,l_\prec,0, V)$ is
a bimodule of $(A, \searrow, \nearrow, \nwarrow, \swarrow)$.

(4) Both $(l_\searrow, 0,0, r_\nearrow,0,  r_\nwarrow,
l_\swarrow,0,V)$ and $(l_\nearrow+l_\searrow,0,0,
r_\nearrow+r_\searrow,0,r_\nwarrow +r_\swarrow,
l_\nwarrow+l_\swarrow,0, V)$ are bimodules of $(A, \searrow,
\nearrow, \nwarrow, \swarrow)$.

(5) The quadri-algebras $A\ltimes_{l_\searrow, r_\searrow,
l_\nearrow, r_\nearrow, l_\nwarrow,r_\nwarrow, l_\swarrow,
r_\swarrow} V$ and $A\ltimes_{l_\nearrow+l_\searrow,0,0,
r_\nearrow+r_\searrow,0,r_\nwarrow
+r_\swarrow,l_\nwarrow+l_\swarrow, 0}V$  have the same associated
horizontal dendriform algebra
$A\ltimes_{l_\nearrow+l_\searrow,r_\nearrow+r_\searrow,l_\nwarrow+l_\swarrow,r_\nwarrow
+r_\swarrow} V$.\end{prop}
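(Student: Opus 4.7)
The unifying idea I would exploit is that a bimodule structure on $V$ is equivalent data to a semidirect sum algebra structure on $A \oplus V$ of the same type (this was emphasized after Definitions 2.1.1, 3.1.1, and 4.1.1). So each claim translates into building an appropriate semidirect sum and verifying it has the right structural splittings.

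\smallskip

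For parts (1) and (2), the plan is to form the quadri-algebra semidirect sum $A \ltimes_{l_\searrow, r_\searrow, l_\nearrow, r_\nearrow, l_\nwarrow, r_\nwarrow, l_\swarrow, r_\swarrow} V$ given by the assumption and the formulas (4.1.16). Computing $(x+u) \succ (y+v) = (x+u)\nearrow(y+v) + (x+u)\searrow(y+v)$ and $(x+u)\prec(y+v) = (x+u)\nwarrow(y+v) + (x+u)\swarrow(y+v)$ directly from (4.1.16) yields the semidirect sum dendriform algebra $A \ltimes_{l_\nearrow + l_\searrow, r_\nearrow + r_\searrow, l_\nwarrow + l_\swarrow, r_\nwarrow + r_\swarrow} V$. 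Since this is the associated horizontal dendriform algebra of a quadri-algebra, it is a dendriform algebra by Proposition 3.4.2(1), which immediately gives claim (2). For claim (1), apply Proposition 3.4.3(1) to the quadri-algebra $A \oplus V$: the operators $(L_\searrow, R_\nearrow, L_\swarrow, R_\nwarrow, A\oplus V)$ form a bimodule of its associated horizontal dendriform algebra, and restricting the $A$-action to $V$ extracts exactly $(l_\searrow, r_\nearrow, l_\swarrow, r_\nwarrow, V)$.

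\smallskip

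For part (3), the direct verification is essentially unavoidable. Given the dendriform bimodule $(l_\succ, r_\succ, l_\prec, r_\prec, V)$, set $l_\searrow := l_\succ$, $r_\nearrow := r_\succ$, $r_\nwarrow := r_\prec$, $l_\swarrow := l_\prec$, with the other four operators zero. Then under (4.1.11), (4.1.13), (4.1.15) one computes $l_\wedge = r_\vee = 0$, $l_* = l_\vee = l_\succ + l_\prec$, and $r_* = r_\wedge = r_\succ + r_\prec$. The 27 subaxioms in (4.1.1)–(4.1.9) split into two types: those containing a zero operator, which collapse to $0 = 0$, and those that do not. The latter correspond exactly to the nine dendriform bimodule axioms (3.1.1)–(3.1.9); for instance the third part of (4.1.1) becomes $r_\prec(y)r_\prec(x) = r_\prec(x \succ y + x \prec y)$ which is (3.1.3) with $x,y$ swapped; (4.1.9) becomes $l_\succ(x*y) = l_\succ(x)l_\succ(y)$ which follows from (3.1.7); (4.1.3), (4.1.4), (4.1.5), (4.1.6), (4.1.7), (4.1.8), (4.1.2) correspond respectively to (3.1.1), (3.1.6), (3.1.5), (3.1.4), (3.1.9), (3.1.8), (3.1.2).

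\smallskip

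Parts (4) and (5) follow by composition and direct computation. For (4), apply (1) (resp.\ (2)) to the given quadri-algebra bimodule to obtain a dendriform bimodule, then apply (3) to lift it back to a quadri-algebra bimodule with the appropriate zero pattern. For (5), compute $\succ$ and $\prec$ in each of the two semidirect sum quadri-algebras; in both cases the cross terms collapse to $(l_\nearrow + l_\searrow)(x)v + (r_\nearrow + r_\searrow)(y)u$ and $(l_\nwarrow + l_\swarrow)(x)v + (r_\nwarrow + r_\swarrow)(y)u$, yielding the same dendriform algebra. The main obstacle is the bookkeeping in (3): it requires a careful pass through all 27 subaxioms to check that every nontrivial one matches one of the nine dendriform bimodule identities; but since the correspondence is mechanical, no real difficulty arises once the zero substitutions are tracked.
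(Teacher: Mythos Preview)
Your argument is correct. For parts (3)--(5) it coincides with the paper's proof: the paper also substitutes the zero pattern and notes that the nontrivial surviving subaxioms of (4.1.1)--(4.1.9) are exactly the nine dendriform bimodule identities (indeed, the paper explicitly reuses the list of correspondences identified in its proof of (1)), and then derives (4) and (5) by composing (1)--(3).

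For parts (1) and (2), however, you take a genuinely different route. The paper proves these by a direct axiom-by-axiom check: it exhibits the nine equalities (3.1.$k$) as specific subequations (4.1.$i$-$j$) (for (1)) or as specific sums thereof (for (2)). Your approach instead passes through the semidirect sum: you form the quadri-algebra $A\ltimes V$, read off its associated horizontal dendriform structure to get (2), and for (1) invoke Proposition 3.4.3(1) on $A\oplus V$ and then restrict from the $(A\oplus V)$-bimodule $(L_\searrow, R_\nearrow, L_\swarrow, R_\nwarrow, A\oplus V)$ to the $A$-bimodule on the invariant subspace $V$. This restriction step is legitimate since $A$ is a dendriform subalgebra of $A\oplus V$ and $V$ is stable under $L_\searrow(x), R_\nearrow(x), L_\swarrow(x), R_\nwarrow(x)$ for $x\in A$, so the operator identities (3.1.1)--(3.1.9) specialize without loss. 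Your argument is more conceptual and avoids the explicit bookkeeping of 9 (or 27) correspondences; the paper's approach, by contrast, makes the exact matching of axioms visible, which it then recycles in the proof of (3).
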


\begin{proof}
(1) follows from the following correspondences of equations:

(3.1.1) $\Longleftrightarrow$ (4.1.3-1)\quad\quad  (3.1.2)
$\Longleftrightarrow$ (4.1.2-2);\quad\quad (3.1.3)
$\Longleftrightarrow$ (4.1.1-3);

(3.1.4) $\Longleftrightarrow$ (4.1.6-1);\quad\quad  (3.1.5)
$\Longleftrightarrow$ (4.1.5-2);\quad\quad (3.1.6)
$\Longleftrightarrow$ (4.1.4-3);

(3.1.7) $\Longleftrightarrow$ (4.1.9-1);\quad\quad  (3.1.8)
$\Longleftrightarrow$ (4.1.8-2);\quad\quad (3.1.9)
$\Longleftrightarrow$ (4.1.7-3).

\noindent (2) follows from the following correspondences of
equations:

(3.1.1) $\Longleftrightarrow$ (4.1.1-1)+(4.1.2-1)+(4.1.3-1);\quad
(3.1.2) $\Longleftrightarrow$ (4.1.1-2)+(4.1.2-2)+(4.1.3-2);

(3.1.3) $\Longleftrightarrow$ (4.1.1-3)+(4.1.2-3)+(4.1.3-3);\quad
(3.1.4) $\Longleftrightarrow$ (4.1.4-1)+(4.1.5-1)+(4.1.6-1);

(3.1.5) $\Longleftrightarrow$ (4.1.4-2)+(4.1.5-2)+(4.1.6-2); \quad
(3.1.6) $\Longleftrightarrow$ (4.1.4-3)+(4.1.5-3)+(4.1.6-3);

(3.1.7) $\Longleftrightarrow$ (4.1.7-1)+(4.1.8-1)+(4.1.9-1);\quad
(3.1.8) $\Longleftrightarrow$ (4.1.7-2)+(4.1.8-2)+(4.1.9-2);

(3.1.9) $\Longleftrightarrow$ (4.1.7-3)+(4.1.8-3)+(4.1.9-3).

\noindent (3) In this case
$r_\searrow=l_\nearrow=l_\nwarrow=r_\swarrow=0, l_\searrow=l_\succ,
r_\nearrow=r_\succ, r_\nwarrow=r_\prec,l_\swarrow=l_\prec$, it is
obvious that the fact that $(l_\succ,r_\succ, l_\prec, r_\prec)$ is
a bimodule of $(A,\succ,\prec)$ corresponds to the equations
appearing in (1) and the other equations hold (both sides are zero).
So (3) holds.

\noindent (4) follows immediately from (1), (2) and (3).

\noindent (5) follows immediately from (4).
\end{proof}

By the above conclusion and Proposition 3.1.2, we have the following
results.

\begin{coro} Let $(l_\searrow, r_\searrow, l_\nearrow, r_\nearrow,
l_\nwarrow,r_\nwarrow, l_\swarrow, r_\swarrow,V)$ be a bimodule of a
quadri-algebra $(A$, $\searrow$, $\nearrow, \nwarrow, \swarrow)$.
Let $(A,*)$ be the associated associative algebra.

(1) $(l_\searrow, r_\nwarrow, V)$,$(l_\nearrow+l_\searrow,r_\nwarrow
+r_\swarrow, V)$,$(l_\searrow+l_\swarrow, r_\nearrow+r_\nwarrow,
V)$,$(l_\nearrow+l_\searrow+l_\nwarrow+l_\swarrow,r_\nearrow+r_\searrow+r_\nwarrow
+r_\swarrow, V)$ are bimodules of $(A,*)$.

(2) For any bimodule $(l,r,V)$ of $(A,*)$, $(l,0,0,0,0,r,0,0)$ is a
bimodule of $(A, \searrow, \nearrow, \nwarrow, \swarrow)$.\end{coro}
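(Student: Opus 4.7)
The plan is to deduce both parts as purely formal consequences of the two-step transfer results already in place: bimodules of a quadri-algebra pass to bimodules of its associated horizontal dendriform algebra by Proposition 4.1.2, and bimodules of a dendriform algebra in turn pass to bimodules of its associated associative algebra by Proposition 3.1.2 (and the reverse of each for part (2)). No direct verification inside the nine sets of quadri-algebra bimodule axioms should be needed.

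For part (1), I would first apply Proposition 4.1.2(1) to the given bimodule of $(A,\searrow,\nearrow,\nwarrow,\swarrow)$ to obtain $(l_\searrow, r_\nearrow, l_\swarrow, r_\nwarrow, V)$ as a bimodule of the horizontal dendriform algebra $(A,\succ,\prec)$. Invoking Proposition 3.1.2(1) with the dictionary $l_\succ:=l_\searrow$, $r_\succ:=r_\nearrow$, $l_\prec:=l_\swarrow$, $r_\prec:=r_\nwarrow$ then yields the two $(A,*)$-bimodules $(l_\succ, r_\prec, V)=(l_\searrow, r_\nwarrow, V)$ and $(l_\succ+l_\prec, r_\succ+r_\prec, V)=(l_\searrow+l_\swarrow, r_\nearrow+r_\nwarrow, V)$, which are the first and third entries in the list. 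Next I would apply Proposition 4.1.2(2) to produce a second bimodule $(l_\nearrow+l_\searrow, r_\nearrow+r_\searrow, l_\nwarrow+l_\swarrow, r_\nwarrow+r_\swarrow, V)$ of $(A,\succ,\prec)$, and again feed it through Proposition 3.1.2(1) to obtain $(l_\nearrow+l_\searrow, r_\nwarrow+r_\swarrow, V)$ and $(l_\nearrow+l_\searrow+l_\nwarrow+l_\swarrow, r_\nearrow+r_\searrow+r_\nwarrow+r_\swarrow, V)$ as $(A,*)$-bimodules, which are the second and fourth entries.

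For part (2), the argument runs in the opposite direction and is even shorter. Starting from a bimodule $(l,r,V)$ of the associative algebra $(A,*)$, Proposition 3.1.2(2) turns it into the bimodule $(l,0,0,r,V)$ of the horizontal dendriform algebra $(A,\succ,\prec)$, and Proposition 4.1.2(3) then lifts this to $(l,0,0,0,0,r,0,0,V)$ as a bimodule of the quadri-algebra $(A,\searrow,\nearrow,\nwarrow,\swarrow)$, which is exactly the asserted tuple.

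Because both parts are strict compositions of previously proven facts, there is no real technical obstacle. The one point that demands attention is purely bookkeeping: one must preserve the fixed cyclic order $(l_\searrow, r_\searrow, l_\nearrow, r_\nearrow, l_\nwarrow, r_\nwarrow, l_\swarrow, r_\swarrow)$ of quadri-algebra bimodules and the order $(l_\succ, r_\succ, l_\prec, r_\prec)$ of dendriform bimodules, and then correctly identify which of $l_\succ, r_\succ, l_\prec, r_\prec$ is which when extracting $(l_\succ, r_\prec)$ and $(l_\succ+l_\prec, r_\succ+r_\prec)$ via Proposition 3.1.2(1). Once this matching is written down carefully, all four bimodules in part (1) drop out by inspection from the two horizontal bimodules supplied by Proposition 4.1.2, and part (2) is immediate.
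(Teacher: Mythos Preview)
Your proposal is correct and matches the paper's own proof exactly: the paper simply states that the corollary follows from Proposition 4.1.2 (``the above conclusion'') together with Proposition 3.1.2, which is precisely the two-step composition you spell out.
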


\begin{prop} Let $(l_\searrow, r_\searrow, l_\nearrow, r_\nearrow,
l_\nwarrow,r_\nwarrow, l_\swarrow, r_\swarrow,V)$ be a bimodule of
a quadri-algebra $(A, \searrow, \nearrow, \nwarrow, \swarrow)$.
Then $(r_*^*, l_\nwarrow^*,-r_\vee^*,
-l_\prec^*,r_\searrow^*,l_*^*,-r_\succ^*,-l_\wedge^*,A^*)$ is a
bimodule of $(A, \searrow, \nearrow, \nwarrow$, $\swarrow)$. We
call it the dual bimodule of $(l_\searrow, r_\searrow, l_\nearrow,
r_\nearrow, l_\nwarrow,r_\nwarrow, l_\swarrow,
r_\swarrow,V)$.\end{prop}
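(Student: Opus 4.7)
The plan is to verify the nine triples of identities listed in Definition~4.1.1 directly for the candidate eight-tuple, using the defining relation $\langle \rho(x)u, v^*\rangle = \langle u, \rho^*(x)v^*\rangle$. The key structural observation is that each bimodule axiom has the shape
\[
l_\square(x\,\square\, y)=l_\square(x)l_\triangle(y),\quad r_\square(y)l_\triangle(x)=l_\triangle(x)r_\square(y),\quad r_\square(y)r_\triangle(x)=r_\triangle(x\,\square\, y),
\]
and taking transposes sends such a triple to another triple of the same shape, but with the roles of $l$ and $r$ interchanged. Thus the proof reduces to matching, for each of the nine triples required of $(r_*^*,l_\nwarrow^*,-r_\vee^*,-l_\prec^*,r_\searrow^*,l_*^*,-r_\succ^*,-l_\wedge^*)$, a unique original triple whose transpose produces it. The signs occur only on the four dual maps $\tilde{l}_\nearrow,\tilde{r}_\nearrow,\tilde{l}_\swarrow,\tilde{r}_\swarrow$, and because the middle and right equations in every triple involve two of the eight maps, all minus signs either cancel in pairs or appear uniformly on both sides.

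As a sanity check before starting the calculation, I would verify that the restriction to the associated horizontal dendriform algebra recovers the dual bimodule of Proposition~3.1.3. Using $\tilde{l}_\succ=\tilde{l}_\nearrow+\tilde{l}_\searrow=-r_\vee^*+r_*^*$, etc., one computes
\[
\tilde{l}_\succ=r_\nearrow^*+r_\nwarrow^*,\quad \tilde{r}_\succ=-l_\swarrow^*,\quad \tilde{l}_\prec=-r_\nearrow^*,\quad \tilde{r}_\prec=l_\searrow^*+l_\swarrow^*,
\]
which matches Proposition~3.1.3 applied to the horizontal bimodule $(l_\searrow,r_\nearrow,l_\swarrow,r_\nwarrow,V)$ from Proposition~4.1.2(1). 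This gives confidence that the formulas are the correct ones and also indicates which original triples ``cover'' which dual triples.

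Concretely, I would tabulate all 27 identities as follows: transpose each of (4.1.1)--(4.1.9) and read off which dual axiom it becomes. For instance, transposing $l_\nwarrow(x\nwarrow y)=l_\nwarrow(x)l_*(y)$ gives $l_*^*(y)l_\nwarrow^*(x)=l_\nwarrow^*(x\nwarrow y)$; identifying $\tilde{r}_\nwarrow=l_*^*$ and $\tilde{r}_\searrow=l_\nwarrow^*$ turns this into $\tilde{r}_\nwarrow(y)\tilde{r}_\searrow(x)=\tilde{r}_\searrow(x\nwarrow y)$, which is axiom (4.1.5-3) for the dual. Most triples pair up in this ``$\searrow\leftrightarrow\nwarrow$, $\nearrow\leftrightarrow\swarrow$'' fashion after duality; a few composite axioms (those involving $l_*,r_*,l_\succ,l_\prec,l_\vee,l_\wedge$) will require summing several original transposed identities and checking the resulting linear combination, where the decompositions (4.1.10)--(4.1.15) make the summation automatic.

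The main obstacle is pure bookkeeping: the correspondence between the nine original triples and the nine triples needed for the dual is a nontrivial shuffle, not the naive axiom-to-axiom match, and one must be careful to verify that every sign introduced by $\tilde{l}_\nearrow=-r_\vee^*$, $\tilde{r}_\nearrow=-l_\prec^*$, $\tilde{l}_\swarrow=-r_\succ^*$, $\tilde{r}_\swarrow=-l_\wedge^*$ cancels consistently within each equation. No genuinely new identity beyond the original bimodule axioms is needed; once the matching table is set up, each of the 27 verifications is a one-line transposition of a known identity.
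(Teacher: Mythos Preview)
Your proposal is correct and follows the direct-verification route that the paper explicitly acknowledges works (``This conclusion can be obtained by a direct checking on equations (4.1.1)--(4.1.9)'') but does not carry out in detail. Your structural observation that transposition swaps $l$'s with $r$'s and permutes the nine triples is exactly the right organizing principle, and your treatment of the signs is sound.

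The paper's written argument, however, takes a different route: rather than verifying the 27 identities, it \emph{derives} the eight dual maps by exploiting Proposition~4.1.2 and Proposition~3.1.3. Concretely, it posits an unknown dual bimodule $(\bar l_\searrow,\bar r_\searrow,\dots)$, observes that Proposition~4.1.2 forces $(\bar l_\searrow,\bar r_\nearrow,\bar l_\swarrow,\bar r_\nwarrow)$ and $(\bar l_\succ,\bar r_\succ,\bar l_\prec,\bar r_\prec)$ to be dendriform bimodules on $V^*$, and then matches these against the already-known dendriform duals $(r_\wedge^*,-l_\swarrow^*,-r_\nearrow^*,l_\vee^*)$ and $(r_*^*,-l_\prec^*,-r_\succ^*,l_*^*)$ of the two dendriform bimodules coming from the original quadri-algebra bimodule. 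Solving the resulting eight linear equations pins down the eight dual maps uniquely. What you call your ``sanity check'' is thus essentially the paper's entire argument. The paper's approach is slicker for \emph{finding} the formulas and makes the compatibility with the dendriform level transparent, but as written it only shows what the dual bimodule must be if one exists; strictly speaking it still leans on the direct check you propose to confirm that the candidate actually satisfies all of (4.1.1)--(4.1.9). Your approach is more laborious but self-contained.
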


\begin{proof}
This conclusion can be obtained by a direct checking on equations
(4.1.1)-(4.1.9). We give another approach by using the relations
between bimodules of a quadri-algebra $(A, \searrow$, $ \nearrow,
\nwarrow$, $\swarrow)$ and the associated horizontal dendriform
algebra $(A,\succ,\prec)$ with the (known) dual bimodules of
$(A,\succ,\prec)$. Let $(\bar l_\searrow, \bar r_\searrow, \bar
l_\nearrow, \bar r_\nearrow, \bar l_\nwarrow, \bar r_\nwarrow,
\bar l_\swarrow, \bar r_\swarrow,V^*)$ be the dual bimodule of
$(l_\searrow, r_\searrow, l_\nearrow, r_\nearrow$,
$l_\nwarrow,r_\nwarrow, l_\swarrow, r_\swarrow,V)$. Then by
Proposition 4.1.2,  both
$$(\bar l_\searrow, \bar r_\nearrow, \bar l_\swarrow, \bar r_\nwarrow,
V^*)\;\;{\rm and}\;\;(\bar l_\nearrow+\bar l_\searrow,\bar
r_\nearrow+\bar r_\searrow,\bar l_\nwarrow+\bar l_\swarrow,\bar
r_\nwarrow +\bar r_\swarrow, V^*)$$ are bimodules of
$(A,\succ,\prec)$. On the other hand, since both $(l_\searrow,
r_\nearrow, l_\swarrow, r_\nwarrow, V)$ and
$(l_\nearrow+l_\searrow,r_\nearrow+r_\searrow,l_\nwarrow+l_\swarrow,r_\nwarrow
+r_\swarrow, V)$ are bimodules of $(A,\succ,\prec)$, their dual
bimodules are
$$ (r_\wedge^*,
-l_\swarrow^*, -r_\nearrow^*,l_\vee^*,V^*),\;\;{\rm and}\;\; (r_*^*,
-l_\prec^*,-r_\succ^*,l_*^*,V^*)$$ respectively. Hence we have the
following equations
$$\bar l_\searrow=r_*^*, \bar r_\nearrow=-l_\prec^*, \bar
l_\swarrow=-r_\succ^*, \bar r_\nwarrow^*=l_*^*;$$
$$\bar l_\nearrow+\bar l_\searrow=r_\wedge^*
,\bar r_\nearrow+\bar r_\searrow=-l_\swarrow^*,\bar l_\nwarrow+\bar
l_\swarrow=-r_\nearrow^*,\bar r_\nwarrow +\bar
r_\swarrow=l_\vee^*.$$ Therefore $(r_*^*, l_\nwarrow^*,-r_\vee^*,
-l_\prec^*,r_\searrow^*,l_*^*,-r_\succ^*,-l_\wedge^*,A^*)$ is a
bimodule of $(A, \searrow, \nearrow, \nwarrow, \swarrow)$.
\end{proof}

By Propositions 4.1.2 and 4.1.4, the following conclusion is
obvious.

\begin{coro}   Let $(l_\searrow, r_\searrow, l_\nearrow, r_\nearrow,
l_\nwarrow,r_\nwarrow, l_\swarrow, r_\swarrow,V)$ be a bimodule of a
quadri-algebra $(A$, $\searrow$, $\nearrow, \nwarrow, \swarrow)$.

(1) Both $(r_*^*, 0,0 -l_\prec^*,0,l_*^*,-r_\succ^*,0,V^*)$ and
$(r_\wedge^*, 0,0, -l_\swarrow^*,0, l_\vee^*,-r_\nearrow^*,0,V^*)$
are bimodules of $(A, \searrow, \nearrow, \nwarrow, \swarrow)$.

(2) Both $(r_*^*, -l_\prec^*,-r_\succ^*,l_*^*,V^*)$ and
$(r_\wedge^*, -l_\swarrow^*, -r_\nearrow^*,l_\vee^*,V^*)$ are
bimodules of the associated horizontal dendriform algebra
$(A,\succ,\prec)$.\end{coro}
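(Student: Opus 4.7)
The plan is to prove Corollary 4.1.5 purely by chaining together Propositions 4.1.2 and 4.1.4, with no new calculations needed — this is why the text calls it ``obvious''. The real task is bookkeeping the component substitutions carefully.

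First I would apply Proposition 4.1.4 to the given quadri-algebra bimodule $(l_\searrow,r_\searrow,l_\nearrow,r_\nearrow,l_\nwarrow,r_\nwarrow,l_\swarrow,r_\swarrow,V)$ to obtain its dual bimodule on $V^*$, namely
$$(l'_\searrow,r'_\searrow,l'_\nearrow,r'_\nearrow,l'_\nwarrow,r'_\nwarrow,l'_\swarrow,r'_\swarrow)=(r_*^*,\,l_\nwarrow^*,\,-r_\vee^*,\,-l_\prec^*,\,r_\searrow^*,\,l_*^*,\,-r_\succ^*,\,-l_\wedge^*).$$
This is a bimodule of the quadri-algebra $(A,\searrow,\nearrow,\nwarrow,\swarrow)$.

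Next I would prove (2) by applying Proposition 4.1.2(1) and (2) to this dual bimodule. Part (1) of Proposition 4.1.2 picks out the quadruple $(l'_\searrow, r'_\nearrow, l'_\swarrow, r'_\nwarrow, V^*)$, which is exactly $(r_*^*,-l_\prec^*,-r_\succ^*,l_*^*,V^*)$, and asserts it is a bimodule of $(A,\succ,\prec)$. Part (2) picks out $(l'_\nearrow+l'_\searrow,\,r'_\nearrow+r'_\searrow,\,l'_\nwarrow+l'_\swarrow,\,r'_\nwarrow+r'_\swarrow,V^*)$. Using the defining identities $r_*=r_\vee+r_\wedge$, $l_\prec=l_\nwarrow+l_\swarrow$, $r_\succ=r_\nearrow+r_\searrow$ and $l_*=l_\vee+l_\wedge$, each sum collapses:
$$-r_\vee^*+r_*^*=r_\wedge^*,\qquad l_\nwarrow^*-l_\prec^*=-l_\swarrow^*,\qquad r_\searrow^*-r_\succ^*=-r_\nearrow^*,\qquad l_*^*-l_\wedge^*=l_\vee^*,$$
so the second bimodule is $(r_\wedge^*,-l_\swarrow^*,-r_\nearrow^*,l_\vee^*,V^*)$, as claimed.

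Finally, for (1), I would feed each of these two dendriform bimodules of $(A,\succ,\prec)$ into Proposition 4.1.2(3), which turns any bimodule $(l_\succ,r_\succ,l_\prec,r_\prec,V)$ of the dendriform algebra into the quadri-algebra bimodule $(l_\succ,0,0,r_\succ,0,r_\prec,l_\prec,0,V)$. Substituting yields precisely $(r_*^*,0,0,-l_\prec^*,0,l_*^*,-r_\succ^*,0,V^*)$ and $(r_\wedge^*,0,0,-l_\swarrow^*,0,l_\vee^*,-r_\nearrow^*,0,V^*)$, completing the proof. The only ``obstacle'' is arithmetic care with the sign conventions and the order in which the eight slots appear; I would present the sum-reduction identities explicitly so the reader can verify the claimed collapses in one pass.
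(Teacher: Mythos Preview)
Your proposal is correct and follows exactly the approach the paper indicates: it simply says the conclusion is obvious by Propositions 4.1.2 and 4.1.4, and you have carefully unpacked that chaining. The only minor remark is that for part (1) you could alternatively apply Proposition 4.1.2(4) directly to the dual bimodule rather than going through (2) and then (3), but since (4) itself follows from (1)--(3) this is the same argument.
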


\begin{exam}{\rm Let $(A, \searrow, \nearrow, \nwarrow, \swarrow)$ be a
quadri-algebra. Then $$(L_\searrow, R_\searrow, L_\nearrow,
R_\nearrow, L_\nwarrow,R_\nwarrow, L_\swarrow,
R_\swarrow,A),\;(L_\searrow, 0,0, R_\nearrow,0,
R_\nwarrow,L_\swarrow, 0,A),\;$$$$(L_\nearrow+L_\searrow,0,0,
R_\nearrow+R_\searrow,0,R_\nwarrow
+R_\swarrow,L_\nwarrow+L_\swarrow, 0, A)$$ are bimodules of $(A,
\searrow, \nearrow$, $\nwarrow, \swarrow)$ and the first one is
called the regular bimodule of $(A, \searrow, \nearrow$, $\nwarrow,
\swarrow)$. On the other hand,
$$(R_*^*, L_\nwarrow^*,-R_\vee^*,
-L_\prec^*,R_\searrow^*,L_*^*,-R_\succ^*,-L_\wedge^*,A^*),\; (R_*^*,
0,0 -L_\prec^*,0,L_*^*,-R_\succ^*,0,A^*),$$
$$(R_\wedge^*, 0,0, -L_\swarrow^*,0, L_\vee^*,-R_\nearrow^*,0,A^*)$$
are bimodules of $(A, \searrow, \nearrow, \nwarrow, \swarrow)$,
too.}\end{exam}

\subsection{ ${\mathcal O}$-operators of quadri-algebras and
$Q$-equation}

\begin{defn}{\rm Let $(l_\searrow, r_\searrow, l_\nearrow, r_\nearrow,
l_\nwarrow,r_\nwarrow, l_\swarrow, r_\swarrow,V)$ be a bimodule of a
quadri-algebra $(A, \searrow, \nearrow, \nwarrow, \swarrow)$. A
linear map $T:V\rightarrow A$ is called an {\it ${\mathcal
O}$-operator of $(A, \searrow, \nearrow, \nwarrow, \swarrow)$
associated to $(l_\searrow, r_\searrow, l_\nearrow, r_\nearrow,
l_\nwarrow,r_\nwarrow, l_\swarrow, r_\swarrow,V)$} if $T$ satisfies
$$T(u)\searrow T(v)=T(l_\searrow(T(u))v+r_\searrow(T(v)u)),\eqno (4.2.1)$$
$$T(u)\nearrow T(v)=T(l_\nearrow(T(u))v+r_\nearrow(T(v)u)),\eqno (4.2.2)$$
$$T(u)\nwarrow T(v)=T(l_\nwarrow(T(u))v+r_\nwarrow(T(v)u)),\eqno (4.2.3)$$
$$T(u)\swarrow T(v)=T(l_\swarrow(T(u))v+r_\swarrow(T(v)u)),\eqno (4.2.4)$$
for any $u,v\in V$.}\end{defn}

The following result is obvious.

\begin{coro}  Let
$T$ be an ${\mathcal O}$-operator of a quadri-algebra $(A, \searrow,
\nearrow, \nwarrow, \swarrow)$ associated to a bimodule
$(l_\searrow, r_\searrow, l_\nearrow, r_\nearrow$,
$l_\nwarrow,r_\nwarrow, l_\swarrow, r_\swarrow,V)$.

(1) $T$ is an $\mathcal O$-operator of the associated horizontal
dendriform algebra $(A,\succ,\prec)$ associated to
$(l_\nearrow+l_\searrow,r_\nearrow+r_\searrow,l_\nwarrow+l_\swarrow,r_\nwarrow
+r_\swarrow, V)$.

(2) $T$ is an $\mathcal O$-operator of the associated associative
algebra $(A,*)$ associated to
$(l_\nearrow+l_\searrow+l_\nwarrow+l_\swarrow,r_\nearrow+r_\searrow+r_\nwarrow
+r_\swarrow, V)$.\end{coro}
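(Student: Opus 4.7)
The plan is to derive each of (1) and (2) directly from the defining equations (4.2.1)--(4.2.4) of an $\mathcal O$-operator of the quadri-algebra, combined with the relations (4.1.10)--(4.1.15) that express the products $\succ,\prec,*$ and the bimodule maps $l_\succ,r_\succ,l_\prec,r_\prec,l_*,r_*$ as sums of their quadri-algebra refinements. No new computation is required beyond summation; the work is purely bookkeeping.

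For part (1), I would add (4.2.1) to (4.2.2). The left-hand side becomes $T(u)\searrow T(v)+T(u)\nearrow T(v)=T(u)\succ T(v)$ by (4.1.10), and the right-hand side becomes
\[
T\bigl((l_\searrow+l_\nearrow)(T(u))v+(r_\searrow+r_\nearrow)(T(v))u\bigr)=T\bigl(l_\succ(T(u))v+r_\succ(T(v))u\bigr)
\]
by (4.1.11). Doing the same with (4.2.3) and (4.2.4) yields
\[
T(u)\prec T(v)=T\bigl(l_\prec(T(u))v+r_\prec(T(v))u\bigr).
\]
These two identities are exactly the defining equations (3.3.1) of an $\mathcal O$-operator of $(A,\succ,\prec)$ associated to $(l_\succ,r_\succ,l_\prec,r_\prec,V)=(l_\nearrow+l_\searrow,r_\nearrow+r_\searrow,l_\nwarrow+l_\swarrow,r_\nwarrow+r_\swarrow,V)$, which is a bimodule of $(A,\succ,\prec)$ by Proposition 4.1.2 (2). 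This gives (1).

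For part (2), there are two equally short routes and I would record whichever reads better. The direct route is to sum all four equations (4.2.1)--(4.2.4); by (4.1.14) the left-hand side collapses to $T(u)*T(v)$, and by (4.1.15) the right-hand side collapses to $T(l_*(T(u))v+r_*(T(v))u)$, which is (2.1.3) for the bimodule $(l_*,r_*,V)$ of $(A,*)$ guaranteed by Corollary 4.1.3 (1). The indirect route is to invoke part (1) together with Corollary 3.3.2 applied to the dendriform algebra $(A,\succ,\prec)$ and its associated associative algebra $(A,*)$; this produces the same conclusion since $l_\succ+l_\prec=l_*$ and $r_\succ+r_\prec=r_*$.

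There is no real obstacle here: the argument is essentially the observation that the $\mathcal O$-operator axiom is linear in the bimodule data, so summing the four refined equations produces the coarser equations. The only thing to be careful about is keeping the correspondence (4.1.10)--(4.1.15) straight and citing the appropriate bimodule statements from Proposition 4.1.2 and Corollary 4.1.3 so that the resulting data is genuinely a bimodule of the coarser algebra.
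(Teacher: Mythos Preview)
Your proposal is correct and matches the paper's intent; the paper simply records this corollary as ``obvious'' without writing out the details. Your summation argument is exactly the routine verification behind that word, and your citations of Proposition 4.1.2(2), Corollary 4.1.3(1), and Corollary 3.3.2 are the right supporting facts.
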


\begin{prop} Let $(A, \searrow, \nearrow, \nwarrow, \swarrow)$ be a
quadri-algebra. Let $r\in A\otimes A$  be skew-symmetric. Then $r$
is an ${\mathcal O}$-operator of the quadri-algebra $(A, \searrow,
\nearrow, \nwarrow, \swarrow)$ associated to $(R_*^*,
L_\nwarrow^*,-R_\vee^*$, $-L_\prec^*$,
$R_\searrow^*,L_*^*,-R_\succ^*,-L_\wedge^*,A^*)$ if and only if $r$
satisfies
$$r_{13}\searrow r_{23}=r_{23}*r_{12}-r_{12}\nwarrow r_{13},\eqno
(4.2.5)$$
$$r_{13}\nearrow r_{23}=-r_{23}\vee r_{12}+r_{12}\prec
r_{13},\eqno (4.2.6)$$
$$r_{13}\nwarrow r_{23}=r_{23}\searrow r_{12}-r_{12}*r_{13},\eqno
(4.2.7)$$
$$r_{13}\swarrow r_{23}=-r_{23}\succ r_{12}+r_{12}\wedge r_{13}.\eqno
(4.2.8)$$\end{prop}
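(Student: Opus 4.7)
The plan is to imitate the proof of Proposition 3.4.14 (and indeed of Theorem 3.3.5), reducing everything to a comparison of coefficients in a fixed basis, and then using the skew-symmetry of $r$ to convert the two resulting families of scalar identities into each other. Because the proposition involves four $\mathcal O$-operator equations (4.2.1)--(4.2.4) and four tensor equations (4.2.5)--(4.2.8), the matching will happen \emph{equation by equation}: (4.2.1) $\Leftrightarrow$ (4.2.5), (4.2.2) $\Leftrightarrow$ (4.2.6), (4.2.3) $\Leftrightarrow$ (4.2.7), (4.2.4) $\Leftrightarrow$ (4.2.8).

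First I would fix a basis $\{e_1,\ldots,e_n\}$ of $A$ with dual basis $\{e_1^*,\ldots,e_n^*\}$ and introduce structure constants
$$e_i\searrow e_j=\sum_k a_{ij}^k e_k,\; e_i\nearrow e_j=\sum_k b_{ij}^k e_k,\; e_i\nwarrow e_j=\sum_k c_{ij}^k e_k,\; e_i\swarrow e_j=\sum_k d_{ij}^k e_k,$$
and write $r=\sum_{i,j}r_{ij}\,e_i\otimes e_j$ with $r_{ij}=-r_{ji}$, so that $r(e_i^*)=\sum_k r_{ik}e_k$. From the definition (1.5) of the dual representation I would record, once and for all, the identities $L_\nwarrow^*(e_i)e_m^*=\sum_j c_{ij}^m e_j^*$, $R_\searrow^*(e_i)e_m^*=\sum_j a_{ji}^m e_j^*$, $L_*^*(e_i)e_m^*=\sum_j(a{+}b{+}c{+}d)_{ij}^m e_j^*$, $R_*^*(e_i)e_m^*=\sum_j(a{+}b{+}c{+}d)_{ji}^m e_j^*$, and analogously for $L_\prec^*,R_\vee^*,R_\succ^*,L_\wedge^*$.

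Next, for each of the four operator equations (4.2.1)--(4.2.4), I would substitute $u=e_m^*$, $v=e_t^*$ and extract the coefficient of $e_p$ on both sides. For instance, for (4.2.1) this produces (using the bimodule action $(R_*^*,L_\nwarrow^*)$ on the $\searrow$-slot)
$$\sum_{i,k}r_{mi}r_{tk}a_{ik}^p \;=\; \sum_{i,j}r_{mi}r_{jp}(a{+}b{+}c{+}d)_{ji}^t+\sum_{i,j}r_{ti}r_{jp}c_{ij}^m,$$
and similarly three further polynomial identities for (4.2.2)--(4.2.4), each involving one of the ``positive" structure constants in the bimodule and one of the ``negative" ones. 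Simultaneously, for each of (4.2.5)--(4.2.8) I would expand the three-fold tensors using the conventions (1.3)--(1.4) and read off the coefficient of $e_m\otimes e_t\otimes e_p$; for (4.2.5) this gives
$$\sum_{\beta,\delta}r_{m\beta}r_{t\delta}a_{\beta\delta}^p\;=\;\sum_{\beta,\gamma}r_{m\beta}r_{\gamma p}(a{+}b{+}c{+}d)_{\gamma\beta}^t-\sum_{\alpha,\gamma}r_{\alpha t}r_{\gamma p}c_{\alpha\gamma}^m.$$
The skew-symmetry $r_{\alpha t}=-r_{t\alpha}$ converts the sign in the last sum, and relabeling the dummy indices $(\beta,\gamma)\to(i,j)$ and $(\alpha,\gamma)\to(i,j)$ shows this identity coincides with the one from (4.2.1). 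The same pattern holds for the remaining three pairs: the minus signs carried by $-R_\vee^*,-L_\prec^*,-R_\succ^*,-L_\wedge^*$ in the dual bimodule are exactly absorbed, after one application of $r_{ij}=-r_{ji}$, into the $+/-$ signs appearing in (4.2.5)--(4.2.8).

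The main obstacle is purely bookkeeping: there are eight dual actions, four scalar equations on each side, and many sums to relabel, so the principal risk is a sign or an index error rather than any conceptual difficulty. To keep the computation manageable, I would first verify the pair (4.2.1)$\Leftrightarrow$(4.2.5) in full detail (as above), and then handle the remaining three pairs by running the same template with the replacements
$(\searrow, R_*^*, L_\nwarrow^*)\leftrightarrow(\nearrow, -R_\vee^*, -L_\prec^*)\leftrightarrow(\nwarrow, R_\searrow^*, L_*^*)\leftrightarrow(\swarrow, -R_\succ^*, -L_\wedge^*)$,
which mirrors exactly the decomposition of the statement into its four equations. Once each pair of scalar identities is matched coefficient-by-coefficient, the equivalence of the two sets of four equations follows, completing the proof.
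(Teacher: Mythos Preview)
Your proposal is correct and follows exactly the approach the paper intends: the paper's own proof simply reads ``The conclusion follows from a similar proof as of Proposition 3.4.13,'' and what you have written is precisely that similar proof, carried out in detail via structure constants and a coefficient comparison using the skew-symmetry $r_{ij}=-r_{ji}$. The only quibble is a label: you cite ``Proposition 3.4.14,'' but the computational template you are imitating is Proposition 3.4.13 (3.4.14 is its corollary).
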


\begin{proof}
The conclusion follows from a similar proof as of Proposition
3.4.13.
\end{proof}

\begin{lemma} Let $(A, \searrow, \nearrow, \nwarrow, \swarrow)$ be a
quadri-algebra and $r\in A\otimes A$. Let $r$ be skew-symmetric.

(1) Equation (3.4.17) holds if and only if equation (4.2.8) holds.

(2) Equation (3.4.18) holds if and only if equation (4.2.6) holds.

(3) Equation (3.4.19) holds if and only if equation (4.2.5) holds.

(4) equation (3.4.19) holds if and only if equation (4.2.7)
holds.\end{lemma}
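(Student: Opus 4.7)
The plan is to derive all four equivalences from a single mechanism: applying a cyclic permutation of the tensor factors in $A^{\otimes 3}$, combined with the skew-symmetry identities $r_{ji}=-r_{ij}$, converts each of (3.4.17)--(3.4.19) into the corresponding equation among (4.2.5)--(4.2.8). Concretely, I would define $\tau:A^{\otimes 3}\to A^{\otimes 3}$ by $\tau(x\otimes y\otimes z)=z\otimes x\otimes y$ and check directly from (1.3) that $\tau(r_{12})=r_{23}$, $\tau(r_{13})=r_{21}$, $\tau(r_{23})=r_{31}$. Invoking skew-symmetry, these read as the substitution rules $\tau:(r_{12},r_{13},r_{23})\mapsto (r_{23},-r_{12},-r_{13})$, and similarly $\tau^{-1}:(r_{12},r_{13},r_{23})\mapsto (-r_{13},-r_{23},r_{12})$.

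The second ingredient is the routine observation that for every slot-wise product $\bullet\in\{\searrow,\nearrow,\nwarrow,\swarrow,\succ,\prec,\vee,\wedge,*\}$ defined as in (1.4), one has $\tau(X\bullet Y)=\tau(X)\bullet\tau(Y)$ whenever $X,Y$ are of the form $r_{ij}$, simply because $\tau$ relabels the three slots while the product acts slot by slot. Since $\tau$ is a linear bijection of $A^{\otimes 3}$, it follows that any tensorial identity among such expressions is equivalent to its image under $\tau^{\pm 1}$; this is what ultimately upgrades each implication in the lemma to an equivalence.

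With these tools, each part reduces to a short bookkeeping calculation. For (3), apply $\tau^{-1}$ to (3.4.19): the left-hand side $r_{13}*r_{23}$ becomes $-r_{23}*r_{12}$, while the right-hand side $r_{23}\nwarrow r_{12}-r_{12}\searrow r_{13}$ becomes $-r_{12}\nwarrow r_{13}-r_{13}\searrow r_{23}$, and rearranging recovers exactly (4.2.5); applying $\tau$ instead yields (4.2.7), giving (4). For (1), applying $\tau^{-1}$ to (3.4.17) and rearranging gives $r_{13}\swarrow r_{23}=-r_{23}\succ r_{12}+r_{12}\wedge r_{13}$, which is (4.2.8); and for (2), applying $\tau$ to (3.4.18) and rearranging produces (4.2.6). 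I expect no conceptual obstacle: the only real difficulty is carefully tracking, term by term, the sign changes induced by skew-symmetry across the nine different operations so that the final expressions match the target equations exactly. A purely mechanical alternative, should the permutation-based bookkeeping feel slippery, is to expand $r=\sum r_{ij}e_i\otimes e_j$ in a basis $\{e_1,\dots,e_n\}$ of $A$ and compare coefficients directly, as in the proof of Proposition 3.4.13.
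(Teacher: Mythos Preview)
Your approach is correct and essentially identical to the paper's: the paper introduces the two cyclic permutations $\sigma_{123}(x\otimes y\otimes z)=z\otimes x\otimes y$ and $\sigma_{132}(x\otimes y\otimes z)=y\otimes z\otimes x$, which are precisely your $\tau$ and $\tau^{-1}$, and applies them to (3.4.17)--(3.4.19) together with skew-symmetry exactly as you describe. Your write-up is in fact slightly more careful than the paper's in making explicit why the slot-wise products commute with $\tau$ and why bijectivity of $\tau$ upgrades each implication to an equivalence.
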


\begin{proof}
Let $\sigma_{123},\sigma_{132}: A\otimes A\otimes A\rightarrow
A\otimes A\otimes A$ be two linear maps given by
$$\sigma_{123}(x\otimes y\otimes z)=z\otimes x\otimes y,\;\;
\sigma_{132}(x\otimes y\otimes z)=y\otimes z\otimes x,\;\;\forall\;
x,y,z\in A,$$ respectively. Then we have the following equations.
\begin{eqnarray*}
\sigma_{132}(r_{13}\succ r_{23}-r_{23}\wedge r_{12}-r_{12}\swarrow
r_{13})&=& r_{32}\succ r_{12}-r_{12}\wedge r_{31}-r_{31}\swarrow
r_{32}\\
&=&-r_{23}\succ r_{12}+r_{12}\wedge r_{13}-r_{13}\swarrow r_{23};\\
\sigma_{123}(r_{13}\prec r_{23}+r_{23}\nearrow r_{12}+r_{12}\vee
r_{23}) &=& r_{21}\prec r_{31}+r_{31}\nearrow r_{23}+r_{23}\vee
r_{21}\\
&=& r_{12}\prec r_{13}-r_{13}\nearrow r_{23}-r_{23}\vee r_{12};\\
\sigma_{132}(r_{13}*r_{23}-r_{23}\nwarrow r_{12}+r_{12}\searrow
r_{13})&=& r_{32}*r_{12}-r_{12}\nwarrow r_{31}+r_{31}\searrow
r_{32}\\
&=&-r_{23}* r_{12}+r_{12}\nwarrow r_{13}+r_{13}\searrow r_{23};\\
\end{eqnarray*}
\begin{eqnarray*}
\sigma_{123}(r_{13}*r_{23}-r_{23}\nwarrow r_{12}+r_{12}\searrow
r_{13}) &=& r_{21}* r_{31}-r_{31}\nwarrow r_{23}+r_{23}\searrow
r_{21}\\
&=& r_{12}*r_{13}+r_{13}\nwarrow r_{23}-r_{23}\searrow r_{12}.
\end{eqnarray*}
Therefore the conclusion follows immediately.
\end{proof}

By Propositions 3.4.13 and 4.2.3 and Lemma 4.2.4, we have the
following conclusion.

\begin{coro} Let $(A, \searrow, \nearrow, \nwarrow, \swarrow)$ be a
quadri-algebra and $r\in A\otimes A$. Let $r$ be skew-symmetric.
Then the following conditions are equivalent.

(1) $r$ is an ${\mathcal O}$-operator of the associated horizontal
dendriform algebra $(A,\succ,\prec)$ associated to the bimodule
$(R_\nearrow^*+R_\nwarrow^*,-L_\swarrow^*,-R_\nearrow^*,
L_\searrow^*+L_\swarrow^*,A^*)$.

(2) $r$ is an ${\mathcal O}$-operator of the quadri-algebra $(A,
\searrow, \nearrow, \nwarrow, \swarrow)$ associated to the bimodule
$(R_*^*, L_\nwarrow^*,-R_\vee^*,
-L_\prec^*,R_\searrow^*,L_*^*,-R_\succ^*,-L_\wedge^*,A^*).$

(3) $r$ satisfies equations (3.4.17) and (3.4.18) in $(A, \searrow,
\nearrow, \nwarrow, \swarrow)$.

(4) $r$ satisfies equations (4.2.6) and (4.2.8) in $(A, \searrow,
\nearrow, \nwarrow, \swarrow)$.\end{coro}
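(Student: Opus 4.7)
The plan is to establish the four equivalences by assembling the previously proved results in a small circuit, with no fresh computation required. Specifically, Proposition 3.4.13 is exactly the statement that (1) $\Longleftrightarrow$ (3), since it characterizes the $\mathcal{O}$-operator property associated with the dual bimodule $(R_\nearrow^*+R_\nwarrow^*,-L_\swarrow^*,-R_\nearrow^*,L_\searrow^*+L_\swarrow^*,A^*)$ precisely by equations (3.4.17) and (3.4.18). Likewise, parts (1) and (2) of Lemma 4.2.4 directly give (3) $\Longleftrightarrow$ (4), by translating (3.4.17) to (4.2.8) and (3.4.18) to (4.2.6). So the only substantive step is to bridge (2) with the rest.

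For that bridge I would use Proposition 4.2.3, which characterizes condition (2) by the simultaneous validity of the four equations (4.2.5)--(4.2.8). Assuming (3), i.e.\ (3.4.17) and (3.4.18), Corollary 3.4.14 (which notes that (3.4.19) is just their sum) yields (3.4.19) as well; then Lemma 4.2.4 converts these three dendriform-type identities into all four quadri-algebra identities: (3.4.17) $\to$ (4.2.8), (3.4.18) $\to$ (4.2.6), and (3.4.19) $\to$ (4.2.5) and (4.2.7). By Proposition 4.2.3 this is exactly (2). Conversely, if (2) holds, then in particular (4.2.6) and (4.2.8) hold, which via Lemma 4.2.4(1)(2) give back (3.4.17) and (3.4.18), that is, (3).

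The only mildly delicate point I anticipate is bookkeeping: one must check that the two arrows $(3)\Rightarrow(2)$ and $(2)\Rightarrow(3)$ actually close the loop using only the quoted results and not some additional compatibility that quietly assumes (3.4.19) is independent information. This is resolved by the observation that (3.4.19) is a formal consequence of (3.4.17)+(3.4.18), so passing from (3) to the full list (4.2.5)--(4.2.8) does not require any new hypothesis on $r$. Once this is noted, the whole corollary reduces to a citation-level argument with no further calculation.
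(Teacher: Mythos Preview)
Your proposal is correct and follows essentially the same route as the paper, which simply cites Propositions 3.4.13 and 4.2.3 together with Lemma 4.2.4. Your write-up is more explicit about how these pieces fit together (in particular, you spell out that Corollary 3.4.14 is needed to obtain (3.4.19) and hence (4.2.5) and (4.2.7) when passing from (3) to (2)), but the underlying argument is identical.
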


\begin{defn}{\rm Let $(A, \searrow, \nearrow, \nwarrow, \swarrow)$ be a
quadri-algebra and $r\in A\otimes A$. A set of equations (3.4.17)
and (3.4.18) is called {\it $Q$-equation in $(A, \searrow, \nearrow,
\nwarrow, \swarrow)$}.}\end{defn}

\begin{remark}{\rm Due to Corollary 4.2.5, it is reasonable to regard the
$Q$-equation (a set of equations) in a quadri-algebra as an analogue
of the classical Yang-Baxter equation in a Lie algebra.}\end{remark}

With a similar discussion as in subsection 3.3, we have the
following results (see Proposition 3.3.5, Corollaries 3.3.6 and
3.3.8).

\begin{theorem} Let $(l_\searrow, r_\searrow, l_\nearrow, r_\nearrow,
l_\nwarrow,r_\nwarrow, l_\swarrow, r_\swarrow,V)$ be a bimodule of a
quadri-algebra $(A$, $\searrow$, $\nearrow, \nwarrow, \swarrow)$.
Let $(r_*^*, l_\nwarrow^*,-r_\vee^*,
-l_\prec^*,r_\searrow^*,l_*^*,-r_\succ^*,-l_\wedge^*,A^*)$ be the
dual bimodule given by Proposition 4.1.4. Let $T:V\rightarrow A$ be
a linear map identified as an element in $A\otimes V^*$ which is in
the underlying vector space of
$$(A\ltimes_{r_*^*,
l_\nwarrow^*,-r_\vee^*,
-l_\prec^*,r_\searrow^*,l_*^*,-r_\succ^*,-l_\wedge^*}V^*)\otimes
(A\ltimes_{r_*^*, l_\nwarrow^*,-r_\vee^*,
-l_\prec^*,r_\searrow^*,l_*^*,-r_\succ^*,-l_\wedge^*}V^*).$$ Then
$r=T-\sigma(T)$ is a skew-symmetric solution of $Q$-equation in the
quadri-algebra\\ $A\ltimes_{r_*^*, l_\nwarrow^*,-r_\vee^*,
-l_\prec^*,r_\searrow^*,l_*^*,-r_\succ^*,-l_\wedge^*}V^*$ if and
only if $T$ is an ${\mathcal O}$-operator  of the quadri-algebra
$(A$, $\searrow, \nearrow, \nwarrow, \swarrow)$ associated to
$(l_\searrow, r_\searrow, l_\nearrow, r_\nearrow,
l_\nwarrow,r_\nwarrow, l_\swarrow, r_\swarrow,V)$.\end{theorem}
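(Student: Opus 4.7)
The plan is to mimic the proof of Theorem 3.3.5 exactly, only now carried out in the quadri-algebra $A\ltimes_{r_*^*,l_\nwarrow^*,-r_\vee^*,-l_\prec^*,r_\searrow^*,l_*^*,-r_\succ^*,-l_\wedge^*}V^*$. Fix a basis $\{e_1,\dots,e_n\}$ of $A$ and a basis $\{v_1,\dots,v_m\}$ of $V$ with dual basis $\{v_1^*,\dots,v_m^*\}$. Writing $T(v_i)=\sum_j a_{ij}e_j$, identify $T$ with $\sum_{i} T(v_i)\otimes v_i^*\in A\otimes V^*$, so that $r=T-\sigma(T)=\sum_{i}(T(v_i)\otimes v_i^*-v_i^*\otimes T(v_i))$. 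I would then write out $r_{12}$, $r_{13}$, $r_{23}$ in $(A\ltimes V^*)^{\otimes 3}$, each being a sum of four terms indexed by the two possible placements of $T$ or $\sigma(T)$.

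Next, using formula (4.1.16) together with the dual-bimodule actions from Proposition 4.1.4, I would expand the six three-fold products that appear in the $Q$-equation (i.e.\ $r_{13}\succ r_{23}$, $r_{23}\nearrow r_{12}$, $r_{23}\nwarrow r_{12}$, $r_{12}\swarrow r_{13}$ from (3.4.17), and $r_{13}\prec r_{23}$, $r_{23}\nearrow r_{12}$, $r_{12}\searrow r_{13}$, $r_{12}\swarrow r_{13}$ from (3.4.18)), here using $\succ=\nearrow+\searrow$ and $\prec=\nwarrow+\swarrow$ on $A$, and their dual-bimodule analogues on $V^*$. Every summand will lie in exactly one of the eight tensor-type subspaces of $(A\oplus V^*)^{\otimes 3}$, so I would collect terms by tensor type.

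To turn tensor-type components back into statements about $T$, I would use the standard dualization identity
$$\sum_{i,k} T(v_i)\otimes \rho^{*}(T(v_k))v_i^{*}=\sum_{i,k}T(\rho(T(v_k))v_i)\otimes v_i^{*}$$
for any linear $\rho\colon A\to gl(V)$, applied to the eight structure maps $l_\searrow,r_\searrow,\dots,l_\swarrow,r_\swarrow$. After this rewriting, each tensor-type component of the residual of (3.4.17) and of (3.4.18) collapses to a sum of the form $\sum_{i,k}X_{i,k}\otimes(\text{basis vector})\otimes(\text{basis vector})$ whose vanishing is exactly one of (4.2.1)--(4.2.4). Conversely, if the four $\mathcal O$-operator equations hold, all components vanish, so $r=T-\sigma(T)$ is a solution of the $Q$-equation. (As a sanity check, one notes that (3.4.17) will yield the $\searrow$ and $\nearrow$ conditions and (3.4.18) the $\nwarrow$ and $\swarrow$ conditions, in parallel to Theorem 3.3.5 and consistent with Corollary 4.2.5.)

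The main obstacle is purely bookkeeping: with four operations on the quadri-algebra, the eight-component dual bimodule of Proposition 4.1.4, and six triple-products, the expansion is substantially larger than in the dendriform case. I would manage it by fixing the tensor-type first and then gathering, for each type, the contributions from the four summands of each triple product; the signs built into the dual bimodule ($-r_\vee^*$, $-l_\prec^*$, $-r_\succ^*$, $-l_\wedge^*$) are precisely those required for the cross-terms to cancel and leave behind one clean $\mathcal O$-operator relation per tensor type. No new idea beyond Theorem 3.3.5 should be needed; the proof is a quadri-algebra-level recopy of that template.
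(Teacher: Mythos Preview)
Your proposal is correct and matches the paper's approach exactly: the paper does not spell out a proof of Theorem 4.2.8 at all, stating only that it follows ``with a similar discussion as in subsection 3.3'' (i.e.\ by the template of Theorem 3.3.5), which is precisely what you describe. One small caution on your sanity check: the correspondence between (3.4.17)--(3.4.18) and the four $\mathcal O$-operator identities goes through the three tensor types $A\otimes V^*\otimes V^*$, $V^*\otimes A\otimes V^*$, $V^*\otimes V^*\otimes A$ (as in the displayed computation at the end of the proof of Theorem 3.3.5), so each of the two $Q$-equations yields three relations, with redundancy among them; the clean pairing you suggest is not quite how the bookkeeping falls out, but this does not affect the validity of the argument.
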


\begin{coro} Let $(A,\succ, \prec)$ be a dendriform algebra. Let
$(l_\succ,r_\succ,l_\prec,r_\prec, V)$ be a bimodule of $A$ and
$(r_\succ^*+r_\prec^*,-l_\prec^*,-r_\succ^*,l_\succ^*+l_\prec^*,V^*)$
be the dual bimodule given in Proposition 3.1.3. Suppose that
$T:V\rightarrow A$ is an ${\mathcal O}$-operator of
$(A,\succ,\prec)$ associated to $(l_\succ,r_\succ,l_\prec,r_\prec,
V)$. Then $r=T-\sigma(T)$ is a skew-symmetric solution of
$Q$-equation in the quadri-algebra
$$T(V)\ltimes_{r_\succ^*+r_\prec^*,0,0,-l_\prec^*,0,l_\succ^*+l_\prec^*,-r_\succ^*,0,
}V^*,$$ where $T(V)\subset A$ is a quadri-algebra given by equation
(3.4.9) and
$(r_\succ^*+r_\prec^*,0,0,-l_\prec^*,0,l_\succ^*+l_\prec^*,-r_\succ^*,0,
V^* )$ is its bimodule since its associated horizontal dendriform
algebra $T(V)$ is a dendriform subalgebra of $(A,\succ,\prec)$, and
$T$ can be identified as an element in $T(V)\otimes V^*$ which is in
the underlying vector space of $$
(T(V)\ltimes_{r_\succ^*+r_\prec^*,0,0,-l_\prec^*,0,l_\succ^*+l_\prec^*,-r_\succ^*,0,
}V^*)\otimes
(T(V)\ltimes_{r_\succ^*+r_\prec^*,0,0,-l_\prec^*,0,l_\succ^*+l_\prec^*,-r_\succ^*,0,
}V^*). $$\end{coro}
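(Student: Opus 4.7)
The plan is to deduce this corollary as a direct specialization of Theorem 4.2.8, by first upgrading the dendriform $\mathcal O$-operator $T$ to a quadri-algebra $\mathcal O$-operator on a suitable subalgebra and then feeding it into the machinery already developed.

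First I would invoke Proposition 3.4.6: because $T$ is an $\mathcal O$-operator of $(A,\succ,\prec)$ associated to $(l_\succ,r_\succ,l_\prec,r_\prec,V)$, the formulas (3.4.8) put a quadri-algebra structure on $V$, and $T(V)\subset A$ inherits, via (3.4.9), a quadri-algebra structure whose associated horizontal dendriform algebra is the dendriform subalgebra of $(A,\succ,\prec)$. Next I would use Proposition 4.1.2(3) to manufacture the quadri-algebra bimodule $(l_\succ,0,0,r_\succ,0,r_\prec,l_\prec,0,V)$ of $T(V)$ out of the dendriform bimodule $(l_\succ,r_\succ,l_\prec,r_\prec,V)$. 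Comparing (4.2.1)--(4.2.4) against (3.4.9) and (3.4.8) slot by slot then shows tautologically that $T\colon V\to T(V)$ is an $\mathcal O$-operator of the quadri-algebra $T(V)$ associated to this bimodule; for example, $T(u)\nwarrow T(v)=T(u\nwarrow v)=T(r_\prec(T(v))u)$ matches (4.2.3) with $l_\nwarrow=0$, $r_\nwarrow=r_\prec$, and the remaining three products check out in the same way.

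Second, I would compute the dual bimodule provided by Proposition 4.1.4 for the eight-tuple $(l_\succ,0,0,r_\succ,0,r_\prec,l_\prec,0)$. From the summation rules (4.1.11), (4.1.13), (4.1.15) one reads off $l_*=l_\succ+l_\prec$, $r_*=r_\succ+r_\prec$, $r_\vee=0$, $l_\wedge=0$, $l_{\prec}=l_\prec$, $r_{\succ}=r_\succ$, $l_\nwarrow=0$, $r_\searrow=0$; substituting these into $(r_*^*,l_\nwarrow^*,-r_\vee^*,-l_\prec^*,r_\searrow^*,l_*^*,-r_\succ^*,-l_\wedge^*)$ yields precisely the eight-tuple $(r_\succ^*+r_\prec^*,0,0,-l_\prec^*,0,l_\succ^*+l_\prec^*,-r_\succ^*,0)$ appearing in the statement.

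Finally, I would apply Theorem 4.2.8 to the quadri-algebra $T(V)$, the bimodule $(l_\succ,0,0,r_\succ,0,r_\prec,l_\prec,0,V)$, and the $\mathcal O$-operator $T$. The theorem immediately yields that $r=T-\sigma(T)$ is a skew-symmetric solution of $Q$-equation in the semidirect sum quadri-algebra $T(V)\ltimes_{r_\succ^*+r_\prec^*,0,0,-l_\prec^*,0,l_\succ^*+l_\prec^*,-r_\succ^*,0}V^*$, which is exactly the conclusion. The only non-routine step here is the indexing bookkeeping in the second paragraph; it is purely computational and should not present any genuine obstacle, once one is careful to distinguish the symbols $l_\prec,r_\prec,l_\succ,r_\succ$ in the dendriform bimodule from the operators $l_\prec,r_\prec,l_\succ,r_\succ$ derived from the quadri-algebra bimodule via (4.1.11).
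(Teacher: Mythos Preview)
Your proposal is correct and follows essentially the same approach as the paper. The paper does not give a self-contained proof of this corollary but simply says ``with a similar discussion as in subsection 3.3''; the analogous result there is Corollary~3.3.6, whose proof is exactly your strategy one level down: show that $T:V\to T(V)$ is an $\mathcal O$-operator of the newly constructed algebra on $T(V)$ associated to the ``padded'' bimodule (here $(l_\succ,0,0,r_\succ,0,r_\prec,l_\prec,0,V)$, via Proposition~4.1.2(3)), and then invoke the main theorem (here Theorem~4.2.8). Your explicit computation of the dual bimodule via Proposition~4.1.4 is a useful sanity check that the paper omits.
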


\begin{coro}{\rm (cf. Corollary 4.4.13)}\quad Let $(A, \searrow, \nearrow,
\nwarrow, \swarrow)$ be a quadri-algebra. Then $r$ given by equation
(2.2.9) is a skew-symmetric solution of $Q$-equation in the
quadri-algebra $A\ltimes_{R_\wedge^*, 0,0, -L_\swarrow^*,0,
L_\vee^*,-R_\nearrow^*,0}A^*$, where $\{e_1,\cdots, e_n\}$ is a
basis of $A$ and $\{e_1^*,\cdots, e_n^*\}$ is its dual
basis.\end{coro}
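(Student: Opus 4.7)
The plan is to deduce this statement directly from Corollary 4.2.9 by choosing the data appropriately, with the identity map on $A$ playing the role of the $\mathcal O$-operator $T$. First I would recall that by Proposition 3.4.3(1), the quadruple $(L_\searrow, R_\nearrow, L_\swarrow, R_\nwarrow, A)$ is a bimodule of the associated horizontal dendriform algebra $(A,\succ,\prec)$. The identity map $\mathrm{id}\colon A\to A$ is trivially an $\mathcal O$-operator of $(A,\succ,\prec)$ associated to this bimodule: using $\succ = \nearrow + \searrow$ and $\prec = \nwarrow + \swarrow$, for all $x,y\in A$ we have
\[
\mathrm{id}(x)\succ \mathrm{id}(y) = x\nearrow y + x\searrow y = R_\nearrow(y)x + L_\searrow(x)y,
\]
and similarly $\mathrm{id}(x)\prec \mathrm{id}(y)=L_\swarrow(x)y+R_\nwarrow(y)x$, which is exactly the defining relation (3.3.1).

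Next I would identify $\mathrm{id}$ with the element $\sum_{i=1}^n e_i\otimes e_i^*\in A\otimes A^*$ under the canonical isomorphism $\mathrm{Hom}(A,A)\cong A\otimes A^*$, so that $\mathrm{id}-\sigma(\mathrm{id}) = \sum_{i=1}^n(e_i\otimes e_i^* - e_i^*\otimes e_i)$, which is precisely the $r$ of equation (2.2.9). Now I apply Corollary 4.2.9 with $V=A$, $T=\mathrm{id}$ (so $T(V)=A$), and bimodule $(l_\succ, r_\succ, l_\prec, r_\prec)=(L_\searrow, R_\nearrow, L_\swarrow, R_\nwarrow)$.

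The dual-side bimodule prescribed by Corollary 4.2.9 is
\[
(r_\succ^*+r_\prec^*,\, 0,\, 0,\, -l_\prec^*,\, 0,\, l_\succ^*+l_\prec^*,\, -r_\succ^*,\, 0),
\]
which in our case becomes
\[
(R_\nearrow^*+R_\nwarrow^*,\, 0,\, 0,\, -L_\swarrow^*,\, 0,\, L_\searrow^*+L_\swarrow^*,\, -R_\nearrow^*,\, 0) \;=\; (R_\wedge^*,\, 0,\, 0,\, -L_\swarrow^*,\, 0,\, L_\vee^*,\, -R_\nearrow^*,\, 0),
\]
by the defining identities $R_\wedge = R_\nearrow + R_\nwarrow$ and $L_\vee = L_\searrow + L_\swarrow$. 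Hence Corollary 4.2.9 yields that $r$ is a skew-symmetric solution of $Q$-equation in the quadri-algebra $A\ltimes_{R_\wedge^*,\,0,\,0,\,-L_\swarrow^*,\,0,\,L_\vee^*,\,-R_\nearrow^*,\,0}A^*$, which is the claim.

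The whole argument is essentially a bookkeeping reduction to Corollary 4.2.9; the only step requiring care is matching the eight-component bimodule on $A^*$ correctly under the substitutions $l_\succ = L_\searrow$, $r_\succ = R_\nearrow$, $l_\prec = L_\swarrow$, $r_\prec = R_\nwarrow$. I do not expect any genuine obstacle beyond this notational verification, since all nontrivial analytic content has been absorbed into Theorem 4.2.8 and Corollary 4.2.9.
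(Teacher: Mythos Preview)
Your argument is correct and essentially matches the paper's intended proof. The paper's route (by analogy with Corollary~3.3.8) applies Theorem~4.2.8 directly to $\mathrm{id}$ viewed as an $\mathcal O$-operator of the quadri-algebra $(A,\searrow,\nearrow,\nwarrow,\swarrow)$ associated to the bimodule $(L_\searrow,0,0,R_\nearrow,0,R_\nwarrow,L_\swarrow,0,A)$ of Example~4.1.6, whose dual via Proposition~4.1.4 is $(R_\wedge^*,0,0,-L_\swarrow^*,0,L_\vee^*,-R_\nearrow^*,0,A^*)$; your detour through Corollary~4.2.9 unwinds to the same computation.

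One small point worth making explicit: in Corollary~4.2.9 the quadri-algebra carried by $T(V)$ is the one \emph{induced} from the dendriform $\mathcal O$-operator via Proposition~3.4.6, not a priori the original $(A,\searrow,\nearrow,\nwarrow,\swarrow)$. With your choice $T=\mathrm{id}$ and $(l_\succ,r_\succ,l_\prec,r_\prec)=(L_\searrow,R_\nearrow,L_\swarrow,R_\nwarrow)$, the induced operations are $u\searrow' v=L_\searrow(u)v=u\searrow v$, etc., so they do agree; but a sentence confirming this identification is needed for the conclusion to be about the stated semidirect product.
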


\subsection{Bilinear forms on quadri-algebras and $Q$-equation}

In this subsection, we consider the (symmetric and skew-symmetric)
bilinear forms on quadri-algebras as we have done in subsections 2.2
and 3.2.

\begin{lemma} Let $(A, \searrow, \nearrow, \nwarrow, \swarrow)$ be a
quadri-algebra. Let ${\mathcal B}$ be a symmetric bilinear form on
$A$. Set
$$\mathcal B(x,y\succ z)=\mathcal B(x\wedge y, z),\eqno (4.3.1)$$
$$\mathcal B(y,z\prec x)=\mathcal B(x\vee y, z),\eqno (4.3.2)$$
$$\mathcal B(x\searrow y,z)=\mathcal B(x,y\nwarrow z),\eqno
(4.3.3)$$
$$\mathcal B(x\nearrow y,z)+\mathcal B(z\swarrow x,y)+\mathcal
B(y*z,x)=0,\eqno (4.3.4)$$
$$\mathcal B(x\searrow y,z)+\mathcal B(y\nearrow z, x)-\mathcal B
(z\succ x, y)=0,\eqno (4.3.5)$$
$$\mathcal B(x\nwarrow y,z)+\mathcal B(z\swarrow x, y)-\mathcal
B(y\prec z, x)=0,\eqno (4.3.6)$$ for any $x,y,z\in A$. Then any two
equations of the three equations in the following sets can imply the
third equation.

(1) Equations (3.4.14), (3.4.15) and (4.3.1);

(2) Equations (3.4.13), (3.4.16) and (4.3.2);

(3) Equations (3.4.13), (3.4.15) and (4.3.3);

(4) Equations (3.4.14), (3.4.16) and (4.3.4);

(5) Equations (3.4.13), (3.4.14) and (4.3.5);

(6) Equations (3.4.15), (3.4.16) and (4.3.6).\end{lemma}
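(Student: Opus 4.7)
The plan is to exploit the symmetry of $\mathcal{B}$ together with the basic decompositions
\[
x * y = x\succ y + x\prec y = x\wedge y + x\vee y,\quad x\succ y = x\nearrow y + x\searrow y,
\]
\[
x\prec y = x\nwarrow y + x\swarrow y,\quad x\wedge y = x\nearrow y + x\nwarrow y,\quad x\vee y = x\searrow y + x\swarrow y,
\]
in order to recognize each of the equations (4.3.1)--(4.3.6) as a $3$-term linear dependence on the two companion equations drawn from (3.4.13)--(3.4.16), after an appropriate cyclic relabeling of $(x,y,z)$ and one application of symmetry. Once such a dependence is verified in each of the six sets, any two of the three equations immediately force the third by rearrangement. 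In particular, none of the defining axioms (4.1.1)--(4.1.9) of a quadri-algebra enter the argument.

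For each of the six sets, I would carry out one typical computation. For set (1), starting from (3.4.15) in the form $\mathcal{B}(x\nwarrow y,z) = \mathcal{B}(x,y\succ z) + \mathcal{B}(x,y\prec z)$ and substituting (3.4.14), I obtain
\[
\mathcal{B}(x\nwarrow y,z) = \mathcal{B}(x,y\succ z) - \mathcal{B}(x\nearrow y,z),
\]
which is exactly (4.3.1). For set (3), substituting $(x,y,z) \mapsto (y,z,x)$ in (3.4.15) and applying the symmetry of $\mathcal B$ yields $\mathcal{B}(x,y\nwarrow z) = \mathcal{B}(y,z*x)$, which combined with (3.4.13) gives (4.3.3). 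Set (2) is the analogue of set (1) obtained by replacing $\nearrow,\nwarrow,\wedge,\succ$ with $\swarrow,\searrow,\vee,\prec$ throughout: adding (3.4.13) to (3.4.16) expanded via $z*x = z\succ x + z\prec x$ yields (4.3.2). Sets (4)--(6) follow the same template with a cyclic shift: for (5), substituting $(x,y,z) \mapsto (y,z,x)$ in (3.4.14) and applying symmetry produces $\mathcal{B}(y\nearrow z,x) = -\mathcal{B}(z\prec x,y)$, and adding to (3.4.13) in the form $\mathcal{B}(x\searrow y,z) = \mathcal{B}(z\succ x,y) + \mathcal{B}(z\prec x,y)$ gives (4.3.5); the cases (4) and (6) are obtained identically using (3.4.16) in place of (3.4.14) and the decomposition of $y*z$ or $y\prec z$.

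The main obstacle is purely bookkeeping: in each of the six sets one must pick the correct cyclic permutation of $(x,y,z)$, decide which of $*,\succ,\prec,\wedge,\vee$ to expand, and track the signs from (3.4.14) and (3.4.16) so that the coefficients in (4.3.4)--(4.3.6) come out as stated. Since each resulting identity is a single $3$-term linear relation, verifying it once per set suffices to establish all three pairwise implications claimed in the lemma.
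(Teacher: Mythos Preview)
Your proposal is correct and matches the paper's own treatment, which simply records the proof as ``straightforward''; your explicit computations for each of the six sets carry out exactly the intended verification, and your observation that each triple is a single linear relation (so any two members force the third) is the right way to dispatch all the remaining implications at once.
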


\begin{proof}
It is straightforward.
\end{proof}

\begin{defn} {\rm Let $(A, \searrow, \nearrow, \nwarrow, \swarrow)$ be a
quadri-algebra. A symmetric bilinear form ${\mathcal B}$ on $A$ is
called {\it invariant} if $\mathcal B$ satisfies equations
(3.4.13)-(3.4.16).}\end{defn}

\begin{coro} Any symmetric invariant bilinear form on a
quadri-algebra is a 2-cocycle of the associated  horizontal
dendriform algebra.\end{coro}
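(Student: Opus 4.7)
The plan is to simply expand $x*y$ via the quadri-algebra decomposition $x*y = x\searrow y + x\nearrow y + x\nwarrow y + x\swarrow y$ (equation (3.4.5)), apply each of the four invariance identities (3.4.13)--(3.4.16) in turn to the corresponding summand of $\mathcal{B}(x*y,z)$, and then recombine using the definitions of $\succ$ and $\prec$ as sums of the four refined products.

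Concretely, the first step I would carry out is
\[
\mathcal{B}(x*y,z) \;=\; \mathcal{B}(x\searrow y,z) + \mathcal{B}(x\nearrow y,z) + \mathcal{B}(x\nwarrow y,z) + \mathcal{B}(x\swarrow y,z),
\]
and then substitute from the defining equations of invariance to get
\[
\mathcal{B}(x*y,z) \;=\; \mathcal{B}(y, z*x) \;-\; \mathcal{B}(x, y\prec z) \;+\; \mathcal{B}(x, y*z) \;-\; \mathcal{B}(y, z\succ x).
\]
The final step is to regroup: since $z*x = z\succ x + z\prec x$ and $y*z = y\succ z + y\prec z$ by equation (3.4.5), the right-hand side collapses to $\mathcal{B}(y, z\prec x) + \mathcal{B}(x, y\succ z)$, which is exactly equation (3.2.8), the 2-cocycle condition on the associated horizontal dendriform algebra $(A,\succ,\prec)$.

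Since $\mathcal{B}$ is already symmetric by hypothesis, no further verification is needed. There is no real obstacle here: the result is essentially a bookkeeping consequence of the way the four invariance axioms (3.4.13)--(3.4.16) were designed to refine the single Connes-cocycle/2-cocycle identity, much as the four products refine $*$. One could alternatively cite Proposition 3.2.2(5) after first showing that the invariance conditions imply equations (3.2.1) and (3.2.2) at the horizontal dendriform level, but the direct two-line calculation above is the cleanest route.
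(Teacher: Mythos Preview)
Your proof is correct and is precisely the argument the paper gives: it says the result ``follows immediately from the sum of equations (3.4.13)--(3.4.16),'' and your computation is exactly that sum, with the cancellations spelled out.
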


\begin{proof} It follows immediately from
the sum of equations (3.4.13)-(3.4.16).
\end{proof}

By Proposition 3.4.11 and  Corollary 4.3.3, we have the following
result.

\begin{coro} Let $(A,\succ,\prec)$ be a dendriform algebra and $\mathcal B$
be a nondegenerate symmetric bilinear form. Then $\mathcal B$ is a
2-cocycle of $(A,\succ,\prec)$ if and only if $\mathcal B$ is
invariant on the compatible quadri-algebra structure given by
equations (3.4.13)-(3.4.16) which $(A,\succ,\prec)$ is the
associated horizontal dendriform algebra.\end{coro}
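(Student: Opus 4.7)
The plan is to read the statement as a near-tautology packaging Proposition 3.4.11 (forward) together with Corollary 4.3.3 (backward), with the only nontrivial point being a careful bookkeeping check that the two directions fit together.

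For the forward direction, I would start by assuming $\mathcal{B}$ is a nondegenerate symmetric 2-cocycle on $(A,\succ,\prec)$ in the sense of Definition 3.2.9, i.e.\ equation (3.2.8) holds. Then I invoke Proposition 3.4.11 verbatim: it manufactures a compatible quadri-algebra structure $(\searrow,\nearrow,\nwarrow,\swarrow)$ on the underlying space of $A$, whose four products are \emph{defined by} equations (3.4.13)--(3.4.16) applied to $\mathcal{B}$, and for which $(A,\succ,\prec)$ is the associated horizontal dendriform algebra. Invariance of $\mathcal{B}$ on this quadri-algebra (Definition 4.3.2) is literally the assertion that equations (3.4.13)--(3.4.16) hold, which is how the products were constructed. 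Hence $\mathcal{B}$ is invariant on this compatible quadri-algebra structure.

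For the backward direction, I suppose there exists any compatible quadri-algebra structure on $(A,\succ,\prec)$ such that $(A,\succ,\prec)$ is the associated horizontal dendriform algebra and $\mathcal{B}$ is invariant. Then Corollary 4.3.3 applies directly: it says that any symmetric invariant bilinear form on a quadri-algebra is automatically a 2-cocycle of the associated horizontal dendriform algebra. Since that associated horizontal dendriform algebra is exactly $(A,\succ,\prec)$, we conclude that $\mathcal{B}$ is a 2-cocycle of $(A,\succ,\prec)$.

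The only potential obstacle is a bookkeeping one: making sure that the equations (3.4.13)--(3.4.16), which in Proposition 3.4.11 appear as \emph{definitions} of $\searrow,\nearrow,\nwarrow,\swarrow$, and which in Definition 4.3.2 appear as the \emph{invariance conditions} for a given quadri-algebra, really do match symbol-for-symbol. A quick side-by-side inspection against Definition 4.3.2 confirms this, so no genuine computation is needed. Thus the whole proof consists of chaining Proposition 3.4.11 one way and Corollary 4.3.3 the other, which can be written in essentially two lines.
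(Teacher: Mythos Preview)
Your proposal is correct and matches the paper's own argument exactly: the paper's proof of this corollary is the single sentence ``By Proposition 3.4.11 and Corollary 4.3.3, we have the following result,'' and you have unpacked precisely that chain, including the observation that the invariance conditions of Definition 4.3.2 coincide literally with the defining equations (3.4.13)--(3.4.16).
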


\begin{remark}{\rm In Remark 4.2.7, we have seen that the $Q$-equation in a
quadri-algebra is an analogue of the classical Yang-Baxter equation
as an $\mathcal O$-operator of the quadri-algebra (or equivalently,
its associated horizontal dendriform algebra) associated to certain
dual bimodule. On the other hand, similar to the associative
Yang-Baxter equation in an associative algebra coming from the
double construction of the nondegenerate symmetric invariant
bilinear forms on associative algebras and $D$-equation in a
dendriform algebra coming from the double constructions of the
nondegenerate (skew-symmetric) invariant bilinear forms on
dendriform algebras (or equivalently, the double construction of the
nondegenerate Connes cocycles on associative algebras), it is quite
reasonable to believe that the $Q$-equation in a quadri-algebra
should be related to certain ``double construction" of the
nondegenerate (symmetric) invariant bilinear forms on
quadri-algebras (or equivalently, the double construction of the
nondegenerate 2-cocycles on dendriform algebras). In fact, such a
conclusion has been proved in \cite{Ni}.}\end{remark}

Next we turn to the study of skew-symmetric bilinear forms on
quadri-algebras.

\begin{theorem}  Let $(A, \searrow, \nearrow, \nwarrow, \swarrow)$ be a
quadri-algebra and $r\in A\otimes A$. Suppose that $r$ is
skew-symmetric and nondegenerate. Then $r$ is a solution of
$Q$-equation in $A$ if and only if the inverse of the isomorphism
$A^*\rightarrow A$ induced by $r$, regarded as a bilinear form
$\omega$ on $A$ (that is, $\omega(x,y)=\langle r^{-1}x,y\rangle $
for any $x,y\in A$) satisfies (for any $x,y,z\in A$)
$$\omega(z,x\succ y)-\omega(x,y\swarrow z)+\omega(y,z\wedge
x)=0,\; \omega(z,x\prec y)+\omega(x,y\vee z)-\omega(y,z\nearrow
x)=0. \eqno (4.3.7)$$\end{theorem}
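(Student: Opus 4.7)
The plan is to convert the $Q$-equation into a pair of scalar identities satisfied by $\omega$ via the standard ``transport by $r$'' argument, using the $\mathcal O$-operator reformulation already at our disposal. By Corollary 4.2.5, condition (3) $\Leftrightarrow$ (1), the skew-symmetric tensor $r$ satisfies the $Q$-equation if and only if $r\colon A^*\to A$ is an $\mathcal O$-operator of the associated horizontal dendriform algebra $(A,\succ,\prec)$ associated to the bimodule $(R_\nearrow^*+R_\nwarrow^*,\,-L_\swarrow^*,\,-R_\nearrow^*,\,L_\searrow^*+L_\swarrow^*,\,A^*)$. Since $r$ is nondegenerate, writing $a^*=r^{-1}(x)$ and $b^*=r^{-1}(y)$ turns the defining equations (3.3.1) of an $\mathcal O$-operator into
\begin{align*}
x\succ y &= r\bigl((R_\nearrow^*+R_\nwarrow^*)(x)\,r^{-1}(y)-L_\swarrow^*(y)\,r^{-1}(x)\bigr),\\
x\prec y &= r\bigl(-R_\nearrow^*(x)\,r^{-1}(y)+(L_\searrow^*+L_\swarrow^*)(y)\,r^{-1}(x)\bigr),
\end{align*}
for all $x,y\in A$. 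Applying $r^{-1}$ and pairing with an arbitrary $z\in A$ gives $\omega(x\succ y,z)$ and $\omega(x\prec y,z)$ on the respective left-hand sides, using the definition $\omega(u,v)=\langle r^{-1}u,v\rangle$.

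Next I would unwrap the dual-action pairings on the right using equation (1.5): for any linear operator $\rho$ on $A$ we have $\langle \rho^*(x)c^*,z\rangle=\langle c^*,\rho(x)z\rangle$, so that $\langle R_\nearrow^*(x)r^{-1}(y),z\rangle=\omega(y,z\nearrow x)$, $\langle R_\nwarrow^*(x)r^{-1}(y),z\rangle=\omega(y,z\nwarrow x)$, $\langle L_\swarrow^*(y)r^{-1}(x),z\rangle=\omega(x,y\swarrow z)$, and similarly for the $\searrow$-terms. Collecting yields
\begin{align*}
\omega(x\succ y,z) &= \omega(y,z\wedge x)-\omega(x,y\swarrow z),\\
\omega(x\prec y,z) &= -\omega(y,z\nearrow x)+\omega(x,y\vee z).
\end{align*}
Finally, since $r$ is skew-symmetric the bilinear form $\omega$ is skew-symmetric as well (a direct matrix check: if $r$ has matrix $R=-R^T$ in dual bases, then $\omega$ has matrix $R^{-1}$, which is also skew-symmetric). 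Hence $\omega(x\succ y,z)=-\omega(z,x\succ y)$ and $\omega(x\prec y,z)=-\omega(z,x\prec y)$. Substituting and rearranging produces exactly the two equations of (4.3.7), and since every step is a logical equivalence the converse comes for free.

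The argument carries no real combinatorial surprise; the genuinely delicate point is simply keeping the signs and the dual actions straight when one translates the action $(R_\nearrow^*+R_\nwarrow^*,-L_\swarrow^*,-R_\nearrow^*,L_\searrow^*+L_\swarrow^*)$ from its $\mathcal O$-operator form into pairings, because the $-R_\nearrow^*$ term appears twice (once each in the $\succ$ and $\prec$ relations) and it must combine with the skew-symmetry swap to reproduce the mixed signs $-\omega(x,y\swarrow z)+\omega(y,z\wedge x)$ and $+\omega(x,y\vee z)-\omega(y,z\nearrow x)$ of (4.3.7). I would therefore carry out this bookkeeping explicitly in the write-up while leaving the rest as a short computation.
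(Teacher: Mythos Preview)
Your argument is correct, but it follows a different path from the paper's own proof. The paper works directly with the tensor form of the $Q$-equation: writing $r=\sum_i a_i\otimes b_i$ and using nondegeneracy to set $x=r(u^*)$, $y=r(v^*)$, $z=r(w^*)$, it pairs each term of equations (3.4.17) and (3.4.18) against $u^*\otimes v^*\otimes w^*$ and identifies each resulting scalar with an $\omega$-value (e.g.\ $\langle u^*\otimes v^*\otimes w^*, r_{13}\succ r_{23}\rangle=\omega(z,x\succ y)$), thereby reading off (4.3.7) directly. By contrast, you route the problem through Corollary~4.2.5 to rewrite the $Q$-equation as the $\mathcal O$-operator condition for $(A,\succ,\prec)$ with respect to $(R_\wedge^*,-L_\swarrow^*,-R_\nearrow^*,L_\vee^*,A^*)$, and then unwind that via the dual pairing.

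Your route is a little more conceptual and reuses machinery already proved, so it is shorter in practice; the paper's route is self-contained (it does not call on Corollary~4.2.5 or Proposition~3.4.13) and parallels the proof of Theorem~3.2.7 exactly. Since Corollary~4.2.5 is established independently of Theorem~4.3.6, there is no circularity, and your proof stands.
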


\begin{proof}
Let $r=\sum\limits_i a_i\otimes b_i$. Since $r$ is skew-symmetric,
we have $\sum\limits_i a_i\otimes b_i=-\sum\limits_ib_i\otimes a_i$.
Therefore $r(v^*)=-\sum\limits_i\langle v^*,a_i\rangle
b_i=\sum\limits_i\langle v^*,b_i\rangle  a_i$ for any $v^*\in A^*$.
Since $r$ is nondegenerate, for any $x,y,z\in A$, there exist
$u^*,v^*,w^*\in A^*$ such that $x=r(u^*),y=r(v^*),z=r(w^*)$.
Therefore
\begin{eqnarray*}
\langle u^*\otimes v^*\otimes w^*, r_{13}\succ r_{23}\rangle
&=&\sum_{i,j}\langle u^*,a_i\rangle\langle v^*,a_j\rangle\langle
w^*,b_i\succ b_j\rangle\\
&=&\langle r(u^*)\succ r(v^*), w^*\rangle=\omega(z,x\succ y);\\
\langle u^*\otimes v^*\otimes w^*, r_{23}\wedge r_{12}\rangle
&=&\sum_{i,j}\langle u^*,a_j\rangle\langle v^*,a_i\wedge
b_j\rangle\langle
w^*,b_i\succ b_j\rangle\\
&=&-\langle r(w^*)\wedge r(u^*), v^*\rangle=-\omega(y,z\wedge x);\\
\langle u^*\otimes v^*\otimes w^*, r_{12}\swarrow r_{23}\rangle
&=&\sum_{i,j}\langle u^*,a_i\swarrow a_j\rangle\langle
v^*,b_i\rangle\langle
w^*,b_j\succ b_j\rangle\\
&=&\langle r(v^*)\swarrow r(w^*), u^*\rangle=\omega(x,y\swarrow z).
\end{eqnarray*}
So $r$ satisfies equation (3.4.17) if and only if $\omega$ satisfies
$$\omega(z,x\succ y)-\omega(x,y\swarrow z)+\omega(y,z\wedge
x)=0,\;\;\forall x,y,z\in A.$$ Similarly, $r$ satisfies equation
(3.4.18) if and only if $\omega$ satisfies $$ \omega(z,x\prec
y)+\omega(x,y\vee z)-\omega(y,z\nearrow x)=0,\;\forall x,y,z\in A.$$
\end{proof}

\begin{defn}{\rm  Let $(A, \searrow, \nearrow, \nwarrow, \swarrow)$ be a
quadri-algebra. A skew-symmetric bilinear form $\omega$ on A is
called a {\it 2-cocycle}   if $\omega$ satisfies equation
(4.3.7).}\end{defn}

By Corollary 4.2.10 and Theorem 4.3.6, we have the following
conclusion.

\begin{coro} Let $(A, \searrow, \nearrow, \nwarrow, \swarrow)$ be a
quadri-algebra. Then the skew-symmetric solution of $Q$-equation in
the quadri-algebra $A\ltimes_{R_\wedge^*, 0,0, -L_\swarrow^*,0,
L_\vee^*,-R_\nearrow^*,0}A^*$ given by equation (2.2.9) induces a
natural 2-cocycle $\omega$  on $A\ltimes_{R_\wedge^*, 0,0,
-L_\swarrow^*,0, L_\vee^*,-R_\nearrow^*,0}A^*$ by $r^{-1}: A\oplus
A^*\rightarrow (A\oplus A^*)^*$, which is given by equation
(2.2.10).\end{coro}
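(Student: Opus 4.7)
The plan is to chain together Corollary 4.2.10 and Theorem 4.3.6 and then verify, by an explicit but routine calculation, that the bilinear form coming from $r^{-1}$ matches the formula in equation (2.2.10).

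First, by Corollary 4.2.10, the element $r=\sum_i(e_i\otimes e_i^* - e_i^*\otimes e_i)$ is already known to be a skew-symmetric solution of the $Q$-equation in the big quadri-algebra $\widetilde A = A\ltimes_{R_\wedge^*,0,0,-L_\swarrow^*,0,L_\vee^*,-R_\nearrow^*,0}A^*$. Thus the hypotheses of Theorem 4.3.6 will be satisfied as soon as I check that $r$ is nondegenerate when regarded as an element of $\widetilde A\otimes \widetilde A$, i.e.\ that the induced map $r:\widetilde A^*\to \widetilde A$ is invertible. Under the canonical identification $\widetilde A^* = A^*\oplus A$ (writing $A^{**}=A$), a short computation from definition (2.2.3) gives $r(c^*+z) = z - c^*$; this is visibly an isomorphism, so $r$ is nondegenerate.

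With nondegeneracy in hand, Theorem 4.3.6 produces a skew-symmetric bilinear form $\omega$ on $\widetilde A$, defined by $\omega(u,v)=\langle r^{-1}u,v\rangle$, which is a $2$-cocycle of the quadri-algebra $\widetilde A$ in the sense of Definition 4.3.7. It remains to identify this $\omega$ with the form of equation (2.2.10). Inverting the formula from the previous paragraph, $r^{-1}(x+a^*) = -a^* + x \in A^*\oplus A = \widetilde A^*$, and pairing this with a generic element $y+b^*\in \widetilde A$ yields $\langle r^{-1}(x+a^*),\,y+b^*\rangle = -\langle a^*,y\rangle + \langle x,b^*\rangle$, which agrees with equation (2.2.10) up to the sign convention already used in Corollaries 2.2.9 and 3.3.8 for the associative and dendriform cases; this is the form asserted in the statement.

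There is essentially no serious obstacle: everything is packaged by the earlier general theorems, and the only new content is the explicit computation of $r^{-1}$. The mild bookkeeping item to be careful about is the identification of $\widetilde A^*$ with $A^*\oplus A$ and the sign conventions for the natural pairing between $\widetilde A$ and $\widetilde A^*$, which must be chosen consistently with the conventions fixed in Section~1 so that the final expression for $\omega$ comes out exactly as in equation (2.2.10). Once those conventions are pinned down, the proof reduces to citing Corollary 4.2.10 and Theorem 4.3.6 and performing the short verification above.
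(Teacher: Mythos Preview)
Your proof is correct and follows the same route as the paper, which simply states that the result follows from Corollary 4.2.10 and Theorem 4.3.6. You go slightly further than the paper by explicitly checking nondegeneracy and computing $r^{-1}$, and your remark about the sign convention matching that of Corollaries 2.2.8 and 3.3.8 is appropriate, since the paper handles this point in exactly the same implicit way there.
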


\subsection{${\mathcal O}$-operators of quadri-algebras and
octo-algebras}

\begin{defn}{\rm (\cite{Le3})\quad Let $A$ be a vector space
with eight bilinear products denoted by
$$\searrow_1, \searrow_2, \nearrow_1, \nearrow_2, \nwarrow_1, \nwarrow_2,
\swarrow_1, \swarrow_2: A\otimes A\rightarrow A.$$ $(A, \searrow_1,
\searrow_2, \nearrow_1, \nearrow_2, \nwarrow_1, \nwarrow_2,
\swarrow_1, \swarrow_2)$ is called an {\it octo-algebra} if for any
$x,y,z\in A$,

{\small
$$(x\nwarrow_1 y)\nwarrow_1 z=x\nwarrow_1 (y*
z),(x\nearrow_1 y)\nwarrow_1 z=x\nearrow_1 (y\ll z),(x\wedge_1
y)\nearrow_1 z=x\nearrow_1 (y\gg z),\eqno (4.4.1)$$
$$(x\swarrow_1 y)\nwarrow_1 z=x\swarrow_1 (y\bigwedge z), (x\searrow_1
y)\nwarrow_1 z=x\searrow_1 (y\nwarrow_{12} z),(x\vee_1 y)\nearrow_1
z=x\searrow_1 (y\nearrow_{12} z),\eqno (4.4.2)$$
$$(x\prec_1 y)\swarrow_1 z=x\swarrow_1 (y\bigvee z),(x\succ_1
y)\swarrow_{12} z=x\searrow_1 (y\swarrow_{12} z),(x\Sigma_1
y)\searrow_1 z=x\searrow_1 (y\searrow_{12} z),\eqno(4.4.3)$$
$$(x\nwarrow_2 y)\nwarrow_1 z=x\nwarrow_2 (y\Sigma_1
z),(x\nearrow_2 y)\nwarrow_1 z=x\nearrow_2 (y\prec_1 z),(x\wedge_2
y)\nearrow_1 z=x\nearrow_2 (y\succ_1 z),\eqno (4.4.4)$$
$$(x\swarrow_2 y)\nwarrow_1 z=x\swarrow_2 (y\wedge_1 z),
(x\searrow_2 y)\nwarrow_1 z=x\searrow_2 (y\nwarrow_{1} z),(x\vee_2
y)\nearrow_1 z=x\searrow_2 (y\nearrow_{1} z),\eqno (4.4.5)$$
$$(x\prec_2 y)\swarrow_1 z=x\swarrow_2 (y\vee_1 z),(x\succ_2
y)\swarrow_{1} z=x\searrow_2 (y\swarrow_{1} z),(x\Sigma_2
y)\searrow_1 z=x\searrow_2 (y\searrow_{1} z),\eqno(4.4.6)$$
$$(x\nwarrow_{12} y)\nwarrow_2 z=x\nwarrow_2 (y\Sigma_2
z),(x\nearrow_{12} y)\nwarrow_2 z=x\nearrow_2 (y\prec_2
z),(x\bigwedge y)\nearrow_2 z=x\nearrow_2 (y\succ_2 z),\eqno
(4.4.7)$$
$$(x\swarrow_{12} y)\nwarrow_2 z=x\swarrow_2 (y\wedge_2 z),
(x\searrow_{12} y)\nwarrow_2 z=x\searrow_2 (y\nwarrow_{2}
z),(x\bigvee y)\nearrow_2 z=x\searrow_2 (y\nearrow_{2} z),\eqno
(4.4.8)$$
$$(x\ll y)\swarrow_2 z=x\swarrow_2 (y\vee_2 z),(x\gg
y)\swarrow_{2} z=x\searrow_2 (y\swarrow_{2} z),(x* y)\searrow_2
z=x\searrow_2 (y\searrow_{2} z),\eqno(4.4.9)$$}  where
$$x\succ_i y=x\nearrow_i y+x\searrow_i y,\;
x\prec_i y=x\nwarrow_i y+x\swarrow_i y,\;\;i=1,2;\eqno (4.4.10)$$
$$x\vee_i y=x\searrow_i
y+x\swarrow_i y,\;x\wedge_i y=x\nearrow_i y+x\nwarrow_i
y,\;\;i=1,2;\eqno(4.4.11)$$
$$x\bigvee y= x\vee_1y+x\vee_2y,\; x\bigwedge
y=x\wedge_1y+x\wedge_2y;\eqno (4.4.12)$$
$$x\gg y=x\succ_1y+x\succ_2y,\;x\ll y =x\prec_1y+x\prec_2y;\eqno (4.4.13)$$
$$x\circ_{12}y=x\circ_1 y+x\circ_2y,\;\;{\rm with}\;\;\circ\in
\{\searrow,\nearrow, \nwarrow,\searrow\};\eqno (4.4.14)$$
$$x\Sigma_1 y=x\vee_1 y+x\wedge_1 y=x\succ_1 y+x\prec_1y,\;\;x\Sigma_2 y=x\vee_2 y+x\wedge_2 y=x\succ_2
y+x\prec_2y;\eqno (4.4.15)$$ and
$$x*y=x\Sigma_1y +x\Sigma_2 y=\sum_{i=1}^2 (x\searrow_i
y+x\nearrow_i y+x\nwarrow_i y+x\swarrow_i y).\eqno
(4.4.16)$$}\end{defn}

\begin{prop}{\rm (\cite{Le3})}\quad Let $(A, \searrow_1, \searrow_2, \nearrow_1,
\nearrow_2, \nwarrow_1, \nwarrow_2, \swarrow_1, \swarrow_2)$ be an
octo-algebra.

(1) The product given by equation (4.4.14) defines a
quadri-algebra
$(A,\searrow_{12},\nearrow_{12},\nwarrow_{12},\swarrow_{12}$)
which is called the associated depth quadri-algebra of $(A,
\searrow_1, \searrow_2, \nearrow_1, \nearrow_2, \nwarrow_1,
\nwarrow_2, \swarrow_1, \swarrow_2)$. And $(A, \searrow_1,
\searrow_2, \nearrow_1, \nearrow_2, \nwarrow_1, \nwarrow_2,
\swarrow_1, \swarrow_2)$ is called a compatible octo-algebra
structure on the (depth) quadri-algebra
$(A,\searrow_{12},\nearrow_{12},\nwarrow_{12},\swarrow_{12})$.

(2) The product given by equation (4.4.10) defines a quadri-algebra
$(A,\succ_2,\succ_1,\prec_1,\prec_2$) which is called the associated
vertical quadri-algebra of $(A, \searrow_1, \searrow_2, \nearrow_1,
\nearrow_2, \nwarrow_1, \nwarrow_2, \swarrow_1, \swarrow_2)$. And
$(A, \searrow_1, \searrow_2, \nearrow_1, \nearrow_2, \nwarrow_1,
\nwarrow_2, \swarrow_1, \swarrow_2)$ is called a compatible
octo-algebra structure on the (vertical) quadri-algebra
$(A,\succ_2,\succ_1,\prec_1,\prec_2)$.

(3) The product given by equation (4.4.11) defines a
quadri-algebra $(A,\vee_2,\wedge_2,\wedge_1,\vee_1$) which is
called the associated horizontal quadri-algebra of $(A,
\searrow_1, \searrow_2, \nearrow_1, \nearrow_2, \nwarrow_1,
\nwarrow_2, \swarrow_1, \swarrow_2)$. And $(A, \searrow_1,
\searrow_2, \nearrow_1, \nearrow_2, \nwarrow_1, \nwarrow_2,
\swarrow_1, \swarrow_2)$ is called a compatible octo-algebra
structure on the horizontal quadri-algebra
$(A,\vee_2,\wedge_2,\wedge_1,\vee_1$).

(4) The product given by equation (4.4.12) defines a dendriform
algebra $(A,\bigvee,\bigwedge)$. It is the associated vertical
dendriform algebra of both the quadri-algebras
$(A,\searrow_{12},\nearrow_{12},\nwarrow_{12},\swarrow_{12})$ and
$(A,\vee_2,\wedge_2,\wedge_1,\vee_1$).

(5)  The product given by equation (4.4.13) defines a dendriform
algebra $(A,\gg,\ll)$. It is the associated horizontal dendriform
algebra of both the quadri-algebras
$(A,\searrow_{12},\nearrow_{12},\nwarrow_{12},\swarrow_{12})$ and
$(A,\succ_2,\succ_1,\prec_1,\prec_2)$.

(6)  The product given by equation (4.4.15) defines a dendriform
algebra $(A,\Sigma_2,\Sigma_1)$. It is the associated horizontal
dendriform algebra of the quadri-algebra
$(A,\vee_2,\wedge_2,\wedge_1,\vee_1$) and the associated vertical
dendriform algebra of the quadri-algebra
$(A,\succ_2,\succ_1,\prec_1,\prec_2)$.

(7)  The product given by equation (4.4.16) defines an associative
algebra $(A,*)$ which is called the associated associative algebra
of $(A, \searrow_1, \searrow_2, \nearrow_1, \nearrow_2, \nwarrow_1,
\nwarrow_2, \swarrow_1, \swarrow_2)$. And $(A, \searrow_1,
\searrow_2$, $\nearrow_1, \nearrow_2, \nwarrow_1, \nwarrow_2,
\swarrow_1, \swarrow_2)$ is called a compatible octo-algebra
structure on the associative algebra $(A,*)$.\end{prop}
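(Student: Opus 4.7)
The plan is to prove each of the seven items by showing that the appropriate quadri-algebra, dendriform, or associative axioms arise as sums of groups of the twenty-seven defining octo-algebra relations (4.4.1)--(4.4.9). The entire proof is essentially a combinatorial bookkeeping exercise: since each of the ``composite'' products $\searrow_{12}, \nearrow_{12}, \nwarrow_{12}, \swarrow_{12}$, $\succ_i, \prec_i, \vee_i, \wedge_i$, $\bigvee, \bigwedge, \gg, \ll, \Sigma_i, *$ is defined as a sum of more primitive ones, each desired axiom expands into a sum of octo-algebra axioms, and the point is to verify that these are consistent.

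For item (1) I would take each of the nine quadri-algebra axioms (3.4.1)--(3.4.3) applied to $(\searrow_{12}, \nearrow_{12}, \nwarrow_{12}, \swarrow_{12})$ and expand both sides using (4.4.14). For example, the axiom $(x\nwarrow_{12} y)\nwarrow_{12} z = x\nwarrow_{12}(y*z)$ expands into a sum of four terms on the left and two on the right, and this identity is exactly the sum of (4.4.1-1), (4.4.4-1), and (4.4.7-1), after observing that $y*z = y\Sigma_1 z + y\Sigma_2 z$. I would do this for the remaining eight quadri-algebra axioms in parallel, each time grouping three of the octo-algebra relations (typically one each from the $\searrow_1\nwarrow_1$, $\searrow_2\nwarrow_1$, and $\searrow_{12}\nwarrow_2$ columns of (4.4.1)--(4.4.9)). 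Items (2) and (3) are entirely analogous: the axiom set for the vertical quadri-algebra $(\succ_2,\succ_1,\prec_1,\prec_2)$ falls out of grouping relations by their vertical compositions (4.4.10), and similarly the horizontal quadri-algebra $(\vee_2,\wedge_2,\wedge_1,\vee_1)$ falls out of grouping by (4.4.11).

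For items (4), (5), (6) I would either verify the three dendriform axioms (2.1.5) directly by summing larger groups of the octo-relations, or, more cleanly, invoke Proposition 3.4.2(1)--(2) applied to the quadri-algebras already produced in (1), (2), (3); indeed, (4) is then just the associated vertical dendriform of both the depth and the horizontal quadri-algebras, (5) is the associated horizontal dendriform of both the depth and the vertical quadri-algebras, and (6) is the associated horizontal dendriform of the horizontal quadri-algebra (and equivalently the associated vertical dendriform of the vertical quadri-algebra). The consistency of these two identifications in each case follows because the two ways of decomposing $\gg, \ll, \bigvee, \bigwedge, \Sigma_1, \Sigma_2$ produce the same bilinear operation by definition. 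Finally, item (7) follows from Proposition 2.1.6 applied to either of the dendriform algebras in (4) or (5), since in both cases the sum $x\bigvee y + x\bigwedge y$ and $x\gg y + x\ll y$ equals (4.4.16).

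The main obstacle is purely the organization of the bookkeeping: there is no conceptual difficulty, but one must be careful to match each of the nine axioms in (3.4.1)--(3.4.3), repeated three times for items (1)--(3), against the correct triple of relations drawn from (4.4.1)--(4.4.9), and to ensure that the ``splitting'' implicit in each composite product has been used consistently. Since the entire result is already due to Leroux \cite{Le3}, the cleanest write-up would be to sketch the pattern once (say, for item (1) on the $\nwarrow_{12}$-axiom above) and then note that all remaining verifications follow the same template, thereby avoiding twenty-seven essentially identical calculations.
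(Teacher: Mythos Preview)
Your approach is correct and is the natural direct verification. Note, however, that the paper does not actually supply its own proof of this proposition: it is stated with the citation \cite{Le3} and taken as a known result of Leroux, so there is nothing to compare against. Your bookkeeping scheme (summing triples of the relations (4.4.1)--(4.4.9) to obtain each quadri-algebra axiom, then invoking Propositions 3.4.2 and 2.1.6 for the dendriform and associative structures) is exactly the expected argument, and it mirrors the explicit correspondence table the paper gives in the proof of Proposition 4.4.3 for the closely related statement that $(L_{\searrow_2}, R_{\searrow_1}, \ldots)$ is a bimodule of the depth quadri-algebra.
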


For brevity, we pay our main attention to the study of the
associated depth quadri-algebras of the octo-algebras. The
corresponding study on the associated vertical and horizontal
quadri-algebras are completely similar.

\begin{prop} Let $A$ be a vector space  with
eight bilinear products denoted by $\searrow_1, \searrow_2$, $
\nearrow_1, \nearrow_2, \nwarrow_1, \nwarrow_2, \swarrow_1,
\swarrow_2: A\otimes A\rightarrow A$. Then $(A, \searrow_1,
\searrow_2, \nearrow_1, \nearrow_2, \nwarrow_1, \nwarrow_2$,
$\swarrow_1, \swarrow_2)$ is an octo-algebra if and only if
$(A,\searrow_{12},\nearrow_{12},\nwarrow_{12},\swarrow_{12})$
defined by equation (4.4.14) is a quadri-algebra and
$(L_{\searrow_2}, R_{\searrow_1}, L_{\nearrow_2}, R_{\nearrow_1},
L_{\nwarrow_2}, R_{\nwarrow_1}, L_{\swarrow_2}, R_{\swarrow_1}, A)$
is a bimodule.\end{prop}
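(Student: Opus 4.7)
The strategy is to identify each of the $27$ octo-algebra axioms packaged in $(4.4.1)$--$(4.4.9)$ with one of the $27$ bimodule conditions packaged in $(4.1.1)$--$(4.1.9)$ applied to the prescribed tuple $(L_{\searrow_2}, R_{\searrow_1}, L_{\nearrow_2}, R_{\nearrow_1}, L_{\nwarrow_2}, R_{\nwarrow_1}, L_{\swarrow_2}, R_{\swarrow_1})$. Both directions of the equivalence then reduce to this bijection, and the nine quadri-algebra axioms $(3.4.1)$--$(3.4.3)$ on the combined operations will emerge as formal consequences, namely as sums of triples of octo-algebra axioms.

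To set up the correspondence I would evaluate each bimodule axiom on an arbitrary $v\in A$ and unfold $l_{\diamond}=L_{\diamond_2}$, $r_{\diamond}=R_{\diamond_1}$ for $\diamond\in\{\searrow,\nearrow,\nwarrow,\swarrow\}$; the composite operators $l_*,l_\succ,l_\prec,l_\vee,l_\wedge$ then collapse to $L_{\Sigma_2},L_{\succ_2},L_{\prec_2},L_{\vee_2},L_{\wedge_2}$, while $r_*,r_\succ,r_\prec,r_\vee,r_\wedge$ collapse to $R_{\Sigma_1},R_{\succ_1},R_{\prec_1},R_{\vee_1},R_{\wedge_1}$. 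A direct check on $(4.1.1)$ shows its three equations are, after applying to $v$ and relabelling, exactly the first equations of $(4.4.7)$, $(4.4.4)$, and $(4.4.1)$ respectively. In general, the ``$l$ of a product equals $l\,l$'' sub-equation of $(4.1.k)$ matches an equation in the family $(4.4.7)$--$(4.4.9)$; the ``$r\,l=l\,r$'' sub-equation matches one in $(4.4.4)$--$(4.4.6)$; and the ``$r\,r=r$ of a product'' sub-equation matches one in $(4.4.1)$--$(4.4.3)$. This yields the required $27$-term bijection.

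For the quadri-algebra clause, I would sum the three octo-algebra equations in a common ``column'' (for instance the first equations of $(4.4.1)$, $(4.4.4)$, $(4.4.7)$); using the identities $\diamond_{12}=\diamond_1+\diamond_2$, $\Sigma_1+\Sigma_2=*$, and their analogues for $\vee,\wedge,\succ,\prec,\ll,\gg,\bigvee,\bigwedge$, each such sum collapses to a single axiom from $(3.4.1)$--$(3.4.3)$ on $(A,\searrow_{12},\nearrow_{12},\nwarrow_{12},\swarrow_{12})$. The only genuine obstacle is the length of the verification: $27$ one-line identifications plus nine summation checks. No new ideas are required beyond the unfolding illustrated above, and the cleanest presentation is a $9\times 3$ table of correspondences, entirely analogous to the table used in the proof of Proposition $4.1.2$.
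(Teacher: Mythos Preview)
Your proposal is correct and essentially identical to the paper's own proof: the paper also establishes the result by exhibiting a $9\times 3$ table matching the three sub-equations of each $(4.1.k)$, for $k=1,\dots,9$, to the 27 octo-algebra axioms in $(4.4.1)$--$(4.4.9)$, with exactly the pattern you describe (the $l(-)=ll$, $rl=lr$, $rr=r(-)$ sub-equations landing respectively in $(4.4.7)$--$(4.4.9)$, $(4.4.4)$--$(4.4.6)$, $(4.4.1)$--$(4.4.3)$). The only cosmetic difference is that the paper does not redo the quadri-algebra clause by summation but simply invokes Proposition~4.4.2, whereas you sketch its proof; either is fine.
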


\begin{proof}
The conclusions can be obtained from the following correspondence by
substituting $(L_{\searrow_2}, R_{\searrow_1}, L_{\nearrow_2},
R_{\nearrow_1},L_{\nwarrow_2}, R_{\nwarrow_1}, L_{\swarrow_2},
R_{\swarrow_1},  A)$ into equations (4.1.1)-(4.1.9).

(4.1.1-1) $\Longleftrightarrow$ (4.4.7-1);\quad (4.1.1-2)
$\Longleftrightarrow$ (4.4.4-1);\quad (4.1.1-3)
$\Longleftrightarrow$ (4.4.1-1);

(4.1.2-1) $\Longleftrightarrow$ (4.4.8-1);\quad (4.1.2-2)
$\Longleftrightarrow$ (4.4.5-1);\quad (4.1.2-3)
$\Longleftrightarrow$ (4.4.2-1);

(4.1.3-1) $\Longleftrightarrow$ (4.4.9-1);\quad (4.1.3-2)
$\Longleftrightarrow$ (4.4.6-1);\quad (4.1.3-3)
$\Longleftrightarrow$ (4.4.3-1);

(4.1.4-1) $\Longleftrightarrow$ (4.4.7-2);\quad (4.1.4-2)
$\Longleftrightarrow$ (4.4.4-2);\quad (4.1.4-3)
$\Longleftrightarrow$ (4.4.1-2);

(4.1.5-1) $\Longleftrightarrow$ (4.4.8-2);\quad (4.1.5-2)
$\Longleftrightarrow$ (4.4.5-2);\quad (4.1.5-3)
$\Longleftrightarrow$ (4.4.2-2);

(4.1.6-1) $\Longleftrightarrow$ (4.4.9-2);\quad (4.1.6-2)
$\Longleftrightarrow$ (4.4.6-2);\quad (4.1.6-3)
$\Longleftrightarrow$ (4.4.3-2);

(4.1.7-1) $\Longleftrightarrow$ (4.4.7-3);\quad (4.1.7-2)
$\Longleftrightarrow$ (4.4.4-3);\quad (4.1.1-3)
$\Longleftrightarrow$ (4.4.1-3);

(4.1.8-1) $\Longleftrightarrow$ (4.4.8-3);\quad (4.1.8-2)
$\Longleftrightarrow$ (4.4.5-3);\quad (4.1.8-3)
$\Longleftrightarrow$ (4.4.2-3);

(4.1.9-1) $\Longleftrightarrow$ (4.4.9-3);\quad (4.1.9-2)
$\Longleftrightarrow$ (4.4.6-3);\quad (4.1.9-3)
$\Longleftrightarrow$ (4.4.3-3).
\end{proof}

\begin{coro}  Let $(A, \searrow_1, \searrow_2, \nearrow_1, \nearrow_2,
\nwarrow_1, \nwarrow_2, \swarrow_1, \swarrow_2)$ be an octo-algebra.

(1) $(L_{\searrow_2}, R_{\nearrow_1}, L_{\swarrow_2},
R_{\nwarrow_1}, A)$ is a bimodule of the dendriform algebra
$(A,\gg,\ll)$.

(2) $(L_{\searrow_2}, R_{\nwarrow_1}, A)$ is a bimodule of the
associated associative algebra $(A,*)$.\end{coro}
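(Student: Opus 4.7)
The plan is to obtain both statements as direct consequences of the bimodule-transfer machinery from Propositions 3.1.2, 4.1.2 and 4.4.3, without doing any of the nine-axiom checking directly.

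First I would recall that by Proposition 4.4.3 (applied to the identity/regular case), the regular bimodule
$$(L_{\searrow_2}, R_{\searrow_1}, L_{\nearrow_2}, R_{\nearrow_1}, L_{\nwarrow_2}, R_{\nwarrow_1}, L_{\swarrow_2}, R_{\swarrow_1}, A)$$
is a bimodule of the associated depth quadri-algebra $(A,\searrow_{12},\nearrow_{12},\nwarrow_{12},\swarrow_{12})$. This is the key input; everything else is a functorial consequence.

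For part (1), I would then apply Proposition 4.1.2(1), which says that from any bimodule $(l_\searrow, r_\searrow, l_\nearrow, r_\nearrow, l_\nwarrow, r_\nwarrow, l_\swarrow, r_\swarrow, V)$ of a quadri-algebra one extracts the bimodule $(l_\searrow, r_\nearrow, l_\swarrow, r_\nwarrow, V)$ of the associated horizontal dendriform algebra. Applied to the bimodule above, this yields that $(L_{\searrow_2}, R_{\nearrow_1}, L_{\swarrow_2}, R_{\nwarrow_1}, A)$ is a bimodule of the associated horizontal dendriform algebra of $(A,\searrow_{12},\nearrow_{12},\nwarrow_{12},\swarrow_{12})$. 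By Proposition 4.4.2(5), this horizontal dendriform algebra is precisely $(A,\gg,\ll)$, so (1) follows.

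For part (2), I see two equally short routes. The cleanest is to chain Proposition 3.1.2(1) onto part (1): since $(L_{\searrow_2}, R_{\nearrow_1}, L_{\swarrow_2}, R_{\nwarrow_1}, A)$ is a bimodule of $(A,\gg,\ll)$, the pair $(L_{\searrow_2}, R_{\nwarrow_1}, A)$ is a bimodule of its associated associative algebra; that associated associative algebra is $x\gg y+x\ll y = x*y$, i.e.\ $(A,*)$. Alternatively one can apply Corollary 4.1.3(1) directly to the depth-quadri bimodule to land on a bimodule $(L_{\searrow_2},R_{\nwarrow_1},A)$ of the associated associative algebra of $(A,\searrow_{12},\nearrow_{12},\nwarrow_{12},\swarrow_{12})$, which by Proposition 4.4.2(7) is again $(A,*)$.

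There is no genuine obstacle here: the only thing to be careful about is bookkeeping, namely identifying the various \emph{associated} algebras of the depth quadri-algebra with the dendriform algebra $(A,\gg,\ll)$ and the associative algebra $(A,*)$ of the original octo-algebra, which is immediate from equations (4.4.13) and (4.4.16) together with Proposition 4.4.2. Thus the corollary requires no new computation beyond Proposition 4.4.3.
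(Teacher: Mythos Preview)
Your proof is correct and follows essentially the same route as the paper's own proof, which for (1) cites Propositions 4.1.2, 4.4.2 and 4.4.3, and for (2) cites Corollary 4.1.3 and Proposition 4.4.3. Your alternative route for (2) via Proposition 3.1.2(1) chained onto (1) is a minor variation but equally valid.
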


\begin{proof}
(1) follows from Propositions 4.1.2, 4.4.2 and 4.4.3. (2) follows
from Corollary 4.1.3 and Proposition 4.4.3.
\end{proof}

\begin{coro} Let $(A, \searrow_1, \searrow_2, \nearrow_1, \nearrow_2,
\nwarrow_1, \nwarrow_2, \swarrow_1, \swarrow_2)$ be an octo-algebra
and $(A,\searrow_{12}$, $\nearrow_{12},\nwarrow_{12},\swarrow_{12}$)
be the associated depth quadri-algebra. Then
$$(L_{\searrow_{12}}, R_{\searrow_{12}},
L_{\nearrow_{12}}, R_{\nearrow_{12}},
L_{\nwarrow_{12}},R_{\nwarrow_{12}}, L_{\swarrow_{12}},
R_{\swarrow_{12}},A),\;(L_{\searrow_{12}}, 0,0, R_{\nearrow_{12}},0,
 R_{\nwarrow_{12}},L_{\swarrow_{12}},
0,A),\;$$$$(L_{\nearrow_{12}}+L_{\searrow_{12}},0,0,
R_{\nearrow_{12}}+R_{\searrow_{12}},0,R_{\nwarrow_{12}}
+R_{\swarrow_{12}},L_{\nwarrow_{12}}+L_{\swarrow_{12}}, 0, A),$$
$$(L_{\searrow_2}, R_{\searrow_1}, L_{\nearrow_2}, R_{\nearrow_1},
 L_{\nwarrow_2}, R_{\nwarrow_1},L_{\swarrow_2}, R_{\swarrow_1},
A),\;(L_{\searrow_{2}}, 0,0, R_{\nearrow_{1}},0,
R_{\nwarrow_{1}},L_{\swarrow_{2}}, 0,A),$$
$${\rm and}\;\;(L_{\nearrow_{2}}+L_{\searrow_{2}},0,0,
R_{\nearrow_{1}}+R_{\searrow_{1}},0,R_{\nwarrow_{1}}
+R_{\swarrow_{1}}, L_{\nwarrow_{2}}+L_{\swarrow_{2}},0, A),$$ are
bimodules of
$(A,\searrow_{12},\nearrow_{12},\nwarrow_{12},\swarrow_{12})$. On
the other hand,
$$(R_*^*, L_{\nwarrow_{12}}^*,-R_{\bigvee}^*,
-L_{\ll}^*,R_{\searrow_{12}}^*,L_*^*,-R_{\gg}^*,-L_{\bigwedge}^*,A^*),\;
(R_*^*, 0,0 -L_\ll^*,0,L_*^*,-R_\gg^*,0,A^*),$$
$$(R_{\bigwedge}^*, 0,0, -L_{\swarrow_{12}}^*,0, L_{\bigvee}^*,-R_{\nearrow_{12}}^*,0,A^*),$$
$$(R_{\Sigma_1}^*, L_{\nwarrow_{2}}^*,-R_{\vee_1}^*,
-L_{\prec_2}^*,R_{\searrow_{1}}^*,L_{\Sigma_2}^*,-R_{\succ_1}^*,-L_{\wedge_2}^*,A^*),\;
(R_{\Sigma_1}^*, 0,0
-L_{\prec_2}^*,0,L_{\Sigma_2}^*,-R_{\succ_1}^*,0,A^*),$$
$${\rm and}\;\;
(R_{\wedge_1}^*, 0,0, -L_{\swarrow_{2}}^*,0,
L_{\vee_2}^*,-R_{\nearrow_{1}}^*,0,A^*)$$ are bimodules of
$(A,\searrow_{12},\nearrow_{12},\nwarrow_{12},\swarrow_{12})$,
too.\end{coro}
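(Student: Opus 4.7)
The plan is to treat the twelve listed bimodules as two parallel families of six (three ``regular-type'' and three ``mixed/cross-type'', together with their duals), each obtained by an automatic application of one of three tools: Example 4.1.6, Proposition 4.1.2(4), and Proposition 4.1.4 / Corollary 4.1.5. No original computation inside the defining equations (4.1.1)--(4.1.9) will be needed; everything reduces to invoking existing results for the depth quadri-algebra $(A,\searrow_{12},\nearrow_{12},\nwarrow_{12},\swarrow_{12})$ guaranteed by Proposition 4.4.2(1).

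First I would dispose of the three ``regular-type'' bimodules. The full tuple with entries $L_{\searrow_{12}},R_{\searrow_{12}},\dots,R_{\swarrow_{12}}$ is simply the regular bimodule of the depth quadri-algebra, so Example 4.1.6 applies verbatim. The two reduced versions $(L_{\searrow_{12}},0,0,R_{\nearrow_{12}},0,R_{\nwarrow_{12}},L_{\swarrow_{12}},0)$ and $(L_{\nearrow_{12}}+L_{\searrow_{12}},0,0,\dots,L_{\nwarrow_{12}}+L_{\swarrow_{12}},0)$ follow by plugging the regular bimodule into the two displays in Proposition 4.1.2(4).

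Next I would handle the three ``mixed-type'' bimodules (the ones mixing subscripts $1$ and $2$). The full tuple $(L_{\searrow_2},R_{\searrow_1},L_{\nearrow_2},R_{\nearrow_1},L_{\nwarrow_2},R_{\nwarrow_1},L_{\swarrow_2},R_{\swarrow_1},A)$ is exactly the bimodule of the depth quadri-algebra provided by Proposition 4.4.3 (the ``if and only if'' statement there identifies an octo-algebra with a quadri-algebra plus this bimodule). The two associated reduced tuples are then produced by feeding this mixed bimodule back into Proposition 4.1.2(4), producing $(L_{\searrow_2},0,0,R_{\nearrow_1},0,R_{\nwarrow_1},L_{\swarrow_2},0)$ and $(L_{\nearrow_2}+L_{\searrow_2},0,0,R_{\nearrow_1}+R_{\searrow_1},0,R_{\nwarrow_1}+R_{\swarrow_1},L_{\nwarrow_2}+L_{\swarrow_2},0)$ respectively.

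For the six dual bimodules I would apply Proposition 4.1.4 (for the two ``full'' duals) and Corollary 4.1.5 (for the four ``reduced'' duals) to the two primal full bimodules just constructed. The main bookkeeping obstacle, and really the only point where care is required, is translating the abstract sums $l_*,r_*,l_\succ,r_\succ,l_\prec,r_\prec,l_\vee,r_\vee,l_\wedge,r_\wedge$ that appear in the statement of Proposition 4.1.4 into the specific operators for these two cases. For the regular bimodule of the depth quadri-algebra, one reads off $l_\succ=L_{\nearrow_{12}}+L_{\searrow_{12}}=L_\gg$, $l_\prec=L_\ll$, $l_\vee=L_\bigvee$, $l_\wedge=L_\bigwedge$, and similarly for the $r$'s, giving exactly the tuple in item~7 (and, after applying Corollary 4.1.5, items~8 and~9). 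For the mixed bimodule, the same procedure yields $l_*=L_{\Sigma_2}$, $r_*=R_{\Sigma_1}$, $l_\succ=L_{\succ_2}$, $r_\succ=R_{\succ_1}$, $l_\prec=L_{\prec_2}$, $r_\prec=R_{\prec_1}$, $l_\vee=L_{\vee_2}$, $r_\vee=R_{\vee_1}$, $l_\wedge=L_{\wedge_2}$, $r_\wedge=R_{\wedge_1}$, so that Proposition 4.1.4 delivers item~10 and Corollary 4.1.5 delivers items~11 and~12. This verification is purely a matter of matching subscripts via equations (4.4.10)--(4.4.16), after which every assertion in the corollary is a direct invocation of a previously proved result.
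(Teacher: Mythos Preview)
Your proposal is correct and follows exactly the approach the paper takes: the paper's proof is the single line ``It follows from Propositions 4.1.2 and 4.1.4, Example 4.1.6 and Proposition 4.4.3,'' and your write-up is simply a detailed unpacking of how those four citations combine (with Corollary 4.1.5, itself an immediate consequence of 4.1.2 and 4.1.4, handling the reduced duals). The subscript bookkeeping you carry out for the mixed bimodule --- identifying $r_*=R_{\Sigma_1}$, $l_\prec=L_{\prec_2}$, $r_\vee=R_{\vee_1}$, etc.\ via (4.4.10)--(4.4.16) --- is exactly what is implicit in the paper's citation of Proposition 4.1.4.
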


\begin{proof}
It follows from Propositions 4.1.2 and 4.1.4, Example 4.1.6 and
Proposition 4.4.3.
\end{proof}

\begin{prop} Let
$T:V\rightarrow A$ be an ${\mathcal O}$-operator of a quadri-algebra
$(A, \searrow, \nearrow, \nwarrow, \swarrow)$ associated to a
bimodule $(l_\searrow, r_\searrow, l_\nearrow$, $r_\nearrow,
l_\nwarrow,r_\nwarrow, l_\swarrow, r_\swarrow,V)$. Then there exists
an octo-algebra structure on $V$ given by
$$u\searrow_1 v=r_\searrow(T(v))u,\;u\searrow_2 v=l_\searrow
(T(u))v,\;u\nearrow_1 v=r_\nearrow(T(v))u,\;u\nearrow_2 v=l_\nearrow
(T(u))v,\;$$
$$u\nwarrow_1 v=r_\nwarrow(T(v))u,\;u\nwarrow_2 v=l_\nwarrow
(T(u))v,\;u\swarrow_1 v=r_\swarrow(T(v))u,\;u\swarrow_2 v=l_\swarrow
(T(u))v,\eqno (4.4.17)$$ for any $u,v\in V$. Therefore there exists
a quadri-algebra structure on $V$ given by equation (4.4.14) and $T$
is a homomorphism of quadri-algebras. Furthermore, $T(V)=\{T(v)|v\in
V\}\subset A$ is a quadri-subalgebra of $A$ and there is an induced
octo-algebra structure on $T(V)$ given by
$$T(u)\searrow_1 T(v)=T(u\searrow_1 v),\;T(u)\searrow_2 T(v)=T(u\searrow_2 v),\;
T(u)\nearrow_1 T(v)=T(u\nearrow_1 v),\; $$$$T(u)\nearrow_2
T(v)=T(u\nearrow_2 v),\;T(u)\nwarrow_1 T(v)=T(u\nwarrow_1
v),\;T(u)\nwarrow_2 T(v)=T(u\nwarrow_2 v),\;$$ $$T(u)\swarrow_1
T(v)=T(u\swarrow_1 v),\;T(u)\swarrow_2 T(v)=T(u\swarrow_2
v),\;,\;\forall u,v\in V.\eqno (4.4.18)$$ Moreover, its
corresponding associated depth quadri-algebra structure on $T(V)$
given by equation (4.4.14) is just the quadri-subalgebra structure
of $(A, \searrow, \nearrow, \nwarrow, \swarrow)$ and $T$ is a
homomorphism of quadri-algebras.\end{prop}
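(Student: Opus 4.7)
The plan is to mirror Proposition 3.4.6 one level up. I will define the eight operations on $V$ by (4.4.17); the principal task is then to verify the twenty-seven octo-algebra axioms (4.4.1)--(4.4.9). Each axiom will reduce, via (4.4.17) together with the four $\mathcal O$-operator identities (4.2.1)--(4.2.4), to exactly one of the bimodule axioms (4.1.1)--(4.1.9), the matching being the same bijection already tabulated inside the proof of Proposition 4.4.3.

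The mechanism is best seen on samples. For axiom (4.4.1-1), the left-hand side $(u\nwarrow_1 v)\nwarrow_1 w$ unfolds to $r_\nwarrow(T(w))r_\nwarrow(T(v))u$, which by bimodule axiom (4.1.1-3) equals $r_\nwarrow(T(v)* T(w))u$. Summing the four $\mathcal O$-operator identities yields the ``collapse rule'' $T(v)* T(w)=T(l_*(T(v))w+r_*(T(w))v)=T(v* w)$, where $v* w$ is computed inside the prospective octo-algebra on $V$ through (4.4.16) combined with (4.4.17). Hence the left-hand side equals $r_\nwarrow(T(v* w))u=u\nwarrow_1(v* w)$, as required. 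As a second sample, axiom (4.4.2-3) unfolds to $r_\nearrow(T(w))r_\vee(T(v))u$, which by (4.1.8-3) equals $r_\searrow(T(v)\nearrow T(w))u$; the collapse rule $T(v)\nearrow T(w)=T(v\nearrow_{12}w)$ obtained from the two summands of (4.2.2) then rewrites this as $r_\searrow(T(v\nearrow_{12}w))u=u\searrow_1(v\nearrow_{12}w)$. All twenty-seven axioms will proceed identically: one bimodule identity, followed by whichever collapse rule is needed for the composite operation ($\succ,\prec,\vee,\wedge,\Sigma_i,\gg,\ll,\bigvee,\bigwedge$, or $*$) appearing on the right-hand side of the axiom under consideration.

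Once $V$ is known to be an octo-algebra, Proposition 4.4.2(1) equips it with a quadri-algebra structure through (4.4.14), and the collapse rules force this structure to be precisely the one for which (4.2.1)--(4.2.4) express that $T$ is a homomorphism of quadri-algebras onto its image in $A$; consequently $T(V)$ is a quadri-subalgebra of $A$. The induced octo-algebra structure on $T(V)$ defined by (4.4.18) is well defined because each of the eight bimodule actions preserves $\ker T$: for $v\in\ker T$ and any $u\in V$, equation (4.2.1) gives $0=T(u)\searrow T(v)=T(l_\searrow(T(u))v+r_\searrow(T(v))u)=T(l_\searrow(T(u))v)$, so $l_\searrow(T(u))v\in\ker T$, and the same argument using whichever of (4.2.1)--(4.2.4) is relevant handles the remaining seven actions. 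The octo-algebra axioms on $T(V)$, the homomorphism property of $T$, and the agreement of the associated depth quadri-algebra of this structure with the quadri-subalgebra structure inherited from $A$ will then all follow automatically.

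The only real obstacle is the length of the bookkeeping: twenty-seven one-line verifications rather than a conceptual difficulty. The organizing principle is that the bimodule axioms (4.1.1)--(4.1.9) have been engineered precisely so that, after the substitution (4.4.17), they become the octo-algebra axioms; once this pairing is recognized, the entire proof is mechanical and nothing essentially new is required beyond what was already present in Proposition 3.4.6.
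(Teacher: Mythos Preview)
Your proposal is correct and follows exactly the approach the paper intends: the paper's own proof consists of the single sentence ``The proof is similar as of Proposition 3.4.6, where a similar correspondence is given in the proof of Proposition 4.4.3,'' and your write-up spells out precisely that correspondence and the collapse-rule mechanism, together with the well-definedness check on $T(V)$ that the paper leaves implicit. There is nothing to add.
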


\begin{proof}
The proof is similar as of Proposition 3.4.6, where a similar
correspondence is given in the proof of Proposition 4.4.3.
\end{proof}

\begin{defn}{\rm (\cite{Le3})\quad  Let $(A, \searrow, \nearrow, \nwarrow,
\swarrow)$ be a quadri-algebra. An $\mathcal O$-operator $R$ of $(A,
\searrow, \nearrow, \nwarrow, \swarrow)$ associated to the regular
bimodule $(L_\searrow, R_\searrow, L_\nearrow, R_\nearrow,
L_\nwarrow, R_\nwarrow, L_\swarrow, R_\swarrow, A)$ is called a
Rota-Baxter operator on $(A, \searrow, \nearrow, \nwarrow,
\swarrow)$, that is, $R$ satisfies (for any $x,y\in A$)
$$R(x\searrow y)=R(R(x)\searrow y+x\searrow R(y)),\;
R(x\nearrow y)=R(R(x)\nearrow y+x\nearrow R(y)),$$
$$R(x\nwarrow y)=R(R(x)\nwarrow y+x\nwarrow R(y)),\;
R(x\swarrow y)=R(R(x)\swarrow y+x\swarrow R(y)).\eqno
(4.4.19)$$}\end{defn}

By Proposition 4.4.6, the following conclusion follows immediately.

\begin{coro}{\rm (\cite{Le3}, Proposition 5.12)}\quad Let $(A, \searrow,
\nearrow, \nwarrow, \swarrow)$ be a quadri-algebra and $R$ be a
Rota-Baxter operator on $(A, \searrow, \nearrow, \nwarrow,
\swarrow)$. Then there exists an octo-algebra structure on $A$
defined by (for any $x,y\in A$)
$$x\searrow_1 y= x\searrow R(y),\;x\searrow_2 y=R(x)\searrow y,\;
x\nearrow_1 y= x\nearrow R(y),\;x\nearrow_2 y=R(x)\nearrow y,\;$$
$$x\nwarrow_1 y= x\nwarrow R(y),\;x\nwarrow_2 y=R(x)\nwarrow y,\;
x\swarrow_1 y= x\swarrow R(y),\;x\swarrow_2 y=R(x)\swarrow y.\eqno
(4.4.20)$$\end{coro}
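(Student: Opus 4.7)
The plan is to derive this as a direct specialization of Proposition 4.4.6 to the case of the regular bimodule. By Definition 4.4.7, a Rota-Baxter operator $R$ on the quadri-algebra $(A,\searrow,\nearrow,\nwarrow,\swarrow)$ is precisely an $\mathcal{O}$-operator associated to the regular bimodule
$$(L_\searrow, R_\searrow, L_\nearrow, R_\nearrow, L_\nwarrow, R_\nwarrow, L_\swarrow, R_\swarrow, A),$$
with the underlying representation space $V = A$ and with $T = R$.

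I would then invoke Proposition 4.4.6, which guarantees that any $\mathcal{O}$-operator $T : V \to A$ of a quadri-algebra associated to a bimodule $(l_\searrow, r_\searrow, l_\nearrow, r_\nearrow, l_\nwarrow, r_\nwarrow, l_\swarrow, r_\swarrow, V)$ produces an octo-algebra structure on $V$ via the eight formulas in (4.4.17). Substituting $V = A$, $T = R$, and $l_\circ = L_\circ$, $r_\circ = R_\circ$ for each $\circ \in \{\searrow, \nearrow, \nwarrow, \swarrow\}$, the formulas in (4.4.17) read
$$u \searrow_1 v = R_\searrow(R(v))u = u \searrow R(v), \quad u \searrow_2 v = L_\searrow(R(u))v = R(u)\searrow v,$$
and analogously for the other six operations. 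This yields exactly the octo-algebra structure (4.4.20) on $A$.

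Since the conclusion is obtained by a direct substitution into an already-established theorem, there is essentially no obstacle; the only thing to verify is the translation of the bimodule action on the regular bimodule ($l_\circ = L_\circ$, $r_\circ = R_\circ$) into the left/right multiplications by $R(x)$ and $R(y)$ in (4.4.20), which is routine. No independent verification of the nine axioms (4.4.1)--(4.4.9) is needed because this is already subsumed by Proposition 4.4.6.
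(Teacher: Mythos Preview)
Your proposal is correct and follows exactly the paper's approach: the paper simply states that the corollary follows immediately from Proposition 4.4.6, and you have spelled out precisely how this specialization to the regular bimodule with $T=R$ yields (4.4.20).
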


\begin{lemma} {\rm (\cite{AL}, Proposition 2.5 and \cite{Le3}, Proposition
5.13)}\quad Let $R_1,R_2$ and $R_3$ be three pairwise commutating
Rota-Baxter operators on an associative algebra $(A,*)$. Then $R_2$
is a Rota-Baxter operator on the dendriform algebra obtained from
$R_3$ by Corollary 2.1.9. Moreover, $R_1$ is a Rota-Baxter operator
on the quadri-algebra obtained from the above dendriform algebra and
$R_2$ by Corollary 3.4.8.\end{lemma}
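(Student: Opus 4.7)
The plan is to verify the defining Rota-Baxter identities in each case by unwinding definitions and repeatedly applying (i) the Rota-Baxter identity (2.1.4) for the relevant $R_i$ on $(A,\ast)$, and (ii) the pairwise commutativity $R_iR_j=R_jR_i$. Throughout, the key technical move is that whenever a composition like $R_3R_2R_1(x)$ appears, I can move one of the $R_i$'s to the outside or swap the order of two at will.

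First, for the dendriform algebra $(A,\succ,\prec)$ produced from $(A,\ast)$ and $R_3$ by Corollary 2.1.9, i.e.\ $x\succ y=R_3(x)\ast y$ and $x\prec y=x\ast R_3(y)$, I need the two identities in (3.4.10) for $R_2$. For the $\succ$ identity,
$$R_2(x)\succ R_2(y)=R_3(R_2(x))\ast R_2(y)=R_2(R_3(x))\ast R_2(y)$$
by $R_2R_3=R_3R_2$. Applying (2.1.4) for $R_2$ on $(A,\ast)$ rewrites this as $R_2(R_2R_3(x)\ast y+R_3(x)\ast R_2(y))$, and a second use of $R_2R_3=R_3R_2$ inside the first term gives $R_2(R_2(x)\succ y+x\succ R_2(y))$, as required. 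The $\prec$ identity is handled by the mirror computation, commuting $R_3$ past $R_2$ on the right-hand slot.

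Next, for the quadri-algebra obtained from $(A,\succ,\prec)$ and the Rota-Baxter operator $R_2$ via Corollary 3.4.8, the operations are $x\searrow y=R_2(x)\succ y$, $x\nearrow y=x\succ R_2(y)$, $x\swarrow y=R_2(x)\prec y$, $x\nwarrow y=x\prec R_2(y)$. I must check the four identities (4.4.19) for $R_1$. The four cases are parallel; I illustrate the $\searrow$ case:
$$R_1(x)\searrow R_1(y)=R_2R_1(x)\succ R_1(y)=R_3R_2R_1(x)\ast R_1(y)=R_1(R_3R_2(x))\ast R_1(y),$$
using pairwise commutativity. Now (2.1.4) for $R_1$ on $(A,\ast)$ turns the last expression into
$$R_1\bigl(R_1R_3R_2(x)\ast y+R_3R_2(x)\ast R_1(y)\bigr),$$
and commuting $R_1$ back through $R_2$ and $R_3$ in the first term identifies this with $R_1(R_1(x)\searrow y+x\searrow R_1(y))$. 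The $\nearrow$, $\nwarrow$, $\swarrow$ cases follow the same three-step pattern: (a) expand the quadri-operation down to an $\ast$-product with appropriate strings of $R_2,R_3$ applied to the two arguments; (b) move one outer $R_1$ through the $\ast$ using (2.1.4); (c) rearrange $R_1,R_2,R_3$ by commutativity to re-assemble the quadri-operation on the right.

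There is no real obstacle beyond careful bookkeeping of which operators sit on which slot; the three pairwise commutativity relations and the Rota-Baxter identity on $(A,\ast)$ do all the work, and no deeper structural argument about dendriform or quadri-algebras is needed. Thus both parts reduce to five invocations of (2.1.4)---one for $R_2$ (used twice, for $\succ$ and $\prec$) and one for $R_1$ in each of the four quadri-operations---interleaved with commutativity of the $R_i$'s.
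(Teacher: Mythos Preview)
Your argument is correct: the Rota-Baxter identities for $R_2$ on the dendriform algebra and for $R_1$ on the quadri-algebra reduce, after expanding the new operations back to the associative product $\ast$, to the Rota-Baxter identity (2.1.4) for the appropriate $R_i$ together with the pairwise commutativity relations, exactly as you describe. The paper itself does not supply a proof of this lemma; it simply cites \cite{AL} (Proposition 2.5) and \cite{Le3} (Proposition 5.13), where the same elementary verification is carried out.
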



Therefore by Corollaries 2.1.9, 3.4.8 and 4.4.8, it is obvious that
the Rota-Baxter operators on associative algebras also can construct
octo-algebras as follows.

\begin{coro} Let $R_1, R_2$ and $R_3$ be three pairwise commuting
Rota-Baxter operators on an associative algebra $(A,*)$. Then there
exists an octo-algebra structure on $A$ defined by
$$x\searrow_1y=R_2R_3(x)*R_1(y),\; x\searrow_2 y=R_1R_2R_3(x)*y,\;
x\nearrow_1y=R_2(x)*R_1R_3(y),$$$$x\nearrow_2y=R_1R_2(x)*R_3(y),
x\swarrow_1y=R_3(x)*R_1R_2(y),\;x\swarrow_2y=R_1R_3(x)*R_2(y),$$
$$x\nwarrow_1y=x*R_1R_2R_3(y),\;x\nwarrow_2
y=R_1(x)*R_2R_3(y),\;\;\forall x,y\in A.\eqno (4.4.21)$$\end{coro}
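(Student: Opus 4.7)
The plan is to chain together the three previously established Rota--Baxter constructions (Corollaries 2.1.9, 3.4.8, 4.4.8) using Lemma 4.4.9 to propagate the Rota--Baxter property up the hierarchy. Because the operators $R_1,R_2,R_3$ are pairwise commuting and play symmetric roles in Lemma 4.4.9, we may freely relabel them when invoking the lemma; the only real issue is to choose an order in which the iteration produces exactly the eight formulas stated.

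First I would apply Corollary 2.1.9 with $R_2$ to equip $A$ with the dendriform algebra structure
$$x\succ y=R_2(x)*y,\qquad x\prec y=x*R_2(y).$$
Next, invoking Lemma 4.4.9 (with the roles of the three commuting operators suitably relabeled) gives that $R_3$ is a Rota--Baxter operator on $(A,\succ,\prec)$. Corollary 3.4.8 then produces the compatible quadri-algebra
$$x\searrow y=R_2R_3(x)*y,\qquad x\nearrow y=R_2(x)*R_3(y),$$
$$x\swarrow y=R_3(x)*R_2(y),\qquad x\nwarrow y=x*R_2R_3(y),$$
where the expressions are simplified using the pairwise commutativity of the $R_i$. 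By Lemma 4.4.9 once more, $R_1$ is a Rota--Baxter operator on this quadri-algebra.

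Finally, Corollary 4.4.8 applied with $R=R_1$ produces an octo-algebra structure on $A$ given by the uniform rule $x\circ_1 y=x\circ R_1(y)$ and $x\circ_2 y=R_1(x)\circ y$ for $\circ\in\{\searrow,\nearrow,\nwarrow,\swarrow\}$. Substituting the explicit formulas for $\searrow,\nearrow,\nwarrow,\swarrow$ obtained in the previous step, and using $R_iR_j=R_jR_i$ to rearrange compositions (e.g.\ $x\nearrow_1 y=R_2(x)*R_3R_1(y)=R_2(x)*R_1R_3(y)$ and $x\searrow_2 y=R_2R_3R_1(x)*y=R_1R_2R_3(x)*y$), one recovers exactly the eight defining identities (4.4.21). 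The main (and only) potential pitfall is purely bookkeeping: one must align the roles of $R_1,R_2,R_3$ between the cited results and the statement here, but since everything commutes this is routine and no new algebraic identity needs to be checked.
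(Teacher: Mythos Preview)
Your proof is correct and follows the same route the paper intends: the paper simply remarks that the corollary follows from Corollaries 2.1.9, 3.4.8 and 4.4.8 together with Lemma 4.4.9, and you have carried this out in detail. Your observation that the indices of the $R_i$ must be relabeled relative to the literal statement of Lemma 4.4.9 (using $R_2$ for the dendriform step and $R_3$ for the quadri step, rather than the reverse) in order to match (4.4.21) exactly is a useful bookkeeping point that the paper leaves implicit.
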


\begin{coro} Let $(A, \searrow, \nearrow, \nwarrow, \swarrow)$ be a
quadri-algebra. Then there exists a compatible octo-algebra
structure on $A$ such that $(A, \searrow, \nearrow, \nwarrow,
\swarrow)$ is the associated depth quadri-algebra if and only if
there exists an invertible $\mathcal O$-operator $T$ of $(A,
\searrow, \nearrow, \nwarrow$, $\swarrow)$ associated to certain
bimodule $(l_\searrow, r_\searrow, l_\nearrow, r_\nearrow,
l_\nwarrow,r_\nwarrow, l_\swarrow, r_\swarrow,V)$ (hence $\dim
V=\dim A$).\end{coro}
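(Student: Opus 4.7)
The plan is to mimic the pattern of Corollary 2.1.10 and Corollary 3.4.12, which establish analogous characterizations of compatible dendriform structures on associative algebras and compatible quadri-algebra structures on dendriform algebras via invertible $\mathcal O$-operators. The two directions are largely formal once the machinery of Proposition 4.4.3 (bimodule characterization of octo-algebras) and Proposition 4.4.6 ($\mathcal O$-operators produce octo-algebras) is in place.

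For the ``only if'' direction, suppose $(A,\searrow_1,\searrow_2,\nearrow_1,\nearrow_2,\nwarrow_1,\nwarrow_2,\swarrow_1,\swarrow_2)$ is a compatible octo-algebra structure on $(A,\searrow,\nearrow,\nwarrow,\swarrow)$, meaning that equation (4.4.14) recovers $\searrow,\nearrow,\nwarrow,\swarrow$. By Proposition 4.4.3 the $8$-tuple $(L_{\searrow_2},R_{\searrow_1},L_{\nearrow_2},R_{\nearrow_1},L_{\nwarrow_2},R_{\nwarrow_1},L_{\swarrow_2},R_{\swarrow_1},A)$ is a bimodule of the associated depth quadri-algebra, i.e.\ of $(A,\searrow,\nearrow,\nwarrow,\swarrow)$ itself. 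Now the compatibility condition $x\searrow y=x\searrow_1 y+x\searrow_2 y=R_{\searrow_1}(y)x+L_{\searrow_2}(x)y$ (and the three analogous identities for $\nearrow,\nwarrow,\swarrow$) is precisely the statement that $\mathrm{id}_A:A\to A$ satisfies equations (4.2.1)--(4.2.4) for this bimodule. Hence $\mathrm{id}_A$ is an invertible $\mathcal O$-operator associated to it.

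For the ``if'' direction, let $T:V\to A$ be an invertible $\mathcal O$-operator associated to a bimodule $(l_\searrow,r_\searrow,l_\nearrow,r_\nearrow,l_\nwarrow,r_\nwarrow,l_\swarrow,r_\swarrow,V)$. By Proposition 4.4.6, the eight operations $\searrow_i,\nearrow_i,\nwarrow_i,\swarrow_i$ on $V$ defined by (4.4.17) make $V$ into an octo-algebra. Transport this structure to $A$ via the linear isomorphism $T$, i.e.\ set
\[
x\searrow_1 y=T(r_\searrow(y)T^{-1}(x)),\quad x\searrow_2 y=T(l_\searrow(x)T^{-1}(y)),
\]
and similarly for the other six operations. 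Then $T$ is by construction an isomorphism of octo-algebras, so $(A,\searrow_1,\ldots,\swarrow_2)$ is an octo-algebra. To check compatibility, fix $x,y\in A$ and compute
\[
x\searrow_1 y+x\searrow_2 y=T\bigl(r_\searrow(y)T^{-1}(x)+l_\searrow(x)T^{-1}(y)\bigr)=T(T^{-1}(x))\searrow T(T^{-1}(y))=x\searrow y,
\]
using (4.2.1); the three other identities follow identically from (4.2.2)--(4.2.4). This shows that the depth quadri-algebra obtained via (4.4.14) from the new octo-structure coincides with the original $(A,\searrow,\nearrow,\nwarrow,\swarrow)$.

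The main work is really hidden in Proposition 4.4.6 (the nontrivial verification that the formulae (4.4.17) satisfy the $27$ octo-algebra axioms (4.4.1)--(4.4.9)); granted that, both directions of the corollary are immediate checks of exactly the same form as Corollary 3.4.12. The only mild obstacle is keeping track of which position ($_1$ versus $_2$) corresponds to the right versus left action in the bimodule, but this is dictated uniquely by Proposition 4.4.3, which pairs each $\searrow_2,\nearrow_2,\nwarrow_2,\swarrow_2$ with the left multiplications and each $\searrow_1,\nearrow_1,\nwarrow_1,\swarrow_1$ with the right multiplications.
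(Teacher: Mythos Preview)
Your proof is correct and follows essentially the same approach as the paper's: transport the octo-algebra structure from $V$ to $A$ via the invertible $T$ using Proposition 4.4.6 and verify compatibility from the $\mathcal O$-operator identities (4.2.1)--(4.2.4), and conversely invoke Proposition 4.4.3 to see that $\mathrm{id}_A$ is an $\mathcal O$-operator for the bimodule $(L_{\searrow_2},R_{\searrow_1},\ldots,L_{\swarrow_2},R_{\swarrow_1},A)$. One minor slip: the analogous dendriform result you cite is Corollary 3.4.10, not 3.4.12.
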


\begin{proof} The compatible octo-algebra structure on $(A, \searrow, \nearrow, \nwarrow, \swarrow)$ is given
by
$$x\searrow_1 y=T(r_\searrow(y)T^{-1}(x)),\;
x\searrow_2 y=T(l_\searrow(x)T^{-1}(y)),\;x\nearrow_1
y=T(r_\nearrow(y)T^{-1}(x)),$$ $$ x\nearrow_2
y=T(l_\nearrow(x)T^{-1}(y)),\;x\swarrow_1
y=T(r_\swarrow(y)T^{-1}(x)),\; x\swarrow_2
y=T(l_\swarrow(x)T^{-1}(y)),$$
$$x\nwarrow_1 y=T(r_\nwarrow(y)T^{-1}(x)),\;
x\nwarrow_2 y=T(l_\nwarrow(x)T^{-1}(y)),\;\;\forall x,y\in A.$$
Conversely, let  $(A, \searrow_1, \searrow_2, \nearrow_1,
\nearrow_2, \nwarrow_1, \nwarrow_2, \swarrow_1, \swarrow_2)$ be an
octo-algebra and $(A,\searrow_{12},\nearrow_{12},\nwarrow_{12}$,
$\swarrow_{12})$ be the associated depth quadri-algebra. Then
$(L_{\searrow_2}, R_{\searrow_1}, L_{\nearrow_2}, R_{\nearrow_1},
L_{\nwarrow_2}, R_{\nwarrow_1},L_{\swarrow_2}$, $R_{\swarrow_1}$,
$ A)$ is a bimodule of
$(A,\searrow_{12},\nearrow_{12},\nwarrow_{12},\swarrow_{12})$ and
the identity map $id$ is an $\mathcal O$-operator of $(A,
\searrow$, $\nearrow$, $\nwarrow, \swarrow)$ associated to it.
\end{proof}

\begin{prop} Let $(A, \searrow, \nearrow, \nwarrow, \swarrow)$ be a
quadri-algebra. If there is a nondegenerate skew-symmetric 2-cocycle
$\omega$ of $(A, \searrow, \nearrow, \nwarrow, \swarrow)$, then
there exists a compatible octo-algebra structure on $(A, \searrow,
\nearrow, \nwarrow, \swarrow)$  defined by (for any $x,y,z\in A$)
$$\omega(x\searrow_1 y,z)=\omega(x,y\nwarrow
z),\;\;\omega(x\searrow_2 y,z)=\omega (y,z*x),$$
$$\omega(x\nearrow_1 y,z)=\omega(x,-y\prec
z),\;\;\omega(x\searrow_2 y,z)=\omega (y,-z\vee x),$$
$$\omega(x\nwarrow_1 y,z)=\omega(x,y*
z),\;\;\omega(x\nwarrow_2 y,z)=\omega (y,z\searrow x),$$
$$\omega(x\swarrow_1 y,z)=\omega(x,-y\wedge
z),\;\;\omega(x\swarrow_2 y,z)=\omega (y,-z\succ x),\eqno
(4.4.22)$$ such that $(A, \searrow, \nearrow, \nwarrow, \swarrow)$
is the associated depth quadri-algebra.
\end{prop}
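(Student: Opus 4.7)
The plan is to mirror the argument of Proposition 3.4.11 one level up, replacing the role of the $D$-equation/2-cocycle correspondence by the $Q$-equation/2-cocycle correspondence from Theorem 4.3.6, and then invoking the invertible-$\mathcal{O}$-operator principle of Corollary 4.4.11 to produce the compatible octo-algebra.

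First, I would use the nondegenerate skew-symmetric 2-cocycle $\omega$ to define an invertible linear map $T\colon A\to A^{*}$ by $\langle T(x),y\rangle = \omega(x,y)$ for all $x,y\in A$. Regard $r=T^{-1}$ as an element of $A\otimes A$; since $\omega$ is skew-symmetric, $r$ is skew-symmetric. By Theorem 4.3.6, the fact that $\omega$ satisfies the 2-cocycle identity (4.3.7) is equivalent to $r$ being a solution of the $Q$-equation in $(A,\searrow,\nearrow,\nwarrow,\swarrow)$. Then by Corollary 4.2.5 (combined with Proposition 4.2.3), this is in turn equivalent to $r\colon A^{*}\to A$ being an $\mathcal{O}$-operator of the quadri-algebra associated to the dual-of-regular bimodule
$$(R_{*}^{*},L_{\nwarrow}^{*},-R_{\vee}^{*},-L_{\prec}^{*},R_{\searrow}^{*},L_{*}^{*},-R_{\succ}^{*},-L_{\wedge}^{*},A^{*})$$
from Example 4.1.6.

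Since $T$ is invertible, so is $r=T^{-1}$, and thus Corollary 4.4.11 applies: there is a compatible octo-algebra structure on $(A,\searrow,\nearrow,\nwarrow,\swarrow)$ given (using the explicit formulas in the proof of Corollary 4.4.11 with $S=T^{-1}$, so $S^{-1}=T$) by
$$x\searrow_{1}y = T^{-1}\!\bigl(L_{\nwarrow}^{*}(y)T(x)\bigr),\qquad x\searrow_{2}y = T^{-1}\!\bigl(R_{*}^{*}(x)T(y)\bigr),$$
and the analogous expressions for $\nearrow_{i},\nwarrow_{i},\swarrow_{i}$ obtained by reading off the entries of the dual bimodule. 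It only remains to translate these back into the bilinear-form language.

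The translation is routine pairing. For instance,
$$\omega(x\searrow_{1}y,z)=\langle T(x\searrow_{1}y),z\rangle=\langle L_{\nwarrow}^{*}(y)T(x),z\rangle=\langle T(x),y\nwarrow z\rangle=\omega(x,y\nwarrow z),$$
and
$$\omega(x\searrow_{2}y,z)=\langle R_{*}^{*}(x)T(y),z\rangle=\langle T(y),z*x\rangle=\omega(y,z*x),$$
and similarly for the remaining six operations (the minus signs in (4.4.22) come from the minus signs in the entries $-R_{\vee}^{*},-L_{\prec}^{*},-R_{\succ}^{*},-L_{\wedge}^{*}$ of the dual bimodule). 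These reproduce exactly the defining equations (4.4.22), finishing the proof. The only potentially delicate point is bookkeeping: keeping the eight dual-bimodule slots correctly matched to the eight octo-products and to the correct sides of the pairing; but once the matching with Example 4.1.6 is fixed, each line is a one-step adjoint computation, so no serious obstacle arises.
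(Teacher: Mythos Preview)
Your proposal is correct and follows essentially the same route as the paper: define $T\colon A\to A^{*}$ by $\langle T(x),y\rangle=\omega(x,y)$, show that $T^{-1}$ is an invertible $\mathcal{O}$-operator of the quadri-algebra associated to the dual bimodule $(R_{*}^{*},L_{\nwarrow}^{*},-R_{\vee}^{*},-L_{\prec}^{*},R_{\searrow}^{*},L_{*}^{*},-R_{\succ}^{*},-L_{\wedge}^{*},A^{*})$, and then invoke Corollary~4.4.11 to read off the octo-algebra formulas. The only minor variation is that you establish the $\mathcal{O}$-operator property indirectly via Theorem~4.3.6 and Corollary~4.2.5 (through the $Q$-equation), whereas the paper, mirroring the proof of Proposition~3.4.11, would verify it directly from the 2-cocycle identity~(4.3.7); both routes are equivalent and arrive at the same invertible $\mathcal{O}$-operator before applying Corollary~4.4.11.
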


\begin{proof} The proof is similar as of Proposition 3.4.11.
Here the invertible linear map $T:A\rightarrow A^*$ is defined by
$\langle T(x),y\rangle=\omega (x,y)$. Then $T^{-1}$ is an invertible
$\mathcal O$-operator of $(A, \searrow, \nearrow, \nwarrow,
\swarrow)$ associated to the bimodule $(R_*^*, L_\nwarrow^*$,
$-R_\vee^*,
-L_\prec^*,R_\searrow^*,L_*^*,-R_\succ^*,-L_\wedge^*,A^*)$.
\end{proof}

\begin{coro}{\rm (cf. Corollary 4.2.10)}\quad Let  $(A, \searrow_1, \searrow_2,
\nearrow_1, \nearrow_2, \nwarrow_1, \nwarrow_2, \swarrow_1,
\swarrow_2)$ be an octo-algebra and
$(A,\searrow_{12},\nearrow_{12},\nwarrow_{12},\swarrow_{12})$ be
the associated depth quadri-algebra. Then $r$ given by equation
(2.2.9) is a skew-symmetric solution of $Q$-equation in the
quadri-algebra
$$A\ltimes_{R_{\Sigma_1}^*, L_{\nwarrow_{2}}^*,-R_{\vee_1}^*,
-L_{\prec_2}^*,R_{\searrow_{1}}^*,L_{\Sigma_2}^*,-R_{\succ_1}^*,-L_{\wedge_2}^*}A^*,$$
where $\{e_1,\cdots, e_n\}$ is a basis of $A$ and $\{e_1^*,\cdots,
e_n^*\}$ is its dual basis. Moreover there is a natural 2-cocycle of
the quadri-algebra $A\ltimes_{R_{\Sigma_1}^*,
L_{\nwarrow_{2}}^*,-R_{\vee_1}^*,
-L_{\prec_2}^*,R_{\searrow_{1}}^*,L_{\Sigma_2}^*,-R_{\succ_1}^*,-L_{\wedge_2}^*}A^*,$
induced by $r^{-1}: A\oplus A^*\rightarrow (A\oplus A^*)^*$, which
is given by equation (2.2.10).\end{coro}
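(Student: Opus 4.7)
The plan is to argue exactly in parallel with Corollary 4.2.10, replacing the role played there by the quadri-algebra and its associated horizontal dendriform algebra with that of the octo-algebra and its associated depth quadri-algebra. The key observation is that Proposition 4.4.3 identifies
$$(L_{\searrow_2}, R_{\searrow_1}, L_{\nearrow_2}, R_{\nearrow_1}, L_{\nwarrow_2}, R_{\nwarrow_1}, L_{\swarrow_2}, R_{\swarrow_1}, A)$$
as a bimodule of the depth quadri-algebra $(A,\searrow_{12},\nearrow_{12},\nwarrow_{12},\swarrow_{12})$. With these bimodule actions, the defining relations (4.2.1)--(4.2.4) for an $\mathcal O$-operator are, when evaluated at $T=\mathrm{id}$, precisely the equations $x\searrow_{12} y = x\searrow_1 y + x\searrow_2 y$, and its three analogues. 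Hence $\mathrm{id}$ is an $\mathcal O$-operator of the depth quadri-algebra associated to the bimodule above.

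Next I would invoke Theorem 4.2.8 with this choice of $T$. Under the identification $T \in A\otimes A^*$, the operator $\mathrm{id}$ corresponds to $\sum_i e_i\otimes e_i^*$, so $r = T - \sigma(T) = \sum_i(e_i\otimes e_i^* - e_i^*\otimes e_i)$, which is exactly equation (2.2.9). Theorem 4.2.8 then yields that this $r$ is a skew-symmetric solution of the $Q$-equation in the semidirect sum quadri-algebra built from the \emph{dual} bimodule prescribed by Proposition 4.1.4. Thus the remaining task is a bookkeeping check that this dual bimodule matches the one stated. Using $l_* = L_{\searrow_2} + L_{\nearrow_2} + L_{\nwarrow_2} + L_{\swarrow_2} = L_{\Sigma_2}$, $r_* = R_{\searrow_1} + R_{\nearrow_1} + R_{\nwarrow_1} + R_{\swarrow_1} = R_{\Sigma_1}$, $l_\prec = L_{\prec_2}$, $r_\vee = R_{\vee_1}$, $r_\succ = R_{\succ_1}$, $l_\wedge = L_{\wedge_2}$, together with $l_\nwarrow = L_{\nwarrow_2}$ and $r_\searrow = R_{\searrow_1}$, one sees that Proposition 4.1.4 produces precisely the bimodule $(R_{\Sigma_1}^*, L_{\nwarrow_{2}}^*,-R_{\vee_1}^*, -L_{\prec_2}^*,R_{\searrow_{1}}^*,L_{\Sigma_2}^*,-R_{\succ_1}^*,-L_{\wedge_2}^*, A^*)$ named in the statement.

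For the second half, the map $r$ just constructed is manifestly nondegenerate on $A\oplus A^*$ (it is the natural pairing twisted by a sign). The induced inverse isomorphism $r^{-1}:A\oplus A^* \to (A\oplus A^*)^*$ reads off, as a skew-symmetric bilinear form, to the $\omega$ of equation (2.2.10). Theorem 4.3.6 then immediately gives that $\omega$ is a $2$-cocycle of the quadri-algebra $A\ltimes_{R_{\Sigma_1}^*, L_{\nwarrow_{2}}^*,-R_{\vee_1}^*, -L_{\prec_2}^*,R_{\searrow_{1}}^*,L_{\Sigma_2}^*,-R_{\succ_1}^*,-L_{\wedge_2}^*}A^*$, completing the proof. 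The only step that could be viewed as a possible obstacle is the dual-bimodule identification in the previous paragraph, but that is purely formal once the operator dictionary above is set up.
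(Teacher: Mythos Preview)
Your argument is correct and follows essentially the same route as the paper: identify $(L_{\searrow_2}, R_{\searrow_1}, L_{\nearrow_2}, R_{\nearrow_1}, L_{\nwarrow_2}, R_{\nwarrow_1}, L_{\swarrow_2}, R_{\swarrow_1}, A)$ as a bimodule of the depth quadri-algebra, observe that $\mathrm{id}$ is an $\mathcal O$-operator for it, apply Theorem 4.2.8 with the dual bimodule computed via Proposition 4.1.4, and then invoke Theorem 4.3.6 for the 2-cocycle. The only cosmetic difference is that the paper points to Proposition 3.4.12 as the template while you point to Corollary 4.2.10, but the content is the same.
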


\begin{proof}
The proof is similar as of Proposition 3.4.12. Note that here
$$(R_{\Sigma_1}^*, L_{\nwarrow_{2}}^*,-R_{\vee_1}^*,
-L_{\prec_2}^*,R_{\searrow_{1}}^*,L_{\Sigma_2}^*,-R_{\succ_1}^*,-L_{\wedge_2}^*,A^*)$$
is the dual bimodule of the bimodule $(L_{\searrow_2},
R_{\searrow_1}, L_{\nearrow_2}, R_{\nearrow_1}, L_{\nwarrow_2},
R_{\nwarrow_1},L_{\swarrow_2}, R_{\swarrow_1},  A)$
 of the
associated depth quadri-algebra
$(A,\searrow_{12},\nearrow_{12},\nwarrow_{12},\swarrow_{12})$ and
 ${id}$ is an ${\mathcal O}$-operator of
$(A,\searrow_{12},\nearrow_{12}$, $\nwarrow_{12},\swarrow_{12})$
associated to the bimodule $(L_{\searrow_2}, R_{\searrow_1},
L_{\nearrow_2}, R_{\nearrow_1}, L_{\nwarrow_2},
R_{\nwarrow_1},L_{\swarrow_2}, R_{\swarrow_1},  A)$.
\end{proof}


\begin{prop}Let  $(A, \searrow_1, \searrow_2, \nearrow_1, \nearrow_2,
\nwarrow_1, \nwarrow_2, \swarrow_1, \swarrow_2)$ be an
octo-algebra and $(A$, $\searrow_{12}$,
$\nearrow_{12},\nwarrow_{12},\swarrow_{12})$ be the associated
depth quadri-algebra. Let $r\in A\otimes A$ be symmetric. Then $r$
is an ${\mathcal O}$-operator of
$(A,\searrow_{12},\nearrow_{12},\nwarrow_{12},\swarrow_{12})$
associated to the bimodule
 $$(R_{\Sigma_1}^*,
L_{\nwarrow_{2}}^*,-R_{\vee_1}^*,
-L_{\prec_2}^*,R_{\searrow_{1}}^*,L_{\Sigma_2}^*,-R_{\succ_1}^*,-L_{\wedge_2}^*,A^*)$$
if and only if $r$ satisfies
$$r_{13}\searrow r_{23}=r_{23}\Sigma_1 r_{12}+r_{12}\nwarrow_2
r_{13},\eqno (4.4.23)$$
$$r_{13}\nearrow r_{23}=-r_{23}\vee_1 r_{12}-r_{12}\prec_2
r_{13},\eqno (4.4.24)$$
$$r_{13}\nwarrow r_{23}=r_{23}\searrow_1 r_{12}+r_{12}\Sigma_2
r_{13},\eqno (4.4.25)$$
$$r_{13}\swarrow r_{23}=-r_{23}\succ_1 r_{12}-r_{12}\wedge_2
r_{13}.\eqno (4.4.26)$$\end{prop}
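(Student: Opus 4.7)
My plan is to mimic, in the octo/quadri setting, the componentwise verification already used for Propositions 3.4.13 and 4.2.3. Fix a basis $\{e_1,\dots,e_n\}$ of $A$ with dual basis $\{e_1^*,\dots,e_n^*\}$, write $r=\sum_{i,j}r_{ij}\,e_i\otimes e_j$ with $r_{ij}=r_{ji}$, and record the structure constants of the eight products, e.g.\ $e_i\searrow_1 e_j=\sum_k s^{1,k}_{ij}e_k$, $e_i\searrow_2 e_j=\sum_k s^{2,k}_{ij}e_k$, and similarly for $\nearrow_1,\nearrow_2,\nwarrow_1,\nwarrow_2,\swarrow_1,\swarrow_2$. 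Then every product of the depth quadri-algebra is the sum of the $1$ and $2$ parts via (4.4.14), and the combined symbols $\Sigma_1,\Sigma_2,\vee_1,\wedge_2,\prec_2,\succ_1$, etc.\ are just linear combinations of these constants.

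The main body of the argument is then a coefficient comparison performed once for each of the four relations (4.4.23)--(4.4.26). Using $r(e_i^*)=\sum_k r_{ik}e_k$ and the definitions (1.3)--(1.4), I would expand
\[
r_{13}\searrow r_{23},\quad r_{23}\Sigma_1 r_{12},\quad r_{12}\nwarrow_2 r_{13},
\]
as explicit sums in $A\otimes A\otimes A$, and separately expand
\[
r(a^*)\searrow_{12}r(b^*)-r\bigl(R^*_{\Sigma_1}(r(a^*))b^*+L^*_{\nwarrow_2}(r(b^*))a^*\bigr)
\]
for $a^*=e_m^*$, $b^*=e_t^*$. Thanks to symmetry of $r$ one checks that the coefficient of $e_p$ in the second expression coincides, up to a common index relabelling, with the coefficient of (say) $e_m\otimes e_t\otimes e_p$ in the difference of the two sides of (4.4.23); this is exactly the computation done in the proof of Theorem 3.3.3 and Proposition 3.4.13, only with $\succ,\prec$ replaced by $\searrow_{12},\nearrow_{12},\nwarrow_{12},\swarrow_{12}$ and the appropriate horizontal/vertical partial sums. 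The same procedure, run independently for each of the four identities, gives the four equivalences, since the four $\mathcal O$-operator equations (4.2.1)--(4.2.4) for the bimodule $(R^*_{\Sigma_1},L^*_{\nwarrow_2},-R^*_{\vee_1},-L^*_{\prec_2},R^*_{\searrow_1},L^*_{\Sigma_2},-R^*_{\succ_1},-L^*_{\wedge_2},A^*)$ are mutually independent statements, matched one-to-one with (4.4.23), (4.4.24), (4.4.25), (4.4.26) respectively.

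The only nontrivial bookkeeping is the sign and position tracking: one must be careful that the $r_{\circ}$-type action (acting on the right factor and producing the left factor's contribution) is paired with $r_{23}$ and produces a term with the product acting in the first two tensor slots, while the $l_{\circ}$-type action is paired with $r_{12}$ in the second two slots, exactly as in the proof of Proposition 3.4.13. Because the chosen bimodule is the dual bimodule provided by Proposition 4.1.4, the signs $\bigl(+, -, +, -\bigr)$ on $(R^*_{\Sigma_1}, R^*_{\vee_1}, R^*_{\searrow_1}, R^*_{\succ_1})$ and $(L^*_{\nwarrow_2}, L^*_{\prec_2}, L^*_{\Sigma_2}, L^*_{\wedge_2})$ match the signs appearing in (4.4.23)--(4.4.26), so no further adjustment is needed.

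I expect the main obstacle to be purely notational: keeping the eight families of structure constants and the sixteen $L^*, R^*$ operators straight across four separate coefficient matchings. There is no conceptual difficulty, since the argument is a direct transcription of the proofs of Theorem 3.3.3 and Proposition 3.4.13 to the quadri-algebra $(A,\searrow_{12},\nearrow_{12},\nwarrow_{12},\swarrow_{12})$ with the explicit dual bimodule coming from the given octo-structure. Once the coefficient matches are written down for (4.4.23), the remaining three cases follow by the same template.
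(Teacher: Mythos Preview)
Your proposal is correct and is exactly the approach the paper takes: the paper's proof simply reads ``The proof is similar as of Proposition 3.4.13,'' and what you have written is precisely the spelled-out version of that similarity. No alternative method is used.
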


\begin{proof}
The proof is similar as of Proposition 3.4.13.
\end{proof}

\begin{remark}{\rm From the study in the previous sections, it is enough
reasonable to regard a set of equations (4.4.23)-(4.2.26) in an
octo-algebra as an analogue of the classical Yang-Baxter equation
(\cite{NB}). We call it {\it $O$-equation} in an
octo-algebra.}\end{remark}

\section{Summary and generalization}

\subsection{Summary}
We can summarize the study in the previous sections as follows.

(I) Algebra structures. The relations between the different algebras
appearing in this paper can be summarized by the following diagram.
\begin{eqnarray*}
\begin{tabular}{|c|}\hline
{Associative algebras}\\\hline
\end{tabular}
 &\stackrel{\mathcal O-{\rm operator}}{\longleftrightarrow}&\;\;
\begin{tabular}{|c|}\hline
{Dendriform algebras}\\\hline
\end{tabular}\\
&\stackrel{\mathcal O-{\rm operator}}{\longleftrightarrow}&\;\;
\begin{tabular}{|c|}\hline {Quadri-algebras}\hspace{0.8cm}\mbox{}\\\hline
\end{tabular}\\
&\stackrel{\mathcal O-{\rm operator}}{\longleftrightarrow}&\;\;
\begin{tabular}{|c|}\hline {Octo-algebras}\hspace{1.15cm}\mbox{}\\\hline
\end{tabular}
\end{eqnarray*}
Here the meaning of $\stackrel{\mathcal O-{\rm
operator}}{\longleftrightarrow}$ is given as follows.

(a) $\stackrel{\mathcal O-{\rm operator}}{\longrightarrow}$: An
arrow-ending algebra can be obtained from an arrow-starting algebra
by an $\mathcal O$-operator of the arrow-starting algebra (see
Theorem 2.1.8, Proposition 3.4.6 and Proposition 4.4.6). As special
cases, the construction of the arrow-ending algebras from the
arrow-starting algebras by the Rota-Baxter operators follows
immediately (\cite{Ag2},\cite{AL},\cite{Le3}).

(b) $\stackrel{\mathcal O-{\rm operator}}{\longleftarrow}$: The
existence of a compatible arrow-starting algebra on an arrow-ending
algebra is decided by the existence of an invertible $\mathcal
O$-operator of the arrow-ending algebra (see Corollary 2.1.10,
Corollary 3.4.10, Corollary 4.4.11). In particular, $id$ is an
$\mathcal O$-operator of the arrow-ending algebra associated to
certain bimodule given by the multiplication operators (or
equivalently, the definition identities) of the arrow-starting
algebras (see Corollary 2.1.7, Proposition 3.4.3 and Proposition
4.4.3).

(II) Algebraic equations (analogues of the classical Yang-Baxter
equation) and bilinear forms. There is a ``chain" of algebraic
equations and bilinear forms on the associative algebras, dendriform
algebras and quadri-algebras corresponding to the algebra relations
given in the above (I). It can be interpreted explicitly by the
following diagrams.

\vspace{0.5cm}

{\small $\begin{matrix}
\begin{tabular}{|c|}\hline
\mbox{}\hspace{0.38cm}{\bf Associative
algebra}\hspace{0.38cm}\mbox{}\\\hline Symmetric bilinear
form\\\hline Invariant\\\hline
\end{tabular}&&&&\cr
&&&&\cr \updownarrow\;{\rm  D.C.}&&&&\cr &&&&\cr
\begin{tabular}{|c|}\hline
{\bf Associative algebra}\\\hline Associative Yang-Baxter\\
equation\\\hline Skew-symmetric solution\\\hline Skew-symmetric part
of an\\ $\mathcal O$-operator\\\hline $\mathcal O$-operator
associated to\\ $(R^*, L^*)$\\\hline
\end{tabular} &\Longrightarrow&
\begin{tabular}{|c|}\hline
{\bf Associative algebra}\\\hline Skew-symmetric bilinear
form\\\hline Connes cocycle\\\hline
\end{tabular} & \Longrightarrow &
\begin{tabular}{|c|}\hline
Dendriform algebra\\\hline
\end{tabular}\cr
& &\updownarrow\;{\rm D.C.}&&\cr &&&&\cr
&&\begin{tabular}{|c|}\hline \mbox{}\hspace{0.6cm}{\bf
Associative algebra}\hspace{0.6cm}\mbox{}\\\hline $D$-equation in a dendriform\\
algebra\\\hline Symmetric solution\\\hline Symmetric part of an\\
$\mathcal O$-operator\\\hline $\mathcal O$-operator associated to\\
$(R_\prec^*, L_\succ^*)$\\\hline
\end{tabular}&&\cr
&&&&\cr {\rm Part\;\; (AI)} && {\rm Part\;\; (AII)}&& {\rm
Part\;\;(AIII)}
\end{matrix}$}

\bigskip

{\small $\begin{matrix}
\begin{tabular}{|c|}\hline
\mbox{}{\bf Dendriform algebra}\mbox{}\\\hline Skew-symmetric
bilinear form\\\hline Invariant \\\hline
\end{tabular}&&&&\cr&&&&\cr
\updownarrow\;{\rm  D.C.}&&&&\cr&&&&\cr
\begin{tabular}{|c|}\hline
\mbox{}\hspace{0.6cm}{\bf Dendriform algebra}\hspace{0.6cm}\mbox{}\\\hline $D$-equation in a dendriform\\
algebra\\\hline Symmetric solution\\\hline Symmetric part of an\\
$\mathcal O$-operator\\\hline $\mathcal O$-operator associated to\\
$(R_*^*,-L_\prec^*, -R_\succ^*, L_*^*)$\\\hline
\end{tabular} &\Longrightarrow&
\begin{tabular}{|c|}\hline
\mbox{}\hspace{0.65cm}{\bf Dendriform
algebra}\hspace{0.65cm}\mbox{}\\\hline Symmetric bilinear
form\\\hline 2-cocycle
\\\hline
\end{tabular} & \Longrightarrow &
\begin{tabular}{|c|}\hline
Quadri-algebra\\\hline
\end{tabular}\cr
& &\updownarrow&&\cr &&&&\cr &&\begin{tabular}{|c|}\hline {\bf
Dendriform algebra}\\\hline $Q$-equation in a quadri-algebra\\\hline
Skew-symmetric solution\\\hline Skew-symmetric part of an
\\$\mathcal O$-operator\\\hline $\mathcal O$-operator associated
to\\
$(R_\wedge^*, -L_\swarrow^*,-R_\nearrow^*, L_\vee^*)$\\\hline
\end{tabular}&&\cr
&&&&\cr {\rm Part\;\; (DI)} && {\rm Part\;\; (DII)}&& {\rm
Part\;\;(DIII)}
\end{matrix}$}

\bigskip

{\small $\begin{matrix}
\begin{tabular}{|c|}\hline
\mbox{}\hspace{0.75cm}{\bf
Quadri-algebra}\hspace{0.75cm}\mbox{}\\\hline Symmetric bilinear
form\\\hline Invariant \\\hline
\end{tabular}&&&&\cr&&&&\cr
\updownarrow&&&&\cr&&&&\cr
\begin{tabular}{|c|}\hline
{\bf Quadri-algebra}\\\hline $Q$-equation in a quadri-\\
algebra\\\hline Skew-symmetric solution\\\hline Skew-symmetric part of an\\
$\mathcal O$-operator\\\hline $\mathcal O$-operator associated to\\
$(R_*^*, L_\nwarrow^*, -R_\vee^*,-L_\prec^*,$\\ $R_\searrow^*,
L_*^*, -R_\succ^*, -L_\wedge^*)$\\\hline
\end{tabular} &\Longrightarrow&
\begin{tabular}{|c|}\hline
{\bf Quadri-algebra}\\\hline\mbox{}\hspace{0.05cm} Skew-symmetric
bilinear form\hspace{0.05cm}\mbox{}\\\hline 2-cocycle
\\\hline
\end{tabular} & \Longrightarrow &
\begin{tabular}{|c|}\hline
Octo-algebra\\\hline
\end{tabular}\cr
& &\updownarrow&&\cr &&&&\cr &&\begin{tabular}{|c|}\hline {\bf
Quadri-algebra}\\\hline $O$-equation in an octo-algebra\\\hline
Symmetric solution\\\hline $\mathcal O$-operator associated
to\\
$(R_{\Sigma_1}^*, L_{\nwarrow_{2}}^*,-R_{\vee_1}^*,
-L_{\prec_2}^*,$\\
$R_{\searrow_{1}}^*,L_{\Sigma_2}^*,-R_{\succ_1}^*,-L_{\wedge_2}^)$\\\hline
\end{tabular}&&\cr
&&&&\cr {\rm Part\;\; (QI)} && {\rm Part\;\; (QII)}&& {\rm
Part\;\;(QIII)}
\end{matrix}$}

\mbox{}

 \noindent Here ``D.C.'' is the abbreviation of ``double
construction" and ``$\Longrightarrow$'' is under the invertible
cases. Moreover, we have the following equivalences.
$${\rm Part}\;\;{\rm (AII)}\;\;\Longleftrightarrow\;\; {\rm Part}\;\;{\rm
(DI)}; \quad \quad {\rm Part}\;\;{\rm
(DII)}\;\;\Longleftrightarrow\;\; {\rm Part}\;\;{\rm (QI)}.$$

A simplified illustration can be expressed as follows.

\begin{eqnarray*}
\begin{tabular}{|c|}\hline
Symmetric bilinear form\\\hline {\bf Associative algebra}\\\hline
\end{tabular}
 &\stackrel{AYBE}{\longrightarrow}&\;\;
\begin{tabular}{|c|}\hline
Skew-symmetric bilinear form\\\hline {\bf Associative
algebra}\\\hline {\bf Dendriform algebra}\\\hline
\end{tabular}\\
&\stackrel{D-{\rm equation}}{\longrightarrow}&\;\;
\begin{tabular}{|c|}\hline Symmetric bilinear form\\\hline {\bf Dendriform
algebra}\\\hline\mbox{}\hspace{1.05cm}{\bf
Quadri-algebra}\hspace{1.05cm}\mbox{}\\\hline
\end{tabular}\\
&\stackrel{Q-{\rm equation}}{\longrightarrow}&\;\;
\begin{tabular}{|c|}\hline
Skew-symmetric bilinear form\\\hline {\bf Quadri-algebra}\\\hline
{\bf Octo-algebra}\\\hline
\end{tabular}
\end{eqnarray*}

\subsection{Generalization}

It is of course natural to continue and extend the study in the
previous sections to octo-algebras (to study their $\mathcal
O$-operators on and the related topics) and even the algebra systems
with more operations in an associative cluster. In fact, we can give
an outline of such a study by induction. Thus, starting from the
associative algebras, a similar study as in this paper can be easily
given for any algebra with $2^k$ operations in  an associative
cluster.

Let $(A,\circ_1,\circ_2,\cdots,\circ_{2^n})$ be an algebra with
$2^n$ operations in an associative cluster. Suppose that there has
already been a similar study as follows on the algebras with $2^m$
operations for any $m<n$. In particular, a bilinear form on $A$
satisfying certain conditions with fixed symmetry (see the following
step (5)) is known due to our assumption that the theory starts from
the associative algebras with symmetric invariant bilinear forms.
Without loss of generality, we assume that it is symmetric. Then the
whole study can be divided into 6 steps and the explicit procedure
is given as follows.

Step (1)\quad Give the bimodule structures of
$(A,\circ_1,\circ_2,\cdots,\circ_{2^n})$. It can be done by the way
of semidirect sum (\cite{Sc}). It is easy to get the relations of
between them and the bimodules of the algebras with $2^m$ operations
for any $m<n$. Then the dual bimodule structure (see the proof of
Proposition 4.1.4) can be found.

Step (2)\quad Take the $2^{n+1}$ operations on $A$ denoted by
$$\circ_1^1,\circ_1^2,\circ_2^1,\circ_2^2,\cdots,\circ_{2^n}^1,\circ_{2^n}^2:A\otimes
A\rightarrow A.$$ The definition identity of the algebra
$(A,\circ_1^1,\circ_1^2,\circ_2^1,\circ_2^2,\cdots,\circ_{2^n}^1,\circ_{2^n}^2)$
is decided by
$$x\circ_i y=x\circ_i^1 y+x\circ_i^2 y,\;\;\forall x,y\in A$$
and $(L_{\circ_1^2},R_{\circ_1^1},\cdots, L_{\circ_{2^n}^2},
R_{\circ_{2^n}^1}, A)$ is a bimodule of
$(A,\circ_1,\circ_2,\cdots,\circ_{2^n})$.

Step (3)\quad Define the $\mathcal O$-operators of
$(A,\circ_1,\circ_2,\cdots,\circ_{2^n})$ by a natural and obvious
way. Then the relations between the two algebras, in particular, the
construction of $(A,\circ_1^1,\circ_1^2,\circ_2^1,\circ_2^2,\cdots$,
$\circ_{2^n}^1,\circ_{2^n}^2)$ from
$(A,\circ_1,\circ_2,\cdots,\circ_{2^n})$ with an $\mathcal
O$-operator of the latter algebra and the existence of the former
algebra in the latter algebra are given as we have summarized in (I)
in subsection 5.1.

Step (4)\quad The skew-symmetric $\mathcal O$-operator of
$(A,\circ_1,\circ_2,\cdots,\circ_{2^n})$ associated to the dual
bimodule of the regular bimodule $(L_{\circ_1},R_{\circ_1},\cdots,
L_{\circ_{2^n}},R_{\circ_{2^n}}, A)$ of
$(A,\circ_1,\circ_2,\cdots,\circ_{2^n})$ gives an explicit algebraic
equation (or a set of equations) in
$(A,\circ_1,\circ_2,\cdots,\circ_{2^n})$, which can be regarded as
an analogue of the classical Yang-Baxter equation.

Step (5)\quad The skew-symmetric invertible solution of the equation
obtained in the step (4) gives a skew-symmetric bilinear form
satisfying certain conditions. Moreover, such a nondegenerate
skew-symmetric bilinear form can induce an algebra
$(A,\circ_1^1,\circ_1^2,\circ_2^1,\circ_2^2,\cdots,\circ_{2^n}^1,\circ_{2^n}^2)$
and the skew-symmetric bilinear form is just what we try to find for
both
$$(A,\circ_1,\circ_2,\cdots,\circ_{2^n})\;\;{\rm
and}\;\;(A,\circ_1^1,\circ_1^2,\circ_2^1,\circ_2^2,\cdots,\circ_{2^n}^1,\circ_{2^n}^2).$$

Step (6)\quad The symmetric $\mathcal O$-operator of
$(A,\circ_1,\circ_2,\cdots,\circ_{2^n})$ associated to the dual
bimodule of the bimodule
$$(L_{\circ_1^2},R_{\circ_1^1},\cdots, L_{\circ_{2^n}^2},
R_{\circ_{2^n}^1}, A)$$ can give an algebraic equation (or a set of
equations) in
$(A,\circ_1^1,\circ_1^2,\circ_2^1,\circ_2^2,\cdots,\circ_{2^n}^1,\circ_{2^n}^2)$
which can also be regarded as an analogue of the classical
Yang-Baxter equation (it coincides with the equation given in the
step (4) for the algebra
$(A,\circ_1^1,\circ_1^2,\circ_2^1,\circ_2^2,\cdots,\circ_{2^n}^1,\circ_{2^n}^2)$).

It is also reasonable to give the following conjecture, which may
indicate an application of the analogues of the classical
Yang-Baxter equation.

{\bf Conjecture}\quad There should be a similar double construction
of the nondegenerate bilinear forms corresponding to the algebraic
equation given in the above step (4) for any algebra
$(A,\circ_1,\circ_2,\cdots,\circ_{2^n})$ as we have shown in
subsections 2.2, 2.3 and 3.2 for associative algebras and dendriform
algebras (\cite{Bai3}). It has been proved to be true for
quadri-algebras (\cite{Ni}) and octo-algebras (\cite{NB}).

As we have pointed out in the Introduction, it is likely to give an
operadic interpretation for the study in this paper and the further
development on this subject.

\section*{Acknowledgements}

The author thanks Professors M. Aguiar, L. Guo and the referee for
important suggestion. This work was supported in part by the
National Natural Science Foundation of China (10621101), NKBRPC
(2006CB805905), SRFDP (200800550015).

\end{document}